\documentclass[a4paper,UKenglish,cleveref, autoref,numberwithinsect]{lipics-v2021}


\usepackage{xspace}
\bibliographystyle{plainurl}

\usepackage{amsmath}

\usepackage{amsthm}

\usepackage{amssymb}

\usepackage{multirow}

\usepackage{graphicx}

\usepackage{todonotes}
\usepackage{xspace}

\usepackage[ruled, vlined,boxed,commentsnumbered,linesnumbered]{algorithm2e}
\usepackage{color}
\definecolor{black}{rgb}{0, 0, 0}

\definecolor{white}{rgb}{1, 1, 1}

\definecolor{grey}{rgb}{.6, .6, .6}

\definecolor{red}{rgb}{1, 0 ,0}

\definecolor{green}{rgb}{0, 1, 0}

\definecolor{blue}{rgb}{0, 0 ,1}

\definecolor{darkred}{rgb}{0.7, 0 ,0}

\definecolor{darkgreen}{rgb}{0, 0.7, 0}

\definecolor{darkblue}{rgb}{0, 0 , 0.55}

\definecolor{magenta}{rgb}{1, 0, 1}

\definecolor{cyan}{rgb}{0, 1, 1}

\definecolor{yellow}{rgb}{1, 0.9, 0}

\definecolor{purple}{rgb}{0.5, 0, 0.5}

\definecolor{orange}{rgb}{1, 0.5, 0}

\definecolor{turquoise}{rgb}{0, 0.7, 0.7}

 \definecolor{mid-green}{rgb}{0.15,0.65,0.15}
\definecolor{dark-green}{rgb}{0.15,0.25,0.15}
\definecolor{dark-red}{rgb}{0.7,0.15,0.15}
\definecolor{dark-blue}{rgb}{0.15,0.15,0.9}
\definecolor{medium-blue}{rgb}{0,0,0.5}
\definecolor{gray}{rgb}{0.5,0.5,0.5}
\definecolor{color-Ig}{rgb}{0.15,0.7,0.15}
\definecolor{darkmagenta}{rgb}{0.30, 0.0, 0.30}

\hypersetup{
    colorlinks, linkcolor={dark-blue},
    citecolor={mid-green}, urlcolor={dark-blue}}

\usepackage{alphabeta,cite}

\RequirePackage{stmaryrd}
\usepackage{textcomp}
\DeclareUnicodeCharacter{2286}{\subseteq}
\DeclareUnicodeCharacter{2192}{\ifmmode\to\else\textrightarrow\fi}
\DeclareUnicodeCharacter{2203}{\ensuremath\exists}
\DeclareUnicodeCharacter{183}{\cdot}
\DeclareUnicodeCharacter{2200}{\forall}
\DeclareUnicodeCharacter{2264}{\leq}
\DeclareUnicodeCharacter{2265}{\geq}
\DeclareUnicodeCharacter{8614}{\mathbin{\mapsto}}
\DeclareUnicodeCharacter{8656}{\Leftarrow}
\DeclareUnicodeCharacter{8657}{\Uparrow}
\DeclareUnicodeCharacter{8658}{\Rightarrow}
\DeclareUnicodeCharacter{8659}{\Downarrow}
\DeclareUnicodeCharacter{8669}{\rightsquigarrow}
\newcommand{\eqdef}{\stackrel{{\scriptsize\rm def}}{=}}
\DeclareUnicodeCharacter{8797}{\eqdef}
\DeclareUnicodeCharacter{8870}{\vdash}
\DeclareUnicodeCharacter{8873}{\Vdash}
\DeclareUnicodeCharacter{22A7}{\models}
\DeclareUnicodeCharacter{9121}{\lceil}
\DeclareUnicodeCharacter{9123}{\lfloor}
\DeclareUnicodeCharacter{9124}{\rceil}
\DeclareUnicodeCharacter{2208}{\in}
\DeclareUnicodeCharacter{9126}{\rfloor}
\DeclareUnicodeCharacter{9655}{\triangleright}
\DeclareUnicodeCharacter{9665}{\triangleleft}
\DeclareUnicodeCharacter{9671}{\diamond}
\DeclareUnicodeCharacter{9675}{\circ}
\DeclareUnicodeCharacter{10178}{\bot}
\DeclareUnicodeCharacter{10214}{} 
\DeclareUnicodeCharacter{10215}{} 
\DeclareUnicodeCharacter{10229}{\longleftarrow}
\DeclareUnicodeCharacter{10230}{\longrightarrow}
\DeclareUnicodeCharacter{10231}{\longleftrightarrow}
\DeclareUnicodeCharacter{10232}{\Longleftarrow}
\DeclareUnicodeCharacter{10233}{\Longrightarrow}
\DeclareUnicodeCharacter{10234}{\Longleftrightarrow}
\DeclareUnicodeCharacter{10236}{\longmapsto}
\DeclareUnicodeCharacter{10238}{\Longmapsto} 
\DeclareUnicodeCharacter{10503}{\Mapsto}    
\DeclareUnicodeCharacter{10971}{\mathrel{\not\hspace{-0.2em}\cap}}
\DeclareUnicodeCharacter{65294}{\ldotp}
\DeclareUnicodeCharacter{65372}{\mid}

\addto\extrasenglish{} 
\addto\extrasenglish{} 
\addto\extrasenglish{}

\newcommand{\NP}{{\sf NP}}

\newcommand{\opt}{{\sf opt}\xspace}

\newcommand{\td}{{\sf td}\xspace}
\newcommand{\XP}{{\sf XP}\xspace}
\newcommand{\yes}{{\sf yes}\xspace}
\renewcommand{\H}{\mathcal{H}}
\newcommand{\sol}{\mathsf{sol}}
\newcommand{\optsol}{\mathsf{optsol}}
\newcommand{\optsolwith}{\mathsf{optsolwith}}

\newcommand{\edcalH}{{\sf ed}_{\H}}

\newcommand{\vedH}{{\sf ved}^+_H\xspace}

\newcommand{\vedt}{{\sf ved}^+_t\xspace}
\newcommand{\bedt}{{\sf bed}^+_t\xspace}

\newcommand{\veds}{{\sf ved}^+_{\F_{\bar H}}\xspace}
\newcommand{\vedis}{{\sf ved}^+_{\F^{\mathsf{ind}}_{\bar H}}\xspace}
\newcommand{\beds}{{\sf bed}^+_{\F_{\bar H}}\xspace}
\newcommand{\bedis}{{\sf bed}^+_{\F^{\mathsf{ind}}_{\bar H}}\xspace}

\newcommand{\vedclique}{{\sf ved}^+_{\F_{\bar{K_t}}}\xspace}
\newcommand{\vediclique}{{\sf ved}^+_{\F^{\mathsf{ind}}_{\bar K_t}}\xspace}
\newcommand{\bedclique}{{\sf bed}^+_{\F_{\bar{K_t}}}\xspace}
\newcommand{\bediclique}{{\sf bed}^+_{\F^{\mathsf{ind}}_{\bar K_t}}\xspace}

\renewcommand{\O}{\mathcal{O}}
\newcommand{\Oh}{\mathcal{O}}

\newcommand{\Bcal}{\mathcal{B}}
\newcommand{\Ccal}{\mathcal{C}}

\newcommand{\Fcal}{\mathcal{F}}

\newcommand{\F}{\Fcal}
\newcommand{\C}{\Ccal}
\newcommand{\B}{\Bcal}

\newcommand{\mmbs}{{\sf mmbs}\xspace}
\newcommand{\pr}{{\sf pr}\xspace}
\newcommand{\add}{{\sf add}\xspace}
\newcommand{\conf}{{\sf conf}\xspace}
\newcommand{\X}{\mathcal{X}}
\newcommand{\T}{\mathcal{T}\xspace}
\newcommand{\M}{\mathcal{M}\xspace}
\renewcommand{\P}{\mathcal{P}\xspace}
\newcommand{\mainMark}{{\tt mainMark}}

\newcommand{\KtH}{\textsc{$K_t$-SH}\xspace}
\newcommand{\EKtH}{\textsc{\sc E$K_t$-SH}\xspace}
\newcommand{\KtHM}{\textsc{$K_t$-SHM}\xspace}
\newcommand{\KtHMdecbis}[1]{\textsc{$K_t$-SHM$_{\sf p}^{#1}$}\xspace}

\newcommand{\HH}{\textsc{$H$-SH}\xspace}
\newcommand{\HiH}{\textsc{$H$-ISH}\xspace}

\newtheorem{property}{Property}
\newtheorem*{theorem*}{Theorem} 

\usepackage{thmtools}
\usepackage{thm-restate}

\title{Kernelization Dichotomies for Hitting Subgraphs under Structural Parameterizations}


\author{Marin Bougeret}{LIRMM, Universit\'e de Montpellier, CNRS, Montpellier, France}{marin.bougeret@lirmm.fr}{https://orcid.org/0000-0002-9910-4656}{}

\author{Bart M.\ P.\ Jansen}{Eindhoven University of Technology, The Netherlands}{b.m.p.jansen@tue.nl}{https://orcid.org/0000-0001-8204-1268}{Supported by the Dutch Research Council (NWO) through Gravitation-grant NETWORKS-024.002.003.}

\author{Ignasi Sau}{LIRMM, Universit\'e de Montpellier, CNRS, Montpellier, France}{ignasi.sau@lirmm.fr}{https://orcid.org/0000-0002-8981-9287}{Supported  by the French project ELIT (ANR-20-CE48-0008-01).}

\authorrunning{Marin Bougeret, Bart M. P. Jansen, and Ignasi Sau} 

\Copyright{Jane Open Access and Joan R. Public} 

\ccsdesc[100]{Theory of computation~Graph algorithms analysis}
\ccsdesc[100]{Theory of computation~Parameterized complexity and exact algorithms}

\keywords{hitting subgraphs, hitting induced subgraphs, parameterized complexity, polynomial kernel, complexity dichotomy, elimination distance.} 

\category{} 






\nolinenumbers 

\hideLIPIcs

\EventEditors{Karl Bringmann, Martin Grohe, Gabriele Puppis, and Ola Svensson}
\EventNoEds{4}
\EventLongTitle{51st International Colloquium on Automata, Languages, and Programming (ICALP 2024)}
\EventShortTitle{ICALP 2024}
\EventAcronym{ICALP}
\EventYear{2024}
\EventDate{July 8--12, 2024}
\EventLocation{Tallinn, Estonia}
\EventLogo{}
\SeriesVolume{297}
\ArticleNo{5}

\begin{document}

\maketitle

\begin{abstract}
\noindent For a fixed graph $H$, the $H$-{\sc Subgraph Hitting} problem consists in deleting the minimum number of vertices from an input graph to obtain a graph without any occurrence of $H$ as a subgraph. This problem can be seen as a generalization of \textsc{Vertex Cover}, which corresponds to the case $H = K_2$. We initiate a study of $H$-{\sc Subgraph Hitting} from the point of view of characterizing structural parameterizations that allow for polynomial kernels, within the recently active framework of taking as the parameter the number of vertex deletions to obtain a graph in a ``simple'' class~$\C$. Our main contribution is to identify graph parameters that, when $H$-{\sc Subgraph Hitting} is parameterized by the vertex-deletion distance to a class $\C$ where any of these parameters is bounded, and assuming standard complexity assumptions and that $H$ is biconnected, allow us to prove the following sharp dichotomy: the problem admits a polynomial kernel if and only if $H$ is a clique. These new graph parameters are inspired by the notion of \textit{$\C$-elimination distance} introduced by Bulian and Dawar [Algorithmica 2016], and generalize it in two directions.
Our results also apply to the version of the problem where one wants to hit $H$ as an induced subgraph, and imply in particular, that the problems of hitting minors and hitting (induced) subgraphs have a substantially different behavior with respect to the existence of polynomial kernels under structural parameterizations.
\end{abstract}




\section{Introduction}
\label{sec:intro}

The theory of parameterized complexity deals with \emph{parameterized problems}, which are decision problems in which a positive integer $k$, called the \textit{parameter}, is associated with every instance $x$. One of the pivotal notions in the domain is that of \textit{kernelization}~\cite{Bodlaender16,CyganFKLMPPS15,FellowsJKRW18,FominLSZ19,LokshtanovMS12}, which is a polynomial-time
algorithm that reduces any  instance $(x, k)$ of a parameterized problem to an equivalent instance $(x',k')$ of
the same problem whose size is bounded by $f(k)$ for some function $f$, which is the \textit{size} of
the kernelization. A kernelization algorithm, or just \textit{kernel}, can be seen as a preprocessing procedure with provable guarantees, and it is fundamental to find kernels of the smallest possible size, ideally polynomial. Identifying which parameterized problems admit polynomial kernels is one of the most active areas within Parameterized Complexity (cf. for instance~\cite{FominLSZ19}).

When dealing with a problem where the goal is to find a (say, small) subset of vertices $S$ of an input graph $G$ satisfying some property, such as \textsc{Vertex Cover}, it is natural to consider as the parameter the size of the desired set $S$. Assuming that the problem admits a polynomial kernel parameterized by $|S|$, as it is the case for \textsc{Vertex Cover} and many other problems~\cite{CyganFKLMPPS15,FominLSZ19}, we can ask whether the problem still admits polynomial kernels when the parameter is (asymptotically) smaller than the solution size. The goal of this approach is to provide better preprocessing guarantees, as well as to understand what is the limit of the polynomial-time ``compressibility'' of the considered problem. For problems defined on graphs, apart from using the solution size as a parameter, it is common  to consider so-called \textit{structural parameters}, which quantify some structural property of the input graph that can be seen as a measure of its ``complexity''. Among structural parameters, the most successful is probably \textit{treewidth}~\cite{Kloks94,CyganFKLMPPS15}, but unfortunately taking the treewidth of the input graph as the parameter does not allow for polynomial kernels for essentially all natural optimization problems, unless ${\sf NP} \subseteq {\sf coNP}/{\sf poly}$~\cite{BodlaenderJK14,BodlaenderDFH09}. The same applies to another relevant graph parameter called \textit{treedepth}, denoted by $\td$ and defined as the minimum number of rounds needed to obtain the empty graph, where each round consists of removing one vertex from each connected component.

In fact, the lower bound proofs go through for parameterizations for which the value on a disconnected graph is the \textit{maximum}, rather than the \textit{sum}, of the values of its components, and whose value is polynomially bounded  in the size of the graph. Hence, to be able to obtain positive kernelization results, we need to turn to parameterizations other than \emph{width measures}. This motivates to consider structural parameters that quantify the ``distance from triviality'', a concept first coined by Guo, H{\"{u}}ffner, and  Niedermeier~\cite{GuoHN04}.  The idea is to take as the parameter the vertex-deletion distance of a graph to a ``trivial'' graph class where the considered problem can be solved efficiently. This paradigm has proved very successful for a number of problems, in particular for \textsc{Vertex Cover}. In an influential work, Jansen and Bodlaender~\cite{JansenB13} showed that {\sc Vertex Cover}  admits a polynomial kernel
when parameterized by the feedback vertex number of the input graph,
which is the vertex-deletion distance to the ``trivial'' class of forests. This result triggered a number of results in the area, aiming to characterize the ``trivial'' families $\F$ for which {\sc Vertex Cover} admits a polynomial kernel under this parameterization~\cite{MajumdarRR18,BougeretS18,FominS16,HolsKP22,BougeretJS22}.

Let us mention some of these results that are relevant to our work. Bougeret and Sau~\cite{BougeretS18} proved that \textsc{Vertex Cover} admits a polynomial kernel parameterized by the vertex-deletion distance to a graph of bounded treedepth. This result was further generalized into two orthogonal directions, namely by considering a more general problem or a more general target graph class $\F$. For the former generalization, Jansen and Pieterse~\cite{JansenP20} proved that the following problem also admits a polynomial kernel parameterized by the vertex-deletion distance to a graph of bounded treedepth: for a fixed finite set of connected graphs $\M$, the $\M$-{\sc Minor Deletion} problem consists in deleting the minimum number of vertices from an input graph to obtain a graph that does not contain any of the graphs in $\M$ as a minor. Note that this problem (vastly) generalizes \textsc{Vertex Cover}, which corresponds to the case $\M=\{K_2\}$. For the latter generalization, Bougeret, Jansen, and Sau~\cite{BougeretJS22} proved that \textsc{Vertex Cover} admits a polynomial kernel parameterized by the vertex-deletion distance to a graph of bounded \textit{bridge-depth}, which is a parameter that generalizes treedepth and the feedback vertex number. It turns out that, under the assumption that the target graph class $\F$ is minor-closed, the property of $\F$ having bounded bridge-depth is also a {\sl necessary} condition for \textsc{Vertex Cover} admitting a polynomial kernel. Another complexity dichotomy of this flavor has been achieved by Dekker and Jansen~\cite{DekkerJ22} for the \textsc{Feedback Vertex Set} problem (with another characterization of the target graph class $\F$). These complexity dichotomies, while precious, are unfortunately quite hard to obtain, and the current knowledge seems still far from obtaining dichotomies of this type for general families of problems, such as for $\M$-{\sc Minor Deletion} for any finite family of graphs $\M$. Indeed, it is wide open whether $\M$-{\sc Minor Deletion} admits a polynomial kernel parameterized by the solution size for any set $\M$ containing only non-planar graphs~\cite{FominLMS12,Jansen022}, so considering parameters smaller than the solution size is still out of reach.

\subparagraph*{Our contribution.} We consider an alternative generalization of {\sc Vertex Cover} by considering (induced) \textit{subgraphs} instead of \textit{minors}. Namely, for a fixed graph $H$, the $H$-{\sc Subgraph Hitting} problem is defined as deleting the minimum number of vertices from an input graph to obtain a graph without any occurrence of $H$ as a subgraph. The $H$-{\sc Induced Subgraph Hitting} problem is defined analogously by forbidding occurrences of $H$ as an {\sl induced} subgraph.
(As we shall see later, both problems behave in the same way with respect to our results.) Note that both problems correspond to {\sc Vertex Cover} for the case $H=K_2$ and therefore are indeed generalization of it. As opposed to the case for hitting minors, it is well-known that both the $H$-{\sc Subgraph Hitting} and $H$-{\sc Induced Subgraph Hitting} problems admit polynomial kernels parameterized by the solution size for any graph $H$~\cite{Abu-Khzam10,FlumG06}.

Therefore, it does make sense to parameterize these problems by structural parameters in the ``distance from triviality'' spirit, and this is the main focus of this article. To be best of our knowledge, this is an unexplored topic, besides all the literature for {\sc Vertex Cover} discussed above. Our main result is to identify structural parameters that allow to provide sharp dichotomies for these problems {\sl depending on the forbidden (induced) subgraph $H$}.

Before presenting our results, we proceed to motivate and define these structural parameters. They are inspired by the following parameter, first
introduced by Bulian and Dawar~\cite{BulianD16graph,BulianD17fixe} and further studied, for instance, in~\cite{PilipczukSSTV22,JansenK021verte,HolsKP22,FominGT22param,AgrawalKLPRSZ22delet,MorelleSST23}. For a fixed class of graphs $\H$, the $\H$-\textit{elimination distance} of a graph $G$, denoted by $\edcalH(G)$, is defined by mimicking the above definition of treedepth and replacing ``empty graph'' with ``a graph in~$\H$''. The recursive definitions of treedepth and $\H$-elimination distance suggest the notion of \textit{elimination forest}, which is the forest-like process of vertex removals from the considered graph to obtain either an empty graph for treedepth, or a graph in $\H$ for $\H$-elimination distance. Suppose now that $\H$ is defined by the exclusion of a fixed graph $H$ as a subgraph or as an induced subgraph. Formally, let
$\F_{\bar H}$ (resp. $\F^{\mathsf{ind}}_{\bar H}$) be the class of graphs that exclude a fixed graph $H$ as a subgraph (resp. induced subgraph). For this particular case, the notion of $\F_{\bar H}$-elimination distance (or $\F^{\mathsf{ind}}_{\bar H}$-elimination distance) can be interpreted as a generalization of treedepth where, in the last round of the elimination process, the vertices that do not belong to any occurrence of $\H$ as a subgraph (or induced subgraph) can be deleted ``for free''. We generalize the notion of $\H$-elimination distance by allowing  ``free removal'' of vertices not contained in a copy of $H$ in {\sl every round} of the elimination process, rather than just the last; we denote the corresponding parameter by $\veds$ (or  $\vedis$), where `${\sf v}$' stands for the removal of {\sl vertices}, in order to distinguish this parameter from the one defined below (see \autoref{sec:overview} for the formal definitions of these parameters). Our first main result is the following somehow surprising dichotomy, which states that, under the assumption that $H$ is biconnected,
whenever $H$ has a non-edge, the problem is unlikely to admit a polynomial kernel.
Our result applies to both the induced and non-induced versions of the problem.

\begin{theorem}\label{thm:dichotomomy}
Let $H$ be a biconnected graph, let $\lambda \geq 1$ be an integer, and assume that ${\sf NP} \nsubseteq {\sf coNP}/{\sf poly}$. $H$-{\sc Subgraph Hitting}  (resp. $H$-{\sc Induced Subgraph Hitting}) admits a polynomial kernel parameterized by the size of a given vertex set $X$ of the input graph $G$ such that $\veds(G-X) \leq \lambda$ (resp. $\vedis(G-X) \leq \lambda$) if and only if $H$ is a clique.
\end{theorem}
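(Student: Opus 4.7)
I would produce a polynomial kernel in $|X| + \lambda$ by combining the structural information coming from a witness $\vedclique$-elimination forest $\mathcal{T}$ of depth $\lambda$ for $G - X$ with a marking scheme on $K_t$-copies. The first step is to argue that every copy of $K_t$ in $G$ is ``anchored'' on a small set of vertices determined by $\Tcal$ together with $X$: since at each round the elimination allows free removal of vertices not contained in any $K_t$, the surviving vertex peeled off at that round, together with its $\Tcal$-ancestors, must cover every $K_t$-copy passing through the corresponding subtree. This gives an ordering of $V(G) \setminus X$ in which each $K_t$-copy uses at most $\lambda$ vertices outside $X$. Next I would adapt the Abu-Khzam sunflower-style kernel for $d$-\textsc{Hitting Set}~\cite{Abu-Khzam10}, applied to the hypergraph whose hyperedges are the $K_t$-copies, so that only polynomially many copies in $|X|+\lambda$ remain relevant; any vertex that is not contained in any marked copy can be removed safely. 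The induced version is handled in parallel because $K_t$ as a subgraph coincides with $K_t$ as an induced subgraph.

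\textbf{Plan for the backward direction (if $H$ has a non-edge).} Here I would rule out a polynomial kernel under $\NP \nsubseteq {\sf coNP}/{\sf poly}$ via a cross-composition in the sense of Bodlaender--Jansen--Kratsch. Fix a non-edge $\{u, v\}$ of $H$; biconnectedness guarantees two internally disjoint $u$--$v$ paths in $H$, and deleting $\{u,v\}$ leaves at most one connected component by biconnectedness, which is the structural feature I would exploit as a ``completion gadget''. Starting from an appropriate NP-hard source problem (for instance a variant of \textsc{Multicolored Biclique} or a suitable selector/clause gadget version of \textsc{Vertex Cover}), I would compose $t$ instances on $n$ vertices into a single instance of $H$-\textsc{Subgraph Hitting} in which (i) a set $X$ of size polynomial in $n$ and logarithmic in $t$ carries all inter-instance ``wiring'', encoded via the non-edge $\{u,v\}$ as a binary choice, and (ii) every connected component of $G - X$ consists of local per-instance gadgets that contain no $H$-copy entirely inside $G-X$, so that $\veds(G - X) \leq \lambda$ (in fact one can aim for $\lambda$ as small as a small constant depending only on $|V(H)|$). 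This immediately gives an AND/OR composition with the parameter $|X|$ polynomial in $\log t$ and $n$, yielding the kernelization lower bound. The same construction, with edges turned into non-edges within the gadgets, proves the lower bound for $H$-\textsc{Induced Subgraph Hitting}.

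\textbf{Main obstacle.} I expect the hard part to be the lower-bound direction, specifically the uniform construction of cross-composition gadgets that work for \emph{every} biconnected $H$ possessing a non-edge. The difficulty is two-fold: one must realize the binary selection mechanism exclusively through the non-edge of $H$ while preserving biconnectedness of the partial copies inside the gadgets, and one must certify $\veds(G - X) \le \lambda$ simultaneously, which requires that no full copy of $H$ survives inside $G-X$ after removing the ``free'' $H$-avoiding vertices. Handling the induced and non-induced versions with a single construction, and covering all biconnected $H$ with a non-edge (from $C_4$ up to arbitrary dense graphs with at least one missing edge) will likely require a small structural case distinction on how the non-edge $\{u,v\}$ sits inside $H$, guided by the two internally disjoint $u$--$v$ paths supplied by biconnectedness.
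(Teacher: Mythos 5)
Your plan for the negative direction is broadly aligned with the paper's: the paper also exploits a fixed non-edge $\{u,v\}$ of $H$ together with biconnectedness to build ``chains'' of $H$-copies in which consecutive elements communicate through the non-edge (so there are no edges connecting them inside $G-X$, keeping $\veds(G-X)\leq 1$), and it interconnects them with a third vertex $w$ sitting in $X$ via ``transversal'' $H$-copies whose remainder $H'=H-\{u,v,w\}$ lies inside the clause gadget. The difference is mechanical: the paper uses a direct polynomial parameter transformation from \textsc{CNF-SAT} parameterized by the number of variables rather than a cross-composition; both routes rule out a polynomial kernel under $\NP\nsubseteq{\sf coNP}/{\sf poly}$.

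The positive direction, however, has a genuine gap. Your central structural claim — that the elimination forest yields an ordering of $V(G)\setminus X$ in which every $K_t$-copy uses at most $\lambda$ vertices outside $X$ — is false. A single $K_t$ disjoint from $X$ already has $\vedt(K_t)=1$ (delete one vertex, the rest are free), yet all $t$ of its vertices lie outside $X$. The deeper problem is that applying the Abu-Khzam sunflower kernel on the hypergraph of $K_t$-copies yields a kernel polynomial in the \emph{solution size}, not in $|X|$, and the optimum can be $\Theta(n)$ even when $|X|$ is constant; this is precisely why the structural parameterization is genuinely harder than the natural one. The paper's kernel instead develops a machinery of blocking sets, conflicts, and chunks: it proves that graphs of bounded $\bedt$ have bounded maximum minimal blocking sets (the technically heaviest step), uses this to replace ``all subsets of $X$'' by polynomially many small chunks certifying conflicts, runs a recursive marking procedure that peels off one layer of the $\bedt$-decomposition per recursion and removes unmarked pending components after accounting for their local optimum, and finally falls back on the solution-size kernel only in the base case $\lambda=0$. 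None of this is visible in your sketch, and I do not see how to repair it without reinventing something of comparable weight.
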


Note that a graph $G$ satisfies $\veds(G) = 0$ (resp. $\vedis(G) = 0$) if and only if $G$ does not contain $H$ as a subgraph (resp. induced subgraph), so the setting $\lambda=0$ corresponds to the parameterization by solution size which always admits a polynomial kernel~\cite{Abu-Khzam10,FlumG06}; this is why we assume that $\lambda \geq 1$ in the statement of \autoref{thm:dichotomomy}.


\autoref{thm:dichotomomy} shows that the behavior of the considered problems in terms of the existence of polynomial kernels drastically changes as far as one edge is missing from $H$ (under the biconnectivity assumption, which is needed in the reduction). To the best of our knowledge, this is the first time that such a dichotomy, in terms of $H$, is found with respect to the existence of polynomial kernels. It is worth mentioning that, with respect to the existence of certain fixed-parameter tractable algorithms parameterized by treewidth, dichotomies of this flavor exist for hitting subgraphs~\cite{CyganMPP17}, induced subgraphs~\cite{SauS21}, or minors~\cite{BasteST23}.


The proof of \autoref{thm:dichotomomy} consists of two independent pieces. On the one hand, we need to prove that both problems admit a polynomial kernel when $H$ is a clique (note that, in that case, both problems are equivalent, as any $H$-subgraph is induced). On the other hand, we need to provide a kernelization lower bound for all other graphs $H$ (cf. \autoref{thm:hardness-lambda1}), and here is where we need the hypothesis that ${\sf NP} \nsubseteq {\sf coNP}/{\sf poly}$ and, for technical aspects of the reduction, that $H$ is biconnected.

In fact, we provide a kernel that is more general than the one stated in \autoref{thm:dichotomomy}. Also, on the negative side, we present another lower bound incomparable to that of \autoref{thm:dichotomomy}. For the former, we provide a polynomial kernel, when $H$ is a clique, for a parameter that is more powerful than $\veds$ (or $\vedis$).
 This more powerful parameter is somehow inspired by the parameter bridge-depth mentioned before~\cite{BougeretJS22}, which is a generalization of treedepth in which, in every round of the elimination process, we are allowed to remove subgraphs in each component that are more general than just single vertices. In our setting, it turns out that we can afford to remove vertex sets $T \subseteq V(G)$ that induce connected subgraphs that do not contain $H$ as a subgraph (or induced subgraph) and that are ``weakly attached'' to the rest of the graph, meaning that each connected component of $G-T$ has at most one neighbor in $T$. If $H$ is biconnected, it is easily seen that the ``candidate'' sets $T$ to be removed can be assumed to be connected unions of blocks (biconnected components) of $G$, and this is why we call this parameter $\beds$ (or $\bedis$), where `${\sf b}$' stands for the removal of {\sl blocks}. For any two graphs $G$ and $H$, the following inequalities, as well as the corresponding ones for the induced version, follow easily from the definitions (cf. \autoref{sec:overview}):
\begin{equation}\label{eq:parameters}
\td(G) \geq {\sf ed}_{\F_{\bar H}}(G) \geq \veds(G) \geq \beds(G).
\end{equation}
 We prove the following result, where $K_t$ denotes the clique on $t$-vertices, and note that in this case the induced and non-induced versions of the problem coincide.


\begin{theorem}\label{thm:mainkernel}
Let $t\geq 3$ and $\lambda\geq 1$ be fixed integers. The $K_t$-{\sc Subgraph Hitting}  problem admits a polynomial kernel parameterized by the size of a given vertex set $X$ of the input graph $G$ such that $\bedclique(G-X) \leq \lambda$.
\end{theorem}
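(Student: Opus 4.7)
The plan is to prove the theorem by induction on $\lambda$, exploiting the recursive structure of the $\bedclique$-decomposition of $G - X$ and adapting the sunflower-type reductions developed in~\cite{BougeretJS22} for the Vertex Cover analogue.

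For the \emph{base case} $\lambda = 0$, the graph $G-X$ is $K_t$-free, so $X$ itself is a feasible solution to $K_t$-{\sc Subgraph Hitting}, bounding the optimum by $|X|$. The classical $\Oh(k^{t-1})$-vertex kernel for $K_t$-{\sc Subgraph Hitting} parameterized by the solution size $k$, obtained from the sunflower lemma applied to the family of $t$-element vertex sets inducing a $K_t$~\cite{FlumG06,Abu-Khzam10}, then yields a kernel of size polynomial in $|X|$.

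For the \emph{inductive step}, I would compute a $\bedclique$-decomposition of $G-X$ of depth $\lambda$ and isolate its outermost layer: one connected, $K_t$-free, weakly attached set $T$ per connected component of $G-X$, such that each connected component $C$ of $(G-X)-T$ has $\bedclique(C) \leq \lambda - 1$ and admits a unique ``gateway'' vertex $g_C \in T$ (its unique neighbor in $T$). The key structural observation is that every $K_t$ of $G$ intersecting $T$ is of one of two types: (i) entirely contained in $T\cup X$, or (ii) containing exactly one vertex of $T$, namely the gateway $g_C$ of some sibling component $C$, together with vertices in $C\cup X$. This dichotomy, which follows at once from the weak-attachment property and the $K_t$-freeness of $T$, decouples the kernelization problem, up to reductions on $T$, into independent sub-instances indexed by the siblings $C$. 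Exploiting it, three reductions shrink $T$ and its interaction with the siblings to polynomial size in $|X|$: (1) vertices of $T$ that lie in no $K_t$ of $G$ are deleted; (2) a sunflower-style argument on the type-(i) constraints marks a polynomial (in $|X|$) number of representative vertices of $T$ that interact with $X$ and discards the rest; (3) for each gateway $g \in T$ and each admissible ``profile'' of $K_t$-intersections with $X \cup \{g\}$, the sibling components sharing gateway $g$ are grouped into equivalence classes, and a sunflower-style argument bounds both the number of classes and the number of retained representatives per class polynomially in $|X|$. Once $T$ has been reduced and only polynomially many siblings survive, the modulator is extended to $X' = X \cup V(T)$, which is of polynomial size in $|X|$, and the inductive hypothesis applied to each surviving sibling concludes the kernel, since $\bedclique(G - X') \leq \lambda - 1$.

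The \emph{main obstacle} is the design of the right equivalence on sibling components: it must be coarse enough that the number of classes is polynomial in $|X|$ (with the polynomial degree depending on $t$ and $\lambda$), yet fine enough that substituting a sibling by a representative of its class preserves the exact optimum of the $K_t$-{\sc Subgraph Hitting} instance, ensuring in particular that no $K_t$ in $G$ is created or destroyed by the substitution. A secondary subtlety, compared with the Vertex Cover setting of~\cite{BougeretJS22}, is that $K_t$-constraints have arity $t \geq 3$, so the equivalences and sunflower arguments must track multi-vertex configurations in $X \cup T$ rather than single edges, which makes both the combinatorial accounting and the soundness proofs substantially richer.
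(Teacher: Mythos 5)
Your high-level plan---induction on $\lambda$, peel off the outermost root $T$, mark a polynomially bounded set of vertices, and recurse on $G-X'$ with $X'=X\cup V(T)$---matches the structure of the paper's kernelization (\autoref{algo:kernel}). You also correctly identify as the ``main obstacle'' that the number of equivalence classes of sibling components must be polynomial in $|X|$. But this is precisely where the proposal leaves an essential gap: no mechanism is given to control the number of ``profiles,'' and a naive accounting of $K_t$-intersections with $X$ yields doubly-exponentially many profiles (a profile is a \emph{set} of subsets of $X$ of size $<t$, together with a constraint about which of them a local optimum can simultaneously avoid). The missing ingredient is the bound $\mmbs_t(G')\le\beta(\bedt(G'),t)$ on the size of inclusion-minimal \emph{blocking sets} (\autoref{thm:mmbslambda4}). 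Via \autoref{lemma:smallcertif}, this shows that whether a pending component has a ``conflict'' with some $U\subseteq X$ (i.e., whether a local optimum avoiding $U$ must overpay) can always be certified by a \emph{chunk}, a subset of $X$ of bounded size $(t-1)\beta(\lambda,t)$, so the marking needs to iterate only over the $|X|^{\O_{\lambda,t}(1)}$ chunks rather than over all subsets of $X$. Establishing this blocking-set bound is the bulk of the paper's technical work (\autoref{sec:blocking}); sunflower arguments alone, which only bound the number of occurrences of forbidden configurations of fixed arity $t$, do not give the required control over these max-min quantities.

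Two further issues would also need to be resolved. First, the root $T$ is a root of $(G-X)-N^t(G-X)$, not of $G-X$: the non-$K_t$-vertices are removed ``for free'' by the recursion in \autoref{def:bedHplus} before any root is taken, and in $G$ they can be adjacent to arbitrary pending components and participate in $t$-cliques. This forces the paper to mark ``bidimensional'' parts $(V_i,N_i)$ with $N_i\subseteq N^t(G-X)$ and to recurse around each element of $N_i$ (\autoref{def:mark}), rather than marking vertices of $T$ alone; your type~(i)/(ii) dichotomy is essentially correct once $N$ is folded into the ``$X$''-side, but that is where the extra combinatorial layer comes from, and ignoring $N$ breaks the claim that $T$ is even a root of $G-X$. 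Second, $t$-cliques that spread across several vertices of $T$ together with $X$ and $N$ require the $t-1$ nested marking rounds of \autoref{def:step1}, and when a pending component $C(v)$ is discarded the budget $k$ must be decreased by exactly $\opt(G[C(v)])$, which in turn requires an algorithm to compute optima and conflicts in graphs of bounded $\bedt$ (\autoref{sec:computeHittingSet}); neither of these is acknowledged in the proposal.
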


Note that for $t=2$, the parameter $\bedclique$ is exactly treedepth, and therefore \autoref{thm:mainkernel} can be seen as a far-reaching generalization of the main result of~\cite{BougeretS18}, that is, a polynomial kernel for {\sc Vertex Cover} parameterized by the vertex deletion distance to a graph of bounded treedepth.
Also, note that \autoref{eq:parameters}
and \autoref{thm:mainkernel} imply that the same dichotomy stated in \autoref{thm:dichotomomy} holds if we replace `$\veds(G-X) \leq \lambda$' with `$\beds(G-X) \leq \lambda$', and the same for the induced version.



As for strengthening our hardness results, we present kernelization lower bounds for $H$-{\sc Subgraph Hitting} and $H$-{\sc Induced Subgraph Hitting}, when $H$ is not a clique, parameterized by the vertex-deletion distance to a graph of constant {\sl treedepth}. By \autoref{eq:parameters}, lower bounds for treedepth are stronger than the ones of
\autoref{thm:dichotomomy}. However, in our next main result, we need a condition on $H$ that is stronger than biconnectivity, namely the non-existence of a \textit{stable cutset}, that is, a vertex separator that induces an independent set.


\begin{restatable}{theorem}{hardness-treedepth}
\label{thm:hardness-treedepth}
Let $H$ be a graph on $h$ vertices that is not a clique and that has no stable cutset. $H$-{\sc Subgraph Hitting} and $H$-{\sc Induced Subgraph Hitting} do not admit a polynomial kernel parameterized by the size of a given vertex set $X$ of the input graph $G$ such that $\td(G-X) = \O(h)$, unless ${\sf NP} \subseteq {\sf coNP}/{\sf poly}$.
\end{restatable}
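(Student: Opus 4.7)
The plan is to establish the lower bound via an OR-cross-composition from a suitable NP-hard source problem, following the standard framework of Bodlaender, Jansen, and Kratsch. Given $t$ instances $I_1,\dots,I_t$ of the source problem of common size $s$, we want to build a single instance $(G,k)$ of $H$-\textsc{Subgraph Hitting} together with a modulator $X$ of size polynomial in $s$ (independent of $t$) such that $(G,k)$ is a yes-instance if and only if some $I_j$ is, and such that $G-X$ has treedepth $\Oh(h)$. Since the hypothesis of no stable cutset implies biconnectivity of $H$ (any cut-vertex would be a one-vertex stable cutset), the starting point is the reduction underlying the hardness part of \autoref{thm:dichotomomy} (namely \autoref{thm:hardness-lambda1}), which we then strengthen so that the post-modulator structure is simultaneously bounded in treedepth.

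First I would fix a non-edge $\{a,b\}$ of $H$, which exists since $H$ is not a clique. For each source instance $I_j$ I would introduce an \emph{instance gadget} $A_j\subseteq V(G)\setminus X$ of size $\Oh(h)$; the gadget carries, via edges to $X$, the copies of $H$ encoding the constraints of $I_j$, with the roles of the non-edge endpoints $a$ and $b$ played by two carefully chosen vertices of $X$ that model the ``assignment'' shared across the cross-composed instances. A constant-sized selection/budget mechanism on $X$, together with possibly a few additional small gadgets, forces any small hitting set to witness a coherent choice showing that some $I_j$ is satisfiable. Because the $A_j$'s are pairwise disjoint components of $G-X$, each of size $\Oh(h)$, we immediately obtain $\td(G-X)=\Oh(h)$, while the size of $X$ stays polynomial in $s$ by sharing ``global'' vertices across all instances, as is standard in cross-composition.

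The main technical obstacle, and the step where the no-stable-cutset hypothesis on $H$ is essential, is proving that the only copies of $H$ in $G$ are the intended ones. If $H$ admitted a stable cutset $I$, one could embed $H$ by placing one side of $H-I$ in an instance gadget $A_j$, another side in a different $A_{j'}$, and $I$ itself inside $X$; such ``cross-gadget'' embeddings would create unintended hits that would ruin the correctness of the reduction. The condition that $H-I$ is connected for every independent set $I\subseteq V(H)$ rules them out: any $H$-copy using vertices from several components of $G-X$ would, after deleting the $X$-vertices it uses, split into several connected pieces of $G-X$, forcing these $X$-vertices to form a stable cutset of $H$. A case analysis on the intersection of an arbitrary $H$-copy with the components of $G-X$ then shows that every $H$-copy in $G$ is an intended one. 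The argument for $H$-\textsc{Induced Subgraph Hitting} runs in parallel: because each $A_j$ is of size $\Oh(h)$, both edges and non-edges inside the gadgets can be designed explicitly, and the no-stable-cutset condition again prevents spurious induced copies of $H$ from arising across gadgets. Plugging everything into the OR-cross-composition framework and invoking ${\sf NP}\nsubseteq{\sf coNP}/{\sf poly}$ then yields the claimed kernelization lower bound.
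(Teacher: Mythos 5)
Your proposal correctly identifies the role of the no-stable-cutset hypothesis (preventing an $H$-copy from splitting across several components of $G-X$ while realizing a stable separator of $H$ inside $X$), and the overall strategy of keeping $G-X$ decomposed into bounded-treedepth pieces is sound. Note first, though, that the paper does not OR-cross-compose: it gives a polynomial parameter transformation from CNF-SAT parameterized by the number of variables $n$, with $X$ formed by $n$ variable-copies of $H$ (so $|X|=hn$) and one clause gadget per clause living in $G-X$. That is not in itself a flaw in your approach, but it means the selection/consistency machinery you invoke must be built from scratch rather than inherited.

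The genuine gap is at exactly that point. You assert that each instance gadget $A_j$ has size $\O(h)$ and that a ``constant-sized selection/budget mechanism on $X$'' does the rest, but neither is substantiated, and this is where the theorem's real difficulty lies. The paper's clause gadget for a clause with $c_j$ literals has size $\Theta(c_j\cdot h)$, not $\O(h)$: it is built from $c_j-1$ copies of an $(A,B)$-gadget, a circular chain of $2a$ copies of $H$ glued at two prescribed vertices, where $A,B\subseteq V(H)$ are disjoint anticomplete sets chosen to \emph{maximize} $|A|+|B|$. This chain is the missing selection device: its only optimal hitting sets are ``all $A$-vertices'' or ``all $B$-vertices'' (\autoref{claim:ABgadget}), which is what lets the hitting-set budget propagate a binary choice along the clause. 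Its treedepth is nevertheless $\O(h)$ because deleting the single shared copy $F_j$ of $H'=H-(A\cup B\cup\{w\})$ and then the $2a$ attachment vertices of each $(A,B)$-gadget shatters it into $H$-sized pieces. Crucially, there is \emph{one} $F_j$ per clause, not one per literal as in the easier \autoref{thm:hardness-lambda1}; sharing $F_j$ is precisely what keeps the treedepth bounded, and is the reason a stronger hypothesis on $H$ is needed here. Your sketch does not explain why the Theorem~\ref{thm:hardness-lambda1} construction fails for treedepth, nor how a size-$\O(h)$ gadget could encode the validity check for an arbitrary-size source instance; both are things a complete proof must settle.

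Finally, the stable-cutset hypothesis is not only used for cross-component $H$-copies. Because the transversal copies within a clause gadget share $F_j$, one must also rule out spurious $H$-copies contained within a \emph{single} clause gadget that mix pieces of two transversal copies through $F_j$ and the variable gadgets. The paper handles these with a second, separate argument: such a copy would yield disjoint anticomplete sets $A',B'\subseteq V(H)$ with $|A'|+|B'|>|A|+|B|$, contradicting the maximal choice of $A,B$. Your proposal addresses only the cross-component case, so this piece of the correctness argument is missing as well.
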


Note that, for $t \geq 4$, the graph $K_t$ minus one edge satisfies the conditions of \autoref{thm:hardness-treedepth}. The mere existence of a graph $H$ satisfying the conditions of \autoref{thm:hardness-treedepth} is remarkable, as it shows that  (induced) subgraph hitting problems behave differently than minor hitting problems. Indeed, as mentioned before, it is known~\cite{JansenP20} that, for {\sl every} finite family $\M$ of connected graphs, the $\M$-{\sc Minor Deletion} problem admits a polynomial kernel parameterized by the vertex-deletion distance to a graph of constant treedepth.

Dekker and Jansen~\cite{DekkerJ22} asked if for every finite set of graphs $\M$, $\M$-{\sc Minor Deletion}  admits a polynomial kernel parameterized by the vertex-deletion distance to a graph with constant ${\sf exc(\M)}$-elimination distance, where ${\sf exc(\M)}$ is the class of graphs that excludes all the graphs in $\M$ as a minor. \autoref{thm:hardness-treedepth} shows that, by \autoref{eq:parameters}, for the problems of excluding subgraphs or induced subgraphs, the answer to this question is negative.

Finally, let us mention another consequence of our results. Agrawal et al.~\cite{AgrawalKLPRSZ22delet} proved, among other results, that for every hereditary target graph class $\C$ satisfying some mild assumptions,
parameterizing by the vertex-deletion distance to $\C$ and by the $\C$-elimination distance are equivalent from the point of view of the existence of fixed-parameter tractable algorithms. Our results imply, in particular, that the same equivalence does {\sl not} hold with respect to the existence of polynomial kernels in this ``distance from triviality'' setting, namely for problems defined by the exclusion of (induced) subgraphs.


\subparagraph*{Organization of the paper.}  In \autoref{sec:overview} we provide an overview of the main ideas of the kernelization algorithm, which is our main technical contribution. The full version of the kernel, which is quite lengthy, is presented from \autoref{sec:prelim} to  \autoref{sec:computeHittingSet}. More precisely, in \autoref{sec:prelim} we provide some preliminaries for the analysis of blocking sets (defined later) and the kernel. In \autoref{sec:blocking} we bound the size of minimal blocking sets, which is a crucial ingredient for the analysis of the kernelization algorithm, whose actual description can be found in \autoref{sec:kernel}, namely in \autoref{algo:kernel}. Its analysis also needs the results proved in \autoref{sec:computeHittingSet} (which are somehow independent)  to compute the decomposition associated with the parameter $\bedclique$ along with optimal solutions in graphs with bounded $\bedclique$. In \autoref{sec:hardness} we present our hardness results, and in \autoref{sec:conclusions} we discuss some directions for further research.

\section{Overview of the kernelization algorithm}
\label{sec:overview}

In this section we sketch the main ideas of the kernelization algorithm stated in \autoref{thm:mainkernel}, along with intuitive explanations and the required definitions.

\subparagraph*{Preliminaries.} Given a graph $G$ and $C \subseteq V(G)$, we denote by $N(C)=\bigcup_{v \in C}N(v)\setminus C$ and $G-C=G[V(G)\setminus C]$.
Given a graph $H$, and a subgraph (resp. induced subgraph) $F$ of $G$, we say that $F$ is a \textit{copy} (resp. \textit{induced copy}) of $H$ if $F$ is isomorphic to $H$. We say that $G$ is $H$-free (resp. $H$-induced free) if there is no copy of $H$ (resp. induced copy) in $G$.
Given two disjoint subsets $A,B \subseteq V(G)$, we say that there is an edge between $A$ and $B$ if there exists $e \in E(G)$ such that $|e \cap A|=|e \cap B|=1$. Otherwise, $A$ and $B$ are said to be \textit{anticomplete}.
When $\P=\{V_1, \ldots, V_m\}$ is a set of subsets of $V(G)$, we let $V(\P)=\bigcup_{V_i \in \P} V_i$.
A \textit{$t$-clique} is a set $K \subseteq V(G)$ such that $G[K]$ is a clique and $|K|=t$.
For any integer $n \in \mathbb{N}$, we denote by $[n]=\{1,\dots,n\}$. We study the following problem(s) for a fixed graph $H$.

\begin{center}
\fbox{
\begin{minipage}{13.5cm}
\noindent{{\sc \HH} (for {\sc $H$-{\sc Subgraph Hitting}})}\\
\noindent\textbf{Input}:~~A graph $G$.\\
\textbf{Objective}:~~Find a set $S\subseteq V(G)$ of minimum size such that $G-S$ is $H$-free.
\end{minipage}
}
\end{center}
We denote by $\HiH$ (for {\sc $H$-{\sc Induced Subgraph Hitting}}) the variant of the above problem where we impose that $G-S$ is $H$-\emph{induced} free.
We denote by $\opt^H(G)$ the optimal value of the considered problem for $G$, or simply $\opt(G)$ when $H$ is clear from the context.

Let us now introduce the main graph measures used in this paper.

\begin{definition} \label{def:vedH}
Let $H$ be a fixed graph. For a graph~$G$, define~$\veds(G)$ as
\begin{equation*}
    \begin{cases}
    0 & \mbox{if~$V(G) = \emptyset$,} \\
    \veds(G-v) & \mbox{if~$v$ is a vertex that is not in any copy of $H$,} \\
    \max_{C_i} \veds(C_i) & \mbox{if~$G$ has connected components~$C_1,\ldots, C_c$ with $c \ge 2$,} \\
    1+\min_{v \in V(G)} \veds(G-v) & \mbox{otherwise.}
    \end{cases}
\end{equation*}
\end{definition}
We define $\vedis$ in the same way as $\veds$, except that we replace ``in any copy of $H$'' by ``in any induced copy of $H$''. Note that the notation $\veds$ is motivated by the fact that it corresponds to \emph{vertex} elimination distance, with additional power of removing ``free'' vertices not in any copy of $H$. Note also that even though there could be multiple vertices~$v$ which satisfy the second criterion, the value is well-defined since it does not matter which one is picked; the second case will apply until all such vertices have been removed.
As the case where $H=K_t$ plays an important role in this paper, for the sake of shorter notation we use the shortcut $\vedt$ to denote the parameter $\vedclique$ (or $\vediclique$, which is the same).



To define the parameters $\beds$ and $\bedis$, it is convenient to introduce the following definitions (see \autoref{fig:root}).
\begin{definition}[root and pending component]\label{def:pending}
Given a fixed graph $H$ and a connected graph $G$, we say that a set $T \subseteq V(G)$ is a \emph{root of $G$} if
\begin{itemize}
    \item $T \neq \emptyset$, $G[T]$ is connected and $H$-free, and
    \item for any connected component $C$ of $G-T$, $|N(C) \cap T|=1$.
\end{itemize}
We extend to notion of root to any graph $G$ as follows. For any graph $G$ with connected components $\mathcal{C}$, we say that a set $\T=\{T_C \mid C \in \mathcal{C}\}$ is a \emph{root of $G$} if for any $C \in \mathcal{C}$,
$T_C$ is a root of $G[C]$.

Given a graph $G$ and a root $\T$ of $G$, we define the \emph{pending component of $v$ relatively to $\T$}, denoted by $C^{\T}(v)$, as the connected component of $v$ in the graph obtained from $G$ by removing all edges $e \subseteq E(\T)$.
We extend the notation to any subset $Z \subseteq V(\T)$ with
$C^{\T}(Z) = \bigcup_{v \in Z}C^{\T}(v)$. When the root is clear from context, we use $C(v)$ instead of $C^{\T}(v)$.
\end{definition}

\begin{figure}
\begin{center}
    \includegraphics[scale=.75]{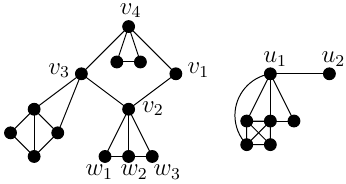}
    \caption{In this example we consider that $H=K_4$, and we denote by $C_1$ and $C_2$ the two connected components of $G$ (where $v_1 \in C_1$). Observe that $\T=(T_1,T_2)$ is a root of the depicted graph with $T_1 = \{v_1,v_2,v_3,v_4\}$ and $T_2 = \{u_1,u_2\}$. We have $C(v_1)=\{v_1\}$ and $C(v_2)=\{v_2,w_1,w_2,w_3\}$. Finally, taking $T'_1 = T_1 \cup \{w_1\}$ and $\T' = \{T_1',T_2\}$ would not be a root as $T'_1$ is not a root of $G[C_1]$.    }
    \label{fig:root}
\end{center}
  \end{figure}

We define an \textit{induced root} in the same way, except that we replace ``$G[T]$ is connected and $H$-free'' by ``$G[T]$ is connected and $H$-induced free''. Note that any graph admits a root, by taking for example a single vertex (to play the role of $T_C$) in each connected component.

\begin{observation}\label{obs:lambda4}
Let $G$ be a graph and $\T$ be a root of $G$.
For any $v \in V(\T)$, there is no edge between $C(v)\setminus \{v\}$ and $V(G)  \setminus C(v)$.
\end{observation}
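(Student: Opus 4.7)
My plan is to unfold the definitions of root and pending component and then perform a short case analysis on where a purported outgoing neighbor could lie. Fix $v \in V(\T)$ and let $C$ be the connected component of $G$ containing $v$; write $T := T_C$, so by definition $T$ is a root of $G[C]$. Let $D_1, \ldots, D_k$ be the connected components of $G[C] - T$, and for each $i \in [k]$ let $t_i$ be the unique vertex of $N(D_i) \cap T$, which exists because $\T$ is a root (if $C(v) = \{v\}$ the statement is trivial, so I may assume otherwise).

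The first step will be to give an explicit description of $C(v)$. The edges removed from $G$ in the definition of $C(v)$ are exactly those with both endpoints in some root set $T_{C'}$; since distinct components of $G$ share no edges, inside the component $C$ only the edges of $G[T]$ are deleted. After deleting $E(G[T])$ from $G[C]$, the vertices of $T$ become pairwise disconnected from one another, each $G[D_i]$ remains intact, and the only surviving edges between $T$ and $V(C) \setminus T$ are those joining $t_i$ to some vertex of $D_i$. Consequently
\[
C(v) \;=\; \{v\} \,\cup\, \bigcup_{i \,:\, t_i = v} D_i.
\]

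The second step is a short case analysis. I will take $u \in C(v) \setminus \{v\}$ --- so $u \in D_i$ for some $i$ with $t_i = v$ --- and suppose $uw \in E(G)$. If $w$ lies in a connected component of $G$ other than $C$, then no edge $uw$ exists. If $w \in T \setminus \{v\}$, then $w \in N(D_i) \cap T$, which forces $w = t_i = v$, a contradiction. If $w \in D_j$ with $j \neq i$, then such an edge would contradict that $D_i$ and $D_j$ are distinct connected components of $G[C] - T$. The only remaining possibility is $w \in \{v\} \cup \bigcup_{j \,:\, t_j = v} D_j = C(v)$, giving $w \notin V(G) \setminus C(v)$ as required. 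I do not expect any real obstacle here: the argument is essentially a restatement of the weak-attachment condition $|N(D_i) \cap T| = 1$ built into the definition of a root, once one has correctly identified $C(v)$ with $\{v\}$ together with the pending components hanging off $v$.
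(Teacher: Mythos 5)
The paper states Observation~\ref{obs:lambda4} without any proof, treating it as an immediate consequence of the definitions of root and pending component. Your argument is correct and supplies exactly the sort of unfolding the authors presumably had in mind: the key step is the explicit identification
\[
C(v) \;=\; \{v\} \cup \bigcup_{i \,:\, t_i = v} D_i,
\]
which follows because deleting all edges inside $V(\T)$ isolates the root vertices from one another while leaving each $D_i$ intact and attached only to its unique anchor $t_i$; the case analysis on a purported neighbour $w$ of $u \in C(v)\setminus\{v\}$ then discharges every possibility using precisely the weak-attachment condition $|N(D_i)\cap T|=1$ and the fact that distinct components of $G[C]-T$ (and distinct components of $G$) are non-adjacent. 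No gap.
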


\begin{definition} \label{def:bedHplus}
Let $H$ be a fixed graph. For a graph~$G$,  we define~$\beds(G)$ as
\begin{equation*}
    \begin{cases}
    0 & \mbox{if~$V(G) = \emptyset$,} \\
    \beds(G-v) & \mbox{if~$v$ is a vertex that is not in any copy of $H$} \\
    \max_{C_i} \beds(C_i) & \mbox{if~$G$ has connected components~$C_1,\ldots, C_c$ with $c \ge 2$,} \\
    1+\min_{T \subset V(G)} \beds(G-T) & \mbox{otherwise, where~$T$ ranges over  all roots of $G$.}
    \end{cases}
\end{equation*}
\end{definition}
We define $\bedis$ in the same way as $\beds$, except that we replace ``where $T$ ranges over all roots of $G$'' by  ``where $T$ ranges over all induced roots of $G$''.
Again, as the case where $H=K_t$ plays an important role in this paper,
 for the sake of shorter notation we use the shortcut $\vedt$ to denote the parameter $\bedclique$ (or $\bediclique$, which is the same).
We point out that, to make the definition of $\beds$ as simple as possible, we allowed $T$ to range over all roots of $G$.
However, as shown in \autoref{lemma:OPTforboundedlambda}, as soon as $H$ is biconnected, there always exists a root that is the connected union of $H$-(induced-)free blocks of $G$, hence our choice of notation to differentiate $\beds$ from $\veds$.


Given $\lambda \in \mathbb{N}$, let us now define the following variant of the considered problem, where we suppose that we are given as an additional input a modulator (corresponding to set $X$) to a ``simple'' graph $G-X$, where the simplicity is captured by $\bedt$ being at most $\lambda$.

\begin{center}
\fbox{
\begin{minipage}{13.6cm}
\noindent{{\sc $\KtHM^\lambda$} (for {\sc $K_t$-Subgraph Hitting given a modulator to $\bedt$ at most $\lambda$})}\\
\noindent\textbf{Input}:~~A graph $G$ and a set $X \subseteq V(G)$ such that $\bedt(G[R]) \le \lambda$, where $R = V(G)\setminus X$.\\
\textbf{Objective}:~~Find a set $S\subseteq V(G)$ of minimum size such that for any $t$-clique $Z$ of $G$, $S \cap Z \neq \emptyset$.
\end{minipage}
}
\end{center}

 We denote by $\KtHMdecbis{\lambda}$ the associated parameterized decision problem with an additional $k$ in the input, where the goal to decide whether $\opt(G) \le k$, and the parameter is $|X|$.

In \autoref{sec:kernel} we prove our main positive result that we restate here with less details, and which is a reformulation of \autoref{thm:mainkernel} with the notation introduced in this section:
\begin{theorem}\label{thm:kernellight}
  There is a polynomial kernel for $\KtHMdecbis{\lambda}$ of size $\O_{\lambda,t}(|X|^{\delta(\lambda,t)})$ for some function  $\delta(\lambda,t)$.
\end{theorem}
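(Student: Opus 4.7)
My plan is to design a marking-based kernelization that generalizes the Bougeret--Sau kernel for \textsc{Vertex Cover} parameterized by distance to bounded treedepth~\cite{BougeretS18} (which is the case $t=2$) to the block-based hierarchy captured by $\bedt$. The first step is to invoke the algorithmic toolbox of \autoref{sec:computeHittingSet} to compute, in polynomial time, a $\bedt$-decomposition of $R = V(G) \setminus X$ of depth at most $\lambda$, together with an optimal hitting set inside each subgraph that arises recursively. Concretely, at each level this yields a root $\T$ (a connected union of $K_t$-free blocks) whose removal splits $G[R]$ into pending components $C^{\T}(v)$, each glued back to $\T$ through a single vertex $v \in V(\T)$ and recursively admitting a $\bedt$-decomposition of depth at most $\lambda-1$. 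By \autoref{obs:lambda4}, edges between a pending component and the rest of $R$ must pass through this attachment vertex, which is pivotal for all replacement arguments to come.

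The combinatorial core of the kernel is the notion of \emph{blocking set} studied in \autoref{sec:blocking}: for a subset $A \subseteq X$ of size strictly less than $t$, a blocking set for $A$ is a set $Z \subseteq R$ such that $G[A \cup Z]$ is a $t$-clique. Every copy of $K_t$ in $G$ that threatens the instance can be indexed by its $X$-part $A$ together with a blocking set $Z$ for $A$. \autoref{sec:blocking} establishes that every \emph{minimal} blocking set has size bounded by some function $f(\lambda,t)$. The plan is then, for every $A \subseteq X$ with $|A|<t$, to compute a representative family $\Rcal_A$ of blocking sets for $A$, of polynomial size $\O_{\lambda,t}(|X|^{f(\lambda,t)})$, obtained by a Sunflower-style marking scheme and such that for every candidate solution $S$, if some blocking set for $A$ avoids $S$ then so does one in $\Rcal_A$. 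The kernel then retains $X$, the marked witnesses $\bigcup_A V(\Rcal_A)$, and a bounded amount of connecting information from the decomposition; all other vertices of $R$ are deleted, while the budget $k$ is decremented by the cost of an optimal solution for the deleted part, computed exactly through the subroutine of \autoref{sec:computeHittingSet}. Summing over at most $|X|^{t-1}$ subsets $A$ gives the claimed size $\O_{\lambda,t}(|X|^{\delta(\lambda,t)})$ for an appropriate polynomial $\delta(\lambda,t)$.

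The main obstacle is twofold and is exactly what lives in \autoref{sec:blocking}: bounding $f(\lambda,t)$, and proving that the representative families $\Rcal_A$ behave correctly across all $\lambda$ levels of the decomposition. The size bound is proved by induction on $\lambda$, using that roots are themselves $K_t$-free and weakly attached to deeper components: any large minimal blocking set must essentially live inside one pending component, because otherwise \autoref{obs:lambda4} forces the $K_t$-structure to split across two pieces that are connected only through a single vertex of $\T$, contradicting biconnectedness of $K_t$. This pushes the argument down to depth $\lambda-1$. The representative property is then established by a two-way exchange: forward correctness is immediate since all marked witnesses come from $G$; for backward correctness, a solution for the reduced instance is lifted to $G$ by re-introducing the pre-computed optimum on the deleted part, and any copy of $K_t$ surviving in $G$ is shown, via the representative property of the $\Rcal_A$'s, to correspond to a copy among the marked witnesses that would also have been hit. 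Ensuring that this exchange commutes with the recursive removal of free (non-$K_t$) vertices and with the exact solution subroutine, across all $\lambda$ levels simultaneously, is where most of the technical work of \autoref{sec:kernel} (formalized in \autoref{algo:kernel}) will go.
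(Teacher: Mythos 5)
Your high-level plan (compute a $\bedt$-root, mark vertices, delete unmarked pending components, decrement $k$ by a locally computed optimum, recurse on $\lambda$) is compatible with what \autoref{algo:kernel} actually does. However, the combinatorial core of your proposal has a genuine gap. First, you have misread \autoref{def:bsKt}: a \emph{blocking set} in the paper is not a partial clique $Z\subseteq R$ with $G[A\cup Z]$ a $t$-clique, but a family $\B$ of small cliques such that $\opt(G,\F\cup\B)>\opt(G,\F)$, i.e.\ a family of side constraints that forces every optimum to overpay. In your reading, minimal blocking sets trivially have size $\le t-1$, which would make \autoref{thm:mmbslambda4} (and all of \autoref{sec:blocking}) vacuous; the bound $\beta(\bedt(G),t)$ is a bound on how many such side constraints are needed to force an overpayment, and that bound is what powers the whole kernel.

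This misreading propagates into the marking scheme. You propose to enumerate subsets $A\subseteq X$ with $|A|<t$ and keep a Sunflower-style representative family $\Rcal_A$ of $K_t$-completions in $R$. That is too weak on two counts. (1) The quantity you actually need to control is the \emph{conflict} $\conf^t_{U}(C(v))$ for arbitrary $U\subseteq X\cup N$ not hit by the solution: when a pending component $C(v)$ is deleted and later re-added, the locally optimal solution placed back in $C(v)$ must additionally hit every $K_t$ in $G[C(v)\cup U]$, and whether this is possible without overpaying cannot be read off from single straddling cliques indexed by a set $A$ of size $<t$. What makes the argument go through is \autoref{lemma:smallcertif}: a positive conflict with an \emph{arbitrary} $U$ is certified by a ``chunk'' $X'\subseteq U$ of size at most $c(\lambda,t)=(t-1)\beta(\lambda,t)$, and this is precisely where the blocking-set bound is consumed. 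Consequently the marking must enumerate chunks of size up to $c(\lambda,t)$, not sets of size $<t$, and the kernel degree is governed by $c(\lambda,t)$, not by $t-1$. (2) The ``representative family'' property you claim (for every candidate solution $S$, if some completion of $A$ avoids $S$ then a marked one does) is not what the Sunflower-style packing buys, and the safeness proof does not go through representation; it goes through the \emph{overpayment/restructuring} argument of \autoref{lemma:step2safe}: if the packed $|X|+1$ parts witnessing a conflict are all hit by $Z'$, then $Z'$ overpays in $R$ by more than $|X|$, so $Z'$ can be replaced by $X$ together with local optima. Finally, your sketch does not mention the non-$K_t$-vertices $N^t(G[R])$, which are free for $\bedt$ but pollute the pending-component structure; handling them is why the paper packs two-dimensional parts $(V_i,N_i)$ and runs $t-1$ rounds of marking in \autoref{def:step1}, none of which appears in your plan.
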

Let us now overview the techniques used to establish the above result.

\subparagraph*{Warming up with {\sc Vertex Cover}.}
As an extreme simplification of our set up, let us consider the case where $t=2$, corresponding to  {\sc Vertex Cover}, and assume that $X$ is a modulator to the simplest graph class, namely an independent set.
Our kernel uses a marking procedure (\autoref{def:mark}) that corresponds to the following marking algorithm for  {\sc Vertex Cover}.
For any $u \in X$, mark up to $|X|+1$ vertices $v \in R$ such that $\{u,v\} \in E(G)$. Let $M \subseteq R$ be the set of marked vertices.
Observe the following ``packing property'' of the marking algorithm: if there exists $v \in R\setminus M$ and $u \in X$ with $\{u,v\} \in E(G)$,
then there exists a ``packing'' $\P \subseteq M$ of $|X|+1$ vertices such that for any $v' \in \P$, $\{v',u\} \in E(G)$ (the term ``packing'' may seem inappropriate
here, but becomes natural for $t > 2$ as the marking algorithm will mark disjoint sets of vertices instead of distinct vertices).
Now, if $R=M$ then $|R| \le \O(|X|^2)$ and the instance is kernelized. Otherwise, if there exists $v \in R\setminus M$, then define a \textit{reduced instance} as $G'=G-v$ and $k'=k$.
Let us sketch why this removing step is safe, as the arguments also correspond to a very simplified version of \autoref{lemma:step2safe}. The only non-trivial direction is that if $(G',k')$ is a \yes-instance, then  $(G,k)$ is also a \yes-instance.
Given a solution $Z'$ of $(G',k')$, if there exists $u\in X \setminus Z'$ such that $\{u,v\} \in E(G)$,
then the packing property implies the existence of the above set $\P$. Thus, we get that $Z'$ \emph{overpays for $u$}: it contains one extra vertex (in this case, one instead of zero) for each $v' \in \P$ as we must have $\P \subseteq Z'$. This implies that we can restructure $Z'$ into $\tilde{Z}=X$, while ensuring that $|\tilde{Z}| \le |Z'|$. Now, $\tilde{Z}$ can be easily completed to a solution of $G$ of size $k$ (in this case, by doing nothing).
By repeating this reduction rule, we get the kernel of size~$\Oh(|X|^2)$.

\subparagraph*{Parts, chunks, and conflicts.}
Let us now point out some important ideas used to lift up the previous kernel for  {\sc Vertex Cover} to $\KtHMdecbis{\lambda}$.
In the previous setting, the key property when proving the safeness of the reduction rule, given a solution $Z'$ of $(G',k')$, is the following:
when ``adding back'' a non-marked vertex $v \in R \setminus M$ to $G'$,  either there exists $u \in X \setminus Z'$ such that $Z'$ overpays for $u$, or there is no edge $\{u,v\}$ for any $u \in X \setminus Z'$.

Let us now rephrase this key property in the setting of hitting $t$-cliques using the adapted concepts of part, chunk, and conflict, and we will formally define these terms later.
When ``adding back'' a non-marked \emph{part} $V' \subseteq R \setminus M$ to $G'$, we know that either there exists \emph{a chunk} $X' \subseteq X \setminus Z'$ such that $Z'$ overpays for $X'$, or there is no \emph{conflict between $X'$ and $V'$} for any chunk $X'$.
Observe first that, as $G-X$ is now more general than an independent set, we have to consider a packing of ``parts'' (subsets of vertices of $R$), meaning that if there is a non-marked part $V'$ that we remove, we now set $k'=k-\opt(G[V'])$. The second difference is the notion of ``conflict  between $X'$ and $V'$'' that plays the role of ``edge $\{u,v\}$''.
We say that there is no \emph{conflict} between $X'$ and $V'$ if $\conf^t_{X'}(V')=0$, the condition  $\conf^t_{X'}(V')=0$ being equivalent to the fact that we can pick only $\opt(G[V'])$ vertices in $V'$, while still hitting all $t$-cliques in $G[X' \cup V']$ (see \autoref{def:conflict} for the formal definition of $\conf^t$).
The third difference is the notion of chunk and blocking set. A good starting point when trying to complete a solution $Z'$ of $G'$ to a solution $Z$ of $G$ is that $\conf^t_{X \setminus Z'}(V')=0$.
Indeed, this condition implies that there exists a set $S^\star_{V'}$ of size $\opt(G[V'])$ such that $S^\star_{V'}$ hits all $t$-cliques in $G[V' \cup (X \setminus Z')]$. Thus, $S^\star_{V'}$ is a good candidate to build a solution $Z = Z' \cup S^\star_{V'}$ of $(G,k)$. Note that this only remains a good starting point, as $Z$ may not be a solution: it could miss cliques using $V'$, $(X \setminus Z')$, and other vertices in $R \setminus V'$.
This condition $\conf^t_{X \setminus Z'}(V')=0$ could be achieved by a marking algorithm that, for any $U \subseteq X$, marks up to $|X|+1$ parts $V'$ such that $\conf^t_{U}(V')>0$, which is  a generalization of the previous marking algorithm for  {\sc Vertex Cover}.
However, the running time and the number of marked parts by such an algorithm would not be polynomial in $|X|$, as there are too many subsets $U$ to consider.

To overcome this issue, the trick is the notion of \textit{maximum minimal blocking sets}, denoted by $\mmbs_t$ (\autoref{def:bsKt}), which is a graph parameter for which we skip the definition for the moment. What is important to state about $\mmbs_t$ here is that in \autoref{thm:mmbslambda4} in \autoref{sec:blocking}, we prove that there exists a function $\beta: \mathbb{N}^2 \to \mathbb{N}$ such that for every graph $G$, $\mmbs_t(G) \le \beta(\bedt(G),t)$. As in the kernelization algorithm we apply this to $G[R]$ and $\bedt(G[R]) \le \lambda$, we obtain $\mmbs_t(G[R]) \le \beta(\lambda,t)$. Moreover, \autoref{lemma:smallcertif} implies that the previous marking condition ``there exists $U \subseteq X$ such that $\conf^t_{U}(V')>0$'' is equivalent to ``there exists $X' \in \X$ such that $\conf^t_{X'}(V')>0$'', where $\X = \{X' \subseteq X \mid \mbox{ $|X'| \le (t-1)\beta(\lambda,t)$ and $X'$ does not contain a $t$-clique} \}$ is the set of \emph{chunks}.
Observe that, as the chunks have bounded size, the marking algorithm runs in time $\O^{\star}(|X|^{(t-1)\beta(\lambda,t)})$.
The conclusion is that the ``triviality'' of $G[R]$ ($\bedt(G[R]) \le \lambda$) implies that $G[R]$ has bounded $\mmbs_t$, which allows to certify the absence of conflict (for any $U \subseteq X$) in polynomial time.

These notions of conflict, chunk, and minimal blocking set were also critical in previous work on kernelization: the notion of conflict was introduced in \cite{JansenB13}, and bounds on $\mmbs_2$ (for  {\sc Vertex Cover}) have been proved for different triviality measures of $G[R]$~\cite{HolsKP22, BougeretS18, JansenB13, BougeretJS22,AraujoBCS23}.
 A first difference here is the study of $\mmbs_t$ for {\sc $K_t$-Subgraph Hitting}, whose behavior is more complex than  {\sc Vertex Cover}, as discussed below when mentioning the new challenges.

\subparagraph*{Decreasing $\bedt(G[R])$ and using recursion.}
Let us now informally describe the main steps of the kernel (see \autoref{fig:overallkernel}). Given a graph $G$, we denote by $N^t(G)=\{v \in V(G) \mid \nexists \mbox{ $t$-clique $K$ with $v \in K$}\}$ the set of vertices of $G$ that do not occur in any copy of $K_t$, called the \textit{non-$K_t$-vertices}.
Given an input $(G,X,k)$ of $\KtHMdecbis{\lambda}$, we first compute  $N^t(G[R])$ and, using the algorithm of \autoref{lemma:OPTforboundedlambda}, a $\bedt$-root $\T$
of $G[R]-N^t(G[R])$, where a \textit{$\bedt$-root} of $G$ is a root $\T$ such that $\bedt(G-V(\T)) = \bedt(G)-1$.
We point out that, unlike  the case for treedepth or bridge-depth, computing such a root is not straightforward, as one cannot try the a priori exponentially many possible roots to find one that decreases $\bedt$.
However, the algorithm of \autoref{lemma:OPTforboundedlambda} relies on the fact that it is possible to compute in polynomial time a set of size~$\Oh(n)$ that contains a $\bedt$-root.
Coming back to the kernel strategy, observe that there may be edges between some $C(v)$ and $N^t(G)$, but not between $C(v)$ and $C(u)$ for $u \neq v$, and that by definition of a $\bedt$-root, $\bedt(G[R']) < \lambda$, where $R'=R-V(\T)$.
      Then, we mark a small (polynomial in $|X|$) set $M(\T,N^t(G[R]),G,X)$ of vertices (\autoref{def:step1}) of $V(\T)$ using the {\tt mark} algorithm (\autoref{def:mark}).
      If there exists $v \in V(\T) \setminus M(\T,N^t(G[R]),G,X)$, then we can remove $C(v)$ and decrease $k$ by $\opt(G[C(v)])$ (\autoref{lemma:step2safe}).
      Otherwise, $|M(\T,N^t(G[R]),G,X)|=\O(|X|^{f(\lambda,t)})$ for some function $f$, and thus we can move $M(\T,N^t(G[R]),G,X)$ to the modulator and get a new modulator $X' = X \cup M(\T,N^t(G[R]),G,X)$ whose
      size is still polynomial in $|X|$.
      The key point is that $\bedt(G-X') = \bedt(G[R']) < \lambda$, and thus we use induction on $\lambda$ and make a recursive call to $(G,X')$,
      which is an input of $\KtHMdecbis{\lambda-1}$, leading to a kernel  polynomial in $|X'|$, and thus in $|X|$.

  \begin{figure}
\begin{center}
    \includegraphics[scale=0.65]{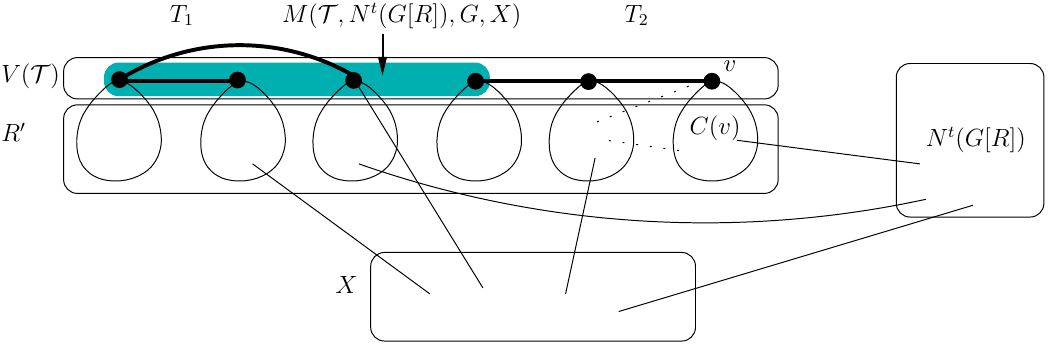}
    \caption{Main steps of the kernel. In this example $\T = \{T_1,T_2\}$ (edges inside $T_i$ are in bold, and dotted edges cannot exist), and there exists a non-marked vertex $v$, implying that the pending component $C(v)$ will be removed.}
    \label{fig:overallkernel}
\end{center}
  \end{figure}

This idea of shrinking the ``root'' of a decomposition of $G-X$ to decrease the ``triviality measure'' (here, $\bedt$) and recurse originates in~\cite{GajarskyHOORRVS13}, and was used in~\cite{BougeretS18} for treedepth.
It was subsequently generalized in~\cite{BougeretJS22}, where the triviality measure is a parameter called \textit{bridge-depth} and the equivalent of a root is a so-called \textit{tree of bridges} for each connected component of $G[R]$.

\subparagraph*{New challenges.}
With respect to the strategies followed in previous work on related topics~\cite{BougeretS18,BougeretJS22,HolsKP22,JansenP20,DekkerJ22,JansenK11,JansenB13}, in our setting we encounter (at least)
 the following three orthogonal difficulties, for which we have to develop new ideas: dealing with the non-$K_t$-vertices, dealing with cliques $K_t$ for arbitrary fixed $t$ instead of $t=2$, and proving that there exists a function $\beta$ such that for every graph $G$, $\mmbs_t(G) \le \beta(\bedt(G))$.

The first difficulty is handling vertices of $N^t(G[R])$, which are vertices not belonging to a $t$-clique in $G[R]$.
Indeed, observe that these non-$K_t$-vertices are ``free'' for $\bedt$, in the sense that $\bedt(G[R])=\bedt(G[R]-N^t(G[R]))$.
However, these vertices make the structure of $G[R]$ more complicated. Indeed, $\T$ being a root of $G[R]-N^t(G[R])$ implies that for any $T \in \T$ and $v \in T$,
there are no edges between $C(v) \setminus \{v\}$ and other vertices in $C(u)$ for $u \in V(\T) \setminus \{v\}$, but there could be edges between $C(v)$ and $N^t(G[R])$.
Thus, unlike in \cite{BougeretS18,BougeretJS22}, we cannot just bound the number of connected components of $G[R]$, and then assume that we have a single root $T$ with simple properties
(again, the root being a single vertex in~\cite{BougeretS18}, and a tree of bridges in \cite{BougeretJS22}).
Typically, here $G[R]$ could have only one connected component, but  the nice structure given by $\T$ could be ``polluted'' by vertices of $N^t(G[R])$.
We handle these vertices by considering a packing of ``bidimensional'' parts $(V_i,N_i)$, where in particular $V_i \subseteq V(\T)$ is a clique of size at most $t-1$
and $N_i \subseteq N^t(G[R])$, and we use a kind of generalized ``sunflower-like'' marking by first creating a maximal packing $\P$ of parts $(V_i,N_i)$, of size at most $|X|+1$,
and then recursively marking around each possible $g \in \bigcup N_i$ (see the last line of \autoref{def:mark}).

The second difficulty is to handle $t$-cliques instead of edges.
Indeed, assume that we just removed a pending component $C(v)$ for some $v \in V(\T)$ and defined $k'=k-\opt(G[C(v)])$.
Assume also that, given a solution $Z'$ of $(G',k')$, we have the good starting point $\conf^t_{(X \cup N^t(G[R])) \setminus Z'}(C(v)) = 0$, implying, by the definition of conflict, that there exists a locally optimal solution $S^\star_v$ of $G[C(v)]$ that intersects all $t$-cliques of $G[C(v) \cup ((X \cup N^t(G[R])) \setminus Z')]$.
However, there could also exist ``spread'' cliques $K$ containing $v$ and using vertices of $((X \cup N^t(G[R])) \setminus Z')$  and $M' \subseteq (V(\T)\setminus \{v\})$.
These cliques may be spread across several vertices of $V(\T)$, and by definition of a root they cannot use vertices in $C(u) \setminus \{u\}$ for any $u \in M' \cup \{v\}$ (according to \autoref{obs:lambda4}).
To take into account the potential conflicts generated by these spread cliques, we perform $t-1$ marking steps  (\autoref{def:step1}),
where informally at each step we guess all possible subsets $M'$, with $|M'| \le t-1$, corresponding to a guessed intersection of a spread clique with previously marked vertices.

The last difficulty is to bound $\mmbs_t(G)$ as a function of $\bedt(G)$ for any graph $G$. We first need to define the notion of blocking set adapted to our problem. Let $\EKtH$ (for \emph{Extended} $K_t$-{\sc Subgraph Hitting}) be the problem where given $(G,\F)$, where $\F$ is a set of subsets of $G$ such that for any $Z \in \F, 1 \le |Z| \le t-1$ and $G[Z]$ is a clique, a solution must intersect all $t$-cliques of $G$ and all $Z \in \F$. A \emph{blocking set} $\B$ of $G$ is a set of subsets of vertices of $G$ such that $\opt(G,\B)>\opt(G)$, where $\opt(G,\B)$ is the minimum size of a solution of the $\EKtH$ problem with input $(G,\B)$, meaning that any set hitting all $t$-cliques of $G$ and all $Z \in \B$ cannot be an optimal solution of $G$ for $\KtH$.
Then, $\mmbs_t(G)$ is the maximum size of an inclusion-wise minimal blocking set of $G$.
The ``max-min'' taste of this definition makes it difficult to handle, but fortunately we will use Property~\ref{prop:mmbs} stating that for any $\beta$, $\mmbs_t(G) \le \beta$ is equivalent to the fact that for any blocking set $\B$ of $G$, there exists $\overline{\B} \subseteq \B$ such that $\overline{\B}$ is still a blocking set of $G$ and $|\overline{\B}| \le \beta$.
We obtain the following upper bound, which requires a considerable amount of technical work.
\begin{theorem}\label{thm:mmbslambda4}
  For any graph $G$ and any integer $t \ge 3$, it holds that $\mmbs_t(G) \le \beta(\bedt(G),t)$, where $\beta(x,t)=
\underbrace{
  {{{{{^{2\vphantom{h}}}^{t2\vphantom{h}}}^{t2\vphantom{h}}}}^{\cdots\vphantom{h}}}^{t\vphantom{h}}
}_{\text{$x$ times}}
$
({{i.e.,}} $\beta(1)=2^t, \beta(2)=2^{t2^t}$, etc.)
\end{theorem}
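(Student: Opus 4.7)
The proof proceeds by induction on $\lambda = \bedt(G)$. The base case $\lambda = 0$ is immediate: $G$ contains no $K_t$, hence $\opt(G) = 0$, and any minimal blocking set consists of a single clique of size between $1$ and $t-1$, giving $\mmbs_t(G) \le 1$. When $G$ is disconnected, a minimal blocking set lives entirely in one connected component (cliques and hitting costs are component-wise), which matches the max-over-components definition of $\bedt$ and reduces the question to the connected case. We therefore focus on a connected $G$ with $\bedt(G) = \lambda \ge 1$.

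For such $G$, \autoref{def:bedHplus} guarantees a root $T$ of $G$ (after freely removing the non-$K_t$-vertices, which do not influence $\bedt$) with $\bedt(G - V(T)) \le \lambda - 1$, and in particular $\bedt(G[C(v) \setminus \{v\}]) \le \lambda - 1$ for every $v \in V(T)$. The structural fact I would exploit, following from \autoref{obs:lambda4}, is that every clique of $G$ lies entirely in $V(T)$---and then has size $< t$, since $G[T]$ is $K_t$-free---or entirely in some single pending component $C(v)$. In particular, every $K_t$ of $G$ is contained in some $C(v)$, and cliques in different pending components interact only through $V(T)$.

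By Property~\ref{prop:mmbs}, bounding $\mmbs_t(G) \le \beta(\lambda, t)$ is equivalent to extracting from any blocking set $\B$ a sub-blocking set of size at most $\beta(\lambda, t)$. The plan is to partition $\B$ into $\B_T = \{Z \in \B : Z \subseteq V(T)\}$ and $\B_v = \{Z \in \B : Z \subseteq C(v),\ Z \not\subseteq V(T)\}$ for each $v \in V(T)$, and then apply the induction hypothesis on $G[C(v) \setminus \{v\}]$ to retain at most $\beta(\lambda-1, t)$ essential sets in $\B_v$. The sets kept in $\B_v$ are subsequently classified by the ``type'' of an optimal solution's intersection with the short interface around $v$: for each of the $\le \beta(\lambda-1, t)$ essential local configurations, the solution can pick a subset of a clique of size $\le t$, giving $\le 2^t$ choices; hence the number of global interaction types is bounded by $2^{t \cdot \beta(\lambda-1, t)}$. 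Keeping one representative per type (and treating $\B_T$ similarly against the $\le \lambda-1$ levels of $G[T]$ already absorbed) yields the recurrence $\beta(\lambda, t) \le 2^{t \cdot \beta(\lambda-1, t)}$ matching the tower in the statement.

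The principal obstacle is this last combinatorial step: a locally essential $Z \in \B_v$ can be neutralised by a solution that overpays on $V(T)$ or on a sibling $C(v')$, so local and global optimality must be reconciled. Formalising the type system so that the types genuinely partition the essential contributions of $\B$, verifying that $\B_T$ does not blow up the bound, and handling the non-$K_t$-vertices that appear in some $Z$ without belonging to any $K_t$, will constitute the bulk of the technical work carried out in \autoref{sec:blocking}.
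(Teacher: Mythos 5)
Your overall skeleton---induction on $\bedt(G)$, base case $\bedt(G)=0$, reduction to connected $G$ via the max-over-components definition, a root $T$ and the decomposition into pending components $C(v)$ via \autoref{obs:lambda4}---matches the paper's. But two key mechanisms are missing, and the "type system" sketch does not bridge them. The first is that the paper never applies the induction hypothesis to the bare graph $G[C(v)\setminus\{v\}]$. Removing $v$ changes which solutions are relevant: a solution of $G[C(v)]$ that avoids $v$ must still hit every $K_t$ through $v$, and this is not a $K_t$-hitting constraint in $G[C(v)]-v$. The paper therefore works with the annotated problem $\EKtH$ on instances $(G,\F)$, where $\F$ carries constraints like $\pr^t_v(V(G)\setminus\{v\})$, and it defines $\mmbs_t(G)$ only over \emph{clean} instances with $\opt(G,\F)=\opt(G)$. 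Crucially, removing $v$ can produce an instance that is not clean but is off by exactly one, and \autoref{lemma:bsnonclean} handles this off-by-one case; it is the source of the $2^{(t-1)\,\mmbs_t(\cdot)}$ factor in the recurrence. Without some annotated formulation, the induction hypothesis as you state it cannot be invoked on $G[C(v)\setminus\{v\}]$.

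The second gap is structural. You propose to retain up to $\beta(\lambda-1,t)$ sets per pending component and then deduplicate across all $|T|$ of them by a "type" equivalence. Since $|T|$ is unbounded, that produces a candidate set of size $\Omega(|T|)$ before deduplication, and it is far from clear that one representative per type suffices: the types you describe are indexed by a solution's trace on a small interface, but two same-type sets living in different pending components are not interchangeable in any obvious way, precisely because a solution can overpay on $V(T)$ or on a sibling $C(v')$, as you note. The paper sidesteps this entirely via the zooming lemma (\autoref{lemma:bszooming}), which shows that either a single $C(v)$, or the at most $t-1$ pending components hanging off one clique $Z\in\B\cup\F$ with $Z\subseteq T$, already carry a blocking set of the corresponding subinstance. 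This is the step that makes the final bound independent of $|T|$, and it is the piece your plan does not contain: you correctly flag the difficulty, but the type-counting argument as outlined does not resolve it.
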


To explain the difficulty of proving the above theorem, let us sketch how such a bound is obtained for  {\sc Vertex Cover} as a function of treedepth, for example in the proof in \cite{BougeretS18} that, for every graph $G$,
$\mmbs_2(G) \le 2^{\td(G)}$.

Observe first that for the  {\sc Vertex Cover} problem, a blocking set $\B$ is
a set of singletons, which we consider as a subset of vertices,
such that any vertex cover $S$ containing $\B$ is not optimal.
Let us now use Property~\ref{prop:mmbs} and consider a blocking set $\B$ of $G$, and let us show that there exists $\overline{\B} \subseteq \B$ such that $\overline{\B}$  is still a blocking set of $G$ and $|\overline{\B}| \le 2^{\td(G)}$.
Consider a graph $G$ and  a ``root'' $v$ of a treedepth decomposition of $G$, meaning that $\td(G-v) < \td(G)$.
Let us consider the most complex case where there exists an optimal solution using $v$, another avoiding $v$, and $v \notin \B$.
It is not difficult to prove that, as $\B$ is a blocking set of $G$, $\B_1=\B$ is a blocking set of $G_1:=G-v$,
and $\B_2=\B \setminus \{N(v)\}$ is a blocking set of $G_2 := G-(\{v\} \cup N(v))$. Thus, as for any $i \in [2]$, $\td(G_i) < \td(G)$, by induction
we get that there exists $\overline{\B}_i \subseteq \B_i$ such that $\overline{\B}_i$ is a blocking set of $G_i$ and $|\overline{\B}_i| \le 2^{\td(G)-1}$. As $\overline{\B}_1 \cup \overline{\B}_2$ is a blocking set of $G$, we get the desired bound.
The problem when lifting this idea to \textsc{$K_t$-Subgraph Hitting} instead of  {\sc Vertex Cover} is that, when considering an optimal solution $S$ that avoids a root $v$, we do not know which vertex
the solution $S$ will pick in $N(v)$. This is the reason for which we consider the more general version of the problem, namely $\EKtH$, to encode the fact that a solution of $(G,\emptyset)$ avoiding
a root $v$ must be a solution of $(G-v,\pr^t_v(V(G)\setminus \{v\}))$, where
given two disjoint sets  $A,B \subseteq V(G)$, $\pr^t_A(B)= \{K \cap B \mid K \mbox{ is  a $t$-clique in $G[A \cup B]$ and $K \cap A \neq \emptyset$ and $K \cap B \neq \emptyset$}\}$. We also need to define the corresponding generalized notion of blocking set of an instance $(G,\F)$ of $\EKtH$ (\autoref{def:bsKt}), and not only of a graph $G$.
Moreover, we have to keep track of the structure of $\F$, as there is no hope to bound $\mmbs_t(G,\F)$ as a function of $\bedt(G)$ for an arbitrary set $\F$.
Indeed, for example, let  $G_\ell$ be a chain of triangles of length $\ell$, as depicted in \autoref{fig-chains}.
We have $\mmbs_2(G_\ell) \ge \ell$, as if we let $\B$ be the set of top vertices of the $\ell$ triangles, then it can be easily seen that $\B$ is a minimal blocking set of $G_\ell$ with $|\B|=\ell$.
Now, take $\F = E(G_\ell)$, $t=4$, and observe that any solution of  instance $(G_\ell,\F)$ of $\EKtH$ is a vertex cover of $G_\ell$, and thus $\B$ is also a minimal blocking set of size $\ell$ for the input $(G_\ell,\F)$ with $t=4$,
while $\bedt(G_\ell)=0$ as $G_\ell$ is $K_4$-free.


We resolve this problem by proving bounds on $\mmbs_t(G,\F)$ only for a special type of instances that we call \emph{clean}, which are pairs $(G,\F)$ such that  $\opt(G,\F)=\opt(G)$.
The first main difficulty is that, when starting with a blocking set $\B$ of $(G,\F)$, reducing to a graph $G'$ with $\bedt(G') < \bedt(G)$ requires to remove the entire root $\T$ of $G-N^t(G)$, instead of just one vertex as in the treedepth case. As $|V(\T)|$ may be arbitrarily large, we need to prove (see \autoref{lemma:bszooming}) that it is enough to ``zoom'' on a small number of subgraphs (pending components here), allowing us to extract (by induction) a small blocking set only in each of these subgraphs.
The second main difficulty is to ensure that we can reduce via recursion to smaller \emph{clean} instances. Indeed, even if we initially consider a clean instance $(G,\emptyset)$, and even in the favorable case where $\T$ is just one vertex $v$ (as in treedepth), we have to consider the case where there exists an optimal solution of $G$ using $v$, another avoiding $v$, and $v \notin \B$, where $\B$ is fixed blocking set from which we try to extract a small one. However, observe that optimal solutions avoiding $v$ are optimal solutions of $(G-v,\pr^t_v(V(G)\setminus \{v\}))$, and that $\opt(G-v,\pr^t_v(V(G)\setminus \{v\}))=\opt(G)=1+\opt(G-v)$
(the last equality holds since there are optimal solutions taking $v$). Thus, we observe that this situation leads to a \emph{non}-clean instance $(G-v,\pr^t_v(V(G)\setminus \{v\}))$, but ``almost clean'' as $\opt(G-v,\pr^t_v(V(G)\setminus \{v\}))=\opt(G-v)+1$.
We treat these almost clean instances in \autoref{lemma:bsnonclean}, which is the main cause of the huge growth of function $\beta$ in the final bound $\mmbs_t(G) \le \beta(\bedt(G),t)$ given in \autoref{thm:mmbslambda4}. As this bound directly reverberates both in the running time and the size of the kernel (see \autoref{thm:kernel}, where $\delta(\lambda,t)$ is dominated by $\beta(\lambda,t)$ for an instance $(G,X)$ of $\KtHM$ where $\bedt(G-X) \le \lambda$), improving this bound is crucial in order to improve the kernel size. In this direction, we provide in \autoref{lemma:mmbstd} a significantly better upper bound for minimal blocking sets of the $\KtH$ problem as a function of $\td$ instead of $\bedt$. As the proof technique is also different, we believe that this result might be of independent interest.



 Finally, let us mention that in several earlier papers on kernelization using structural parameterizations, it was also crucial to understand the maximum size of an inclusion-minimal set with additional requirements on the solution to a vertex-deletion problem, for which no optimal solution can satisfy all additional requirements; these correspond to variations on the notion of blocking sets.  They were explored for the problems of hitting forbidden connected minors in graphs of bounded treedepth~\cite{JansenP20}, and for hitting cycles in graphs of bounded elimination distance to a forest~\cite{DekkerJ22}, both of which lead to super-exponential bounds in terms of the graph parameter.

\section{Preliminaries for the analysis of blocking sets and the kernel}
\label{sec:prelim}

In this section we provide the definitions and the preliminary results needed in the next two sections, namely in \autoref{sec:blocking} where we provide upper bounds on the sizes of minimal blocking set as a function of the considered parameter(s), and in \autoref{sec:kernel} where we formally describe our kernelization algorithm.

We start with several definitions. A graph $G$ is \textit{biconnected} if it remains connected after the removal of any vertex, and it is \textit{$2$-connected} if it is biconnected and has at least three vertices (hence, an edge is biconnected but not $2$-connected.


Given a graph $G$ and a set $\F$ of subsets of $V(G)$, we say that a set $S \subseteq V(G)$ is a \emph{hitting set of $\F$} if for any $Z \in \F$, $S \cap Z \neq \emptyset$.

\begin{definition}[projection]\label{def:proj}
  Given a graph $G$, two disjoint sets  $A,B \subseteq V(G)$, and an integer $t \ge 3$, we denote by $\pr^t_A(B)= \{K \cap B \mid K$ is  a $t$-clique in $G[A \cup B]$ and $K \cap A \neq \emptyset$ and $K \cap B \neq \emptyset \}$.
\end{definition}

\begin{definition}\label{def:nonKt}
Given a graph $G$ and an integer $t \ge 3$, we say that $N \subseteq V(G)$ is a set of \emph{non-$K_t$-vertices} if for any $t$-clique $K$ of $G$, $K \cap N = \emptyset$, and we denote by $N^t(G)$ the inclusion-wise maximal set of non-$K_t$-vertices of $G$.
\end{definition}

Let us introduce the following variant of $\KtH$.

\begin{center}
\fbox{
\begin{minipage}{13.5cm}
\noindent{{\sc $\EKtH$} (for {\sc Extended $K_t$-{\sc Hitting}})}\\
\noindent\textbf{Input}:~~A pair $(G,\F)$ where $G$ is a graph and $\F$ is a set of subsets of $V(G)$ such that for any $Z \in \F, 1 \le |Z| \le t-1$ and $G[Z]$ is a clique.\\
\textbf{Objective}:~~Find a set $S\subseteq V(G)$ of minimum size such that for any $t$-clique $Z$ of $G$ or any $Z \in \F$, $S \cap Z \neq \emptyset$.
\end{minipage}
}
\end{center}
We denote by $\opt^t(G,\F)$ the optimal value for $(G,\F)$, or simply $\opt(G,\F)$ when $t$ is clear from the context.
We say that $(G,\F)$ is \emph{clean} if $\opt(G,\F)=\opt(G)$.

\begin{definition}[additive partition]
  Given a graph $G$ and a partition $\P$ of $V(G)$, we say that $\P$ is an \emph{additive partition} of $G$ if $\opt(G)=\sum_{C \in \P}\opt(G[C])$ (where $\opt$ is the optimal value when considering the $\KtH$ problem).
\end{definition}

For any set of positive reals $c_1,\dots,c_\ell$, we use $\O_{c_1,\dots,c_\ell}$ to denote the usual the $\O$-notation where the hidden multiplicative constants may depend on $c_1,\dots,c_\ell$, and given two functions $f,g: \mathbb{N} \to \mathbb{N}$ we denote by $f=\O^\star(g)$ that there exists a polynomial $p: \mathbb{N} \to \mathbb{N}$ such that $f(n)=\O(p(n)\cdot g(n))$.
We denote $\mathbb{N^\star}=\mathbb{N} \setminus \{0\}$.

\paragraph*{Definitions and properties related to the parameter $\beds$.\\}

In the following observation, the sets $C(v)$ have been defined in \autoref{def:pending} in \autoref{sec:overview}.

\begin{observation}\label{obs:lambda4part}
Let $G$ be a graph and $\T$ be a root of $G$.
Then $V(G)$ is partitioned into $\{C(v) \mid v \in V(\T)\}$.
\end{observation}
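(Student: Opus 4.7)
The plan is to verify the two standard conditions for $\{C(v) \mid v \in V(\T)\}$ to be a partition of $V(G)$: that the sets cover $V(G)$ and that they are pairwise disjoint. First I would reduce to the case where $G$ is connected. Indeed, by \autoref{def:pending} a root of a general graph is a family $\{T_C\}_{C \in \mathcal{C}}$ assigning one root per connected component of $G$, and every edge removed in forming $C^{\T}(\cdot)$ lies inside $V(\T)$ and hence inside a single connected component of $G$. Thus the question splits over components, and it suffices to treat a single connected $G$ with root $T = T_C$. Throughout, let $G'$ denote $G$ with every edge of $G[T]$ removed, so that $C(v)$ is exactly the connected component of $v$ in $G'$.

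For the covering step, any $w \in T$ trivially lies in $C(w)$. For $w \in V(G) \setminus T$, the connected component $D$ of $w$ in $G - T$ has, by the second bullet of \autoref{def:pending}, a unique neighbor $\phi(w) \in T$. Choosing a $w$-to-$w'$ path inside $D$ for some $w' \in D$ with $\{w', \phi(w)\} \in E(G)$ and appending the edge $\{w', \phi(w)\}$ produces a path in $G'$ (none of its edges lies in $G[T]$), witnessing $w \in C(\phi(w))$.

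For disjointness, suppose toward a contradiction that some $w \in C(u) \cap C(v)$ with $u, v \in T$ distinct. Concatenating the $u$-to-$w$ and $w$-to-$v$ walks in $G'$ and extracting a simple path yields a path $P = x_0 x_1 \cdots x_k$ in $G'$ with $x_0 = u$ and $x_k = v$. Since $x_0 \in T$ and $\{x_0, x_1\}$ is an edge of $G'$, which contains no edge inside $G[T]$, we must have $x_1 \notin T$. Let $i \geq 2$ be the smallest index with $x_i \in T$, which exists as $x_k = v \in T$. Then $x_1, \ldots, x_{i-1}$ form a path in $G - T$ and so lie in a single connected component $D$ of $G - T$. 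Both $x_0 = u$ and $x_i$ are neighbors of $D$ inside $T$ and are distinct (since $P$ is simple), contradicting $|N(D) \cap T| = 1$.

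I do not expect any serious obstacle: the statement is essentially a careful unpacking of the two bullets of \autoref{def:pending}, with the uniqueness clause $|N(C) \cap T| = 1$ supplying exactly the leverage needed for disjointness, while the condition that $G[T]$ is connected and $H$-free plays no role here and would be invoked only elsewhere.
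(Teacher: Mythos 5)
Your proof is correct and matches the paper's approach in substance: both establish covering by walking from an arbitrary vertex to $V(\T)$ along a path whose edges avoid $G[V(\T)]$, and both derive disjointness from the unique-neighbor clause $|N(D)\cap T|=1$ of \autoref{def:pending}. The only cosmetic difference is that the paper routes the disjointness step through \autoref{obs:lambda4} (itself stated without proof), whereas you verify it directly via the first re-entry of the $u$-to-$v$ path into $T$; your reduction to the connected case via the componentwise structure of $\T$ is also fine.
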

\begin{proof}
Let $u \in V(G)$ and let  $P$ be a shortest path from $u$ to $V(\T)$. This implies that  $P=(u_1,\dots,u_{\ell})$ for some $\ell \in \mathbb{N}^\star$ such that $u_1 = u$, $u_\ell \in V(\T)$, and $u_{\ell'} \notin V(\T)$ for any $\ell' < \ell$. Then, we can prove by induction starting from $u_\ell$ that for any $u_i \in P$, $u_i \in C(u_\ell)$.
Thus, $u \in C(u_\ell)$, and moreover, by \autoref{obs:lambda4}, $u \notin C(v)$ for any $v \neq u_{\ell}$.
\end{proof}

\begin{observation}\label{obs:cliquelambda4}
Let $G$ be a graph, $N$ be a set of non-$K_t$-vertices of $G$, and $\T$ be a root of $G-N$.
For any $t$-clique $K$ of $G$, there exists $v \in V(\T)$ such that
$K \subseteq C(v)$.
\end{observation}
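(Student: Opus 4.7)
The plan is to show that any $t$-clique $K$ of $G$ lies entirely in a single pending component by assuming the contrary and deriving a contradiction via a short case analysis. First I would observe that, since every vertex of $N$ is a non-$K_t$-vertex, the $t$-clique $K$ cannot intersect $N$, so $K \subseteq V(G-N)$. Applying \autoref{obs:lambda4part} to the graph $G-N$ (together with its root $\T$), the pending components $\{C(v) \mid v \in V(\T)\}$ partition $V(G-N)$, and hence each vertex of $K$ lies in exactly one $C(v)$. Define $V_K = \{v \in V(\T) \mid K \cap C(v) \neq \emptyset\}$; the goal reduces to proving $|V_K| = 1$.

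Next I would suppose for contradiction that $|V_K| \geq 2$ and split into two cases. In the first case, there exist $v \in V_K$ and $u \in K \cap C(v)$ with $u \neq v$. Picking any other $v' \in V_K \setminus \{v\}$ and any $u' \in K \cap C(v')$, we have $u \in C(v) \setminus \{v\}$ and $u' \in V(G-N) \setminus C(v)$, while $\{u,u'\} \in E(G)$ because $K$ is a clique. Since both endpoints belong to $V(G-N)$, this is also an edge of $G-N$, which directly contradicts \autoref{obs:lambda4} applied to the root $\T$ of $G-N$. In the second case, $K \cap C(v) = \{v\}$ for every $v \in V_K$, so $K = V_K \subseteq V(\T)$. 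Any two vertices of $K$ are joined by an edge of $G$ that lies in $G-N$, so they sit in the same connected component of $G-N$; since the members of $\T$ belong to distinct connected components of $G-N$, there is a single $T \in \T$ with $K \subseteq T$. But this makes $G[T]$ contain a copy of $K_t$, contradicting the fact that $T$ is a root of its component (so $G[T]$ is $H$-free with $H = K_t$ in this context).

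The main subtlety is making sure the edge-absence assertion of \autoref{obs:lambda4}, which is stated for $G-N$, is legitimately applied to an edge of $G$; this is harmless because any edge of $G$ with both endpoints in $V(G-N)$ is already an edge of $G-N$, which is precisely what the two cases above exploit. Everything else is a direct unpacking of the definitions of root and pending component.
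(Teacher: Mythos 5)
Your proof is correct and takes essentially the same route as the paper: both rely on the partition from \autoref{obs:lambda4part}, the non-adjacency property of \autoref{obs:lambda4}, and the $K_t$-freeness of $G[T]$ to rule out $K\subseteq V(\T)$. The only difference is organizational — the paper first establishes $K\not\subseteq V(\T)$ and then shows $K$ lies in a single $C(v)$, whereas you split explicitly on whether $K$ meets some pending component away from its attachment vertex; these are the same ideas in a slightly different order.
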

\begin{proof}
Let $G' = G - N$ and let $K$ be a $t$-clique of $G$. By definition of $N$, we know that $K \subseteq V(G')$. According to \autoref{obs:lambda4part}, $V(G')$ is partitioned into $\{C(v) \mid v \in V(\T)\}$, so let $v \in V(\T)$ such that $K \cap C(v) \neq \emptyset$. As by definition of a root $G'[V(\T)]$ is $K_t$-free, we cannot have $K \subseteq V(\T)$. Thus, there exists $u \in V(\T)$ such that $K \cap (C(u) \setminus \{u\}) \neq \emptyset$. If $u \neq v$, then there would be an edge between $(C(u) \setminus \{u\})$ and $C(v)$, contradicting \autoref{obs:lambda4}. Thus, we get that $K \cap (C(v) \setminus \{v\}) \neq \emptyset$. By \autoref{obs:lambda4} again, this implies that $K \subseteq C(v)$.\end{proof}

\begin{definition}\label{def:pendingpart}
Let $G$ be a graph and let $\T$ be a root of $G$.
For any partition $\P$ of $V(\T)$, we define the corresponding \emph{pending partition of $\P$} as $\P^p=\{C(Z), Z  \in \P\}$.
\end{definition}

\begin{lemma}\label{obs:pendingadditive}
Let $G$ be a graph, $N$ be a set of non-$K_t$-vertices of $G$, and $\T$ be a root of $G-N$.
For any partition $\P$ of $V(\T)$, $\P^p \cup \{N\}$ is an additive partition of $G$.
\end{lemma}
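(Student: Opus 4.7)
I would prove the lemma in two steps: first that $\P^p \cup \{N\}$ is genuinely a partition of $V(G)$, and second that the optimum decomposes additively along this partition.

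For the first step, note that by \autoref{obs:lambda4part} applied to $G-N$, the family $\{C^{\T}(v) \mid v \in V(\T)\}$ partitions $V(G)\setminus N$. Since $\P$ is a partition of $V(\T)$, and $C^{\T}(Z)=\bigcup_{v\in Z} C^{\T}(v)$ by \autoref{def:pending}, the family $\P^p=\{C^{\T}(Z)\mid Z\in \P\}$ is obtained by grouping the blocks of the above partition. Hence $\P^p$ partitions $V(G)\setminus N$, and adding the set $N$ (disjoint from $V(G-N)$ by definition) yields a partition of $V(G)$.

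For the additivity, the crucial observation is that every $t$-clique of $G$ lies entirely inside a single block of $\P^p$. Indeed, since $N$ is a set of non-$K_t$-vertices, no $t$-clique of $G$ intersects $N$; in particular $G[N]$ is $K_t$-free, which gives $\opt(G[N])=0$. Moreover, by \autoref{obs:cliquelambda4}, every $t$-clique $K$ of $G$ is contained in $C^{\T}(v)$ for some $v\in V(\T)$, and this $v$ belongs to a unique $Z\in \P$, so $K\subseteq C^{\T}(Z)\in \P^p$. Therefore every $t$-clique of $G$ coincides with a $t$-clique of exactly one induced subgraph $G[C]$ for $C\in \P^p$.

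To conclude, observe that for any hitting set $S$ of $G$ (i.e.\ $S$ intersects every $t$-clique), the restriction $S\cap C$ must hit every $t$-clique of $G[C]$ for each $C\in \P^p$, whence
\[
|S| \;\ge\; \sum_{C\in \P^p}|S\cap C| \;\ge\; \sum_{C\in \P^p}\opt(G[C]) \;=\; \sum_{C\in \P^p \cup\{N\}}\opt(G[C]),
\]
using $\opt(G[N])=0$ in the last equality. Conversely, the union of locally optimal hitting sets of each $G[C]$, $C\in \P^p$, hits every $t$-clique of $G$ by the first paragraph of this observation, giving the reverse inequality. Hence $\opt(G)=\sum_{C\in \P^p\cup\{N\}}\opt(G[C])$, so $\P^p\cup\{N\}$ is an additive partition of $G$. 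No step looks like a genuine obstacle; the only point that requires care is tracking that $G$-edges between different blocks of $\P^p$ (which may exist through $N$) never create a $t$-clique across two blocks, and this is precisely what \autoref{obs:cliquelambda4} provides.
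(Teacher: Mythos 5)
Your proof is correct and follows essentially the same approach as the paper's: both sides of the additivity equation are established by, on the one hand, restricting a hitting set of~$G$ to each block and, on the other, taking the union of locally optimal hitting sets, with \autoref{obs:cliquelambda4} doing the work of ensuring that every $t$-clique sits inside a single pending component. You are slightly more explicit than the paper—spelling out that $\P^p \cup \{N\}$ actually partitions $V(G)$ and that $\opt(G[N])=0$—whereas the paper works at the finer granularity of singleton pending components $C(v)$ before grouping, but the underlying argument is the same.
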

\begin{proof}
For any $v \in V(\T)$, let $S_v$ be an optimal solution of $G[C(v)]$, and let $S = \bigcup_{v \in V(\T)}S_v$.
According to \autoref{obs:cliquelambda4}, $S$ is a solution of $G$, leading
to $\opt(G) = \sum_{v \in V(\T)}\opt(C(v))=\sum_{Z \in \P}\sum_{v \in Z}\opt(C(v)) \le \sum_{Z \in \P}\opt(C(Z))=\sum_{Z \in \P}\opt(C(Z))+\opt(G[N])$.
As we also have $\opt(G) \ge \sum_{Z \in \P}\opt(C(Z))+\opt(G[N])$, we get the desired equality.
\end{proof}

It is worth pointing out that the following lemma applies to any graph $H$.

\begin{lemma}\label{obslambdamonotonesubgraph}
Let~$H$ be a fixed graph, not necessarily complete or connected. If~$G'$ is an induced subgraph of~$G$, then~$\beds(G') \leq \beds(G)$.
\end{lemma}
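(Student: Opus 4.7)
The plan is to proceed by double induction: outer induction on $|V(G)|$, and, for a fixed $G$, inner induction on $|V(G')|$, following the four-case structure of \autoref{def:bedHplus} applied in turn to $G'$ and (when needed) to $G$. Intuitively, whenever $G'$ reduces by a ``free'' rule (empty/free-vertex/disconnected), we stay inside $G$ and invoke the inner IH on a smaller induced subgraph; whenever $G'$ is forced to spend a $+1$ via rule D (connected, nonempty, no free vertex), we use the structure of $G$ to manufacture a suitable root of $G'$ from a root of $G$, and conclude via the outer IH on $G-T$.

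For the easy part, consider which rule of \autoref{def:bedHplus} first applies to $G'$. If $V(G')=\emptyset$, the bound is trivial. If $G'$ has a vertex $w$ not in any copy of $H$, then $\beds(G')=\beds(G'-w)$, and the inner IH on $(G,G'-w)$ finishes. If $G'$ is disconnected with components $D_1,\ldots,D_m$ ($m\geq 2$) and has no free vertex, then $\beds(G')=\max_i\beds(D_i)$ and each $D_i$ is a smaller induced subgraph of $G$, so the inner IH applies componentwise.

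Now assume $G'$ is connected, nonempty, and has no free vertex, so $\beds(G')=1+\min_{T'}\beds(G'-T')$ over roots $T'$ of $G'$. I case on which rule applies to $G$. If $G$ has a free vertex $v$, then $v\notin V(G')$ (otherwise $v$ would also be free in $G'$, since induced subgraphs only lose copies of $H$), so $G'$ is an induced subgraph of $G-v$ and the outer IH gives $\beds(G')\leq\beds(G-v)=\beds(G)$. If $G$ is disconnected with components $C_1,\ldots,C_c$, then by connectedness of $G'$ we have $V(G')\subseteq V(C_j)$ for some $j$, and the outer IH on $C_j$ yields $\beds(G')\leq\beds(C_j)\leq\beds(G)$. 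The interesting case is rule D on $G$: there is a root $T$ of $G$ with $\beds(G)=1+\beds(G-T)$. Set $T':=T\cap V(G')$. If $T'=\emptyset$, then $G'$ embeds into $G-T$ and the outer IH concludes. Otherwise the plan is to show that $T'$ is a root of $G'$, so that $\beds(G')\leq 1+\beds(G'-T')$, and then the outer IH on the pair $(G-T,G'-T')$ gives $\beds(G'-T')\leq\beds(G-T)=\beds(G)-1$, completing the bound.

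The main obstacle, therefore, is the claim that $T'=T\cap V(G')$ is a root of $G'$. First, if $V(G')\subseteq T$ then $G'\subseteq G[T]$ is $H$-free, so every vertex of $G'$ would be free, contradicting the case assumption. Hence $V(G')\setminus T\neq\emptyset$. The set $T'$ is nonempty by assumption, and $G'[T']$ is $H$-free as an induced subgraph of the $H$-free graph $G[T]$. For the one-neighbour condition: any connected component $C'$ of $G'-T'$ is connected in $G-T$ as well, hence lies in a single component $C$ of $G-T$, so $|N_{G'}(C')\cap T'|\leq|N_G(C)\cap T|=1$ by the root property of $T$. The inequality cannot be strict: if some $C'$ had no neighbour in $T'$, then $G'$ would be disconnected (as $T'\neq\emptyset$ and $V(G')\setminus T\neq\emptyset$), contradicting connectedness of $G'$. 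Finally, $G'[T']$ must be connected, for if it split into parts $T_1,T_2$ then the one-neighbour property just established would force each component of $G'-T'$ to attach to $T_1$ or to $T_2$ but not both, again making $G'$ disconnected. Hence $T'$ is a root of $G'$, closing the proof.
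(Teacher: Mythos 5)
Your proposal is correct and follows the same overall strategy as the paper: induction on the sizes of $G$ and $G'$, a case analysis on which clause of \autoref{def:bedHplus} is active, and, in the crucial step, passing from a root $T$ of $G$ with $\beds(G-T)<\beds(G)$ to the candidate $T' := T\cap V(G')$ and verifying that $T'$ is a root of $G'$ (with the $T'=\emptyset$ subcase handled by monotonicity through $G-T$).

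The one place you genuinely diverge is the verification that $G'[T']$ is connected. The paper splits on whether $G'[T']$ is connected: in the disconnected case it exhibits, via a minimal path in $G'$ between two components of $G'[T']$ whose interior avoids $T'$, a component of $G-T$ with two neighbours in $T$, contradicting that $T$ is a root of $G$. You instead first prove, unconditionally, that each component of $G'-T'$ sees at most one vertex of $T'$ (because it sits inside a single component of $G-T$), then get ``exactly one'' from connectedness of $G'$, and finally observe that this one-neighbour property would split $G'$ if $G'[T']$ had two parts. This dispenses with the minimal-path detour and avoids the redundant connected/disconnected case split; it also records explicitly that $G'[T']$ is $H$-free, a step the paper leaves implicit. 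The two arguments exploit the same facts and are of similar depth, so this is a cleaner organisation rather than a different proof, but it is a genuine micro-improvement in exposition.
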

\begin{proof}
We prove the statement by induction on~$|V(G')| + |V(G)|$. Consider the defining case for the value of~$\beds(G')$ in \autoref{def:pending}.
\begin{enumerate}
    \item If~$V(G') = \emptyset$, then~$\beds(G') = 0 \leq \beds(G)$.
    \item If~$G'$ has a vertex~$v$ that is not in any copy of~$H$, then~$\beds(G') = \beds(G'-v)$. Since~$G' - v$ is an induced subgraph of~$G$ on strictly fewer vertices, by induction we find that~$\beds(G'-v) \leq \beds(G)$.
    \item If~$G'$ has multiple connected components~$C_1, \ldots, C_m$, then each is an induced subgraph of~$G$ with strictly fewer vertices. By induction, each component~$C_i$ therefore has~$\beds(G[C_i]) \leq \beds(G)$. Since the value for~$G'$ is the maximum over its components, it is at most~$\beds(G)$.
    \item If none of the cases above hold, then~$G'$ is connected and all its vertices belong to a copy of~$H$. If~$G$ has a vertex~$v$ that does not belong to any copy of~$H$, then~$v \notin G'$ and therefore~$G'$ is an induced subgraph of~$G-v$, from which the claim follows by induction. Similarly, if~$G$ is disconnected then~$G'$ is an induced subgraph of a connected component of~$G$ and again we may apply induction. In the remainder, we may assume that both~$G'$ and~$G'$ are connected and all their vertices occur in some copy of~$H$. Let~$T$ be a root of~$G$ with~$\beds(G - T) < \beds(G)$. Define~$T' := T \cap V(G')$. We consider three cases.
    \begin{itemize}
        \item If~$T' = \emptyset$, then~$G'$ is an induced subgraph of~$G - T$. Hence by induction, we have~$\beds(G') \leq \beds(G - T) < \beds(G)$. In the remainder, we assume~$T' \neq \emptyset$.
        \item If~$G'[T']$ is connected, then we claim~$T'$ is a root of~$G'$. For this we have to show that each component of~$G'-T'$ is adjacent to exactly one vertex of~$T'$. Consider the connected components of~$G' - T'$. We can think of these components as having been obtained from~$G$ by removing~$T$ and then removing the vertices of~$V(G) \setminus V(G')$. Hence each component is an induced subgraph of a component of~$G-T$. Since the latter components have at most one neighbor in~$T$, the former also have at most one neighbor in~$T$. Since~$G'$ is connected by assumption, each component of~$G'-T'$ has exactly one neighbor in~$T'$.

        So~$T'$ is a root and we find~$\beds(G') \leq 1 + \beds(G' - T') \leq 1 + \beds(G - T)$, where the last step follows from induction using the fact that~$G' - T'$ is an induced subgraph of~$G-T$.
        \item If~$G'[T']$ is disconnected, we derive a contradiction as follows. Consider two distinct vertices~$u,v \in T'$ which are in different connected components~$C_u, C_v$ of~$G[T']$. Since~$G'$ is connected by assumption, there is a path from~$u$ to~$v$ in~$G'$. Consider a minimal path~$P$ in~$G'$ which starts in~$C_u$, ends in~$C_v$, and has all its interior vertices outside~$G'[T']$. Then~$P$ has at least one interior vertex~$w$. The interior vertices of~$P$ belong to a single connected component of~$G' - T'$, and therefore also belong to a single connected component of~$G-T$. But this component is adjacent to at least two distinct vertices of~$T' \subseteq T$, one in~$C_v$ and one in~$C_u$. This contradicts the fact that~$T$ is a root of~$G$.
    \end{itemize}
\end{enumerate}
As the cases are exhaustive, this concludes the proof.
\end{proof}




\begin{definition}[$\bedt$-root of $G$]\label{definition:lambdaroot}
  Let $G$ be a graph with $\bedt(G) > 0$.
  We say  $\T$ is a \emph{$\bedt$-root of $G$} if $\T$ is a root of $G$  and $\bedt(G - V(\T)) = \bedt(G)-1$.
\end{definition}

\begin{observation}\label{lemma:bedtcv}
Let $G$ be a graph such that $N^t(G)=\emptyset$, $\T$ be a \emph{$\bedt$-root of $G$}. Then, for any $v \in V(\T)$, $\bedt(G[C(v)]-v) < \bedt(G)$.
\end{observation}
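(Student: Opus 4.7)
My plan is to reduce the claim to a simple application of the monotonicity lemma \autoref{obslambdamonotonesubgraph} after showing that $G[C(v)] - v$ is an induced subgraph of $G - V(\T)$.

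First I would argue the containment. By \autoref{obs:lambda4part}, the set $V(G)$ is partitioned by $\{C(u) \mid u \in V(\T)\}$, so in particular each $u \in V(\T)$ lies in its own pending component $C(u)$ and no other. This means $C(v) \cap V(\T) = \{v\}$, so $C(v) \setminus \{v\} \subseteq V(G) \setminus V(\T)$. Consequently $G[C(v)] - v = G[C(v) \setminus \{v\}]$ is an induced subgraph of $G - V(\T)$.

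Next, applying \autoref{obslambdamonotonesubgraph} (which holds for any fixed graph $H$, in particular for $H = K_t$), monotonicity of $\bedt$ under taking induced subgraphs gives
\begin{equation*}
\bedt(G[C(v)] - v) \;\leq\; \bedt(G - V(\T)).
\end{equation*}
Finally, since $\T$ is a $\bedt$-root of $G$ in the sense of \autoref{definition:lambdaroot}, we have $\bedt(G - V(\T)) = \bedt(G) - 1$, and stringing these two facts together yields $\bedt(G[C(v)] - v) \leq \bedt(G) - 1 < \bedt(G)$, as required.

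The only real point to verify is the containment $C(v) \setminus \{v\} \subseteq V(G) \setminus V(\T)$; this is immediate from the partition statement \autoref{obs:lambda4part}, which itself follows from \autoref{obs:lambda4}. The hypothesis $N^t(G) = \emptyset$ is not strictly needed for the argument, but it is consistent with the setting in which this observation will be invoked later (namely after non-$K_t$-vertices have been stripped so that a $\bedt$-root of $G$ is well-defined on the remaining graph). Hence I do not anticipate any significant obstacle; the proof is essentially a one-line consequence of induced-subgraph monotonicity and the definition of a $\bedt$-root.
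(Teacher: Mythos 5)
Your proof is correct and follows essentially the same route as the paper's: observe that $G[C(v)]-v$ is an induced subgraph of $G-V(\T)$ and then combine the monotonicity of $\bedt$ under induced subgraphs (\autoref{obslambdamonotonesubgraph}) with the defining property $\bedt(G-V(\T)) = \bedt(G)-1$ of a $\bedt$-root. You merely spell out the containment step (via \autoref{obs:lambda4part}) that the paper leaves implicit, and you are right that the hypothesis $N^t(G)=\emptyset$ is not actually used in the argument.
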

\begin{proof}
As for any $v \in V(\T)$, $G[C(v)]-v$ is a subgraph of $G-V(\T)$,
\autoref{obslambdamonotonesubgraph} implies the claimed inequality.
\end{proof}


\section{Blocking sets and their properties}
\label{sec:blocking}

In this section we provide upper bounds on the size of minimal blocking sets as a function of the parameters that we consider. In \autoref{sec:prelminaries-blocking-sets} we start by defining (minimal) blocking sets and prove several useful properties. In \autoref{sec:bounding-mmbs} we prove the main result of this section, namely \autoref{thm:mmbslambda4}, which provides an upper bound on minimal blocking sets as a function of the parameter  $\bedt$. In \autoref{ap:improved-mmbs-treedepth} we use a completely different technique to obtain upper bounds on the size of minimal blocking sets in graphs of bounded treedepth, which substantially improve those given in \autoref{thm:mmbslambda4} in graphs of bounded  $\bedt(G),t)$.

\subsection{Preliminaries}
\label{sec:prelminaries-blocking-sets}

We start with the definitions of (minimal) blocking set that we need for our purposes.

\begin{definition}\label{def:bsKt}
  Given an instance $(G,\F)$ of $\EKtH$, a \emph{blocking set} $\B$ of $(G,\F)$ is a set of subsets of $V(G)$ such that

  \begin{itemize}
  \item for any $B \in \B$, $1  \le |B| \le t-1$, $G[B]$ is a clique, and
  \item $\opt(G,\F \cup \B) > \opt(G,\F)$.
  \end{itemize}

  We define $\mmbs_t(G,\F)=\max\{|\B|$ such that $\B$ is a blocking set of $(G,\F)$, and for any $B \in \B$, $\B \setminus B$ is not a blocking set of $(G,\F)\}$, which can be rephrased as the maximum size of an inclusion-wise minimal blocking set of $(G,\F)$.
  Given a graph $G$, we denote by $\mmbs_t(G)=\max\{\mmbs_t(G,\F) \mbox{ such that $(G,\F)$ is a clean instance} \}$.
  We denote by $\mmbs_t^{\star}(G)$ the maximum value, over any subgraph  $G'$ of $G$,  of $\mmbs_t(G')$.
Finally, given an instance $G$ of $\KtH$, a \emph{blocking set  of $G$}  is a blocking set of $(G,\emptyset)$.
\end{definition}

Note that the above definition also defines blocking sets for $\KtH$ by setting $\F = \emptyset$, and that $\mmbs_t^{\star}(G)$ may be strictly greater than $\mmbs_t(G)$. Informally, $\mmbs_t(G)$ is the function that we are interested in,
and $\mmbs_t(G,\F)$ and $\mmbs_t^{\star}(G)$ are defined for technical reasons.
Note also that, even if the notion of blocking set is defined for any (not only clean) instance $(G,\F)$ of $\EKtH$, $\mmbs_t(G)$ is defined by restricting to clean instances. The reason is that our objective (see \autoref{thm:mmbslambda4})
is to prove that there exists a function $\beta$ such that $\mmbs_t(G) \le \beta(\bedt(G),t)$, and as discussed in \autoref{sec:overview} (see \autoref{fig-chains}),
such a function does not exists is $\mmbs_t$ were defined without restricting to clean instances.

The following property (already used in previous results involving blocking sets, for example in~\cite{BougeretJS22}) states that small $\mmbs_t$ is equivalent to the property of extracting a small blocking set from a given large one.

\begin{property}\label{prop:mmbs}
 Let $(G,\F)$ be an instance of $\EKtH$ and $\beta \in \mathbb{N}$.
  Then, the two following items are equivalent:
  \begin{itemize}
  \item For any $\B$ blocking set of $(G,\F)$, there exists $\overline{\B} \subseteq \B$ such that $\overline{\B}$ is a blocking set of $(G,\F)$ and $|\overline{\B}| \le \beta$.
  \item $\mmbs_t(G,\F) \le \beta$.
  \end{itemize}
\end{property}
\begin{proof}
  To prove the forward implication,
  let $\B$ be an inclusion-wise minimal blocking set of $(G,\F)$. If by contradiction $|\B| > \beta$, there would exist $\overline{\B} \subseteq \B$ such that $\overline{\B}$ is a blocking set of $(G,\F)$ and $|\overline{\B}| \le \beta < |\B|$, contradicting the inclusion-wise minimality of $\B$.

  TO prove the backward implication,
  Let $\B$ be a blocking set of $(G,\F)$. We simply define $\overline{\B} \subseteq \B$ as an inclusion-wise minimal blocking set of $(G,\F)$.
\end{proof}

\begin{definition}
  Given an instance $(G,\F)$ of $\EKtH$, $v \in V(G)$, and denoting $\optsol(G,\F)$ the set of optimal solutions of $(G,\F)$, we define the \emph{type of $v$} as $t^{(G,\F)}(v)=$
  \begin{equation*}
    \begin{cases}
      \{v,\bar{v}\} & \mbox{if $\exists$   $S \in\optsol(G,\F)$ such that $v \in S$, and $\exists$  $S'\in\optsol(G,\F)$ with $v \notin S'$}, \\
      \{v\} & \mbox{if $\exists$  $S\in\optsol(G,\F)$  such that $v \in S$, and $\nexists$  $S'\in\optsol(G,\F)$ with $v \notin S'$}, \\
      \{\bar{v}\} & \mbox{if $\nexists$  $S\in\optsol(G,\F)$ such that $v \in S$, and $\exists$  $S'\in\optsol(G,\F)$ with $v \notin S'$}.
    \end{cases}
\end{equation*}
\end{definition}

\noindent Given a graph $G$, a set $\C$ of subsets of $V(G)$, and $V' \subseteq V(G)$, we denote

\begin{itemize}
\item $\C \cap V' = \{B \in \C \mid B \subseteq V'\}$,
\item $\C_{|V'} = \{B \cap V'\mid B \in \C\}$,
\item $\C - V' = \C \cap (V(G) \setminus V')$, and
\item $V(\C)=\bigcup_{B \in \C}B$.
\end{itemize}

Note that given a set $\C$ of subsets of $V(G)$, and a partition $\P=\{P_i \mid i \in [|\P|]\}$ of $V(G)$, we may not have $\C = \bigcup_{i \in [|\P|]}(\C \cap P_i)$
as any $Z \in \C$ such that $Z \cap P_i \neq \emptyset$ and $Z \cap P_j \neq \emptyset$ is not contained in $\bigcup_{i \in [|\P|]}(\C \cap P_i)$.

\subsection{Graphs with bounded $\bedt$ have bounded minimal blocking sets}
\label{sec:bounding-mmbs}


We start with three technical lemma whose objective is indicated beside their statement.

\begin{lemma}[blocking set vs blocking set of subinstances]\label{lemma:bsbigbssmall}

  Let  $(G,\F)$ be an instance of $\EKtH$ and let $\B$ be a blocking set of $(G,\F)$.
  \begin{enumerate}[(i)]
  \item \label{lemma:bsbigbssmall:i} Let $X^{\star} \subseteq V(G)$ such that there exists an optimal solution $S^{\star}$ of $(G,\F)$ with $X^{\star} \subseteq S^\star$. Then
    \begin{enumerate}
    \item $\opt(G,\F)=|X^{\star}|+\opt(G- X^{\star}, \F - X^{\star})$, and
    \item $\B - X^{\star}$ is a blocking set of $(G- X^{\star}, \F - X^{\star})$.
    \end{enumerate}
  \item \label{lemma:bsbigbssmall:ii} Let $v \in V(G)$ such that $\bar{v} \in t^{(G,\F)}$ and $\{v\} \notin \B$. Then
 \begin{enumerate}
    \item $\opt(G,\F)=\opt(G-v, \F_{|V(G)\setminus v} \cup \pr^t_v(V(G)\setminus v))$, and
    \item $\B_{V(G)\setminus v}$ is a blocking set of $(G-v, \F_{|V(G)\setminus v} \cup \pr^t_v(V(G)\setminus v))$.
 \end{enumerate}
  \end{enumerate}

Let  $(G,\F)$ be an instance of $\EKtH$.
 \begin{enumerate}
   \item \label{lemma:bsbigbssmall:ibis}For any $X \subseteq V(G)$, any blocking set $\overline{\B}$ of $(G-X, \F-X)$ and any optimal solution $S^{\star}$ of $(G,\F)$
      such that $X \subseteq S^{\star}$, there exists $B \in \overline{\B}$ such that $S^{\star} \cap B=\emptyset$, and
     \item \label{lemma:bsbigbssmall:iibis}For any $v \in V(G)$ such that $\{v\} \notin \F$, blocking set $\overline{\B}$ of $(G-v, \F_{|V(G)\setminus v} \cup \pr^t_v(V(G)\setminus v))$, and any optimal solution $S^{\star}$ of $(G,\F)$  such that $v \notin S^{\star}$, there exists $B \in \overline{\B}$ such that $S^{\star} \cap B=\emptyset$.
 \end{enumerate}
\end{lemma}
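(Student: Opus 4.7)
The plan is to treat the four claims as four short, largely independent arguments sharing a common pattern: translate feasible solutions back and forth between $(G,\F)$ and its reduced instance, then invoke the blocking inequality $\opt(\cdot,\F'\cup \B') > \opt(\cdot,\F')$ to derive a contradiction. The main technical burden in every part is to check that the translations preserve feasibility, i.e., that every $t$-clique and every constraint in $\F$ (resp.\ $\B$) is hit after the translation.

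For part~(i)(a) I would prove both inequalities separately. For ``$\leq$'', take an optimal solution of $(G-X^\star, \F-X^\star)$ and extend it by $X^\star$; each $t$-clique of $G$ or $Z \in \F$ either meets $X^\star$ (covered directly) or lies in $V(G)\setminus X^\star$ (covered by the subsolution). For ``$\geq$'', restrict $S^\star$ to $V(G)\setminus X^\star$ to obtain a feasible solution of the subinstance of size $|S^\star|-|X^\star|$. For part~(i)(b) I would argue by contradiction: if $\B - X^\star$ were not blocking in the subinstance, a solution $S$ of $(G-X^\star, (\F-X^\star) \cup (\B-X^\star))$ of size at most $\opt(G-X^\star,\F-X^\star)$ would combine with $X^\star$ into a solution of $(G,\F\cup \B)$ of size $\opt(G,\F)$ via part~(a), contradicting that $\B$ blocks $(G,\F)$. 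The only nontrivial verification is that $X^\star \cup S$ hits every $B \in \B$: if $B\cap X^\star = \emptyset$ then $B\in \B-X^\star$ and is hit by $S$.

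Part~(ii) is analogous, but with more care because we remove a single ``not-taken'' vertex rather than a set of ``taken'' ones. I would first record the useful observation that $\bar v \in t^{(G,\F)}$ forces $\{v\}\notin \F$, so that $\F_{|V(G)\setminus v}$ does not contain the empty set. For~(a), an optimal $S^\star \not\ni v$ is feasible for the reduced instance: every $Z\in\F$ through $v$ has the nonempty residue $Z\setminus\{v\}\in\F_{|V(G)\setminus v}$ which $S^\star$ must hit since $v\notin S^\star$, and every $t$-clique $K\ni v$ contributes $K\setminus\{v\}\in\pr^t_v(V(G)\setminus v)$, also hit by $S^\star$. The converse direction lifts any solution of the reduced instance back to $(G,\F)$ using that projections and residues faithfully encode every obligation involving $v$. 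Part~(ii)(b) then mirrors (i)(b); here the hypothesis $\{v\}\notin \B$ is exactly what ensures that every element of $\B_{V(G)\setminus v}$ is nonempty.

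Parts~(1) and~(2) are converse ``witness'' statements, and the plan is again a short contradiction. If every $B \in \overline{\B}$ met $S^\star$, then $S^\star\setminus X^\star$ (respectively $S^\star$, using $v\notin S^\star$) would be a solution of the subinstance enriched with $\overline{\B}$ of size exactly the subinstance's $\opt$, by (i)(a) (resp.\ (ii)(a)), contradicting that $\overline{\B}$ is blocking. I expect the main obstacle throughout to be disciplined edge-case bookkeeping: ensuring that truncations never produce empty constraints, and that sets in projected or truncated families are really hit rather than only trivially ``covered'' by vertices that have been removed. The hypotheses $\{v\}\notin \F$ in~(2) and $\{v\}\notin \B$ in~(ii) are precisely the conditions that rule out these edge cases, and I would flag them at the very start of the corresponding parts so that the rest of the argument reduces to the routine feasibility checks described above.
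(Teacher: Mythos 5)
Your proposal is correct and follows essentially the same approach as the paper: the same extend/restrict translations for (i)(a) and (ii)(a), the same contradiction for (i)(b) and (ii)(b) by combining a supposed non-blocking witness back into an optimal solution of $(G,\F)$ hitting all of $\B$, and the same observation that $\{v\}\notin\F$ (resp.\ $\{v\}\notin\B$) rules out empty residues. The paper phrases parts~(1) and~(2) as direct arguments rather than explicit contradictions, but the content is identical.
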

\begin{proof}
  Proof of  \autoref{lemma:bsbigbssmall:i}.
  By the existence of $S^{\star}$ we get $\opt(G,\F)=|X^{\star}|+|S \setminus X^{\star}|$. As $S \setminus X^{\star}$ has to intersect any $t$-clique of $G-X^{\star}$ and any set of $\F - X^{\star}$,
  it is a solution of $(G- X^{\star}, \F - X^{\star})$, implying $|S \setminus X^{\star}| \ge \opt(G- X^{\star}, \F - X^{\star})$ and $\opt(G,\F) \ge |X^{\star}|+  \opt(G- X^{\star}, \F - X^{\star})$.
  As the other inequality is true by definition, we get the first property of \autoref{lemma:bsbigbssmall:i}.
  Assume now by contradiction that $\B - X^{\star}$ is a not a blocking set of $(G- X^{\star}, \F - X^{\star})$, implying the existence of an optimal solution $S^{\star}$ of
  $(G- X^{\star}, \F - X^{\star})$ intersecting any $B \in \B - X^{\star}$. Let $S=X^{\star} \cup S^{\star}$. We get that $S$ is a solution of $(G,\F)$ intersecting $\B$, and
  $|S|=|X^{\star}|+\opt(G- X^{\star}, \F - X^{\star})=\opt(G,\F)$, a contradiction.

  In the remaining of the proof, we denote by $I=(G-v, \F_{|V(G)\setminus v} \cup  \pr^t_v(V(G)\setminus v))$.

  Proof of \autoref{lemma:bsbigbssmall:ii}.
  As $\bar{v} \in t^{(G,\F)}$, there exists an optimal $S^{\star}$ of $(G,\F)$ with $v \notin S^{\star}$. This implies in particular that there is no set $Z \in \F$ with $Z=\{v\}$,
  and thus that $\F_{|V(G)\setminus v}$ does not contain $\emptyset$ (as required in the definition of $\EKtH$).
  We get $\opt(G,\F)=|S^{\star} \setminus v|$. As $S^{\star} \setminus v$ has to intersect any $t$-clique of $G-v$, any set of $\F_{|V(G)\setminus v}$ (as $S^{\star}$ intersects any $Z \in \F$
  and $v \notin S^{\star}$), and any set of $\pr^t_v(V(G)\setminus v)$ (as $S^{\star}$ intersects any $t$-clique of $G$ and $v \notin S^{\star}$), we get that $S^{\star}$ is a solution of
  $I$, leading to $\opt(G) \ge \opt(I)$.
   As the other inequality is true by definition, we get the first property of \autoref{lemma:bsbigbssmall:ii}.
   Assume now by contradiction that $\B_{V(G)\setminus v}$ is a not a blocking set of $I$, implying the existence of an optimal solution $S^{\star}$ of
   $I$ intersecting any $B \in \B_{V(G)\setminus v}$. As $\{v\} \notin \B$, $S^{\star}$ intersects all $B \in \B$,
   and as there is no set $Z \in \F$ with $Z=\{v\}$, $S^{\star}$ also intersects all $Z \in \F$. Finally, $S^{\star}$ intersects all $t$-cliques $K \subseteq G-v$ by definition,
   and any $t$-clique $K$ containing $v$, as $K \setminus v \in \pr^t_v(V(G)\setminus v)$. Thus, $S^{\star}$ is a solution of $(G,\F)$ intersecting $\B$, and $|S^{\star}|=\opt(I)=\opt(G,\F)$, a contradiction.

  Proof of \autoref{lemma:bsbigbssmall:ibis}.
  We have $\opt(G,\F)=|X|+|S^{\star} \setminus X|$, and $S^{\star} \setminus X$ is a solution of $(G-X,\F-X)$, implying $|S^{\star} \setminus X| \ge \opt(G-X,\F-X)$ and $\opt(G,\F) \ge |X|+\opt(G-X,\F-X)$.
  As taking an optimal solution $S^{\star}$ of $(G-X,\F-X)$ and defining $X \cup S^{\star}$ is a solution of $(G,\F)$, we get $\opt(G,\F) \le |X|+\opt(G-X,\F-X)$, and thus $\opt(G,\F)=|X|+\opt(G-X,\F-X)$
  and $|S^{\star} \setminus X|=\opt(G-X,\F-X)$. As $\overline{\B}$ is a blocking set of $(G-X,\F-X)$,
there exists $B \in \overline{\B}$ such that $(S^{\star} \setminus X) \cap B = \emptyset$. As $B \subseteq V(G)\setminus X$, this implies $S^{\star} \cap B = \emptyset$, and thus $\overline{\B}$ is indeed a blocking set.

  Proof of \autoref{lemma:bsbigbssmall:iibis}.
  We have $\opt(G,\F)=|S^{\star}|$, and as $v \notin S^{\star}$, $S^{\star}$ is a solution of $I$, implying $|S^{\star}| \ge \opt(I)$ and $\opt(G,\F) \ge \opt(I)$.
  As an optimal solution $S^{\star}$ of $I$ is also a solution of $(G,\F)$ (as $\{v\} \notin \F$), we get $\opt(G,\F) \le \opt(I)$, so $\opt(G,\F)=\opt(I)$,
  and thus $|S^{\star}|=\opt(I)$. As $\overline{\B}$ is a blocking set of $I$, there exists $B \in \overline{\B}$ such that $S^{\star} \cap B = \emptyset$.
\end{proof}

\begin{lemma}[conditions for clean instances]\label{lemma:bsclean}
  Let $(G,\F)$ be a clean instance of $\EKtH$.
\begin{enumerate}[(i)]
  \item \label{lemma:bsclean:i} Let $\P$ be an additive partition of $G$. Then, for any $C \in \P$, $(G[C],\F \cap C)$ is a clean instance.
  \item \label{lemma:bsclean:ii} Let $v \in V(G)$ such that $v \in t^{(G,\F)}$. Then, $(G-v,\F-v)$ is a clean instance.
\end{enumerate}
\end{lemma}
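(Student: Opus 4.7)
Both parts are proved by short double-inequality arguments in which the ``$\ge$'' direction uses the fact that imposing extra constraints never decreases the optimum, and the ``$\le$'' direction uses the optimal solution guaranteed by cleanness (for (i)) or by $v \in t^{(G,\F)}$ (for (ii)).

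\textbf{Part (i).} Let $S^{\star}$ be an optimal solution of $(G,\F)$; since $(G,\F)$ is clean we have $|S^{\star}| = \opt(G,\F) = \opt(G)$. Fix $C \in \P$ and consider $S^{\star}_C := S^{\star} \cap C$. Every $t$-clique contained in $G[C]$ is in particular a $t$-clique of $G$, so it is hit by $S^{\star}$ and hence by $S^{\star}_C$. Every $Z \in \F \cap C$ satisfies $Z \subseteq C$, so $S^{\star} \cap Z = S^{\star}_C \cap Z \ne \emptyset$. Therefore $S^{\star}_C$ is a feasible solution of $(G[C], \F \cap C)$, and
\[
   |S^{\star}_C| \;\ge\; \opt(G[C], \F \cap C) \;\ge\; \opt(G[C]),
\]
the last inequality being trivial. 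Summing over $C \in \P$, using that $\P$ is a partition of $V(G)$ and that it is additive,
\[
   \opt(G) = |S^{\star}| = \sum_{C \in \P} |S^{\star}_C| \;\ge\; \sum_{C \in \P} \opt(G[C], \F \cap C) \;\ge\; \sum_{C \in \P} \opt(G[C]) = \opt(G).
\]
All inequalities are therefore equalities, which forces $\opt(G[C], \F \cap C) = \opt(G[C])$ for every $C \in \P$, i.e., each $(G[C], \F \cap C)$ is clean.

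\textbf{Part (ii).} Since $v \in t^{(G,\F)}(v)$, there is an optimal solution $S^{\star}$ of $(G,\F)$ with $v \in S^{\star}$. The set $S^{\star} \setminus \{v\}$ hits every $t$-clique of $G-v$ (such a clique is a $t$-clique of $G$ not containing $v$) and every $Z \in \F - v$ (since $v \notin Z$, the hit must come from $S^{\star} \setminus \{v\}$). Hence $S^{\star} \setminus \{v\}$ is a feasible solution of $(G-v, \F-v)$, giving
\[
   \opt(G-v, \F-v) \;\le\; |S^{\star}| - 1 \;=\; \opt(G,\F) - 1 \;=\; \opt(G) - 1.
\]
On the other hand, for any optimal solution $S'$ of $G-v$ (as an instance of $\KtH$), the set $S' \cup \{v\}$ hits every $t$-clique of $G$: those avoiding $v$ are hit by $S'$, while those containing $v$ are hit by $v$. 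Thus $\opt(G) \le \opt(G-v) + 1$, i.e., $\opt(G)-1 \le \opt(G-v)$. Combining,
\[
   \opt(G-v, \F-v) \;\le\; \opt(G) - 1 \;\le\; \opt(G-v) \;\le\; \opt(G-v, \F-v),
\]
where the last inequality is the trivial monotonicity under adding constraints. Hence $\opt(G-v, \F-v) = \opt(G-v)$, and $(G-v, \F-v)$ is clean.

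There is no real obstacle; the only point that needs care is making sure that in (ii) the ``add $v$ back'' argument gives $\opt(G) \le \opt(G-v)+1$, which together with the bound from the $v$-containing optimal solution closes the chain of inequalities against the trivial monotonicity $\opt(G-v) \le \opt(G-v,\F-v)$.
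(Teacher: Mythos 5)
Your proposal is correct and follows essentially the same argument as the paper: part (i) is the same computation (the paper states it as a contradiction, you as a direct chain of inequalities with the additive partition), and part (ii) hinges on the same observations. The only cosmetic difference is that for (ii) the paper invokes the earlier lemma on blocking sets (their \autoref{lemma:bsbigbssmall}, item (i)) to obtain $\opt(G,\F)=1+\opt(G-v,\F-v)$, whereas you derive the needed one-sided inequality directly from the optimal solution containing $v$, which is self-contained but not a materially different route.
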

\begin{proof}
  Proof of \autoref{lemma:bsclean:i}.
  Suppose by contradiction that there exists $C_0 \in \P$, such that $\opt(G[C_0],\F \cap C_0)>\opt(G[C_0])$.
  Let $S^{\star}$ be an optimal solution of $(G,\F)$. For any $C \in \P$, $S^{\star} \cap C$ must a a solution of $(G[C],\F \cap C)$, implying
  $|S^{\star} \cap C| \ge \opt(G[C],\F \cap C) \ge \opt(G[C])$ for $C \neq C_0$, and $|S \cap C_0|>\opt(G[C_0])$. Thus, we get $|S^{\star}|=\sum_{C \in \P}|S \cap C|>\sum_{C \in \P}\opt(G[C])$.
  As $(G,\F)$ is clean and $\P$ is additive, we get $\opt(G,\F)=\opt(G)=\sum_{C \in \P}\opt(G[C])$, implying $|S^{\star}|>\opt(G)$, a contradiction.

  Proof of \autoref{lemma:bsclean:ii}.
  As $v \in t^{(G,\F)}$, by \autoref{lemma:bsbigbssmall:i} of \autoref{lemma:bsbigbssmall}  with $X^{\star}=\{v\}$, we get $\opt(G,\F)=1+\opt(G-v,\F-v)$.
  Moreover, we have $\opt(G) \le 1+\opt(G-v)$. As $(G,\F)$ is clean, we get $\opt(G-v,\F-v) \le \opt(G-v)$, leading to the claim as the other inequality is always true.
\end{proof}

\begin{lemma}[extracting a blocking set in a non-clean instance]\label{lemma:bsnonclean}
  Let $(G,\F)$ be an instance of $\EKtH$ such that $\opt(G,\F)=\opt(G)+1$ and $\B$ be a blocking set of $(G,\F)$.
  There exists $\overline{\B} \subseteq \B$ such that $\overline{\B}$ is a blocking set of $(G,\F)$ and $|\overline{\B}| \le 2^{(t-1)\mmbs_t(G)} \mmbs_t^{\star}(G)$.
\end{lemma}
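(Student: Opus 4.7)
The plan is a two-stage extraction whose two factors match $2^{(t-1)\mmbs_t(G)}$ and $\mmbs_t^\star(G)$ in the bound. First, I reduce the family $\F\cup\B$ to a small ``core'' using the trivially clean stripped instance $(G,\emptyset)$. Since $\opt(G,\F\cup\B)>\opt(G,\F)=\opt(G)+1>\opt(G)=\opt(G,\emptyset)$, the family $\F\cup\B$ is itself a blocking set of the clean instance $(G,\emptyset)$, so Property~\ref{prop:mmbs} extracts $\mathcal{C}=\F_0\cup\B_0\subseteq\F\cup\B$ with $|\mathcal{C}|\le\mmbs_t(G)$ still blocking $(G,\emptyset)$, and the set $V_0:=V(\mathcal{C})$ satisfies $|V_0|\le(t-1)\mmbs_t(G)$ because each element of $\mathcal{C}$ is a clique on at most $t-1$ vertices.

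Second, I enumerate how optimal solutions of $(G,\F)$ interact with the core. Every optimal solution $S$ has size $\opt(G)+1$ and is classified by its trace $X:=S\cap V_0$, of which there are at most $2^{|V_0|}\le 2^{(t-1)\mmbs_t(G)}$; write $\mathcal{S}_X$ for the set of such solutions with a given trace. If $X$ fails to hit some $B\in\B_0$, then every $S\in\mathcal{S}_X$ also misses $B$, so $\B_0$ already handles this trace (set $\B_X:=\emptyset$). Otherwise $X$ hits every element of $\B_0\cup\F_0$, and I apply \autoref{lemma:bsbigbssmall:ii} of \autoref{lemma:bsbigbssmall} to delete each $v\in V_0\setminus X$ (the needed $\bar v\in t^{(G,\F)}(v)$ being witnessed by any member of $\mathcal{S}_X$, while any singleton $\{v\}\in\B$ is preemptively routed into $\overline{\B}$) and then \autoref{lemma:bsbigbssmall:i} of \autoref{lemma:bsbigbssmall} to force $X$ into the solution, obtaining a reduced instance $(G',\F'_X)$ on the subgraph $G':=G-V_0$ on which $\B|_{V(G')}$ is a blocking set and $\opt(G',\F'_X)=\opt(G)+1-|X|$. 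Granting cleanness of this sub-instance, Property~\ref{prop:mmbs} applied on $G'$ yields $\B_X\subseteq\B|_{V(G')}$ with $|\B_X|\le\mmbs_t(G')\le\mmbs_t^\star(G)$, whose preimage in $\B$ blocks every $S\in\mathcal{S}_X$. Setting $\overline{\B}:=\B_0\cup\bigcup_{X\subseteq V_0}\B_X$ produces a blocking set of $(G,\F)$ of size at most $\mmbs_t(G)+2^{(t-1)\mmbs_t(G)}\cdot\mmbs_t^\star(G)\le 2^{(t-1)\mmbs_t(G)}\cdot\mmbs_t^\star(G)$.

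The main obstacle is verifying cleanness of the reduced instance $(G',\F'_X)$, which is precisely where the hypothesis $\opt(G,\F)=\opt(G)+1$ is consumed. The easy inequality $\opt(G')\le\opt(G)+1-|X|$ is immediate, as any $S\in\mathcal{S}_X$ restricts to $G'$ as a $K_t$-hitting set of that size. Cleanness requires the reverse inequality $\opt(G')\ge\opt(G)+1-|X|$: a strictly smaller $K_t$-hitting set on $G'$, augmented by a carefully chosen completion on $V_0$ (using that $X$ hits $\mathcal{C}$ and that $\mathcal{C}$ blocks $(G,\emptyset)$), should produce a solution of $(G,\F)$ of size strictly less than $\opt(G)+1$, contradicting the hypothesis. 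Making this combinatorial argument water-tight --- choosing the right completion on $V_0\setminus X$ to simultaneously hit all $t$-cliques of $G$ that touch $V_0$ and all of $\F$ --- is the delicate part of the proof, and is the technical ingredient that both forces the exponential dependence on $\mmbs_t(G)$ (from enumerating traces $X$) and the multiplicative $\mmbs_t^\star(G)$ (from subgraph-restricted blocking-set extraction).
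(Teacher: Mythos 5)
Your high-level plan---enumerate ``traces'' of optimal solutions on a small core vertex set, reduce each trace to a clean sub-instance via \autoref{lemma:bsbigbssmall}, and extract a small blocking set from each---matches the paper's strategy. There are two deviations, the second of which is a genuine gap that you yourself flag as the ``delicate part'' but do not resolve.

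First, you extract the core $\mathcal{C}$ from $\F\cup\B$. The paper extracts $\overline{\F}\subseteq\F$ alone: since $\opt(G,\F)=\opt(G)+1>\opt(G)$, the family $\F$ by itself already blocks the clean instance $(G,\emptyset)$, and every optimal solution of $(G,\F)$ hits all of $\overline{\F}\subseteq\F$. Hence every trace $X^\star=S\cap V(\overline{\F})$ automatically hits the core, which removes your casework on whether $X$ hits $\B_0$ and the extra additive $\B_0$ term, whose presence actually overshoots the stated bound $2^{(t-1)\mmbs_t(G)}\mmbs_t^\star(G)$.

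Second, and this is the real problem: you remove \emph{all} of $V_0$ to land on a single subgraph $G'=G-V_0$, applying \autoref{lemma:bsbigbssmall}(\ref{lemma:bsbigbssmall:ii}) iteratively for $V_0\setminus X$ and \autoref{lemma:bsbigbssmall}(\ref{lemma:bsbigbssmall:i}) for $X$. Cleanness of $(G',\F'_X)$ then requires $\opt(G-V_0)\ge\opt(G)+1-|X|$, and the argument you sketch breaks at exactly the point you identify: it is \emph{not} generally true that $\opt(G)\le\opt(G-V_0)+|X|$ when $X\subsetneq V_0$, because a $K_t$-hitting set of $G-V_0$ together with $X$ need not hit $t$-cliques that run through $V_0\setminus X$, and there is no principled way to complete it within budget. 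The paper avoids this entirely by removing \emph{only} $X^\star$, leaving $V(\overline{\F})\setminus X^\star$ inside the graph. Then the needed inequality is immediate: if $\opt(G-X^\star)+|X^\star|\le\opt(G)$, combining with the trivial $\opt(G)\le\opt(G-X^\star)+|X^\star|$ gives equality; taking an optimal $K_t$-hitting set $S_1$ of $G-X^\star$, the set $S:=X^\star\cup S_1$ has size $\opt(G)$, hits every $t$-clique of $G$ (cliques meeting $X^\star$ are hit by $X^\star$; the rest live in $G-X^\star$ and are hit by $S_1$), and hits $\overline{\F}$ (because $X^\star$ does), contradicting that $\overline{\F}$ blocks $(G,\emptyset)$. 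The sub-instances then become $(G-X^\star,\F-X^\star)$, a different subgraph for each trace---which is perfectly fine, since each has $\mmbs_t\le\mmbs_t^\star(G)$---and \autoref{lemma:bsbigbssmall}(\ref{lemma:bsbigbssmall:i}) alone suffices, with no iterated type conditions or ad hoc treatment of singletons in $\B$. Adopting these two changes turns your outline into the paper's proof.
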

\begin{proof}
To prove the lemma, we will prove that there exists $c \le 2^{(t-1)\mmbs_t(G)}$, $\B_i \subseteq \B$, $G_i \subseteq G$, $\F_i \subseteq \F$ for $i \in [c]$, such that
\begin{enumerate}[(i)]
  \item \label{lemma:bsnonclean:i} for any $i \in [c]$, $(G_i,\F_i)$ is a clean instance,
  \item \label{lemma:bsnonclean:ii} for any $i \in [c]$, $\B_i$ is a blocking set of $(G_i,\F_i)$, with $|\B_i| \le \mmbs_t(G_i)$, and
  \item \label{lemma:bsnonclean:iii} $\bigcup_{i \in [c]}\B_i$ is a blocking set of $(G,\F)$.
\end{enumerate}
Then, the lemma will immediately follow by setting $\overline{\B}=\bigcup_{i \in [c]}B_i$.

  As $\opt(G,\F)=\opt(G)+1$, we get that $\F$ is a blocking set of $(G,\emptyset)$, and thus by definition there exists $\overline{\F} \subseteq \F$ (recall that $\F$ is a set of subset of vertices, and thus each $Z \in \F$ is either entirely chosen and present in $\overline{\F}$, or not in $\overline{\F}$) such that
  $|\overline{\F}| \le \mmbs_t(G)$ and $\overline{\F}$ is a blocking set of $(G,\emptyset)$.
  Let $\F^{\star} = \{V(\overline{\F}) \cap S \mid S\mbox{ is an optimal solution of $(G,\F)$}\}$.
  Observe that for any $X^{\star} \in \F^{\star}$, $X^{\star}$ is a hitting set of $\overline{\F}$, and that  $|\F^{\star}| \le 2^{(t-1)\mmbs_t(G)}$ as any $B \in \overline{\F}$ has size at most $t-1$.

  Let us now establish the following property $p_1$: for any $X^{\star} \in \F^{\star}, \opt(G,\F)=\opt(G- X^{\star})+|X^{\star}|$.
  Let us start with $\opt(G,\F) \ge \opt(G- X^{\star})+|X^{\star}|$. Let $S$ be an optimal solution of $(G,\F)$ such that $X^{\star} = S \cap V(\overline{\F})$.
  We have $\opt(G,\F)=|S|=|X^{\star}|+|S\setminus X^{\star}|$. As $S \setminus X^{\star}$ is a solution of $(G-X^{\star})$, we get $|S\setminus X^{\star}| \ge \opt(G-X^{\star})$, and the desired inequality follows.
  Let us now prove that $\opt(G,\F) \le \opt(G- X^{\star})+|X^{\star}|$.
  Suppose by contradiction that $\opt(G,\F) \ge \opt(G- X^{\star})+|X^{\star}|+1$.
  As $\opt(G,\F)=\opt(G)+1$, we deduce $\opt(G) \ge \opt(G- X^{\star})+|X^{\star}|$ and thus $\opt(G) = \opt(G- X^{\star})+|X^{\star}|$.
  Let $S_1$ be an optimal solution of $(G-X^{\star})$ and $S = X^{\star} \cup S_1$. We get that $|S|=\opt(G)$ and $X^{\star} \subseteq S$, and thus $S$ is
  an optimal solution of $(G,\overline{\F})$, as $X^{\star}$ is a hitting set of $\overline{\F}$. This implies $\opt(G,\overline{\F})=\opt(G)$, contradicting that
  $\overline{\F}$ is a blocking set of $(G,\emptyset)$.

  Let us now establish the following property $p_2$: for any $X^{\star} \in \F^{\star}$, $(G-X^{\star},\F-X^{\star})$ is a clean instance, and there exists $\overline{\B}_{X^{\star}} \subseteq \B$ such that
  $\overline{\B}_{X^{\star}}$ is a blocking set of $(G-X^{\star},\F-X^{\star})$ with $|\overline{\B}_{X^{\star}}| \le \mmbs_t(G-X^{\star})$.
  Let $X^{\star} \in \F^{\star}$. By \autoref{lemma:bsbigbssmall:i} of \autoref{lemma:bsbigbssmall}, $\opt(G,\F)=|X^{\star}|+\opt(G-X^{\star},\F-X^{\star})$, and $\B_{X^{\star}}=\B-X^{\star}$ is a blocking set
  of $(G-X^{\star},\F-X^{\star})$.
  According to property $p_1$, $\opt(G,\F)=\opt(G- X^{\star})+|X^{\star}|$, leading to $\opt(G-X^{\star},\F-X^{\star})=\opt(G-X^{\star})$, implying that $(G-X^{\star},\F-X^{\star})$ is clean.
  By definition, there exists $\overline{\B}_{X^{\star}} \subseteq \B_{X^{\star}}$ as claimed.

  Let us now prove the lemma. We define $c= |\F^{\star}|$ and $\{B_i\}=\{\overline{\B}_{X^{\star}} \mid X^{\star} \in \F^{\star}\}$. Property~$p_2$ above corresponds to \autoref{lemma:bsnonclean:i} and \autoref{lemma:bsnonclean:ii} of the lemma.
  It remains to prove that $\overline{\B}=\bigcup_{X^{\star} \in \F^{\star}}\overline{\B}_{X^{\star}}$ is a blocking set of $(G,\F)$.
  Let $S$ be an optimal solution of $(G,\F)$, and $X^{\star}=S \cap V(\overline{\F})$. \autoref{lemma:bsbigbssmall} (\autoref{lemma:bsbigbssmall:i}) implies that $\opt(G,\F)=|X^{\star}|+\opt(G-X^{\star},\F-X^{\star})$,
  and as $\opt(G,\F)=|X^{\star}|+|S \setminus X^{\star}|$, we deduce that $|S \setminus X^{\star}|=\opt(G-X^{\star},\F-X^{\star})$. Since $\overline{\B}_{X^{\star}}$ is a blocking set of $(G-X^{\star},\F-X^{\star})$,
  there exists $B \in \overline{\B}_{X^{\star}}$ such that $(S \setminus X^{\star}) \cap B = \emptyset$. As $B \subseteq V(G)-X^{\star}$, it follows that $S \cap B = \emptyset$, and thus $\overline{\B}$ is indeed a blocking set.
\end{proof}

In the next two lemmas we study how blocking sets behave when we remove one vertex from a graph.
\begin{lemma}\label{lemma:bswithoutv}
  Let $(G,\F)$ be a clean instance of $\EKtH$, $v \in V(G)$, and $\B$ be a blocking set of $(G,\F)$ such that $\{v\} \notin \B$.
 %
%
There exists a blocking set $\overline{\B} \subseteq \B$ of $(G,\F)$ with $|\overline{\B}| \le (2^{(t-1)\mmbs_t(G-v)}+1) \cdot  \mmbs_t^{\star}(G-v)$.
\end{lemma}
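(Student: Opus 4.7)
The plan is to split on the type $t^{(G,\F)}(v)$ of $v$ and, in each of the two possible cases, reduce to an auxiliary instance on $G - v$ from which a small sub-blocking set can be extracted. Since $(G,\F)$ admits at least one optimal solution, at least one of $v \in t^{(G,\F)}(v)$ and $\bar v \in t^{(G,\F)}(v)$ must hold; I will build a small sub-blocking set $\overline{\B}_1 \subseteq \B$ catching every optimum of $(G,\F)$ containing $v$ (when the former holds) and a small sub-blocking set $\overline{\B}_2 \subseteq \B$ catching every optimum avoiding $v$ (when the latter holds), and return $\overline{\B} := \overline{\B}_1 \cup \overline{\B}_2$.

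For the case $v \in t^{(G,\F)}(v)$, I apply \autoref{lemma:bsbigbssmall:i} of \autoref{lemma:bsbigbssmall} with $X^{\star} = \{v\}$, which gives $\opt(G,\F) = 1 + \opt(G-v, \F-v)$ and that $\B - v$ is a blocking set of $(G-v,\F-v)$. Cleanness of $(G-v,\F-v)$ follows from cleanness of $(G,\F)$ together with the trivial bound $\opt(G) \le 1 + \opt(G-v)$. The definition of $\mmbs_t(G-v)$ then yields $\overline{\B}_1 \subseteq \B - v \subseteq \B$ of size at most $\mmbs_t(G-v)$ that blocks $(G-v,\F-v)$, and \autoref{lemma:bsbigbssmall:ibis} of \autoref{lemma:bsbigbssmall} ensures that every optimum of $(G,\F)$ containing $v$ is missed by some element of $\overline{\B}_1$.

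For the case $\bar v \in t^{(G,\F)}(v)$, I first observe that $\{v\} \notin \F$, since otherwise every solution of $(G,\F)$ would be forced to contain $v$. Thus \autoref{lemma:bsbigbssmall:ii} of \autoref{lemma:bsbigbssmall} applies and yields the auxiliary instance $I := (G-v,\; \F_{|V(G) \setminus v} \cup \pr^t_v(V(G) \setminus v))$ with $\B_{|V(G) \setminus v}$ a blocking set of $I$ and $\opt(I) = \opt(G,\F) = \opt(G)$. The main obstacle here is that $I$ is not automatically clean: an optimum of $(G,\F)$ avoiding $v$ is a solution of $G-v$, so $\opt(G-v) \in \{\opt(G)-1, \opt(G)\}$, and the first of these forces $\opt(I) = \opt(G-v) + 1$. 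In the clean case I extract $\overline{\B}_2' \subseteq \B_{|V(G)\setminus v}$ of size at most $\mmbs_t(G-v)$ directly from the definition of $\mmbs_t$; in the non-clean case I invoke \autoref{lemma:bsnonclean} on $I$ to extract $\overline{\B}_2' \subseteq \B_{|V(G)\setminus v}$ of size at most $2^{(t-1)\mmbs_t(G-v)} \cdot \mmbs_t^{\star}(G-v)$. Because $\{v\} \notin \B$, each element of $\overline{\B}_2'$ is a nonempty restriction of some $B \in \B$; picking one such preimage per element lifts $\overline{\B}_2'$ to $\overline{\B}_2 \subseteq \B$ of the same cardinality, and \autoref{lemma:bsbigbssmall:iibis} of \autoref{lemma:bsbigbssmall} (applied after noting that every optimum of $(G,\F)$ avoiding $v$ lies in $V(G) \setminus v$, so intersecting it with a lifted preimage equals intersecting it with the original restriction) guarantees that each such optimum is missed by some lifted element.

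Taking $\overline{\B} := \overline{\B}_1 \cup \overline{\B}_2$ (with the set corresponding to a case that does not apply read as $\emptyset$), the case split on $t^{(G,\F)}(v)$ ensures that every optimum of $(G,\F)$ is missed by $\overline{\B}$, so $\overline{\B}$ is a blocking set of $(G,\F)$. The size bound $|\overline{\B}| \le \mmbs_t(G-v) + 2^{(t-1)\mmbs_t(G-v)} \cdot \mmbs_t^{\star}(G-v) \le (2^{(t-1)\mmbs_t(G-v)} + 1)\,\mmbs_t^{\star}(G-v)$ then follows from $\mmbs_t(G-v) \le \mmbs_t^{\star}(G-v)$.
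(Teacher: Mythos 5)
Your proof is correct and takes essentially the same approach as the paper. The paper organizes the argument into three explicit cases according to $t^{(G,\F)}(v)$ (with the third case being precisely the union of the constructions from the first two), whereas you fold this into a single argument by building $\overline{\B}_1$ and $\overline{\B}_2$ conditionally and taking their union; the ingredients — \autoref{lemma:bsbigbssmall} (items~\ref{lemma:bsbigbssmall:i}, \ref{lemma:bsbigbssmall:ii}, \ref{lemma:bsbigbssmall:ibis}, \ref{lemma:bsbigbssmall:iibis}), cleanness of $(G-v,\F-v)$, \autoref{lemma:bsnonclean} for the non-clean auxiliary instance $I$, and the lifting via $\{v\}\notin\B$ — are the same, and the size arithmetic matches.
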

\begin{proof}
Let $I=(G-v,\F_{V(G)\setminus v} \cup \pr^t_v(V(G)\setminus v))$.
We distinguish three cases depending on $t^{(G,\F)}(v)$.

Case 1: Suppose $t^{(G,\F)}(v)=\{v\}$.
According to \autoref{lemma:bsbigbssmall} (\autoref{lemma:bsbigbssmall:i}) we get that $\opt(G,\F)=1+\opt(G-v,\F-v)$ and $\B-v$ is a blocking set of
$(G-v,\F-v)$.
According to \autoref{lemma:bsclean} (\autoref{lemma:bsclean:ii}), $(G-v,\F-v)$ is a clean instance, and thus there exists $\overline{\B} \subseteq \B-v$
such that $\overline{\B}$ is a blocking set of $(G-v,\F-v)$ and $|\overline{\B}| \le \mmbs_t(G-v)$.
According to \autoref{lemma:bsbigbssmall} (\autoref{lemma:bsbigbssmall:ibis}), $\overline{\B}$ is a blocking set of $(G,\F)$.

Case 2: Suppose $t^{(G,\F)}(v)=\{\bar{v}\}$. Note first that this implies $\{v\} \notin \F$.

Case 2a) Suppose $I$ is a clean instance. By \autoref{lemma:bsbigbssmall} (\autoref{lemma:bsbigbssmall:ii}), $\B_{|V(G)\setminus v}$ is a blocking set of $I$. As $I$ is clean, and by definition of $\mmbs_t$, there exists $\overline{\B}_0 \subseteq \B_{|V(G)\setminus v}$ such that $\overline{\B}_0$ is a blocking set of $I$ and $|\overline{\B}_0| \le \mmbs_t(G-v)$.
Note that we cannot define $\overline{\B}=\overline{\B}_0$, as we have to output a set $\overline{\B} \subseteq \B$, but $\overline{\B}_0$ may not be a subset of $\B$ (as a set $B \in \B$ containing $v$ becomes $B\setminus v$ in $\B_{|V(G)\setminus v}$). Thus, we rather define  $\overline{\B}=\add(\overline{\B}_0,v,\B)$, where $\add(\overline{\B}_0,v,\B)=\bigcup_{B \in \overline{\B}_0}f(B)$, where $f(B)$ is any $B' \in \B$ such that $B' \setminus \{v\} = B$.
Let us prove that $\overline{\B}$ is a blocking set of $(G,\F)$.
Let $S$ be an optimal solution of $(G,\F)$. As $t^{(G,\F)}(v)=\{\bar{v}\}$, $v \notin S$.
As $v \notin \F$, according to \autoref{lemma:bsbigbssmall} (\autoref{lemma:bsbigbssmall:iibis}), there exists $B \in \overline{\B}_0$ such that $S \cap B = \emptyset$.
As $S$ intersects all the $K_t$'s containing $v$, we know that $S \cap \tilde{B} \neq \emptyset$ for any
$\tilde{B} \in  \pr^t_v(V(G)\setminus v)$. This implies that $B \notin  \pr^t_v(V(G)\setminus v)$, and thus either $B'=B$ or $B'=B\cup v$ is in $\overline{\B}$. As $v \notin S$, we get $S \cap B' = \emptyset$, implying that $\overline{\B}$ is indeed a blocking set. Moreover, $|\overline{\B}|=|\overline{\B}_0|\le \mmbs_t(G-v) \le \mmbs_t^{\star}(G-v)$ and thus is smaller that the required size.

Case 2b) Suppose $I$ is not a clean instance.
As $\opt(G)\le 1+\opt(G-v)$ and $\opt(G)=\opt(G,\F)=\opt(I)$, the last equality coming from \autoref{lemma:bsbigbssmall} (\autoref{lemma:bsbigbssmall:ii}), we get that $\opt(I) \le 1+\opt(G-v)$, and thus $\opt(I)=1+\opt(G-v)$ as $I$ is not clean.
Thus, we can apply \autoref{lemma:bsnonclean} on $I$ and $\B_{|V(G)\setminus v}$, and thus there exists
$\overline{\B}_0 \subseteq \B$ such that $\overline{\B}_0$ is a blocking set of $I$, and $|\overline{\B}_0| \le 2^{(t-1)\mmbs_t(G-v)} \mmbs_t^{\star}(G-v)$.
In this case, we define $\overline{\B}=\add(\overline{\B}_0,v,\B)$. For the same arguments than in case 2a) we get that $\overline{\B}$ is a blocking set, whose size is upper-bounded by the claimed bound.

Case 3: Suppose $t^{(G,\F)}(v)=\{v,\bar{v}\}$. Note first that this implies $\{v\} \notin \F$.
In this case we use both small blocking sets defined in cases 1) and 2b).
As in case 1), according to \autoref{lemma:bsbigbssmall} (\autoref{lemma:bsbigbssmall:i}) we get that $\opt(G,\F)=1+\opt(G-v,\F-v)$ and $\B-v$ is a blocking set of
$(G-v,\F-v)$.
According to \autoref{lemma:bsclean} (\autoref{lemma:bsclean:ii}), $(G-v,\F-v)$ is a clean instance, and thus there exists $\overline{\B}_1 \subseteq \B-v$
such that $\overline{\B}_1$ is a blocking set of $(G-v,\F-v)$ and $|\overline{\B}_1| \le \mmbs_t(G-v)$.

According to \autoref{lemma:bsbigbssmall} (\autoref{lemma:bsbigbssmall:ii}),  we get that $\opt(G,\F)=\opt(I)$ and $\B_{|V(G)\setminus v}$ is a blocking set of $I$.
As $v\in t^{(G,\F)}(v)$ and $(G,\F)$ is clean we get $\opt(G)=\opt(G,\F)=1+\opt(G-v,\F-v) \ge 1+\opt(G-v)$, leading to $\opt(G)= 1+\opt(G-v)$.
On the other hand, we also have $\opt(G,\F)=\opt(I)$, and we get that $\opt(I) = 1+\opt(G-v)$.
Thus, we can apply \autoref{lemma:bsnonclean} on $I$ and $\B_{|V(G)\setminus v}$, and thus there exists
$\overline{\B}_0 \subseteq \B$ such that $\overline{\B}_0$ is a blocking set of $I$, and $|\overline{\B}_0| \le 2^{(t-1)\mmbs_t(G-v)} \mmbs_t^{\star}(G-v)$.
Let us now conclude the proof of the lemma by showing that $\overline{\B}=\add(\overline{\B}_0,v,\B) \cup \overline{\B}_1$ satisfies the claim.
The bound of $|\overline{\B}|$ is immediate, and thus it remains to prove that $\overline{\B}$ is a blocking set of $(G,\F)$.
Let $S$ be an optimal solution of $(G,\F)$.
If $v \in S$, then according to \autoref{lemma:bsbigbssmall} (\autoref{lemma:bsbigbssmall:ibis}), there exists $B \in \overline{\B}_1$ such that $S \cap B = \emptyset$ and we are done.
If $v \notin S$, then as $v \notin \F$, according to \autoref{lemma:bsbigbssmall} (\autoref{lemma:bsbigbssmall:iibis}),  there exists $B \in \overline{\B}_0$ such that $S \cap B = \emptyset$.
By the same arguments than in case 2a) we get that there exists $B' \in \add(\overline{\B}_0,v,\B)$ such that $S \cap B' = \emptyset$, and thus that $\overline{\B}$ is a blocking set of $(G,\F)$.
\end{proof}

\begin{lemma}\label{lemma:bswithv}
  Let $(G,\F)$ be a clean instance of $\EKtH$, $v \in V(G)$,
  and $\B$ a blocking set of $(G,\F)$ such that $\{v\} \in \B$.
  Let $\B'$ such that $\B = \B' \cup \{v\}$.
There exists $\overline{\B} \subseteq \B'$ such that $\overline{\B} \cup \{v\}$ is a blocking set of $(G,\F)$ with $|\overline{\B}| \le \mmbs_t(G-v)$.
\end{lemma}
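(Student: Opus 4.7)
The plan is to case-split on the type $t^{(G,\F)}(v)$ of $v$ with respect to $(G,\F)$, exploiting the fact that the singleton $\{v\}$ is already in our blocking set.

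\textbf{First case: $t^{(G,\F)}(v)=\{\bar{v}\}$.} No optimal solution of $(G,\F)$ contains $v$, so every such solution trivially misses the singleton $\{v\}$. Hence $\{\{v\}\}$ is already a blocking set of $(G,\F)$, and we may simply take $\overline{\B} = \emptyset$, which satisfies $|\overline{\B}| = 0 \le \mmbs_t(G-v)$.

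\textbf{Second case: $v \in t^{(G,\F)}(v)$.} Some optimal solution of $(G,\F)$ contains $v$, so applying \autoref{lemma:bsbigbssmall}\autoref{lemma:bsbigbssmall:i} with $X^\star = \{v\}$ gives both $\opt(G,\F) = 1 + \opt(G-v, \F-v)$ and the fact that $\B - v$ is a blocking set of $(G-v, \F-v)$. Since $\B = \B' \cup \{\{v\}\}$ and $\{v\}$ is removed in $\B - v$, we have $\B - v = \B' - v \subseteq \B'$. Moreover, \autoref{lemma:bsclean}\autoref{lemma:bsclean:ii} implies that $(G-v, \F-v)$ is a clean instance, so $\mmbs_t(G-v) \ge \mmbs_t(G-v, \F - v)$ by definition of $\mmbs_t$. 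Then \autoref{prop:mmbs} applied to $(G-v, \F-v)$ yields $\overline{\B} \subseteq \B' - v \subseteq \B'$ which is a blocking set of $(G-v, \F-v)$ with $|\overline{\B}| \le \mmbs_t(G-v)$.

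It remains to verify that $\overline{\B} \cup \{\{v\}\}$ is a blocking set of $(G,\F)$. Let $S$ be any optimal solution of $(G,\F)$. If $v \notin S$, then $S \cap \{v\} = \emptyset$ and $S$ misses the singleton $\{v\}$. Otherwise $v \in S$, and we invoke \autoref{lemma:bsbigbssmall}\autoref{lemma:bsbigbssmall:ibis} with $X = \{v\}$ and the blocking set $\overline{\B}$ of $(G-v, \F-v)$ to obtain some $B \in \overline{\B}$ with $S \cap B = \emptyset$. In either case, $S$ avoids a member of $\overline{\B} \cup \{\{v\}\}$, as required.

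I do not expect any genuine obstacle: the key observation is that the singleton $\{v\} \in \B$ serves as a ``free witness'' for every optimal solution of $(G,\F)$ that avoids $v$, so it suffices to block the optimal solutions containing $v$. Via $X^\star = \{v\}$ this reduces cleanly to extracting a small blocking set of the clean instance $(G-v, \F-v)$, where the definition of $\mmbs_t(G-v)$ applies directly; the remainder is bookkeeping to match set-restriction notations.
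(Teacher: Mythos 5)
Your proof is correct and follows essentially the same route as the paper: case-split on the type of $v$, dispatch the $\{\bar v\}$-type case trivially via the singleton, and in the $v$-type case invoke Lemma~\ref{lemma:bsbigbssmall}(\ref{lemma:bsbigbssmall:i}), Lemma~\ref{lemma:bsclean}(\ref{lemma:bsclean:ii}), Property~\ref{prop:mmbs}, and then Lemma~\ref{lemma:bsbigbssmall}(\ref{lemma:bsbigbssmall:ibis}) to re-lift. In fact you are a bit more careful than the paper's write-up on the bookkeeping that $\B - v = \B' - v \subseteq \B'$ (the paper loosely asserts $\B'$ is a blocking set of $(G-v,\F-v)$ before extracting from $\B-v$), but this is cosmetic and the argument is the same.
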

\begin{proof}
If $t^{(G,\F)}(v)=\{\bar{v}\}$, then $\overline{\B}=\emptyset$ satisfies the claim.
Otherwise, as $v \in t^{(G,\F)}(v)$, according to \autoref{lemma:bsbigbssmall} (\autoref{lemma:bsbigbssmall:i}), we get that $\opt(G,\F)=1+\opt(G-v,\F-v)$ and $\B'$ is a blocking set of
$(G-v,\F-v)$.
According to \autoref{lemma:bsclean} (\autoref{lemma:bsclean:ii}), $(G-v,\F-v)$ is a clean instance, and thus there exists $\overline{\B} \subseteq \B-v$
such that $\overline{\B}$ is a blocking set of $(G-v,\F-v)$ and $|\overline{\B}| \le \mmbs_t(G-v)$.
Let us prove that $\overline{\B} \cup \{v\}$ is a blocking set of $(G,\F)$.
Let $S$ be an optimal solution of $(G,\F)$. If $v \notin S$, we are done. Otherwise, according to \autoref{lemma:bsbigbssmall} (\autoref{lemma:bsbigbssmall:ibis}),
there exists $B\in\overline{\B}$ such that $S \cap B = \emptyset$, concluding the proof.
\end{proof}

The next lemma shows how, given a blocking set, we can ``zoom'' in a few pending components. It relies on \autoref{lemma:computeRoot} that will be proved in \autoref{sec:computeHittingSet}.
\begin{lemma}\label{lemma:bszooming}
  Let $(G,\F)$ be a clean instance of $\EKtH$. Let $\T$ be a $\bedt$ root of $G-N^t(G)$ (which exists by \autoref{lemma:computeRoot}).
  Suppose that $N^t(G)=\emptyset$ and that $G$ is connected, and let $\T = \{T\}$.
  \begin{enumerate}
  \item \label{lemma:bszooming1} Let $\B$ be a blocking set of $(G,\F)$. Then, either
  \begin{itemize}
  \item there exists $v \in T$ such that $\B \cap C(v)$ is a blocking set of $(G[C(v)],\F \cap C(v))$, or
  \item there exists $Z \in \B \cup \F$ such that $Z \subseteq T$, and for any $v \in Z, (\B \cap C(v)) \cup  \{v\}$ is a blocking set of $(G[C(v)],\F \cap C(v))$.
  \end{itemize}
\item \label{lemma:bszooming2} For any $v \in T$ and  blocking set $\overline{\B}$ of $(G[C(v)],\F \cap C(v))$, $\overline{\B}$ is a blocking set of $(G,\F)$.
\item \label{lemma:bszooming3} For any $Z \in \B \cup \F$ such that $Z \subseteq T$, and for any blocking sets $\overline{\B}_v$ of $(G[C(v)],\F \cap C(v))$ (with $\overline{\B}_v=\overline{\B}'_v \cup \{v\}$) for any $v \in Z$, it holds that
  \begin{itemize}
  \item if $Z \in \B$, then $Z \cup \bigcup_{v \in Z}\overline{\B}'_v$ is a blocking set of $(G,\F)$, and
  \item if $Z \in \F$, then $\bigcup_{v \in Z}\overline{\B}'_v$ is a blocking set of $(G,\F)$.
  \end{itemize}
  \end{enumerate}
\end{lemma}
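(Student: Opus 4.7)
The plan is to derive all three parts from a single structural decomposition fact, then treat Parts~\ref{lemma:bszooming2} and~\ref{lemma:bszooming3} as short direct arguments and spend the real effort on Part~\ref{lemma:bszooming1}, which I would prove by contraposition.

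The starting point is to gather common facts. Since $N^t(G)=\emptyset$, \autoref{obs:pendingadditive} with $N=\emptyset$ shows that $\{C(v):v\in T\}$ is an additive partition of $G$; in particular $\opt(G)=\sum_{v\in T}\opt(G[C(v)])$. Cleanness of $(G,\F)$ together with \autoref{lemma:bsclean:i} of \autoref{lemma:bsclean} implies that each $(G[C(v)],\F\cap C(v))$ is also clean, hence $\opt(G,\F)=\sum_{v\in T}\opt(G[C(v)],\F\cap C(v))$. Moreover, \autoref{obs:cliquelambda4} places every $t$-clique inside a single $C(v)$, and \autoref{obs:lambda4} implies that any $Z\in\B\cup\F$ (a clique of size at most $t-1$) either lies entirely in a single $C(v)$ or satisfies $Z\subseteq T$ with $|Z|\ge 2$. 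Consequently, any optimal solution $S$ of $(G,\F)$ decomposes as $S=\bigsqcup_{v\in T}S_v$ where $S_v:=S\cap C(v)$ is an optimal solution of $(G[C(v)],\F\cap C(v))$.

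For Part~\ref{lemma:bszooming2}, take any optimal $S$ of $(G,\F)$; the restriction $S\cap C(v)$ is optimal for the sub-instance, so it misses some $B\in\overline{\B}$, and as $B\subseteq C(v)$ we obtain $S\cap B=\emptyset$. For Part~\ref{lemma:bszooming3}, consider an optimal $S$ of $(G,\F)$. If $Z\in\B$ and $S\cap Z=\emptyset$, then $Z$ itself is missed by $S$ and we are done; otherwise (automatic when $Z\in\F$, since $S$ must hit $Z$) fix some $u\in Z\cap S$. Then $S\cap C(u)$ is an optimal solution of $(G[C(u)],\F\cap C(u))$ containing $u$, so since $\overline{\B}'_u\cup\{u\}$ is a blocking set, some missed element must lie in $\overline{\B}'_u$; this gives a $B\subseteq C(u)$ disjoint from $S$, as required.

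For Part~\ref{lemma:bszooming1}, I argue by contradiction: assume neither alternative holds and construct an optimal solution of $(G,\F)$ hitting every $B\in\B$, contradicting that $\B$ is a blocking set. For each $v\in T$, let $A_v$ be the set of optimal solutions of $(G[C(v)],\F\cap C(v))$ that hit every element of $\B\cap C(v)$; failure of the first alternative gives $A_v\neq\emptyset$ for all $v$. Partition $T$ into $T^{\mathrm{in}}$ (those $v$ for which every $S_v\in A_v$ contains $v$), $T^{\mathrm{out}}$ (those for which every $S_v\in A_v$ avoids $v$), and $T^{\mathrm{flex}}$ (the rest); one checks that $(\B\cap C(v))\cup\{v\}$ is a blocking set of $(G[C(v)],\F\cap C(v))$ exactly when $v\in T^{\mathrm{out}}$, so failure of the second alternative is equivalent to: every $Z\in\B\cup\F$ with $Z\subseteq T$ meets $T\setminus T^{\mathrm{out}}$. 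Now pick any $U\subseteq T\setminus T^{\mathrm{out}}$ with $T^{\mathrm{in}}\subseteq U$ that hits every such $Z$ of size at least~$2$; constraints from singletons in $\B$ and $\F$ lying in $T$ are automatic from the definition of $A_v$. Choosing $S_v\in A_v$ with $v\in S_v$ iff $v\in U$ yields $S=\bigcup_{v\in T}S_v$ of size $\opt(G,\F)$, hitting every $t$-clique by \autoref{obs:cliquelambda4}, every $Z\in\F$ (via $S_v$ when $Z\subseteq C(v)$, via $U$ when $Z\subseteq T$), and likewise every $B\in\B$, the desired contradiction.

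The main obstacle lies in Part~\ref{lemma:bszooming1}: the set $U$ must simultaneously be compatible with the forced membership of each $v$ in $S_v$ dictated by $T^{\mathrm{in}}/T^{\mathrm{out}}$, and with the covering requirements coming from both $\B$ and $\F$ for sets crossing several pending components. Failure of alternative~2 is precisely what is needed to make this selection feasible; the symmetric appearance of $\B$ and $\F$ in that alternative mirrors the fact that, once within-$C(v)$ constraints are folded into the definition of $A_v$, the remaining global obligations all reduce to the single question of which vertices of $T$ end up in $U$.
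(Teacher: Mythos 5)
Your proof is correct and takes essentially the same approach as the paper's: Parts~\ref{lemma:bszooming2} and~\ref{lemma:bszooming3} are argued identically, and Part~\ref{lemma:bszooming1} is the same contradiction argument in which a global optimal solution is assembled from per-pending-component optima that hit $\B\cap C(v)$ (negation of the first alternative) while ensuring crossing sets $Z\subseteq T$ are covered (negation of the second alternative). Your $T^{\mathrm{in}}/T^{\mathrm{out}}/T^{\mathrm{flex}}$ bookkeeping and selection set $U$ is somewhat more elaborate than the paper's direct choice of one representative $v_Z$ per crossing $Z$ together with $S_{v_Z}\ni v_Z$, but the underlying idea is the same.
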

\begin{proof}
Note first that using the trivial partition $\P=\{v \mid v \in T\}$, \autoref{obs:pendingadditive} implies that the pending partition $\P^P=\{C(v) \mid v \in T\}$ is additive,
  and \autoref{lemma:bsclean} implies that for any $v \in T$, $(G[C(v)],\F\cap C(v))$ is clean.
  Thus, we get property $p_1$: $\opt(G,\F)=\opt(G)=\sum_{v \in T}\opt(G[C(v)])=\sum_{v \in T}\opt(G[C(v)],\F \cap C(v))$.

  Proof of~\autoref{lemma:bszooming1}.
  Let us proceed by contradiction. For any $Z \in \B \cup \F$ such that $Z \subseteq T$, let $v_Z \in Z$ and $S_{v_Z}$ such that $S_Z$ is an optimal solution of $(G[C(v_Z)],\F \cap C(v_Z))$
  intersecting all sets of $(\B \cap C(v_Z)) \cup  \{v_Z\}$ (implying that $v_Z \in S_Z$).
  Let $V_1 = \{v_Z \mid Z \in \B \cup \F\}$ and $V_2 = T \setminus V_1$.
  For any $v \in V_2$, let $S_v$ be an optimal solution of $(G[C(v_Z)],\F \cap C(v_Z))$ intersecting all sets of $\B \cap C(v)$.
  Let $S = \bigcup_{v \in T}S_v$. Let us now prove that $S$ is an optimal solution of $(G,\F)$ intersecting all sets of $\B$, which is a contradiction to the hypothesis that $\B$ is a blocking set of $(G,\F)$.

  First, we get $|S|=\sum_{v \in T}\opt(G[C(v)],\F \cap C(v))=\opt(G) = \opt(G,\F)$ by property~$p_1$.
  Moreover, for any $t$-clique $K$ in $G$, \autoref{obs:lambda4} implies that there exists $v \in T$ such that $K \subseteq C(v)$, and thus $S_v \cap K \neq \emptyset$.
  Let us now verify that for any $Z \in \B \cup \F$, $S \cap Z \neq \emptyset$.
  Let $Z \in \B \cup \F$. If there exists $v \in T$ such that $Z \subseteq C(v)$, then $Z$ belongs to $\F \cap C(v)$ or $\B \cap C(v)$, and thus $S_v \cap Z \neq \emptyset$.
  Otherwise, \autoref{obs:lambda4} implies that $Z \subseteq T$ as $G[Z]$ is a clique. Thus, as $v_Z \in S_{v_Z}$, we get $S_{v_Z} \cap Z \neq \emptyset$.

  Before proving the last two items, let us prove property $p_2$: for any  optimal solution $S$ of $(G,\F)$, it holds that for any $v \in T$,  $|S \cap C(v)|$ is an optimal solution of $(G[C(v)],\F \cap C(v))$.
  Let $S$ be an optimal solution of $(G,\F)$. As by property~$p_1$ we have $\opt(G,\F)=\sum_{v \in T}\opt(G[C(v)],\F \cap C(v))$, and as for any $v \in T$, $S \cap C(v)$ is a solution of $(G[C(v)],\F \cap C(v))$,
  and thus $|S \cap C(v)| \ge \opt(G[C(v)],\F \cap C(v))$, we get that for any $v \in T$,  $S \cap C(v)$ is an optimal solution of $(G[C(v)],\F \cap C(v))$.

  Proof of~\autoref{lemma:bszooming2}.
  Let $S$ be an optimal solution of $(G,\F)$. By property~$p_2$, $S \cap C(v)$ is an optimal solution of $(G[C(v)],\F \cap C(v))$, and thus, there exists $B \in \overline{\B}$ such that $(S \cap C(v)) \cap B = \emptyset$, implying $S \cap B = \emptyset$.

  Proof of~\autoref{lemma:bszooming3}.
  Let $\B' = \bigcup_{v \in Z}\overline{\B}'_v$.
  Let $S$ be an optimal solution of $(G,\F)$. Suppose first that $Z \in \B$. If $S \cap Z = \emptyset$ then we are done, otherwise let $v \in S \cap Z$.
  By property~$p_2$, $S \cap C(v)$ is an optimal solution of $(G[C(v)],\F \cap C(v))$, and thus, as $\overline{\B}_v$ is a blocking set of $(G[C(v)],\F \cap C(v))$, there exists $B \in \overline{\B}_v$
  such that $(S \cap C(v)) \cap B = S \cap B = \emptyset$. As $v \in S$, we obtain that $B \in \overline{\B}'_v$, as required.
  Suppose now that $Z \in \F$. As $S$ must intersect $\F$, we again consider $v \in S \cap Z$, and we can follow the same arguments.
\end{proof}

The three next ``cleaning'' lemmas will help us in the main result of this section to reduce the proof to the most interesting case.
\begin{lemma}\label{lemma:bscleanF}
  Let $(G,\F)$ a clean instance of $\EKtH$.
  Let $\F' = \{Z \setminus N^t(G) \mid Z \in \F\}$. Then, $(G,\F')$  is well-defined (that is, $\emptyset \notin \F'$) and clean, and $\mmbs_t(G,\F) \le \mmbs_t(G,\F')$.
\end{lemma}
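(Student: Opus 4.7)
The proof will rest on one recurring observation: for any set $S \subseteq V(G)$ that hits every $t$-clique of $G$, the set $S \setminus N^t(G)$ also hits every $t$-clique, because vertices of $N^t(G)$ by definition occur in no $t$-clique. In particular, whenever we have a solution $S$ of some instance $(G,\F^*)$ of $\EKtH$ with $|S| = \opt(G)$, it must satisfy $S \cap N^t(G) = \emptyset$; otherwise $S \setminus N^t(G)$ would be a hitting set for all $t$-cliques of size strictly less than $\opt(G)$, which is absurd.

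\textbf{Step 1 ($\F'$ is well-defined).} Suppose toward a contradiction that some $Z \in \F$ satisfies $Z \subseteq N^t(G)$, so that $Z \setminus N^t(G) = \emptyset$. Let $S$ be an optimal solution of $(G,\F)$; by cleanness of $(G,\F)$, $|S| = \opt(G)$. Since $S$ must hit $Z \subseteq N^t(G)$, the set $S \cap N^t(G)$ is non-empty. But then $S \setminus N^t(G)$ hits every $t$-clique and has size strictly less than $\opt(G)$, a contradiction. Thus each $Z' = Z \setminus N^t(G) \in \F'$ is non-empty; the bound $|Z'| \le t-1$ and the fact that $G[Z']$ is a clique follow because $Z' \subseteq Z$.

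\textbf{Step 2 ($(G,\F')$ is clean).} The inequality $\opt(G,\F') \ge \opt(G)$ is immediate. For the reverse, let $S$ be an optimal solution of $(G,\F)$, so $|S| = \opt(G)$. By the recurring observation, $S \cap N^t(G) = \emptyset$. For every $Z \in \F$ we have $\emptyset \ne S \cap Z = S \cap (Z \setminus N^t(G)) = S \cap Z'$, so $S$ also hits every set in $\F'$. Hence $S$ is a solution of $(G,\F')$ of size $\opt(G)$, giving $\opt(G,\F') \le \opt(G)$.

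\textbf{Step 3 (the blocking set inequality).} Let $\B$ be an inclusion-wise minimal blocking set of $(G,\F)$ attaining $\mmbs_t(G,\F)$. I claim $\B$ is a blocking set of $(G,\F')$. Suppose otherwise, so there is a solution $S$ of $(G,\F' \cup \B)$ with $|S| = \opt(G,\F') = \opt(G)$. Again $S \cap N^t(G) = \emptyset$, so the same argument as in Step~2 shows that $S$ hits every $Z \in \F$; hence $S$ is a solution of $(G,\F \cup \B)$ of size $\opt(G,\F)$, contradicting that $\B$ blocks $(G,\F)$. Now, by Property~\ref{prop:mmbs} applied to $(G,\F')$, there exists $\overline{\B} \subseteq \B$ that is a blocking set of $(G,\F')$ with $|\overline{\B}| \le \mmbs_t(G,\F')$. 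The symmetric version of the argument just given shows that any such $\overline{\B}$ is also a blocking set of $(G,\F)$: a hypothetical solution $S$ of $(G,\F \cup \overline{\B})$ of size $\opt(G)$ would avoid $N^t(G)$ and thus also be a solution of $(G,\F' \cup \overline{\B})$. By the inclusion-wise minimality of $\B$ in $(G,\F)$, we conclude $\overline{\B} = \B$, and therefore $\mmbs_t(G,\F) = |\B| = |\overline{\B}| \le \mmbs_t(G,\F')$.

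The main subtlety, and hence the step deserving the most care, is Step~3: one needs a minimal blocking set of $(G,\F)$ to remain blocking for $(G,\F')$ and vice versa after shrinking, and the symmetric use of the ``$S \cap N^t(G) = \emptyset$'' argument in both directions is what makes this transfer work without losing any cardinality.
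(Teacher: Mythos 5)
Your proof is correct and follows essentially the same approach as the paper: both rest on the observation that any hitting set of size $\opt(G)$ must avoid $N^t(G)$, and use this to transfer solutions and blocking sets back and forth between $(G,\F)$ and $(G,\F')$ via Property~\ref{prop:mmbs}. If anything, you are slightly more careful than the paper's text, which silently asserts that the extracted $\overline{\B}$ blocks $(G,\F)$ without spelling out the symmetric "$S$ avoids $N^t(G)$" argument that you make explicit; your variant of fixing a maximum inclusion-minimal $\B$ and deducing $\overline{\B} = \B$ is an equivalent way of invoking Property~\ref{prop:mmbs}.
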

\begin{proof}
  Let $S$ be an optimal solution of $(G,\F)$. As $(G,\F)$ is clean, we get $|S|=\opt(G,\F)=\opt(G)$.
  By definition of $N^t(G)$, no optimal solution of $G$ contains a vertex in $N^t(G)$, and thus $S \cap N^t(G) = \emptyset$.
  As for any $Z \in \F$, $S \cap Z \neq \emptyset$, this implies that $\emptyset \notin \F'$, and that for any $Z \in \F'$, $S \cap Z \neq \emptyset$.
  Thus, $S$ is also a solution of $(G,\F')$, leading to $\opt(G,\F') \le |S|=\opt(G,\F)$. As the other inequality is true by definition, we get that $\opt(G,\F')=\opt(G,\F)=\opt(G)$,
  and thus $(G,\F')$ is clean.

  Let us now prove $\mmbs_t(G,\F) \le \mmbs_t(G,\F')$ using Property~\ref{prop:mmbs}.
  Let $\B$ be a blocking set of $(G,\F)$. Note that $\B$ is also a blocking set of $(G,\F')$. Indeed, if there existed an optimal solution $S$ of $(G,\F')$ intersecting all sets of $\B$,
  then $S$ is also a solution of $(G,\F)$, and $|S|=\opt(G,\F)$ by the previous equalities, leading to a contradiction.
  Thus, as $\B$ is a blocking set of $(G,\F')$ and $(G,\F')$ is clean, there exists $\overline{\B} \subseteq \B$ such that $|\overline{\B}| \le \mmbs_t(G,\F')$ and $\overline{\B}$ is a blocking set of $(G,\F)$.
\end{proof}

\begin{lemma}\label{lemma:bscleanN}
  Let $(G,\F)$ a clean instance of $\EKtH$ and  $\F$ such that for any $Z \in \F$, $Z \subseteq V(G) \setminus N^t(G)$.
  Then, $(G-N^t(G),\F)$ is clean, and $\mmbs_t(G,\F) \le \mmbs_t(G-N^t(G),\F)$.
\end{lemma}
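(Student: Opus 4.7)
Set $N := N^t(G)$ and $G' := G - N$. My plan has two parts: first establishing cleanness of $(G',\F)$, then using Property~\ref{prop:mmbs} to get the blocking-set inequality.

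For cleanness, the key observation is that the $t$-cliques of $G$ and of $G'$ coincide (vertices in $N$ never appear in a $t$-clique, by definition), so in particular $\opt(G)=\opt(G')$. Moreover, any optimal solution $S$ of $(G,\F)$ must satisfy $S\cap N=\emptyset$: a vertex $v\in S\cap N$ hits no $t$-clique and, since $Z\cap N=\emptyset$ for every $Z\in\F$ by hypothesis, hits no set of $\F$ either, so $S\setminus\{v\}$ would still be a solution, contradicting optimality. Therefore such an $S$ is already a solution of $(G',\F)$, giving $\opt(G',\F)\le\opt(G,\F)$. Combining this with $\opt(G')\le \opt(G',\F)$ and with $\opt(G,\F)=\opt(G)=\opt(G')$ (cleanness of $(G,\F)$), we obtain $\opt(G',\F)=\opt(G')$, i.e.\ $(G',\F)$ is clean.

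For the inequality $\mmbs_t(G,\F)\le \mmbs_t(G',\F)$, I will verify the equivalent statement from Property~\ref{prop:mmbs}: every blocking set of $(G,\F)$ contains a sub-blocking-set of size at most $\mmbs_t(G',\F)$. Let $\B$ be a blocking set of $(G,\F)$. Split into two cases depending on whether $\B$ contains some $B\subseteq N$. If yes, then $\{B\}$ already blocks $(G,\F)$: every optimal $S$ of $(G,\F)$ has $S\cap N=\emptyset$ and therefore $S\cap B=\emptyset$, so $\overline{\B}:=\{B\}$ is a blocking set of size $1$. Otherwise, for every $B\in\B$ the set $B':=B\setminus N$ is a nonempty clique of $G'$ of size at most $t-1$. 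Form $\B':=\{B\setminus N \mid B\in\B\}$; I claim $\B'$ is a blocking set of $(G',\F)$. Indeed, if $S'$ were an optimal solution of $(G',\F\cup\B')$, then $|S'|=\opt(G',\F)=\opt(G,\F)$ and $S'$ would also be a solution of $(G,\F)$ (same $t$-cliques, same $\F$); since $S'$ hits every $B'\in\B'$, it hits every $B\in\B$ (because $B'\subseteq B$), contradicting that $\B$ blocks $(G,\F)$.

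Since $(G',\F)$ is clean, Property~\ref{prop:mmbs} applied to $\B'$ yields $\overline{\B'}\subseteq\B'$ with $|\overline{\B'}|\le \mmbs_t(G',\F)$ that still blocks $(G',\F)$. I lift this back by choosing, for each $B'\in\overline{\B'}$, one preimage $B\in\B$ with $B\setminus N=B'$, and defining $\overline{\B}\subseteq\B$ as the union of these choices; clearly $|\overline{\B}|=|\overline{\B'}|\le \mmbs_t(G',\F)$. To verify $\overline{\B}$ blocks $(G,\F)$, let $S$ be an optimal solution of $(G,\F)$; then $S\cap N=\emptyset$ and $|S|=\opt(G,\F)=\opt(G',\F)$, so $S$ is an optimal solution of $(G',\F)$ and therefore misses some $B'\in\overline{\B'}$. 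The corresponding $B\in\overline{\B}$ then satisfies $S\cap B=S\cap(B\setminus N)\cup S\cap(B\cap N)=S\cap B'\cup\emptyset=\emptyset$, completing the argument. The main subtlety, and the only slightly delicate step, is ensuring that the ``forget $N$'' projection $B\mapsto B\setminus N$ produces a well-defined blocking set at the $G'$ level; this is why the first case (some $B\subseteq N$) must be singled out and handled separately via the direct observation that any such $\{B\}$ is itself a blocking set.
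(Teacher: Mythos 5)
Your proof is correct and follows essentially the same route as the paper: establish the chain of equalities showing $(G',\F)$ is clean, split on whether some $B\in\B$ lies entirely inside $N^t(G)$, and in the main case project $\B$ onto $G'$, extract a small blocking set there, and lift back one preimage per surviving element. The paper packages the middle step as two explicit properties $p_1,p_2$ and uses its $\add(\cdot)$ notation for the lift-back, but the argument is the same.
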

\begin{proof}
  Note that, as for any $Z \in \F$, $Z \subseteq V(G) \setminus N^t(G)$, $(G-N^t(G),\F)$ is well-defined. By definition of $N^t(G)$ and as $(G,\F)$ is clean,
  we have property $p_1$: $\opt(G-N^t(G),\F)=\opt(G,\F)=\opt(G)=\opt(G-N^t(G))$.
  This property implies that $(G-N^t(G),\F)$ is clean.
  We also have property $p_2$: for any $S$, $S$ is an optimal solution of $(G,\F)$ if and only if $S$ is an optimal solution of $(G-N^t(G),\F)$.

  Let us now prove $\mmbs_t(G,\F) \le \mmbs_t(G-N^t(G),\F)$ using Property~\ref{prop:mmbs}.
  Let $\B$ be a blocking set of $(G,\F)$.

  If there exists $B \in \B$ such that $B \subseteq N^t(G)$, then let us prove that $\overline{\B}=\{B\}$ is a blocking set.
  Let $S$ be an optimal solution of $(G,\F)$. By definition of $N^t(G)$, and as for any $Z \in \F$, $Z \subseteq V(G) \setminus N^t(G)$, we get $S \cap N^t(G) = \emptyset$, implying $S \cap B = \emptyset$.

  Otherwise, we can define $\B' = \{B \setminus N^t(G) \mid B \in \B\}$ while ensuring that $\emptyset \notin \B'$.
  Let us show that $\B'$ is a blocking set of $(G-N^t(G),\F)$. Let $S$ be an optimal solution of $(G-N^t(G),\F)$. By property~$p_2$, $S$ is also an optimal solution of $(G,\F)$,
  and thus there exists $B \in \B$ such that $S \cap B = \emptyset$, implying $(B \setminus N^t(G)) \cap S = \emptyset$, and thus that $\B'$ is a blocking set of
  $(G-N^t(G),\F)$. As $(G-N^t(G),\F)$ is clean, there exists a blocking set $\overline{\B}' \subseteq \B'$ of $(G-N^t(G),\F)$ such that $|\overline{\B}'| \le \mmbs_t(G-N^t(G),\F)$.
  We define $\overline{\B}=\add(\overline{\B}',N^t(G),\B)$, where $\add(\overline{\B}',N^t(G),\B)=\bigcup_{B \in \overline{\B}'}f(B)$, where $f(B)$ is any $B' \in \B$ such that $B' \setminus N^t(G) = B$.
  Let us finally check that $\overline{\B}$ is a blocking set of $(G,\F)$. Let $S$ be an optimal solution of $(G,\F)$. By property~$p_2$, $S$ is an optimal solution of $(G-N^t(G),\F)$, and
  thus there exists $B \in \overline{\B}'$ such that $S \cap B = \emptyset$. As $S \cap N^t(G) = \emptyset$, we also have $S \cap f(B)= \emptyset$.
\end{proof}

\begin{lemma}\label{lemma:bscleanCC}
  Let $(G,\F)$ be a clean instance of $\EKtH$.
  Suppose $N^t(G)=\emptyset$, and let $\{C_i \mid i \in [c]\}$ be the connected components of $G$.
  Then, $\mmbs_t(G,\F) \le \max_{i \in [c]}\mmbs_t(G[C_i],\F \cap C_i)$.
\end{lemma}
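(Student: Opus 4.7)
The plan is to apply Property~\ref{prop:mmbs}: starting from an arbitrary blocking set $\B$ of $(G,\F)$, I will show that $\B$ already contains a blocking set concentrated in a single connected component, whose size can then be bounded using $\mmbs_t$ of that component. Two structural facts drive the argument. First, every $B \in \B$ and every $Z \in \F$ induces a clique of size at most $t-1$, hence lies inside a single $C_i$; similarly every $t$-clique of $G$ lives inside a single component. Thus $\B = \bigsqcup_i (\B \cap C_i)$ and $\F = \bigsqcup_i (\F \cap C_i)$, and the partition $\{C_i\}$ is additive, i.e.\ $\opt(G) = \sum_i \opt(G[C_i])$. Second, since $(G,\F)$ is clean, Lemma~\ref{lemma:bsclean}(\ref{lemma:bsclean:i}) gives that each subinstance $(G[C_i],\F \cap C_i)$ is clean, so $\opt(G,\F) = \sum_i \opt(G[C_i],\F \cap C_i)$.

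\textbf{Main step: locating a blocking component.} I claim there exists some index $i^\star \in [c]$ such that $\B \cap C_{i^\star}$ is a blocking set of $(G[C_{i^\star}],\F \cap C_{i^\star})$. Suppose otherwise: for each $i$, pick an optimal solution $S_i$ of $(G[C_i],\F \cap C_i)$ avoiding $\B \cap C_i$ in the sense that $S_i$ intersects every $B \in \B \cap C_i$. Setting $S = \bigcup_i S_i$, the additivity above gives $|S| = \sum_i \opt(G[C_i],\F \cap C_i) = \opt(G,\F)$, and $S$ meets every $t$-clique of $G$, every $Z \in \F$, and every $B \in \B$ (because each of these objects lies in a single component). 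This contradicts $\B$ being a blocking set of $(G,\F)$, proving the claim.

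\textbf{Extraction and lifting.} Since $(G[C_{i^\star}],\F \cap C_{i^\star})$ is clean, Property~\ref{prop:mmbs} yields $\overline{\B} \subseteq \B \cap C_{i^\star}$ with $|\overline{\B}| \le \mmbs_t(G[C_{i^\star}],\F \cap C_{i^\star}) \le \max_{i \in [c]} \mmbs_t(G[C_i],\F \cap C_i)$ that is still a blocking set of $(G[C_{i^\star}],\F \cap C_{i^\star})$. It remains to lift this back to $(G,\F)$: given any optimal solution $S$ of $(G,\F)$, additivity forces $|S \cap C_i| = \opt(G[C_i],\F \cap C_i)$ for all $i$ (each is at least the optimum, and the sum equals $|S|$), so $S \cap C_{i^\star}$ is optimal for $(G[C_{i^\star}],\F \cap C_{i^\star})$. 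By the blocking property, some $B \in \overline{\B}$ satisfies $(S \cap C_{i^\star}) \cap B = \emptyset$, and since $B \subseteq C_{i^\star}$, this gives $S \cap B = \emptyset$. Hence $\overline{\B}$ is a blocking set of $(G,\F)$ of the required size, and a final application of Property~\ref{prop:mmbs} concludes the proof. There is no real obstacle here — the assumption $N^t(G) = \emptyset$ is essentially cosmetic since the partition into connected components is always additive for $K_t$-hitting — the only care needed is to check cleanliness of the subinstances, which is exactly what Lemma~\ref{lemma:bsclean}(\ref{lemma:bsclean:i}) provides.
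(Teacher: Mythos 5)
Your proof is correct and follows essentially the same route as the paper's: invoke Property~\ref{prop:mmbs}, use additivity of the connected-component partition and Lemma~\ref{lemma:bsclean}(\ref{lemma:bsclean:i}) to get cleanliness of the subinstances, find by contradiction a component $C_{i^\star}$ where $\B \cap C_{i^\star}$ blocks, extract a small blocking set there, and lift it back. Your aside that the hypothesis $N^t(G)=\emptyset$ is not actually used in the argument is also accurate (the paper's proof does not use it either); it is stated only to match the context in which the lemma is invoked within the proof of Theorem~\ref{thm:mmbslambda4}.
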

\begin{proof}
Let us prove the inequality using Property~\ref{prop:mmbs}, and let us consider  a blocking set $\B$ of $(G,\F)$.
As the partition given by the connected components is additive, \autoref{lemma:bsclean} implies that for any $i \in [c]$, $(G[C_i], \F \cap C_i)$ is clean. Moreover,
as $\F$ are cliques, we get that $\F = \bigcup_{i \in [c]}(\F \cap C_i)$, and thus by additivity of the partition we get property $p_0$:
for any set of vertices $S$, $S$ is an optimal solution of $(G,\F)$ if and only if $S \cap C_i$ is an optimal solution of $(G[C_i],\F \cap C_i)$ for any $i \in [c]$.

Let us first prove that there exists $i \in [c]$ such that $\B \cap C_i$ is a blocking set of $(G[C_i], \F \cap C_i)$. Suppose by contradiction that there is no such $i$, and for any $i \in [c]$ let $S_i$ be an optimal solution of $(G[C_i],\F \cap C_i)$ intersecting all sets of $\B \cap C_i$, and $S = \bigcup_{i \in [c]}S_i$. By property $p_0$, we get that $S$ is an optimal solution of $(G,\F)$.
As all sets of $\B$ are cliques, we get that $\B = \bigcup_{i \in [c]}(\B \cap C_i)$, implying that $S$ also intersects all sets of $\B$. This contradicts the fact that $\B$ is a blocking set of $(G,\F)$.

Thus, there exists $i \in [c]$ such that $\B \cap C_i$ is a blocking set of $(G[C_i], \F \cap C_i)$, and as $(G[C_i], \F \cap C_i)$ is clean, there exists $\overline{\B} \subseteq \B \cap C_i$ such that $\overline{\B}$ is a blocking set of $(G[C_i], \F \cap C_i)$, and $|\overline{B}| \le \mmbs_t(G[C_i], \F \cap C_i)$. As $\overline{\B}$ has the claimed size, in only remains to check that $\overline{\B}$ is a blocking set of $(G,\F)$. Let $S$ be an optimal solution of $(G,\F)$.
By property $p_0$, $S \cap C_i$ is an optimal solution of $(G[C_i], \F \cap C_i)$,
implying that there exists $Z \in \overline{B}$ such that $(S \cap C_i) \cap Z = \emptyset$, implying also that $S \cap Z = \emptyset$, and thus that $\overline{B}$ is a blocking set of $(G,\F)$.
\end{proof}

We are now ready to prove the main theorem of this section.
\begin{theorem*}[\autoref{thm:mmbslambda4} restated]
  For any graph $G$  and any integer $t \ge 3$, we have $\mmbs_t(G) \le \beta(\bedt(G),t)$, where $\beta(x,t)=
\underbrace{
  {{{{{^{2\vphantom{h}}}^{t2\vphantom{h}}}^{t2\vphantom{h}}}}^{\cdots\vphantom{h}}}^{t\vphantom{h}}
}_{\text{$x$ times}}
$
({{i.e.,}} $\beta(1)=2^t, \beta(2)=2^{t2^t}$, etc.)
\end{theorem*}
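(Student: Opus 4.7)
The plan is to proceed by strong induction on $\lambda = \bedt(G)$, proving the uniform statement $\mmbs_t(G') \le \beta(\bedt(G'),t)$ for every graph $G'$; combined with Lemma~\ref{obslambdamonotonesubgraph}, this automatically yields the same bound on $\mmbs_t^{\star}(G)$, which is essential since the intermediate lemmas refer to $\mmbs_t^{\star}$. The base case is $\lambda = 0$, where $G$ is $K_t$-free: then $\opt(G) = 0$, cleanness forces $\F = \emptyset$, and every nonempty family of small cliques is a blocking set of $(G,\emptyset)$, so minimal blocking sets are singletons and $\mmbs_t(G) \le 1 = \beta(0,t)$ (interpreting the empty tower as $1$).

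For the inductive step, fix a clean instance $(G,\F)$ with $\bedt(G) = \lambda \ge 1$ and a blocking set $\B$. By Property~\ref{prop:mmbs}, it suffices to extract $\overline{\B} \subseteq \B$ with $|\overline{\B}| \le \beta(\lambda,t)$ that remains a blocking set. The first move is to apply Lemmas~\ref{lemma:bscleanF}, \ref{lemma:bscleanN}, and \ref{lemma:bscleanCC} in sequence to reduce to the case where $\F$ avoids $N^t(G)$, $N^t(G) = \emptyset$, and $G$ is connected. These reductions do not increase $\bedt$, because removing non-$K_t$-vertices and passing to connected components are free under the definition of $\bedt$. Then, using Lemma~\ref{lemma:computeRoot}, I would pick a $\bedt$-root $\T = \{T\}$ of $G$, and invoke Observation~\ref{lemma:bedtcv} to conclude that $\bedt(G[C(v)] - v) < \lambda$ for every $v \in T$, which is exactly what makes the inductive hypothesis available on the pending subgraphs.

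Next I apply Lemma~\ref{lemma:bszooming}.(\ref{lemma:bszooming1}) and split into its two alternatives. In case (a), some $\B \cap C(v)$ is a blocking set of $(G[C(v)], \F \cap C(v))$, which is clean by combining Lemma~\ref{obs:pendingadditive} with Lemma~\ref{lemma:bsclean}.(\ref{lemma:bsclean:i}); depending on whether $\{v\} \in \B \cap C(v)$, I apply Lemma~\ref{lemma:bswithv} or Lemma~\ref{lemma:bswithoutv} to reduce to a blocking-set problem on $G[C(v)] - v$, and then use Lemma~\ref{lemma:bszooming}.(\ref{lemma:bszooming2}) to lift the extracted subset back to a blocking set of $(G,\F)$. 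In case (b), a set $Z \in \B \cup \F$ with $Z \subseteq T$ is provided together with, for each $v \in Z$, a blocking set $(\B \cap C(v)) \cup \{v\}$ of the corresponding clean subinstance; I apply Lemma~\ref{lemma:bswithv} at each $v \in Z$ and assemble the pieces via Lemma~\ref{lemma:bszooming}.(\ref{lemma:bszooming3}).

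The main technical obstacle is the size accounting in case (a) when $\{v\} \notin \B$: Lemma~\ref{lemma:bswithoutv} yields the bound $(2^{(t-1)\mmbs_t(G[C(v)] - v)} + 1)\cdot \mmbs_t^{\star}(G[C(v)] - v)$, which the inductive hypothesis turns into $(2^{(t-1)\beta(\lambda-1,t)} + 1)\cdot \beta(\lambda-1,t) \le 2^{t\cdot \beta(\lambda-1,t)} = \beta(\lambda,t)$; this matches exactly the tower recurrence in the statement and dominates the case-(b) contribution of $1 + (t-1)\beta(\lambda-1,t)$. The delicate part is therefore not any single estimate but the orchestration of the cleaning and zooming lemmas so that their hypotheses (cleanness of each subinstance, $N^t = \emptyset$, connectedness, and the precise structural role of the removed vertex $v$) align whenever they are invoked, and in particular tracking whether $\{v\} \in \B$ to choose between Lemma~\ref{lemma:bswithv} and Lemma~\ref{lemma:bswithoutv}.
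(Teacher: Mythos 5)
Your proposal follows the paper's proof essentially line by line: the same induction on $\bedt(G)$, the same triple application of the cleaning lemmas to reduce to $N^t(G)=\emptyset$ and $G$ connected, the same case split from Lemma~\ref{lemma:bszooming}, the same alternation between Lemma~\ref{lemma:bswithv} and Lemma~\ref{lemma:bswithoutv} according to whether $\{v\}\in\B$, and the same recurrence arithmetic closing with $(2^{(t-1)\beta(\lambda-1,t)}+1)\beta(\lambda-1,t)\le 2^{t\beta(\lambda-1,t)}=\beta(\lambda,t)$. The one small point where you and the paper diverge is in your favor: in Case~(b) you correctly invoke Lemma~\ref{lemma:bswithv} (the blocking set $(\B\cap C(v))\cup\{v\}$ contains $\{v\}$), whereas the paper cites Lemma~\ref{lemma:bswithoutv} there while quoting the bound of Lemma~\ref{lemma:bswithv} — an apparent typo that your version fixes.
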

\begin{proof}
  For the sake of readability, let us consider that $t$ is fixed, and let $f:\mathbb{N} \rightarrow \mathbb{N}$ such that $f(0)=1$, and for any integer $x \ge 1$, $f(x)=(2^{(t-1)f(x-1)}+1)f(x-1)$.
  The proof is by induction on $\bedt(G)$. If $\bedt(G)=0$, $G$ is a (possibly empty) collection of $K_t$-free components, implying $\opt(G)=0$,
  and thus that for any clean instance $(G,\F)$, $\F = \emptyset$, and thus that any inclusion-wise minimal blocking set of a clean instance $(G,\emptyset)$ has at most one element.

  Let us now turn to the induction case.
  Let $(G,\F)$ be a clean instance of $\EKtH$ and $\B$ be a blocking set of $(G,\F)$.
  Let $\T$ be a $\bedt$ root of $G-N^t(G)$ (which exists according to \autoref{lemma:computeRoot}). Let us prove that $\mmbs_t(G,\F) \le f(\bedt(G))$.
  By Lemmas \autoref{lemma:bscleanF}, \ref{lemma:bscleanN}, and \autoref{lemma:bscleanCC}, we can assume that $N^t(G)=\emptyset$ and that $G$ is connected.
  This implies that $\T=\{T\}$ for some subset $T \subseteq V(G)$.
  By \autoref{lemma:bszooming} (\autoref{lemma:bszooming1}), we get the following two cases:
 \begin{enumerate}
  \item There exists $v \in T$ such that $\B \cap C(v)$ is a blocking set of $(G[C(v)],\F \cap C(v))$.
  \item There exists $Z \in \B \cup \F$ such that $Z \subseteq T$, and for any $v \in Z, (\B \cap C(v)) \cup  \{v\}$ is a blocking set of $(G[C(v)],\F \cap C(v))$.
  \end{enumerate}

 Note first that using the trivial partition $\P=\{v \mid v \in T\}$, \autoref{obs:pendingadditive} implies that the pending partition $\P^P=\{C(v) \mid v \in T\}$ is additive,
 and \autoref{lemma:bsclean} implies that for any $v \in T$, $(G[C(v)],\F\cap C(v))$ is clean.
 Moreover, by \autoref{lemma:bedtcv}, for any $v \in T$, $\bedt(G[C(v)]-v) \le \bedt(G)-1$.

 \subparagraph*{Case 1.} If $v \notin \B$, then by \autoref{lemma:bswithoutv}, there exists a blocking set $\overline{\B}_v \subseteq \B \cap C(v)$ of $(G[C(v),\F \cap C(v))$
   satisfying that $|\overline{\B}_v| \le (2^{(t-1)\mmbs_t(G[C(v)]-v)}+1) \mmbs_t^{\star}(G[C(v)]-v) \le (2^{(t-1)f(\bedt(G)-1)}+1) f(\bedt(G)-1) = f(\bedt(G))$.
   Moreover, by \autoref{lemma:bszooming} (\autoref{lemma:bszooming2}),  $\overline{\B}_v$ is also a blocking set of $(G,\F)$.

 If $v \in \B$, then by \autoref{lemma:bswithv}, there exists a blocking set $\overline{\B}_v \cup \{v\} \subseteq \B \cap C(v)$ of $(G[C(v),\F \cap C(v))$
   with $|\overline{\B}_v \cup \{v\}| \le  \mmbs_t(G[C(v)]-v)+1 \le f(\bedt(G)-1)+1 \le f(\bedt(G))$.
   Again, by \autoref{lemma:bszooming} (\autoref{lemma:bszooming2}),  $\overline{\B}_v \cup \{v\}$ is also a blocking set of $(G,\F)$.

   \subparagraph*{Case 2.} Since for any  $v \in Z,  (\B \cap C(v)) \cup  \{v\}$ is a blocking set of $(G[C(v)],\F \cap C(v))$, by \autoref{lemma:bswithoutv},
   there exists a blocking set $\overline{\B}_v \cup \{v\}$ of $(G[C(v),\F \cap C(v))$ where $\overline{\B}_v \subseteq (\B \cap C(v))$ and $|\overline{\B}_v| \le  \mmbs_t(G[C(v)]-v) \le f(\bedt(G)-1)$.
     If $Z \in \B$, then by \autoref{lemma:bszooming} (\autoref{lemma:bszooming3}), $Z \cup \bigcup_{v \in Z}\overline{\B}_v$ is a blocking set of $(G,\F)$ of size at most $(t-1)f(\bedt(G)-1)+1 \le f(\bedt(G))$.
     Otherwise, $Z \in \F$, and by \autoref{lemma:bszooming} (\autoref{lemma:bszooming3}), $\bigcup_{v \in Z}\overline{\B}_v$ is a blocking set of $(G,\F)$ of size at most $(t-1)f(\bedt(G)-1) \le f(\bedt(G))$.

     Thus, we obtained that $\mmbs_t(G) \le f(\bedt(G))$.
     Let us prove that $f(x) \le \beta(x)$.
     The result is clear for $x=1$. 
      For $x \ge 2$, $f(x) \le (2^{(t-1)f(x-1)}+1)f(x-1) \le (2^{(t-1)f(x-1)+1})f(x-1) \le 2^{(t-1)f(x-1)+1+\log(f(x-1))}$, and as $1+\log(y) \le y$ for $y \ge 3$ and $f(x-1) \ge f(1) \ge 3$, we get $f(x) \le 2^{tf(x-1)} \le 2^{t\beta(x-1,t)}=\beta(x,t)$.
\end{proof}

\subsection{Minimal blocking sets in graphs of bounded treedepth}
\label{ap:improved-mmbs-treedepth}

In this subsection we obtain upper bounds on the size of minimal blocking sets in graphs of bounded treedepth, which substantially improve those given in \autoref{thm:mmbslambda4} in graphs of bounded $\bedt(G)$.

For the following argument, recall that an elimination forest of a graph~$G$ is a rooted tree~$T$ on vertex set~$V(G)$ such that for each edge~$\{u,v\} \in E(G)$, vertex~$u$ is an ancestor of vertex~$v$ in~$T$, or vice versa. For a node~$x$ in an elimination forest~$T$, we use~$\text{anc}_T(x)$ to denote the set of ancestors of~$x$, which includes~$x$ itself. It is well-known that the treedepth of a connected graph~$G$ is equal to the minimum, over all elimination forests~$T$ of~$G$, of the maximum number of nodes on a root-to-leaf path in~$T$. For a node~$x$ in an elimination forest~$T$ of a graph~$G$, we use~$G_x$ to denote the subgraph of~$G$ induced by the descendants of~$x$ in~$T$.

Observe that, unlike \autoref{thm:mmbslambda4}, the following bound does not depend on $t$.
\begin{lemma}\label{lemma:mmbstd}
For any graph $G$ and integer $t \ge 3$, $\mmbs_t(G) \leq \td(G)^{\td(G)} \cdot 2^{\td(G)^2}$.
\end{lemma}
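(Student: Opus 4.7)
The plan is to induct on $d = \td(G)$, reusing the recursive scheme of \autoref{thm:mmbslambda4} but exploiting that an elimination forest allows us to peel off a single root vertex at a time. For the base case, when $d < t$ the graph $G$ contains no $t$-clique, so $\opt(G) = 0$ in any clean instance $(G, \F)$, which forces $\F = \emptyset$ and makes any minimal blocking set contain at most one singleton; the claimed bound then holds trivially. For the inductive step, apply \autoref{lemma:bscleanF}, \autoref{lemma:bscleanN}, and \autoref{lemma:bscleanCC} to reduce to a clean connected instance $(G, \F)$ with $N^t(G) = \emptyset$, and fix an elimination forest $T$ of $G$ of depth $d$ with root $r$, so that $\td(G - r) \le d - 1$.

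Given a minimal blocking set $\B$, the plan is to repeat the case analysis used in the proofs of \autoref{lemma:bswithoutv} and \autoref{lemma:bswithv}, distinguishing cases according to $t^{(G, \F)}(r)$ and whether $\{r\} \in \B$. In every case where $r$ belongs to every optimal solution of $(G, \F)$ or where $\{r\} \in \B$, the recursion lands in the clean instance $(G - r, \F - r)$ and induction directly yields a sub-blocking set of size at most $f(d - 1) + 1$, where we write $f(d) := d^d \cdot 2^{d^2}$. The harder case is when some optimal solution of $(G, \F)$ avoids $r$: there we must reason about the projected instance $I := (G - r, \F_{V(G) \setminus \{r\}} \cup \pr^t_r(V(G) \setminus \{r\}))$, which may fail to be clean, and lift a sub-blocking set from $I$ back to $(G, \F)$ via the $\add$ construction from the proof of \autoref{lemma:bswithoutv}.

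The main obstacle, and what forces a genuinely different argument from \autoref{thm:mmbslambda4}, is handling this non-clean projected instance $I$. In the proof of \autoref{thm:mmbslambda4} we invoked \autoref{lemma:bsnonclean}, which pays a multiplicative factor $2^{(t-1)\,\mmbs_t(G - r)}$ per level of recursion and is what produces the tower $\beta(x, t)$. To reach the target bound $f(d) = d^d \cdot 2^{d^2}$, the key observation is that because we peel off only a single root vertex (as opposed to a whole root set $\T$ in the $\bedt$ setting), every $(t-1)$-clique in $\pr^t_r(V(G) \setminus \{r\})$ is a chain in $T$ whose top vertex is a child of $r$, and therefore is determined by at most $d - 1$ choices of ancestor depths once its top is fixed. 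The plan is to use this combinatorial constraint to enumerate analogues of the non-cleanness certificates $\F^\star$ appearing in \autoref{lemma:bsnonclean} in only $\Oh(d \cdot 2^{d})$ ways, giving a recurrence of the form $f(d) \le d \cdot 2^{2d - 1} \cdot f(d - 1)$ that unfolds to $d^d \cdot 2^{d^2}$. The core technical difficulty will be to verify that, for each such structural certificate, we can recurse into a clean sub-instance whose treedepth has strictly dropped, so that the induction closes without reintroducing the exponential blowup in $\mmbs_t$.
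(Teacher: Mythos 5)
Your proposal takes a genuinely different route from the paper's proof, and unfortunately it has a gap at exactly the point you flag as the ``core technical difficulty.''

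The paper does not recurse level by level through the elimination forest at all. Instead it fixes an inclusion-minimal blocking subset $\overline{\B} \subseteq \B$ and proves a \emph{global} structural claim: for every node~$x$ of an optimal elimination forest~$T$, at most $\td(G)\cdot 2^{\td(G)}$ distinct child subtrees of~$x$ contain a vertex of~$V(\overline{\B})$. The proof of that claim is an exchange argument: if more than $\td(G)\cdot 2^{\td(G)}$ child subtrees of~$x$ each contained a vertex from some $B_i\in\overline{\B}$, then by minimality there are solutions~$S_i$ of size $\opt(G)$ missing only~$B_i$; by pigeonhole, more than $\td(G)$ of these agree on their intersection with $\text{anc}_T(x)$ (there are only $2^{|\text{anc}_T(x)|}\leq 2^{\td(G)}$ choices); one of the corresponding subtrees is paid for exactly optimally by~$S_1$, and swapping that sub-solution into~$S_j$ produces an optimal solution hitting all of~$\overline{\B}$, contradicting that $\overline{\B}$ is a blocking set. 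Once the claim is in place, the bound on $|\overline{\B}|$ follows because $V(\overline{\B})$ and its ancestors span a subtree of depth~$\td(G)$ with branching factor~$\td(G)\cdot 2^{\td(G)}$. Crucially, this never touches the non-clean projected instance or \autoref{lemma:bsnonclean} and so never pays the factor $2^{(t-1)\mmbs_t(\cdot)}$; the whole argument also works directly with $\F=\emptyset$ and a minimal $\overline{\B}\subseteq\B$, so none of the cleaning lemmas are invoked.

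The gap in your approach lies in the claimed enumeration of non-cleanness certificates. First, a factual error: a clique $B\in\pr^t_r(V(G)\setminus\{r\})$ consists of descendants of~$r$ lying on a common root-to-leaf path, but its topmost vertex need not be a \emph{child} of~$r$ -- it can be at any depth below~$r$. More importantly, even granting that each such clique is a chain in~$T$, the relevant quantity in \autoref{lemma:bsnonclean} is $|\F^\star|$, the number of distinct intersections of $V(\overline{\F})$ with optimal solutions of the projected instance. The set $V(\overline{\F})$ can contain up to $(t-1)\mmbs_t(G-r)$ vertices spread across many root-to-leaf paths (even after reducing to a single component of $G-r$ via \autoref{lemma:bscleanCC}), and optimal solutions can intersect it in up to $2^{|V(\overline{\F})|}$ ways. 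Knowing that each \emph{individual} clique is a chain does not shrink this count to $\Oh(d\cdot 2^d)$: you would need a much stronger structural statement, e.g.\ that all the relevant information about $V(\overline{\F})\cap S$ is captured by a subset of depths or of children of~$r$, and no such argument is given. Your closing sentence explicitly acknowledges that verifying the recursion closes ``without reintroducing the exponential blowup in $\mmbs_t$'' is the hard part; that is precisely the step that the paper sidesteps by not recursing at all. As written, the recurrence $f(d)\leq d\cdot 2^{2d-1}\cdot f(d-1)$ is asserted but not derived.
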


\begin{proof}
Let \(G\) be a graph of treedepth \(\eta\) and let \(\B\) be a set of cliques in \(G\), each of size less than \(t\).
Let $\B$ be a blocking set of $G$, and let us show that there exists a subset \(\overline{\B}\subseteq \B\) of size at most \(\eta^\eta\cdot 2^{\eta^2}\) such that
$\overline{\B}$ is still a blocking set of $G$.

Let us recall that given a graph $G$ and $\B$ a set of cliques in $G$, we denote by $\opt(G)$ (resp. $\opt(G,\B)$) the minimum size of a subset of vertices intersecting all $t$-cliques (resp. and all sets in $\B$).

We may assume that \(G\) is connected, since a blocking set forms a blocking set for at least one connected component, and any blocking set for a component is a blocking set for \(G\). Let \(\overline{\B}\subseteq \B\) be an inclusion-minimal subset such that \(\opt (G,\overline{\B}) > \opt (G)\). Consider an optimal elimination forest \(T\) of \(G\), of depth \(\eta \). We now prove the following:

\begin{claim}
For each node \(x\in V(T)\), there are at most \(\eta\cdot 2^\eta\) distinct child subtrees of \(x\) that contain a vertex of \(V(\overline{\B})\).
\end{claim}
\begin{claimproof}
Assume for a contradiction that there exists \(x\in V(T)\) such that more than \(\eta\cdot 2^\eta\) distinct subtrees rooted at children \(c_1,\dots, c_m\) of \(x\) contain a vertex \(u_1,\dots, u_m\) of \(V(\overline{\B})\), and let \(B_1,\dots, B_m \in \overline{\B}\) such that \(u_i\in B_i\) for each \(i\). Note that \(B_i \neq B_j\) for \(i \neq j\) because the subgraphs in \(\B \supseteq \overline{\B}\) are cliques and a clique cannot contain vertices from two incomparable subtrees of an elimination forest
Since \(\overline{\B}\) is inclusion-minimal with respect to \(\opt (G,\overline{\B}) > \opt (G)\), we have \(\opt (G,\overline{\B}\setminus\{B_i\}) = \opt (G)\) for each \(i\), meaning there is a vertex set \(S_i\) of size \(\opt (G)\) which hits all $t$-cliques in \(G\) and all graphs \(\overline{\B}\setminus\{B_i\}\).

Since \(m > \eta\cdot 2^\eta\), there are more than \(\eta\) children \(c_i\) for which the corresponding solutions \(S_i\) make exactly the same selection from the ancestors~$\text{anc}_T(x)$ of \(x\) in~$T$. Let \(c_1,\dots, c_{\eta+1}\) be such children, for which there exists a set \(C\) of ancestors of \(x\) such that \(S_i \cap \text{anc}_T (x) = C\) for each \(i\). Consider the solution \(S_1\). For at least one of the \(\eta\) remaining involved children \(c_2,\dots, c_{\eta+1}\), we have \(|S_1 \cap V(G_{c_i})| = \opt (G_{c_i})\), where \(G_{c_i}\) is the subgraph of \(G\) induced by \(c_i\) and its descendants.

The reason for this is as follows: $S_1$ trivially contains at least $\text{opt}_t(G_{c_i})$ vertices from $G_{c_i}$ for each $2 \leq i \leq \eta+1$. If it picks strictly more in each of them, then (since $|\text{anc}_T(x)| < \eta$ since $x$ itself has a child and the number of vertices on a root-to-leaf path is $\eta$) there is a solution for $K_t$-{\sc Subgraph Hitting} which is smaller than $|S_1| = \opt(G)$: replace the solutions inside those $\eta$ children by locally optimal solutions, and then add $x$ and all its ancestors. Since all cliques of~$G$ appear on a single root-to-leaf path of any elimination forest, the replacement gives a valid solution whose size is strictly smaller than~$|S_1|$; but this contradicts $|S_1| = \opt(G)$. Hence there is indeed a child $c_i$ with $i>1$ for which $S_i \cap G_{c_j}$ has size $\opt(G_{c_j})$.

Now consider $S^* := (S_j \setminus V(G_{c_j})) \cup (S_1 \cap V(G_{c_j}))$. In other words, we take the solution $S_j$, but replace what it chooses from $G_{c_j}$ with how $S_1$ chooses from $G_{c_j}$. This cannot increase the size since $S_j$ pays at least $\opt(G_{c_j})$ inside $G_{c_j}$ and $S_1$ pays $\opt(G_{c_j})$. This also does not cause any $K_t$ to become unhit. Each $K_t$ is on a single root-to-leaf path of the decomposition. The new solution $S^*$ still hits all of them, since for each root-to-leaf path, the choices $S^*$ makes on that path agree with the choices made by some valid solution (either $S_1$ or $S_j$) on that path; here we exploit that $S_1$ and $S_j$ agree on their choices from the ancestors of $x$.

Now comes the key point. By assumption, $S_1$ hits all of $\overline{\B}$ except $B_1$, so in particular it hits $B_j$. We knew that $S_j$ hits all of $\overline{\B}$ except for $B_j$, and that $B_j$ contains a vertex in the child subtree rooted at $c_j$; hence the entire clique $B_j$ is contained in the subtree of $c_j$ together with its ancestors. From this, it is straightforward to verify that replacing what $S_j$ does inside $G_{c_j}$ with what $S_1$ does there, yields a solution that hits all of $\overline{\B}$; but that is a contradiction to the choice of $\overline{\B}$ since $|S^*| = \opt(G)$.
\end{claimproof}

The claim proves the lemma, because it effectively shows that the vertices of \(V(\overline{\B})\) and their ancestors form a subtree of \(T\) of depth \(\eta\) and branching factor \(\eta\cdot 2^\eta\), which proves the total number of vertices is bounded by \(\eta^\eta\cdot 2^{\eta^2}\); this clearly bounds the number of subgraphs in \(\overline{\B}\).
\end{proof}

\section{A polynomial kernel for $\KtHMdecbis{\lambda}$}\label{sec:kernel}

In this section we provide a polynomial kernel for $\KtHMdecbis{\lambda}$. 
 Recall that given any input $(G,\F)$ of the $\EKtH$ problem, we use the notation $\opt(G,\F)$ to denote the minimum size of a vertex subset of $G$ intersecting all $t$-cliques of $G$ and all the sets in $\F$.
Let us first define the crucial notion of chunk. Informally, a chunk is a small part of $X$ that is a small subset of $X$ which is candidate to be avoided by a solution.
\begin{definition}[chunk]\label{def:chunk}
  Given an input $(G,X)$ of $\KtHM^\lambda$, a \emph{chunk} is a set $X' \subseteq X$ such that $|X'| \le c(\lambda,t)$, where $c(\lambda,t)=(t-1)\beta(\lambda,t)$ and $\beta$ is defined in \autoref{thm:mmbslambda4}, and $X'$ contains no $t$-clique.
 The \emph{set of chunks} of $(G,X)$ is denoted by $\X$.
\end{definition}

We are now ready to define the notion of conflict.

\begin{definition}[conf]\label{def:conflict}
  Given a graph $G$ and two disjoint subsets of vertices $S_1,S_2$ such that $S_1$ does not contain a $t$-clique, we define the \emph{conflict} function as $\conf^t_{S_1}(S_2)=\opt(G[S_2],\pr^t_{S_1}(S_2))-\opt(G[S_2])$.
\end{definition}
Informally, $\conf^t_{S_1}(S_2)$ corresponds to the number of vertices we have to overpay in $S_2$ when we do not take any vertex from $S_1$ into the solution, compared to $\opt(G[S_2])$.
The following is a corollary of  \autoref{lemma:poly-opt-conf-lambda4} presented in \autoref{sec:computing-optimum-bounded-lambda}.
\begin{corollary}\label{obs:computingconflict}
Given a graph $G$, $S_2 \subseteq V(G)$ such that $\bedt(G[S_2]) \le \lambda$, and $S_1 \subseteq V(G)$ disjoint from $S_2$ such that $S_1$ does not contain a $t$-clique, we can compute if $\conf^t_{S_1}(S_2)>0$ in time $p_0^{(\lambda,t)}(n)$, where $n=|V(G)|$ and $p_0^{(\lambda,t)}$ is defined in \autoref{lemma:poly-opt-conf-lambda4}.
\end{corollary}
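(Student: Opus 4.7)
By the definition of $\conf^t$, deciding whether $\conf^t_{S_1}(S_2) > 0$ amounts to deciding whether
\[
\opt(G[S_2], \pr^t_{S_1}(S_2)) > \opt(G[S_2]),
\]
so it suffices to compute both quantities and compare them. The plan is to reduce the problem to two applications of \autoref{lemma:poly-opt-conf-lambda4}, which (being cited by the corollary) is assumed to compute $\opt^t(G',\F')$ in time $p_0^{(\lambda,t)}(n)$ for inputs $(G',\F')$ of $\EKtH$ whenever $\bedt(G') \le \lambda$. The second application requires the auxiliary step of materialising the family $\F' := \pr^t_{S_1}(S_2)$ explicitly.

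First, I would verify that $(G[S_2], \pr^t_{S_1}(S_2))$ is a valid instance of $\EKtH$: by definition every element of $\pr^t_{S_1}(S_2)$ has the form $K \cap S_2$ for some $t$-clique $K \subseteq S_1 \cup S_2$ that meets both $S_1$ and $S_2$, so each such set has size between $1$ and $t-1$ and induces a clique in $G[S_2]$, matching the input requirements of $\EKtH$. Second, I would enumerate $\pr^t_{S_1}(S_2)$: for each subset $Z \subseteq S_2$ with $1 \le |Z| \le t-1$ inducing a clique (there are $O(n^{t-1})$ candidates, which is polynomial for fixed $t$), check whether there is a clique $K' \subseteq S_1$ of size $t - |Z|$ such that every vertex of $K'$ is adjacent in $G$ to every vertex of $Z$; this can also be done by brute force in time $O(n^{t-1})$ per candidate, since we only ever look for cliques of size at most $t-1$ in $S_1$ (valid because $S_1$ contains no $t$-clique, though we would not even need this). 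The total time for this step is $n^{O(t)}$.

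Third, I would invoke \autoref{lemma:poly-opt-conf-lambda4} twice: once on $(G[S_2], \emptyset)$ to obtain $\opt(G[S_2])$, and once on $(G[S_2], \pr^t_{S_1}(S_2))$ to obtain $\opt(G[S_2], \pr^t_{S_1}(S_2))$. Both inputs satisfy the hypothesis $\bedt(G[S_2]) \le \lambda$ given by assumption, so each call runs in time $p_0^{(\lambda,t)}(n)$. Comparing the two values decides whether $\conf^t_{S_1}(S_2) > 0$.

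The only nontrivial step is the polynomial-time enumeration of $\pr^t_{S_1}(S_2)$, and this is straightforward because $t$ is fixed; the heavy lifting is entirely outsourced to \autoref{lemma:poly-opt-conf-lambda4}. The total running time is $n^{O(t)} + 2 \cdot p_0^{(\lambda,t)}(n)$, which can be absorbed into the bound $p_0^{(\lambda,t)}(n)$ after adjusting the polynomial, giving the claimed complexity.
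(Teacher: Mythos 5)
Your overall plan is the right one, and the enumeration of $\pr^t_{S_1}(S_2)$ in time $n^{\O(t)}$ (together with the observation that each element is a non-empty clique on fewer than $t$ vertices, hence a valid annotation for $\EKtH$) is correct and matches what is needed.

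However, there is a misreading of what \autoref{lemma:poly-opt-conf-lambda4} actually outputs. That lemma, applied to an instance $(G',\F')$ with $\bedt(G') \le \lambda$, returns \emph{(a)} the value $\opt(G')$ for the plain $\KtH$ problem (not $\opt(G',\F')$), and \emph{(b)} a Boolean telling whether $(G',\F')$ is clean, i.e.\ whether $\opt(G',\F') = \opt(G')$. It does \emph{not} return the numerical value $\opt(G',\F')$; when the instance is not clean, you learn only that the two quantities differ. So your second invocation, "once on $(G[S_2], \pr^t_{S_1}(S_2))$ to obtain $\opt(G[S_2], \pr^t_{S_1}(S_2))$," does not deliver the claimed output, and the comparison step as stated cannot be carried out literally.

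Fortunately, the fix is immediate and in fact simplifies the argument: by definition, $\conf^t_{S_1}(S_2) > 0$ if and only if $\opt(G[S_2], \pr^t_{S_1}(S_2)) > \opt(G[S_2])$, which is precisely the negation of $(G[S_2], \pr^t_{S_1}(S_2))$ being clean. So you only need a single call to \autoref{lemma:poly-opt-conf-lambda4} on $(G[S_2], \pr^t_{S_1}(S_2))$ and should read off the cleanliness flag directly, rather than trying to recover two optimum values and subtract. The running time analysis is otherwise fine.
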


The following lemma, saying that in a graph with small $\mmbs_t$ it is possible, given any set $S_1$ inducing some conflict, to extract a small subset $\overline{S}_1$ that still ``certifies'' the conflict, has also been used in previous work, for example in \cite{BougeretS18}. However, as the problem considered here is different, we prefer to include the proof for the sake of completeness.

\begin{lemma}\label{lemma:smallcertif}
  Let $G$ be a graph, let $S_1,S_2$ be two disjoint subsets of vertices of $G$ such that $S_1$ does not contain a $t$-clique, and let $\beta$ be an integer such that $\mmbs_t(G[S_2]) \le \beta$.
  Then, $\conf^t_{S_1}(S_2)=0$ if and only if for any subset $\overline{S}_1 \subseteq S_1$ such that $|\overline{S}_1| \le (t-1)\beta$, $\conf^t_{\overline{S}_1}(S_2)=0$.
\end{lemma}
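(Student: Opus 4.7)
The plan is to prove both directions separately, with the forward implication following from a simple monotonicity argument and the backward implication being the substantive part, which reduces the claim to Property~\ref{prop:mmbs} applied to a naturally occurring blocking set.

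For the forward direction, I will observe that whenever $\overline{S}_1 \subseteq S_1$, every $t$-clique $K$ of $G[\overline{S}_1 \cup S_2]$ with $K \cap \overline{S}_1 \neq \emptyset$ and $K \cap S_2 \neq \emptyset$ is also a $t$-clique of $G[S_1 \cup S_2]$ witnessing the same projected set, hence $\pr^t_{\overline{S}_1}(S_2) \subseteq \pr^t_{S_1}(S_2)$. Adding fewer constraints cannot increase the optimum, so $\opt(G[S_2], \pr^t_{\overline{S}_1}(S_2)) \leq \opt(G[S_2], \pr^t_{S_1}(S_2)) = \opt(G[S_2])$, where the equality comes from $\conf^t_{S_1}(S_2)=0$. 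The reverse inequality $\opt(G[S_2], \pr^t_{\overline{S}_1}(S_2)) \geq \opt(G[S_2])$ is immediate, yielding $\conf^t_{\overline{S}_1}(S_2) = 0$.

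For the backward direction, I will argue the contrapositive: assume $\conf^t_{S_1}(S_2) > 0$ and produce a witness subset $\overline{S}_1$ of size at most $(t-1)\beta$ with $\conf^t_{\overline{S}_1}(S_2) > 0$. The key observation is that the family $\F := \pr^t_{S_1}(S_2)$ satisfies the structural requirements to be a blocking set of $(G[S_2], \emptyset)$: each element is of the form $K \cap S_2$ where $K$ is a $t$-clique meeting both $S_1$ and $S_2$, so $K \cap S_2$ is a clique of size between $1$ and $t-1$ in $G[S_2]$. Moreover, by definition, $\opt(G[S_2], \F) - \opt(G[S_2]) = \conf^t_{S_1}(S_2) > 0$, so $\F$ is indeed a blocking set of the clean instance $(G[S_2], \emptyset)$.

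Applying Property~\ref{prop:mmbs} together with the hypothesis $\mmbs_t(G[S_2]) \leq \beta$, I can extract $\overline{\F} \subseteq \F$ of size at most $\beta$ that is still a blocking set of $(G[S_2], \emptyset)$. For every $B \in \overline{\F}$, I will pick one witness $t$-clique $K_B$ in $G[S_1 \cup S_2]$ with $K_B \cap S_2 = B$ and $K_B \cap S_1 \neq \emptyset$, and define $\overline{S}_1 := \bigcup_{B \in \overline{\F}} (K_B \cap S_1)$. Since $|K_B \cap S_1| = t - |B| \leq t-1$ for each $B$, the size bound $|\overline{S}_1| \leq (t-1)\beta$ follows. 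Finally, by construction each $K_B$ is a $t$-clique in $G[\overline{S}_1 \cup S_2]$ meeting both parts, so $\overline{\F} \subseteq \pr^t_{\overline{S}_1}(S_2)$, giving $\opt(G[S_2], \pr^t_{\overline{S}_1}(S_2)) \geq \opt(G[S_2], \overline{\F}) > \opt(G[S_2])$, that is, $\conf^t_{\overline{S}_1}(S_2) > 0$, as required. I expect the only delicate point to be verifying that $\F$ meets the formal definition of a blocking set (size and clique conditions on its elements), but this is immediate from the definition of projection, so no substantial obstacle is anticipated.
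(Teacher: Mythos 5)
Your proof is correct and follows essentially the same route as the paper's: the forward direction by monotonicity of $\opt$ under dropping constraints, and the backward direction by recognizing $\pr^t_{S_1}(S_2)$ as a blocking set of $(G[S_2],\emptyset)$, extracting a small sub-blocking-set via Property~\ref{prop:mmbs}, and lifting back by choosing one witness clique per element. The only difference is that you explicitly verify that $\pr^t_{S_1}(S_2)$ meets the structural conditions (size and clique-hood) in the definition of a blocking set, which the paper leaves implicit.
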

\begin{proof}
The forward implication holds as for any subset $\overline{S}_1 \subseteq S_1$, we have that $$\opt(G[S_2],\pr^t_{\overline{S}_1}(S_2)) \le \opt(G[S_2],\pr^t_{S_1}(S_2)) = \opt(G[S_2]).$$
  Let us prove the backward implication by contrapositive.
  Suppose that $\conf^t_{S_1}(S_2)>0$, implying that $\opt(G[S_2],\pr^t_{S_1}(S_2)) > \opt(G[S_2])$ and thus that $\pr^t_{S_1}(S_2)$ is a blocking set of $G[S_2]$.
  As $\mmbs_t(G[S_2]) \le \beta$, by Property~\ref{prop:mmbs} there exists $\overline{\B} \subseteq \pr^t_{S_1}(S_2)$ such that $\opt(G[S_2],\overline{\B}) >\opt(G[S_2])$ and $|\overline{\B}| \le \beta$.
  Note that for any $Z \in \overline{\B}$, there exists $f(Z) \subseteq S_1$ such that $Z \cup f(Z)$ is a $t$-clique.
  Thus, we define $\overline{S}_1 =\bigcup_{Z \in \overline{\B}}f(Z)$, and we have that $\overline{\B} \subseteq \pr^t_{\overline{S}_1}(S_2)$, implying
  $\opt(G[S_2],\pr^t_{\overline{S}_1}(S_2))>\opt(G[S_2],\overline{\B}) > \opt(G[S_2])$, and thus that $\conf^t_{\overline{S}_1}(S_2)>0$.
  Moreover, $|\overline{S}_1| \le |\overline{\B}|(t-1) \le (t-1)\beta$.
\end{proof}


 Let $(G,X)$ be an input of $\KtHM^\lambda$, let $N$ be a set of non-$K_t$-vertices of $G-X$,  and let $\T$ be a root of $G-X-N$.
For any $\P=\{(V_i,N_i) \mid i \in [|\P|]\}$, where for $i \in [|\P|]$, $V_i \subseteq V(\T)$ and $N_i \subseteq N$, we denote by $N^{\P}=\bigcup_{(V_iN_i) \in \P}N_i$
and $\T^{\P}=\bigcup_{(V_iN_i) \in \P}V_i$.

Let us start by defining the following marking algorithm, which will be used as a subroutine in the kernelization algorithm. See \autoref{fig:mark} for an illustration.

\begin{definition}[{\tt mark}]\label{def:mark}
  Let $(G,X)$ be an input of $\KtHM^\lambda$.
  We define an algorithm {\tt mark}$(\T,N,X',c,N',M')$ which returns a subset of vertices of $V(\T)$, and whose parameters are such that

  \begin{itemize}
    \item $N$ is a set of non-$K_t$-vertices of $G-X$ and $\T$ is a root of $G-X-N$,
  \item $X' \in \X$,
  \item $X' \cup N'$ does not contain a $t$-clique ,
     \item $N' \subseteq N$ such that $|N'| \le c(\lambda,t)-c$,
    \item $c$ is an integer with $-1 \le c \le c(\lambda,t)$, and
    \item $M' \subseteq V(\T)$, $|M'| \le t-2$, and $M'$ is a clique.
  \end{itemize}

  Given such a tuple $(\T,N,X',c,N',M')$, we say that a pair $(V_i,N_i)$ is a \emph{$(\T,N,X',c,N',M')$-part} if

  \begin{enumerate}
  \item \label{defpart1}$N_i \subseteq N \setminus N'$, $|N_i| \le c$, $X' \cup N' \cup N_i$ does not contain a $t$-clique, and
  \item \label{defpart2}$V_i \subseteq V(\T)$, $|V_i \cup M'| \le t-1$, and $V_i \cup M'$ is clique, and
  \item \label{defpart3}$\conf^t_{X'}(M' \cup C(V_i) \cup N' \cup N_i) > 0$
  \end{enumerate}

  We are now ready to define {\tt mark}$(\T,N,X',c,N',M')$:

  \begin{itemize}
  \item If $c=-1$, return $\emptyset$.
  \item Otherwise:
    \begin{itemize}
    \item Construct a packing $\P$ (meaning that $V_i's$ are pairwise disjoint and $N_i's$ are pairwise disjoint) of $(\T,N,X',c,N',M')$-parts $(V_i,N_i)$ until $|\P|=|X|+1$ or $\P$ is inclusion-wise maximal.
    \item If $|\P| = |X|+1$, return $\T^{\P}$.
    \item Otherwise, return $\bigcup_{g \in N^{\P}}${\tt mark}$(\T,N,X',c-1,N' \cup \{g\},M') \cup \T^{\P}$.
    \end{itemize}
  \end{itemize}

For any $X' \in \X$ and $M' \subseteq V(\T)$ such that $|M'| \le t-2$ and $M'$ is a clique, let \\ \mainMark$(\T,N,X',M')= ${\tt mark}$(\T,N,X',c(\lambda,t),\emptyset,M')$.
\end{definition}

  \begin{figure}
\begin{center}
    \includegraphics[scale=0.6]{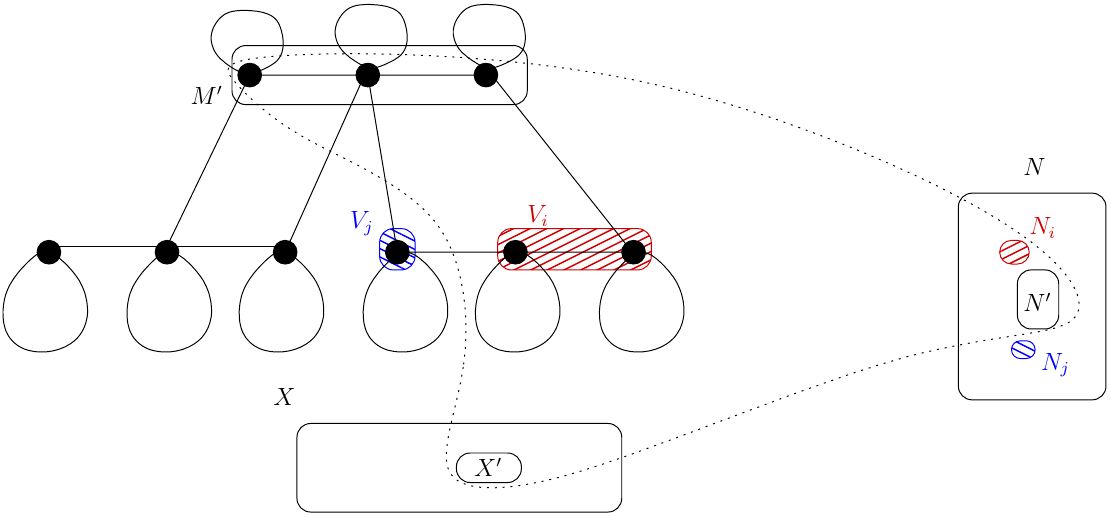}
    \caption{Example of a call to {\tt mark}$(\T,N,X',c,N',M')$ and of a $(\T,N,X',c,N',M')$-part $(V_i,N_i)$ (with in particular $\conf^t_{X'}(M' \cup C (V_i) \cup N' \cup N_i) > 0$).
      The nine vertices of $V(\T)$ are depicted as black filled circles.
    }
    \label{fig:mark}
\end{center}
  \end{figure}

\begin{observation}\label{obs:mark1}
Any recursive call of the above algorithm satisfies the required condition as, in particular, for any $g \in N^{\P}$, $X' \cup N' \cup \{g\} \subseteq X' \cup N' \cup N_i$ for some $(\T,N,X',c,N',M')$-part
$(V_i,N_i)$, and by definition of a part, $X' \cup N' \cup N_i$ does not contain a $t$-clique.
\end{observation}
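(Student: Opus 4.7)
The plan is to verify, item by item, the six preconditions that the signature of \texttt{mark} requires. The first, fourth, and sixth items concern $N$, $\T$, $X'$, and $M'$, which are all passed unchanged to the recursive call, so no work is needed there. Hence the real content is in checking the conditions on the arguments that do change, namely $c \mapsto c-1$ and $N' \mapsto N' \cup \{g\}$.

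For the bound $-1 \le c-1$: we only reach the recursive line when the algorithm did not return in the base case $c = -1$, and by hypothesis $c \le c(\lambda,t)$, so $c \ge 0$ and $c-1 \ge -1$. For the inclusion $N' \cup \{g\} \subseteq N$, I would observe that $g \in N^{\P}$ means $g \in N_i$ for some $(\T,N,X',c,N',M')$-part $(V_i,N_i)$, and Condition~\ref{defpart1} of \autoref{def:mark} gives $N_i \subseteq N \setminus N'$; thus $g \in N$ and moreover $g \notin N'$, which also implies $|N' \cup \{g\}| = |N'|+1 \le (c(\lambda,t)-c)+1 = c(\lambda,t)-(c-1)$, matching the required size bound for the new $N'$.

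The only nontrivial condition is that $X' \cup (N' \cup \{g\})$ contains no $t$-clique. This is exactly the sentence given in the statement: since $g \in N_i$, we have $X' \cup N' \cup \{g\} \subseteq X' \cup N' \cup N_i$, and Condition~\ref{defpart1} of the definition of a part states precisely that $X' \cup N' \cup N_i$ contains no $t$-clique. A subgraph of a $K_t$-free set is $K_t$-free, so the property is inherited by $X' \cup N' \cup \{g\}$.

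The main obstacle in this observation is essentially absent: once the definitions of \texttt{mark} and of a $(\T,N,X',c,N',M')$-part are unpacked, all six conditions reduce to routine containment or cardinality inequalities, and the nontrivial one (absence of a $t$-clique in the enlarged $N'$) is provided directly by Condition~\ref{defpart1} of \autoref{def:mark}. The purpose of the observation is really bookkeeping, guaranteeing that the recursive definition is well-founded so that subsequent lemmas may invoke \texttt{mark} on the recursive arguments without having to re-verify its input requirements.
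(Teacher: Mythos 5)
Your proposal is correct and follows exactly the same route as the paper's observation, which only spells out the one nontrivial condition (absence of a $t$-clique in $X' \cup N' \cup \{g\}$, via the inclusion into $X' \cup N' \cup N_i$ and Condition~\ref{defpart1}). You additionally verify the remaining bookkeeping conditions (the bounds on $c-1$ and $|N' \cup \{g\}|$, the inclusion $N' \cup \{g\} \subseteq N$) which the paper leaves implicit; this is a more thorough write-up of the same argument, not a different one.
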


In the following lemma we prove several properties of the above marking algorithm. Informally, \autoref{lemma:marking:2} of \autoref{lemma:marking} tells us that if a part $(V_i,N_i)$ has not been marked, then
we have marked $|X|+1$  other parts.
\begin{lemma}[properties of marking]\label{lemma:marking}
  Let $(G,X)$ be an input of $\KtHM^\lambda$, $N$ be a set of non-$K_t$-vertices of $G-X$, and $\T$ be a root of $G-X-N$.
  Let $X' \in \X$ and $M' \subseteq V(\T)$ such that $|M'| \le t-2$ and $M'$ is a clique.
 Then,
  \begin{enumerate}
  \item  \label{lemma:marking:0}  \mainMark$(\T,N,X',M')$ runs in time $p_1^{(\lambda,t)}(n)=\left(n^{t+c(\lambda,t)+\O(1)}c(\lambda,t)p_0^{(\lambda,t)}(n)\right)^{c(\lambda,t)+1}$, where $p_0^{(\lambda,t)}$ is defined in \autoref{lemma:poly-opt-conf-lambda4}.
    \item  \label{lemma:marking:1} $|$\mainMark$(\T,N,X',M')| \le p^\lambda(|X|)$ where $p_2^{(\lambda,t)}(|X|)=2t|X|(2c(\lambda,t)|X|)^{(c(\lambda,t)+1)}$.
    \item  \label{lemma:marking:2} If there is a $(\T,N,X',c(\lambda,t),\emptyset,M')$-part $(V_i,N_i)$ such that  $V_i \cap $\mainMark$(\T,N,X',M') = \emptyset$,
      then there exists $N' \subseteq N_i$, and a packing $\P^\star$ of $(\T,N,X',c(\lambda,t)-|N'|,N',M')$-parts such that $|\P^\star|=|X|+1$ and $\T^{\P^\star} \subseteq $\mainMark$(\T,N,X',M')$.

  \end{enumerate}
\end{lemma}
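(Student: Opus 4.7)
The plan is to exploit the recursive structure of ${\tt mark}$. Because each recursive call decreases $c$ by one and the base case is $c = -1$, the recursion initiated by $\mainMark$ (which starts at $c = c(\lambda,t)$) has depth at most $c(\lambda,t)+1$. At each call we build a packing $\P$ of size at most $|X|+1$ and then recurse once on each element of $N^{\P}$, whose size is at most $|\P| \cdot c(\lambda,t) \le (|X|+1) c(\lambda,t)$. Thus the recursion tree has at most $((|X|+1) c(\lambda,t))^{c(\lambda,t)+1}$ nodes, and the first step of the proof records this fact.

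For \autoref{lemma:marking:1} I would observe that each call contributes at most $|\T^{\P}| \le (t-1)(|X|+1)$ newly marked vertices and sum across the recursion tree to recover $p_2^{(\lambda,t)}(|X|)$ up to constants. For \autoref{lemma:marking:0} I would bound the cost of a single call by enumerating candidate pairs $(V_i, N_i)$ with $|V_i|\le t-1$ and $|N_i|\le c(\lambda,t)$, of which there are at most $n^{(t-1)+c(\lambda,t)}$, and checking the three conditions of a part. The only nontrivial check is the conflict test (condition~\ref{defpart3}): the relevant set $M' \cup C(V_i) \cup N' \cup N_i$ is contained in $V(G)\setminus X$, so by the monotonicity lemma \autoref{obslambdamonotonesubgraph} the induced subgraph has $\bedt$ at most $\lambda$ and \autoref{obs:computingconflict} handles it in time $p_0^{(\lambda,t)}(n)$. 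Multiplying the per-call cost by the number of calls yields the claimed bound on $p_1^{(\lambda,t)}(n)$.

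The substance of the lemma lies in \autoref{lemma:marking:2}, which I would prove by induction on $c$ after strengthening it as follows: for any call ${\tt mark}(\T,N,X',c,N',M')$ and any $(\T,N,X',c,N',M')$-part $(V_i,N_i)$ with $V_i$ disjoint from the output of that call, there exist $N'' \subseteq N_i$ and a packing $\P^\star$ of $(\T,N,X',c-|N''|,N' \cup N'',M')$-parts with $|\P^\star|=|X|+1$ and $\T^{\P^\star}$ contained in the output. The base case $c=-1$ is vacuous. For the induction step, let $\P$ be the packing built at the current call. If $|\P|=|X|+1$ then $N''=\emptyset$ and $\P^\star=\P$ already work, since $\T^{\P}$ is returned as part of the output. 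Otherwise $\P$ is inclusion-maximal, and because $\T^{\P}$ lies in the output while $V_i$ does not, maximality of $\P$ forces some $g \in N_i \cap N^{\P}$. I would then verify that $(V_i, N_i \setminus \{g\})$ satisfies every condition of a $(\T,N,X',c-1,N' \cup \{g\},M')$-part; the two observations that matter are that $X' \cup (N' \cup \{g\}) \cup (N_i \setminus \{g\}) = X' \cup N' \cup N_i$ remains $t$-clique-free by condition~\ref{defpart1} applied to $(V_i,N_i)$, and that $M'$, $V_i$, $C(V_i)$ are unchanged so the conflict condition~\ref{defpart3} is preserved. Applying the inductive hypothesis to the recursive call ${\tt mark}(\T,N,X',c-1,N' \cup \{g\},M')$ (which is a subset of the current output, so $V_i$ avoids it as well) yields some $N''' \subseteq N_i \setminus \{g\}$ and the desired packing; setting $N'' = \{g\} \cup N'''$ closes the induction, and specializing to $c=c(\lambda,t)$ and $N'=\emptyset$ recovers \autoref{lemma:marking:2}.

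The main obstacle I anticipate is precisely this per-step verification that every part-condition of \autoref{def:mark} continues to hold after transferring $g$ from $N_i$ into the $N'$-set of the next recursive call, together with ensuring that the accumulated $N''$ fits the form required at the top level. Once this bookkeeping is in place, the remainder of the proof is a routine aggregation over a tree of bounded depth and branching factor, combined with the monotonicity of $\bedt$ and the conflict oracle provided by \autoref{obs:computingconflict}.
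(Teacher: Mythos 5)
Your proposal is correct and follows essentially the same route as the paper: items~\ref{lemma:marking:0} and~\ref{lemma:marking:1} are obtained by aggregating per-call costs over the recursion tree (the paper packages this as an explicit induction on~$c$ with functions~$f_1,f_2$, you as a direct tree count, but the bounds are the same), and item~\ref{lemma:marking:2} is proved via the identical strengthened induction hypothesis on~$c$, with the same case split on~$|\P|=|X|+1$ versus maximality, and the same extraction of~$g \in N_i \cap N^{\P}$. A small point in your favour: your strengthened claim asserts~$N'' \subseteq N_i$ directly, which is what the lemma statement requires, whereas the paper's intermediate property~$p_1$ only records~$N^\star \subseteq N$ and relies on the construction to keep~$N^\star$ inside~$N_i$.
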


\begin{proof}
  Proof of~\autoref{lemma:marking:0}.
  For any $c \ge 0$, let $f_1(c)=\left(\alpha_2n^{t+c(\lambda,t)+\alpha_1+1}c(\lambda,t)p_0^{(\lambda,t)}(n)\right)^{c+1}$ where $\alpha_1$ and $\alpha_2$ are constants defined later.
  Let us prove by induction on $c \ge 0$ that for any input  $(\T,N,X',c,\tilde{N},M')$, {\tt mark}$(\T,N,X',c,\tilde{N},M')$ runs in at most $f_1(c)$ steps.
  Observe that for any input, a naive implementation of {\tt mark} consists in listing in $\O(n^{t+c(\lambda,t)})$ all possible subsets $V_i$ and $N_i$ with properties~\ref{defpart1} and~\ref{defpart2},
  checking for each of them in time $p_0^{(\lambda,t)}(n)$ using \autoref{obs:computingconflict} if \autoref{defpart3} holds, and then given the set of all parts, construct in time $\O(n^\alpha_1)$ a maximal packing (where $\alpha_1$ is a constant).
  Thus, property is true for $c=0$, for any $c > 0$, as {\tt mark} makes $|N^{\P}| \le |X|c \le nc(\lambda,t)$ recursive calls, we get that {\tt mark}$(\T,N,X',c,\tilde{N},M')$ runs in time $\O(n^{t+c(\lambda,t)+\alpha_1}p_0^{(\lambda,t)}(n)) \cdot nc(\lambda,t) \cdot f_1(c-1)$. Choosing $\alpha_2$ as the multiplicative factor of the previous $\O$-notation, we get that {\tt mark}$(\T,N,X',c,\tilde{N},M')$ runs in time $f_1(c)$.

  Proof of~\autoref{lemma:marking:1}.
  For any $c \ge 0$, let $f_2(c)=t\sum_{\ell=0}^c c(\lambda,t)^{\ell}(|X|+1)^{\ell+1}$.
  Let us prove by induction on $c \ge 0$ that for any input  $(\T,N,X',c,\tilde{N},M')$ of {\tt mark}, $|${\tt mark}$(\T,N,X',c,\tilde{N},M')| \le f_2(c)$.
  For $c=0$, $|{\tt mark}(\T,N,X',0,\tilde{N},M')|=|\T^{\P}| \le (|X|+1)t$.
  Suppose now that the claim is true for $c-1$ and let us prove it for $c$.
  Let $(\T,N,X',c,\tilde{N},M')$ be an input of {\tt mark}. As $|N^{\P}| \le |X|c \le (|X|+1)c(\lambda,t)$, and using induction hypothesis, we have $|{\tt mark}(\T,N,X',c,\tilde{N},M')| \le (|X|+1)t+(|X|+1)c(\lambda,t)f_2(c-1)=f_2(c)$. As $f_2(c) \le t(|X|+1)(c(\lambda,t)(|X|+1))^{c+1}$, we get the claimed bound.

  Proof of~\autoref{lemma:marking:2}.
  Let us first prove the following property $p_1$: for any input $(\T,N,X',c,\tilde{N},M')$ of {\tt mark} (thus satisfying the requirements in \autoref{def:mark}) where $c \ge 0$,
  if there exists a $(\T,N,X',c,\tilde{N},M')$-part $(V_i,N_i)$ such that $V_i \cap {\tt mark}(\T,N,X',c,\tilde{N},M') = \emptyset$, then there exists $N^\star \subseteq N$ and $\P^\star$ such that

  \begin{itemize}
  \item  $(\T,N,X',c-|N^\star|,\tilde{N} \cup N^\star,M')$ satisfies the condition required by an input of {\tt mark}, and
  \item $\P^\star$ is a packing of $(\T,N,X',c-|N^\star|,\tilde{N} \cup N^\star,M')$-parts with $|\P^\star|=|X|+1$ and $\T^{\P^\star} \subseteq {\tt mark}(\T,N,X',c,\tilde{N},M')$.
  \end{itemize}

  We prove this property by induction on $c$.
  Consider first the case where $c=0$.
  Suppose there is a $(\T,N,X',0,\tilde{N},M')$-part $(V_i,N_i)$ such that $V_i \cap {\tt mark}(\T,N,X',0,\tilde{N},M') = \emptyset$.
  Let $\P$ be the packing of $(\T,N,X',0,\tilde{N},M')$-parts found by {\tt mark}$(\T,N,X',0,\tilde{N},M')$. Let us prove that $\P^\star=\P$ and $N^\star=\emptyset$ satisfy the required properties.
  The first one is trivial. Let us prove that $|\P|=|X|+1$, implying the second property. Suppose by contradiction that $|\P|\le |X|$. As $(V_i,N_i)$
  is a $(\T,N,X',0,\tilde{N},M')$-part such that $V_i \cap {\tt mark}(\T,N,X',0,\tilde{N},M') = \emptyset$, it implies that $V_i \cap \T^{\P} = \emptyset$, implying that for any $(V_j,N_j) \in \P$,
  $V_i \cap V_j = \emptyset$. However, as $c=0$, all the $N_j$ are empty, and thus for any $(V_j,N_j) \in \P$, we have $V_j \cap V_i = \emptyset$ and $N_j \cap N_i = \emptyset$,
  contradicting the maximality of $\P$.

  Suppose now that the claim is true for $c-1$ and let us prove it for $c$.
  Suppose there exists a $(\T,N,X',c,\tilde{N},M')$-part $(V_i,N_i)$ such that $V_i \cap {\tt mark}(\T,N,X',c,\tilde{N},M') = \emptyset$.
  Let $\P$ be the packing of $(\T,N,X',c,\tilde{N},M')$-parts found by {\tt mark}$(\T,N,X',c,\tilde{N},M')$.
  If $|\P|=|X|+1$, then $\P^\star=\P$ and $N^\star=\emptyset$ satisfy the claim.
  Let us now assume that $|\P| \le |X|$. As $V_i \cap {\tt mark}(\T,N,X',c,\tilde{N},M') = \emptyset$, this implies that for any $(V_j,N_j) \in \P$, $V_j \cap V_i = \emptyset$.
  Thus, as $(V_i,N_i)$ was not added in the maximal packing $\P$, it implies that there exists $(V_j,N_j) \in \P$ such that $N_j \cap N_i \neq \emptyset$.
  Let $g \in N_j \cap N_i$. Observe that {\tt mark}$(\T,N,X',c-1,\tilde{N} \cup \{g\},M') \subseteq {\tt mark}(\T,N,X',c,\tilde{N},M')$, and that $(V_i, N_i \setminus \{g\})$ is a $(\T,N,X',c-1,\tilde{N} \cup \{g\},M')$-part
  such that $V_i \cap {\tt mark}(\T,N,X',c-1,\tilde{N} \cup \{g\},M') = \emptyset$. Thus, by induction hypothesis, there exists $N^\star_r \subseteq N$ and  $\P^\star_r$ such that

  \begin{itemize}
  \item  $(\T,N,X',c-1-|N^\star_r|,\tilde{N} \cup \{g\} \cup N^\star_r,M')$ satisfies the condition required by a input of {\tt mark}, and
  \item $\P^\star_r$ is a packing of $(\T,N,X',c-1-|N^\star_r|,\tilde{N} \cup \{g\} \cup N^\star_r,M')$-parts with $|\P^\star_r|=|X|+1$ and $\T^{\P^\star_r} \subseteq {\tt mark}(\T,N,X',c-1,\tilde{N} \cup \{g\},M')$.
  \end{itemize}

  Let us check that $N^\star = N^\star_r \cup \{g\}$ and $\P^\star = \P^\star_r$ satisfy the two required conditions.
  For the first one, the only non-trivial one is to prove that $X' \cup \tilde{N} \cup N^\star$ does not contain a $t$-clique.
  As $(\T,N,X',c-1-|N^\star_r|,\tilde{N} \cup \{g\} \cup N^\star_r,M')$ satisfies the condition required by an input of {\tt mark}, we get that $X' \cup (\tilde{N} \cup \{g\} \cup N^\star_r)$ does not contain a $t$-clique,
  and $X' \cup (\tilde{N} \cup \{g\} \cup N^\star_r) = X' \cup \tilde{N} \cup N^\star$.
  The second one is direct as any $(\T,N,X',c-1-|N^\star_r|,\tilde{N} \cup \{g\} \cup N^\star_r,M')$-part is also a $(\T,N,X',c-|N^\star|,\tilde{N} \cup N^\star,M')$-part.
  This concludes the inductive proof of $p_1$.

  Now, the proof of~\autoref{lemma:marking:2} follows from property $p_1$ with $c=c(\lambda,t)$ and $\tilde{N}=\emptyset$.
\end{proof}

We are now ready to define the steps that will be performed by the kernel.
\begin{definition}[Step 1: marking]\label{def:step1}
   Let $(G,X)$ be an input of $\KtHM^\lambda$, $N$ be a set of non-$K_t$-vertices of $G-X$, and $\T$ be a root of $G-X-N$.
   Let $M^0 = \emptyset$.
   For $\ell$ from $1$ to $t-1$ compute the following objects:
   \begin{itemize}
   \item $\M^\ell = \{M' \mid (|M'| \le t-2 )\mbox{,  $G[M']$ is a clique, and $M' \cap M^x \neq \emptyset$ for any $1 \le x \le \ell-1$}\} \cup \emptyset$.
   \item $M^\ell = (\bigcup_{X' \in \X, M' \in \M^\ell}$\mainMark$(\T,N,X',M')) \setminus \bigcup_{\ell' \in [\ell-1]}M^{\ell'}$.
   \end{itemize}

  Let $M(\T,N,G,X) = \bigcup_{\ell \in [t-1]}M^\ell$ be the set of marked vertices.
   For any $J \subseteq V(\T)$, we let $J(M(\T,N,G,X))=J \cap (\bigcup_{\ell \in [\ell(J,M(\T,N,G,X))]}M^\ell)$, where $\ell(J,M(\T,N,G,X))=\max\{\ell \mid J \cap M^x \neq \emptyset$ for any $1 \le x \le \ell \}$ (if $J \cap M(\T,N,G,X) = \emptyset$ then $J(M(\T,N,G,X))=\emptyset$).
   When $(\T,N,G,X)$ is clear from the context, $M(\T,N,G,X)$ is simply denoted by $M$.
\end{definition}

\begin{lemma}[running time and size of marked vertices in Step 1]\label{lemma:step1}
  Let $(G,X)$ be an input of $\KtHM^\lambda$, $N$ be a set of non-$K_t$-vertices of $G-X$, and $\T$ be a root of $G-X-N$.
  \begin{itemize}
  \item $|M(\T,N,G,X)| \le p_3^{(\lambda,t)}(|X|)$, where $p_3^{(\lambda,t)}(|X|)=(t-1)\left((t-2)|\X|p_2^{(\lambda,t)}(|X|)\right)^{(t-1)^{(t-1)}}$.
  \item Step 1 can be done in time $\O(p_4^{(\lambda,t)}(n))$, where $p_4^{(\lambda,t)}(n)=(t-1)\left((t-2)|\X|p_1^{(\lambda,t)}(n)\right)^{(t-1)^{(t-1)}}$.
  \end{itemize}

  \end{lemma}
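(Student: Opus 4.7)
\medskip
\noindent\textbf{Proof plan.} The plan is to bound separately $|M(\T,N,G,X)|$ and the running time of Step~1. Both bounds will be obtained from the same layered analysis, so the main work is to estimate, for each $\ell \in [t-1]$, the two quantities $|\M^\ell|$ (the number of ``guessed intersection patterns'' to be tried at level $\ell$) and $|M^\ell|$ (the number of vertices newly marked at level~$\ell$). Since $M(\T,N,G,X)=\bigsqcup_{\ell\in[t-1]} M^\ell$ (the union being disjoint by the set subtraction in \autoref{def:step1}), and since Step~1 simply enumerates, for each pair $(X',M')\in \X\times \M^\ell$, one invocation of \mainMark followed by a set union, everything reduces to controlling $|\M^\ell|$ inductively.

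First I would observe that by construction $M^\ell$ is contained in the union, over pairs $(X',M')\in \X\times \M^\ell$, of the sets returned by \mainMark$(\T,N,X',M')$. Since each such call returns at most $p_2^{(\lambda,t)}(|X|)$ vertices by \autoref{lemma:marking}\ref{lemma:marking:1}, we obtain the basic recursion
\[
|M^\ell|\;\le\; |\X|\cdot |\M^\ell|\cdot p_2^{(\lambda,t)}(|X|).
\]
Next I would bound $|\M^\ell|$. For $\ell=1$ the intersection requirement is vacuous, so only $M'=\emptyset$ needs to be tried and $|\M^1|=1$. For $\ell\ge 2$, every $M'\in\M^\ell$ is a clique of size at most $t-2$ meeting each of the pairwise-disjoint sets $M^1,\ldots,M^{\ell-1}$, hence determined by choosing at most $t-2$ vertices from $\bigcup_{x<\ell}M^x$. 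This yields
\[
|\M^\ell|\;\le\;\Bigl(\textstyle\sum_{x<\ell}|M^x|\Bigr)^{t-2}+1.
\]

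Combining the two estimates and setting $A=(t-2)|\X|\,p_2^{(\lambda,t)}(|X|)$, a straightforward induction on $\ell$ gives $|M^\ell|\le A^{(t-1)^{\ell-1}}$: the base case is $|M^1|\le |\X|\,p_2^{(\lambda,t)}(|X|)\le A$, and in the inductive step we bound $|M^\ell|\le A\cdot\bigl(\sum_{x<\ell}|M^x|\bigr)^{t-2}$ and use $1+(t-1)^{\ell-2}(t-2)\le (t-1)^{\ell-1}$, which is the elementary inequality $(t-1)^{\ell-2}\bigl((t-1)-(t-2)\bigr)\ge 1$. Summing over $\ell\in[t-1]$ and absorbing small polynomial factors into the base gives
\[
|M(\T,N,G,X)|\;\le\;\sum_{\ell=1}^{t-1}A^{(t-1)^{\ell-1}}\;\le\;(t-1)\cdot A^{(t-1)^{t-1}}\;=\;p_3^{(\lambda,t)}(|X|),
\]
as required.

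For the running time, Step~1 consists, at each level $\ell\in[t-1]$, of (i) enumerating $\M^\ell$ from the previously computed $M^1,\ldots,M^{\ell-1}$ in time polynomial in $\sum_{x<\ell}|M^x|$, and (ii) performing $|\X|\cdot |\M^\ell|$ calls to \mainMark, each running in time $p_1^{(\lambda,t)}(n)$ by \autoref{lemma:marking}\ref{lemma:marking:0}, followed by a union. The same inductive analysis as above, now carried out for the total amount of work instead of the number of marked vertices, bounds this by $\O(p_4^{(\lambda,t)}(n))$; the exponent tower $(t-1)^{t-1}$ arises for exactly the same reason. The main obstacle in all of this is to correctly track the interaction between ``guessing'' $M'$ and ``marking'' via \mainMark across the $t-1$ rounds so that the bound at round $\ell$ remains polynomial in $|X|$; once we notice that each $M'\in\M^\ell$ is essentially determined by a size-$(t-2)$ choice from the already-marked vertices, the recursion closes and the claimed bounds follow.
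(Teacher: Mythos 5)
Your proof is correct and follows essentially the same route as the paper: the same recursion $|M^\ell|\le|\X|\cdot|\M^\ell|\cdot p_2^{(\lambda,t)}(|X|)$ combined with $|\M^\ell|\lesssim |M^{<\ell}|^{t-2}$, the same inductive bound of the form $A^{(t-1)^{\ell-1}}$ with $A=(t-2)|\X|p_2^{(\lambda,t)}(|X|)$, and the same observation that the running-time analysis is identical with $p_1^{(\lambda,t)}$ in place of $p_2^{(\lambda,t)}$.
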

\begin{proof}
  Let $x=|\X|$ and $p=p_2^{(\lambda,t)}(|X|)$.
  Let us prove by induction on $\ell \in [t-1]$ that $|M^\ell|  \le ((t-2)xp)^{(t-1)^{(\ell-1)}}$.
  Observe that for any $\ell$, $|M^{\ell}| \le |\M^\ell|\cdot xp \le |M^{<\ell}|^{(t-2)}xp$ where $M^{<\ell}=\bigcup_{i < \ell}M^i$.
  The result is immediate for $\ell \le 2$, so let us consider $\ell \ge 3$.
  We have
  \begin{eqnarray*}
    |M^{\ell}| &\le & |M^{<\ell}|^{(t-2)}xp \\
    &\le & \left((t-2)((t-2)xp)^{(t-1)^{(\ell-2)}}\right)^{(t-2)}xp \\
    &= &  (t-2)^{((t-1)^{\ell-2}+1)(t-2)}(xp)^{(t-1)^{(\ell-2)}(t-2)+1} \\
    &\le& (t-2)^{(t-1)^{\ell-1}}(xp)^{(t-1)^{(\ell-1)}},
  \end{eqnarray*}
  concluding the inductive proof, and leading to the claimed bound for $|M(\T,N,G,X)|$.
 The running time can be bounded using the same analysis, using $p=p_1^{(\lambda,t)}(n)$.
\end{proof}

For the next lemma we refer the reader to \autoref{fig:markingalgo}.
Informally, in this lemma we consider a vertex $v \in V(\T) \setminus M(\T,N,G,X)$ that has \emph{not} been marked in Step 1, even if there exists a ``dangerous'' set $J$
such that $J \cup \{v\}$ is a clique, and $\conf^t_{X'}(J(M(\T,N,G,X)) \cup (C(J\setminus J(M(\T,N,G,X)) \cup v) \cup N') > 0$. We show that, in this case, the marking algorithm necessarily found
a large packing.

\begin{lemma}[existence of non-marked vertex $v$ and dangerous set $J$ implies existence of large packing]\label{lemma:markingalgo}
  Let $(G,X)$ be an input of $\KtHM^\lambda$, $N$ be a set of non-$K_t$-vertices of $G-X$, $\T$ be a root of $G-X-N$, and $M(\T,N,G,X)$ as computed in \autoref{def:step1}.
 If there exist

 \begin{itemize}
 \item $X' \in \X$, $N' \subseteq N$ with $|N'| \le c(\lambda,t)$ and $X' \cup N'$ does not contain a $t$-clique, $v \in V(\T) \setminus M(\T,N,G,X)$, and
   \item $J \subseteq V(\T)$ such that $|J| \le t-2$, $J \cup \{v\}$ is a clique, and $\conf^t_{X'}(J(M(\T,N,G,X)) \cup (C(J\setminus J(M(\T,N,G,X)) \cup v) \cup N') > 0$,
 \end{itemize}

then there exist $\tilde{N} \subseteq N'$ and a packing $\P^\star$ of $(\T,N,X',c(\lambda,t)-|\tilde{N}|,\tilde{N},J(M(\T,N,G,X)))$-parts such that $|\P^\star|=|X|+1$ and $\T^{\P^\star} \subseteq M(\T,N,G,X)$. 

\end{lemma}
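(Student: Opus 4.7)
My plan is to apply \autoref{lemma:marking:2} to a specific call of \mainMark by constructing an appropriate part from $v$, $J$, and $N'$. I set $\ell^\star := \ell(J, M)$ and decompose $J = J(M) \cup J'$ where $J' := J \setminus J(M)$. The idea is that $J(M)$ plays the role of the ``reused marked vertices'' fed into {\tt mark} as the parameter $M'$, while $J' \cup \{v\}$ plays the role of $V_i$ in a $(\T, N, X', c(\lambda,t), \emptyset, J(M))$-part. Since $J(M)$ is a clique of size at most $t-2$ that meets each $M^x$ for $x \le \ell^\star$ by the definition of $\ell(J,M)$, it lies in $\M^{\ell^\star+1}$. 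A small preliminary check, namely that $\ell^\star + 1 \le t - 1$ (which follows from $|J| \le t-2$ together with the pairwise disjointness of the $M^x$ by construction), ensures that $\mainMark(\T, N, X', J(M))$ is actually invoked during Step 1.

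Next I verify that $(V_i, N_i) := (J' \cup \{v\}, N')$ is a $(\T, N, X', c(\lambda,t), \emptyset, J(M))$-part. Conditions (1) and (2) of \autoref{def:mark} are routine: $|N'| \le c(\lambda,t)$ and $X' \cup N'$ is $t$-clique-free by hypothesis; $V_i \cup J(M) = J \cup \{v\}$ is a clique of size at most $(t-2)+1 = t-1$ by hypothesis, and $V_i \subseteq V(\T)$. Condition (3), the conflict inequality, is exactly what is given in the hypothesis of the lemma, after rewriting $J(M) \cup C(J' \cup \{v\}) \cup N'$ as $J(M) \cup C(V_i) \cup \emptyset \cup N_i$.

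The main obstacle is verifying the premise of \autoref{lemma:marking:2}, namely $V_i \cap \mainMark(\T, N, X', J(M)) = \emptyset$. Since $v \notin M$ and $\mainMark(\T, N, X', J(M)) \subseteq M$, the vertex $v$ causes no trouble. For $u \in J'$ the argument is more delicate: by definition of $J(M)$, $u \notin M^1 \cup \cdots \cup M^{\ell^\star}$, and by the maximality of $\ell^\star$ in the definition of $\ell(J,M)$ we have $J \cap M^{\ell^\star+1} = \emptyset$, so $u \notin M^{\ell^\star+1}$ either. If $u \notin M$ the statement is trivial; otherwise $u$ lies in $M^x$ for some $x \ge \ell^\star + 2$, and by the definition $M^x = (\bigcup \mainMark) \setminus \bigcup_{\ell' < x} M^{\ell'}$, this means $u$ was not marked by any \mainMark call in round $\ell^\star + 1$, in particular $u \notin \mainMark(\T, N, X', J(M))$. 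This is precisely the point where the partition of the marking procedure into $t-1$ rounds is essential, and it is the conceptual heart of the argument.

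Once the disjointness $V_i \cap \mainMark(\T, N, X', J(M)) = \emptyset$ is established, \autoref{lemma:marking:2} immediately yields some $\tilde N \subseteq N_i = N'$ and a packing $\P^\star$ of $(\T, N, X', c(\lambda,t) - |\tilde N|, \tilde N, J(M))$-parts with $|\P^\star| = |X|+1$ and $\T^{\P^\star} \subseteq \mainMark(\T, N, X', J(M)) \subseteq M$, which is exactly the conclusion of the lemma.
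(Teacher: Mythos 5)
Your proof is correct and follows essentially the same strategy as the paper: you instantiate $(V_i, N_i) := ((J \setminus J(M)) \cup \{v\}, N')$ as a $(\T, N, X', c(\lambda,t), \emptyset, J(M))$-part, check that it is disjoint from $\mainMark(\T, N, X', J(M))$, and invoke \autoref{lemma:marking}(\ref{lemma:marking:2}). The one point where you take a slightly different route from the paper is the disjointness argument: the paper introduces $U = \bigcup_{\ell \le \ell(J,M)} M^\ell$ and argues via the single set containment $M^{\ell(J,M)+1} \supseteq \mainMark(\T,N,X',J(M)) \setminus U$, whereas you argue pointwise with a case split on whether $u \in M$; both arguments hinge on the same fact, namely that $J(M) \in \M^{\ell^\star+1}$ so that the contribution of $\mainMark(\T,N,X',J(M))$ to round $\ell^\star + 1$ lands in $\bigcup_{\ell' \le \ell^\star+1} M^{\ell'}$, and you correctly flag the preliminary check $\ell^\star + 1 \le t-1$ (used implicitly via the parenthetical $\ell(J,M) \le |J| \le t-2$ in the paper), which ensures the round is actually executed.
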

\begin{proof}
Let us denote by $M=M(\T,N,G,X)$.
  Let $U = \bigcup_{\ell \in [\ell(J,M)]}M^\ell$ and $S = (J \setminus J(M)) \cup \{v\}$. 
  We have that:
  \begin{itemize}
  \item $(S, N')$ is a $(\T,N,X',c(\lambda,t),\emptyset,J(M))$-part. Note that \autoref{defpart1} is true as $X' \cup N'$ does not contain a $t$-clique,
    \autoref{defpart2} is true as $(S \cup J(M)) = J \cup v$, and thus it has size at most $t-1$ and it is a clique, and \autoref{defpart3} is true
    as  $\conf^t_{X'}(J(M) \cup (C(J\setminus J(M) \cup v) \cup N') > 0$.
  \item $S \cap $ \mainMark$(\T,N,X',J(M)) = \emptyset$. Indeed, as $(J\setminus J(M)) \cap M^{\ell(J,M)+1} = \emptyset$ (recall that $\ell(J,M)$ is defined in \autoref{def:step1}, with $\ell(J,M) \le |J| \le t-2$) and $v \notin M$,
    we have $S \cap M^{\ell(J,M)+1}=\emptyset$, and by definition $S \cap U = \emptyset$.
    Thus, if by contradiction there exists $w \in $ \mainMark$(\T,N,X',J(M)) \cap S$, then as $M^{\ell(J,M)+1} \supseteq ($\mainMark$(\T,N,X',J(M)) \setminus U)$ and $w \notin U$,
    we would have $w \in M^{\ell(J,M)+1} \cap S$, a contradiction.
  \end{itemize}

  Thus, by \autoref{lemma:marking},  there exist $\tilde{N'} \subseteq N'$ and a packing $\P^\star$ of $(\T,N,X',c(\lambda,t)-|\tilde{N'}|,\tilde{N'},J(M))$-parts such that $|\P^\star|=|X|+1$ and $\T^{\P^\star} \subseteq $ \mainMark$(\T,N,X',J(M))$.
    \end{proof}

  \begin{figure}
\begin{center}
    \includegraphics[scale=0.6]{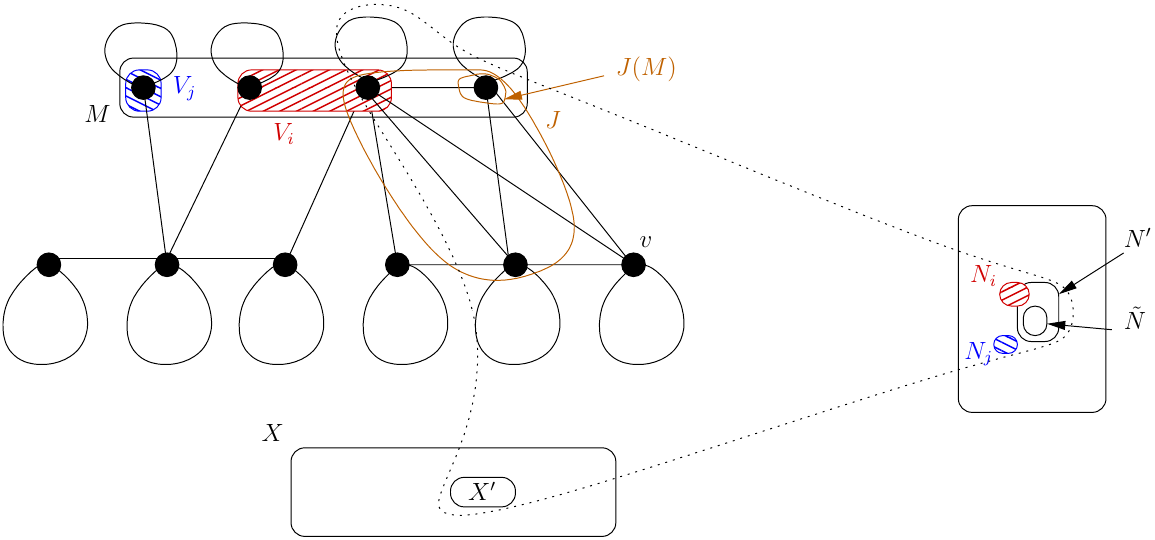}
    \caption{Setting of \autoref{lemma:markingalgo}. The set $M(\T,N,G,X)$ is denoted by $M$. The ten vertices of $V(\T)$ are depicted as black filled circles and $\P^\star=\{(V_i,N_i),(V_j,N_j)\}$.
      }
      \label{fig:markingalgo}
\end{center}
  \end{figure}

\begin{definition}[Step 2: removing a non-marked pending component]\label{def:step2}
  Let $(G,X,k)$ be an input of $\KtHMdecbis{\lambda}$, $N$ be a set of non-$K_t$-vertices of $G-X$, and $\T$ be a root of $G-X-N$.
  Let $M(\T,N,G,X)$ as computed in Step 1.
  If there exists $v \in V(\T) \setminus M(\T,N,G,X)$, let $G' = G-C(v)$ and $k'-\opt(G[C(v)])$.
\end{definition}

\begin{lemma}[running time of Step 2]\label{lemma:runinngTime}
Step 2 can be done in $\O^\star(p_0^{(\lambda,t)}(n))$, where
$p_0^{(\lambda,t)}$ is defined in \autoref{lemma:poly-opt-conf-lambda4}.
\end{lemma}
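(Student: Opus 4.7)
The plan is to show that Step 2 consists of three subroutines whose running time is clearly polynomial in $n$, except for one call that matches the claimed bound. First, once Step 1 has finished, the set $M(\T,N,G,X) \subseteq V(\T)$ is explicitly available, so searching for a vertex $v \in V(\T) \setminus M(\T,N,G,X)$ takes time $\Oh(|V(\T)|) = \Oh(n)$. If such a $v$ exists, the pending component $C(v)$ can be obtained by a standard graph traversal inside $G-X-N$ after removing the edges with both endpoints in $V(\T)$, which runs in time $\Oh(n+m)$ according to \autoref{def:pending}.

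The main obstacle is the computation of $\opt(G[C(v)])$, which is needed to set $k' = k - \opt(G[C(v)])$. I would handle it as follows. Since $C(v) \subseteq V(G-X-N) \subseteq R$, the graph $G[C(v)]$ is an induced subgraph of $G[R]$. The input hypothesis of $\KtHMdecbis{\lambda}$ gives $\bedt(G[R]) \le \lambda$, and the monotonicity of $\bedt$ under induced subgraphs proved in \autoref{obslambdamonotonesubgraph} yields $\bedt(G[C(v)]) \le \lambda$. Consequently, \autoref{lemma:poly-opt-conf-lambda4} applies to $G[C(v)]$ and computes $\opt(G[C(v)])$ in time $p_0^{(\lambda,t)}(n)$.

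Combining the three subroutines, the dominant cost is the single call to the algorithm of \autoref{lemma:poly-opt-conf-lambda4}, giving a total running time of $\Oh^\star(p_0^{(\lambda,t)}(n))$, as claimed.
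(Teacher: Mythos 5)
Your proof is correct and follows essentially the same route as the paper: the key observation in both cases is that $\bedt(G[C(v)]) \le \bedt(G-X) \le \lambda$ by the monotonicity established in \autoref{obslambdamonotonesubgraph}, which makes \autoref{lemma:poly-opt-conf-lambda4} applicable and gives the bound $p_0^{(\lambda,t)}(n)$ for computing $\opt(G[C(v)])$. The paper's proof is simply terser and omits the explicit accounting of the other (clearly polynomial) steps, which you spell out.
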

\begin{proof}
  Step 2 can be done in polynomial time, as in particular we can compute $\opt(G[C(v)])$ as according to
  \autoref{obslambdamonotonesubgraph} $\bedt(G[C(v)]) \le \bedt(G-X) \le \lambda$, and using  \autoref{lemma:poly-opt-conf-lambda4}.
\end{proof}

Before proving the safeness of Step 2, we need the following technical lemma.

\begin{lemma}\label{lemma:confXN}
  Let $(G,X)$ be an input of $\KtHM^\lambda$ (recall that $R = V(G)\setminus X$) and $N$ be a set of non-$K_t$-vertices of $G[R]$.
  For any $X' \subseteq X$, $N' \subseteq N$, $R' \subseteq R$ such that $R' \cap N' = \emptyset$ and $X' \cup N'$ does not contain a $t$-clique,
  $$\conf^t_{X' \cup N'}(R') = 0 \ \Leftrightarrow\ \conf^t_{X'}(R' \cup N') = 0.$$
\end{lemma}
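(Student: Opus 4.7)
The plan rests on two structural observations about the auxiliary vertices in $N'$. First, since $N$ is a set of non-$K_t$-vertices of $G[R]$, no $t$-clique of $G[R]$—and in particular no $t$-clique of the subgraph $G[R' \cup N']$—contains a vertex of $N' \subseteq N$. Hence the $t$-cliques of $G[R' \cup N']$ are exactly those of $G[R']$, which already gives $\opt(G[R' \cup N']) = \opt(G[R'])$. Second, since $X' \cup N'$ contains no $t$-clique by hypothesis, every $t$-clique $K$ in $G[X' \cup R' \cup N']$ that meets $N'$ must also meet $X'$ (otherwise $K \subseteq N' \cup R' \subseteq R$ would be a $t$-clique of $G[R]$ using an $N$-vertex), and symmetrically every such $K$ that meets $X'$ or $N'$ must meet $R'$ (otherwise $K \subseteq X' \cup N'$). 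So any $K$ contributing a set to $\pr^t_{X'}(R' \cup N')$ automatically contributes $K \cap R'$ to $\pr^t_{X' \cup N'}(R')$, and vice versa.

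For the forward direction, I would start from a witness $S \subseteq R'$ of size $\opt(G[R'])$ that hits all $t$-cliques of $G[R']$ and all sets in $\pr^t_{X' \cup N'}(R')$, and argue that the same $S$ is a valid solution for $(G[R' \cup N'], \pr^t_{X'}(R' \cup N'))$. Since the $t$-cliques coincide across the two subgraphs, only the projection sets need checking: for $K \cap (R' \cup N') \in \pr^t_{X'}(R' \cup N')$ the second observation yields $K \cap R' \in \pr^t_{X' \cup N'}(R')$, and $S$ already hits this subset of $K \cap (R' \cup N')$. This would give $\opt(G[R' \cup N'], \pr^t_{X'}(R' \cup N')) \le \opt(G[R']) = \opt(G[R' \cup N'])$ and hence $\conf^t_{X'}(R' \cup N') = 0$.

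The backward direction is a touch more delicate, and this is where I expect the only real obstacle to lie: an optimal solution for $(G[R' \cup N'], \pr^t_{X'}(R' \cup N'))$ is a priori allowed to use vertices of $N'$, whereas a solution of the ``other'' problem must live inside $R'$. The plan is to take an optimal $S \subseteq R' \cup N'$ of size $\opt(G[R'])$ and consider $S' := S \cap R'$; since $S'$ still hits every $t$-clique of $G[R']$, we have $|S'| \ge \opt(G[R'])$, which combined with $|S'| \le |S| = \opt(G[R'])$ forces $S' = S$, i.e., $S \subseteq R'$ for free. Having pushed $S$ into $R'$, the second observation converts each set of $\pr^t_{X' \cup N'}(R')$ into a set of $\pr^t_{X'}(R' \cup N')$, so $S$ hits every set of $\pr^t_{X' \cup N'}(R')$ and therefore certifies $\conf^t_{X' \cup N'}(R') = 0$, completing the equivalence.
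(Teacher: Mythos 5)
Your proof is correct and takes essentially the same route as the paper's. Both proofs rest on the facts that $\opt(G[R' \cup N']) = \opt(G[R'])$ and that an optimal hitting set for the annotated instance on $G[R' \cup N']$ cannot afford to use any vertex of $N'$ (you establish this via the clean $S' = S \cap R'$ argument, the paper via the observation that $S \cap N' = \emptyset$). Your ``second observation,'' packaging the correspondence between the two families of projections into a single two-way statement, tidily unifies the case analysis that the paper does separately inside each implication, but it is the same underlying bookkeeping.
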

\begin{proof}
  Note first that, by definition of $N'$ and $N$, $\opt(G[R' \cup N'])=\opt(G[R'])$.

We first prove the forward implication.  As $\opt(G[R'],\pr^t_{X' \cup N'}(R'))=\opt(G[R'])$, there exists an optimal solution $S$ of $G[R]$ that intersects any $Z \in \pr^t_{X' \cup N'}(R'))$
  Let us prove that $S$ is an optimal solution of $G[R' \cup N']$ that intersects any $Z \in \pr^t_{X'}(R' \cup N'))$.
  Firstly, $|S|=\opt(G[R'])=\opt(G[R' \cup N'])$.
  Secondly, as any $t$-clique $K$ of $G[R' \cup N']$ is such that $K \subseteq R'$, we deduce that $S$ intersects all $t$-cliques of $G[R' \cup N']$.
  Finally, let $Z \in \pr^t_{X'}(R' \cup N'))$. As $X' \cup N'$ does not contain a $t$-clique, we deduce that $Z \cap R' \neq \emptyset$, implying that $Z \setminus N' \in \pr^t_{X' \cup N'}(R')$
  and that $S \cap (Z \setminus N') \neq \emptyset$.

  We now turn to the proof of the backward implication.  As $\opt(G[R' \cup N'],\pr^t_{X'}(R' \cup N'))=\opt(G[R' \cup N'])$, there exists an optimal solution $S$ of $G[R' \cup N']$ that intersects any $Z \in \pr^t_{X'}(R' \cup N'))$.
  Let us prove that $S$ is an optimal solution of $G[R']$ that intersects any $Z \in \pr^t_{X' \cup N'}(R'))$.
  Firstly, $|S|=\opt(G[R' \cup N'])=\opt(G[R'])$.
  Secondly, $S$ intersects any $t$-clique of $G[R' \cup N']$, and thus any $t$-clique of $G[R']$.
  Finally, let $Z \in \pr^t_{X' \cup N'}(R'))$. By definition of $\pr^t$, there exists a $t$-clique $K$ such that $K \cap (X' \cup N') \neq \emptyset$,
  and $K \cap R' = Z$. By definition of $N$, we know that $K \cap X' \neq \emptyset$. This implies that $Z \cup (K \cap N') \in \pr^t_{X'}(R' \cup N'))$, and thus
  $S \cap (Z \cup (K \cap N')) \neq \emptyset$. Moreover, as $S$ is an optimal solution of $G[R' \cup N']$, and no vertex of $N'$ belongs to a $t$-clique in $G[R' \cup N']$,
  we get that $S \cap N' = \emptyset$. Thus, we obtain $S \cap Z \neq \emptyset$.
  \end{proof}

\begin{lemma}[safeness of Step 2]\label{lemma:step2safe}
    Let $(G,X,k)$ be an input of $\KtHMdecbis{\lambda}$, $N$ be a set of non-$K_t$-vertices of $G-X$,  $\T$ be a root of $G-X-N$, $v$ and $(G',k')$ as computed in \autoref{def:step2}.
   Then, the instances $(G,X,k)$ and $(G',X,k')$ of $\KtHMdecbis{\lambda}$ are equivalent.
 \end{lemma}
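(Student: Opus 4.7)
The forward implication is routine: for any solution $S$ of $(G, X, k)$, the set $S \cap C(v)$ hits every $t$-clique of $G[C(v)]$, so $|S \cap C(v)| \ge \opt(G[C(v)])$; consequently $S \setminus C(v)$ is a solution of $(G', X, k')$ of size at most $k - \opt(G[C(v)]) = k'$, since every $t$-clique of $G' = G - C(v)$ is a $t$-clique of $G$ disjoint from $C(v)$.

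For the backward direction, I fix a minimum-size solution $Z'$ of $(G', X, k')$ and consider the family $\F_v := \{K \cap C(v) : K \text{ is a } t\text{-clique of } G,\ K \cap C(v) \neq \emptyset,\ K \cap Z' = \emptyset\}$. If $\F_v$ is not a blocking set of $G[C(v)]$, any optimal solution $S^\star$ of the $\EKtH$ instance $(G[C(v)], \F_v)$ has size $\opt(G[C(v)])$, and $Z := Z' \cup S^\star$ is a solution of $(G, X, k)$: it hits every $t$-clique of $G$ (cliques lying in $G'$ are hit by $Z'$; cliques in $G[C(v)]$ by $S^\star$; and remaining spread cliques $K$ with $K \cap Z' = \emptyset$ have $K \cap C(v) \in \F_v$, hit by $S^\star$), with size $\le k' + \opt(G[C(v)]) = k$. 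It therefore suffices to show that $\F_v$ is not a blocking set. Suppose toward a contradiction that it is. Since $\bedt(G[C(v)]) \le \bedt(G - X) \le \lambda$ by \autoref{obslambdamonotonesubgraph}, \autoref{thm:mmbslambda4} and Property~\ref{prop:mmbs} yield a minimal blocking subset $\overline{\F} \subseteq \F_v$ of size at most $\beta(\lambda, t)$; fix witness spread cliques $\{K_F\}_{F \in \overline{\F}}$ and set $\overline{Q} := \bigcup_F (K_F \setminus C(v))$, so that $|\overline{Q}| \le (t-1)\beta(\lambda,t) = c(\lambda,t)$, $\overline{Q} \cap Z' = \emptyset$, and $\conf^t_{\overline{Q}}(C(v)) > 0$ by \autoref{lemma:smallcertif}.

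A structural analysis based on \autoref{obs:lambda4} and \autoref{obs:cliquelambda4} (applied to the root $\T$ of $(G-X)-N$) shows that each $K_F$ satisfies $K_F \setminus C(v) \subseteq X \cup N \cup (V(\T) \setminus \{v\})$, and that $J_F := K_F \cap (V(\T) \setminus \{v\})$ forms a clique with $v$ of size at most $t-2$ (since $K_F$ must meet $X$, because $N$-vertices are non-$K_t$ in $G-X$). Writing $\overline{Q} = X' \cup N' \cup J$ accordingly and using \autoref{lemma:confXN} to transfer $N$-vertices between the arguments of $\conf^t$, I select a witness $K_F$ whose root part $J$ realizes the conflict, obtaining $X' \in \X$, $N' \subseteq N$ with $|N'| \le c(\lambda,t)$, and $J \subseteq V(\T) \setminus \{v\}$ (with $|J| \le t-2$ and $J \cup \{v\}$ a clique) that satisfy the hypothesis $\conf^t_{X'}(J(M) \cup C(J \setminus J(M) \cup v) \cup N') > 0$ of \autoref{lemma:markingalgo}. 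Since $v \notin M(\T, N, G, X)$ by the choice of $v$ in Step 2, that lemma returns a packing $\P^\star = \{(V_i, N_i)\}_{i=1}^{|X|+1}$ with $\T^{\P^\star} \subseteq M$, and each part carries a positive conflict against $X'$ inside the disjoint region $C(V_i) \cup N_i$. Because $Z' \cap X' = \emptyset$ (as $X' \subseteq \overline{Q} \subseteq V(G') \setminus Z'$), each of the $|X|+1$ disjoint parts forces $Z'$ to overspend by at least one vertex beyond the corresponding local optimum. Replacing this surplus by $X' \cup J(M) \cup \tilde N$ together with per-part optimal hitting sets (and exploiting the additivity from \autoref{obs:pendingadditive}) produces a solution of $G'$ strictly smaller than $Z'$, contradicting the minimality of $Z'$. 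The key technical hurdle I anticipate is this final restructuring: verifying that the replacement remains a valid solution of $G'$ (in particular, handling $t$-cliques that may span several parts $Q_i := J(M) \cup C(V_i) \cup \tilde N \cup N_i$, controlled through the root structure) and that the $|X|+1$ local savings truly exceed the total cost of including $X'$ (of size at most $c(\lambda,t)$) together with the shared pieces $J(M) \cup \tilde N$.
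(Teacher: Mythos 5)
Your forward direction and the construction $Z = Z' \cup S^\star$ in the ``$\F_v$ is not a blocking set'' branch are correct, and this is a genuinely cleaner decomposition than the paper's: the paper splits on the existence of a ``dangerous'' $J$ (its Cases~1 and~2), and its easy case requires the intertwined properties $p_3$, $p_4$ referencing the marked set $M$, whereas you obtain the completion directly from an optimal $\EKtH$ solution hitting $\F_v$. However, your contradiction branch has a genuine gap precisely where the two decompositions diverge.

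The step ``I select a witness $K_F$ whose root part $J$ realizes the conflict'' is not justified, and the difficulty is structural, not cosmetic. After extracting $\overline{\F}\subseteq\F_v$ of size $\le\beta(\lambda,t)$ and collecting $\overline{Q}=\bigcup_F(K_F\setminus C(v))$, you do have $\conf^t_{\overline{Q}}(C(v))>0$, but \autoref{lemma:markingalgo} needs a conflict of the form $\conf^t_{X'}\bigl(J(M)\cup C(J\setminus J(M)\cup\{v\})\cup N'\bigr)>0$ where $X'\subseteq X$ and the root vertices $J$ live in the \emph{second} argument of $\conf^t$. Moving the vertices $J_F=K_F\cap(V(\T)\setminus\{v\})$ from the first to the second argument changes the blocking-set statement: the projection of $K_F$ onto the enlarged region becomes $\{v\}\cup J_F\cup(K_F\cap N')$ rather than $F=\{v\}$, and an optimal hitting set of $G[S_2]$ can pick up an unmarked vertex of $J_F$ (which lies in some $C(u)$ with $u\ne v$) to hit this constraint without touching $C(v)$ at all. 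So ``$\overline{\F}$ blocks $G[C(v)]$'' does not imply the required conflict over the larger region. (Note also that $J_F$ can only be nonempty when $F=\{v\}$, by \autoref{obs:lambda4} applied to the root; this limits the damage but does not remove it, since there can be many distinct witness cliques for $F=\{v\}$ with incompatible root parts.) The paper sidesteps this by \emph{conditioning} on the existence of a $J$ for which the conflict over $J(M)\cup C((J\setminus J(M))\cup\{v\})$ is already known to be positive (its Case~1); with that in hand, \autoref{lemma:smallcertif} extracts $\overline{S}\subseteq(X\cup N)\setminus Z'$ (not involving $V(\T)$) and the invocation of \autoref{lemma:markingalgo} is immediate. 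Your case split on ``$\F_v$ blocking'' does not supply such a $J$, and bridging the two formulations is exactly the work that is missing. The restructuring you flag at the end is also left unverified; the paper handles it by replacing $Z'\cap R'$ with $X\cup Z^\star_{R'}$ (a solution of $G'$ by \autoref{obs:cliquelambda4}) and showing $|X|+\opt(G[R'])\le|Z'\cap R'|$, which you would additionally need to reproduce.
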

\begin{proof}
  Let us denote by $M=M(\T,N,G,X)$ and recall that  $R=V(G)-X$.
  Let us assume that $(G',X,k')$ is a \yes-instance and prove that $(G,X,k)$ is a \yes-instance as the other implication is direct.
  Let $Z'$ be a solution of $(G',X,k')$. Note that $(X \cup N)\setminus Z'$ does not contain any $t$-clique. Let $R'=R-C(v)$. Let us distinguish the following two cases.

 \paragraph*{Case 1: there exists $J \subseteq V(\T)$ such that $|J| \le t-2$, $J \cup \{v\}$ is a clique, $\conf^t_{(X \cup N)\setminus Z'}\left(J(M) \cup C((J \setminus J(M)) \cup \{v\})\right)>0$  and $Z' \cap J(M) = \emptyset$.}

\medskip

Let us prove the following ``overpay property''  $p_1: |Z' \cap R' | > \opt(G[R'])+|X|$.

By \autoref{obs:lambda4}, we have $\bedt(G[J(M) \cup C((J \setminus J(M)) \cup \{v\})]) \le \bedt(G[R]) \le \lambda$, implying by \autoref{thm:mmbslambda4} that $\mmbs_t(G[J(M) \cup C((J \setminus J(M)) \cup \{v\})]) \le \beta(\lambda,t)$.
By \autoref{lemma:smallcertif}, there exists $\overline{S} \subseteq (X \cup N)\setminus Z'$ such that  $\conf^t_{\overline{S}}(J(M) \cup C((J \setminus J(M)) \cup \{v\}))>0$ and $|\overline{S}| \le (t-1)\beta(\lambda,t)=c(\lambda,t)$. Let $X' = \overline{S} \cap X$ and $N' = \overline{S} \cap N$. By \autoref{lemma:confXN}, we get $\conf^t_{X'}(J(M) \cup C((J \setminus J(M)) \cup \{v\}) \cup N')>0$.

By \autoref{lemma:markingalgo}, there exist $\tilde{N} \subseteq N'$ and a packing $\P^\star$ of $(\T,N,X',c(\lambda,t)-|\tilde{N}|,\tilde{N},J(M))$-parts such that $|\P^\star|=|X|+1$ and $\T^{\P^\star} \subseteq M$.
Informally, the idea is that $Z'$ will overpay at least one in each part of $\P^\star$, implying $p_1$. Let us formalize this idea.

  By definition, $\P^\star=\{(V_i,N_i)\}$ is such the $V_i's$ and $N_i's$ are pairwise disjoint, and for any $i$, $\conf^t_{X'}(J(M) \cup C(V_i) \cup \tilde{N} \cup N_i)>0$.
  As $Z' \cap X' = \emptyset$, this implies that for any $i$, $|Z' \cap (J(M) \cup C(V_i) \cup \tilde{N} \cup N_i)| > \opt(G[J(M)\cup C(V_i) \cup \tilde{N} \cup N_i])$.
  Moreover, as $\tilde{N} \cup N_i \subseteq N$ and $N$ is the set of non-$K_t$-vertices of $G[R]$, we get that $\tilde{N} \cup N_i$ are still non-$K_t$-vertices of
  $G[J(M)\cup C(V_i) \cup \tilde{N} \cup N_i]$, implying $ \opt(G[J(M)\cup C(V_i) \cup \tilde{N} \cup N_i])= \opt(G[J(M)\cup C(V_i)])$.
  In addition, by definition of a part, $|V_i|+|J(M)| \le  t-1$, and by \autoref{obs:lambda4}, $J(M)$ also belongs to the set of non-$K_t$-vertices of $G[J(M)\cup C(V_i)]$,
  implying $\opt(G[J(M)\cup C(V_i)])=\opt(G[C(V_i)])$.
  As $Z' \cap \tilde{N} = \emptyset$ (because $Z' \cap N' = \emptyset$ and $\tilde{N} \subseteq N'$) and $Z' \cap J(M) = \emptyset$, $Z' \cap (J(M) \cup C(V_i) \cup \tilde{N} \cup N_i)=Z' \cap (C(V_i) \cup N_i)$.
  Thus, we get property $p_2$: for any $(V_i,N_i) \in \P^\star$, $|Z' \cap (C(V_i) \cup N_i)| > \opt(G[C(V_i)])$.

  Let us now define an appropriate partition of $R'$.
  Note that $G[(V(\T) \setminus \{v\}) \cup N]$ is also $K_t$-free in $R'$ and that for any $(V_i,N_i) \in \P^\star$, we have $V_i \subseteq V(\T) \setminus \{v\}$.
  As there is a partition $R'= \{(V(\T) \setminus \{v\}) \setminus \T^{\P^\star},  N\} \cup \{C(V_i), (V_i,N_i) \in \P^\star\}$, \autoref{obs:pendingadditive} implies that the corresponding  pending partition is additive, leading to $\opt(G[R']) = \opt(G[C((V(\T) \setminus \{v\}) \setminus \T^{\P^\star})]) + \sum_{(V_i,N_i) \in \P^\star} \opt(G[C(V_i)])$.


  On the other hand, we have:
  \begin{eqnarray*}
    |Z' \cap R'| & \ge & |Z' \cap (C((V(\T) \setminus \{v\}) \setminus \T^{\P^\star}))| + |Z' \cap (C(\T^{\P^\star} \cup N^{\P^\star}))|\\
             & \ge &  \opt(G[C((V(\T)\setminus \{v\}) \setminus \T^{\P^\star})]) + \sum_{(V_i,N_i) \in \P^\star} \opt(G[C(V_i)])+|\P^\star| \\
    & \ge & \opt(G[C((V(\T)\setminus \{v\}) \setminus \T^{\P^\star})])+ \sum_{(V_i,N_i) \in \P^\star} \opt(G[C(V_i)])+|X|+1 \\
    & \ge & \opt(G[R'])+|X|+1.
  \end{eqnarray*}
  This leads to the overpay property $p_1$.

  This property will now help us to define an appropriate solution $\tilde{Z'}$ of $G'$ which, unlike $Z'$, is easy to complete into a solution in $G$ of size at most $k$.
  Now, let us define $\tilde{Z'}=X \cup Z^\star_{R'}$, where $Z^\star_{R'}$ is an optimal solution of $G[R']$.
  We have $|\tilde{Z'}|=|X|+\opt(G[R']) \le |Z' \cap R'| \le |Z'| \le k'$. Let $Z = \tilde{Z'} \cup Z^\star_i$, where $Z^\star_i$ is an optimal solution of $G[C(V_i)]$.
  We get hat $|Z| \le k$. Moreover, $Z$ is a solution of $G$ as, in particular, for any $t$-clique $K$ of $G$ such that $K \cap C(V_i) \neq \emptyset$,
  either $K \cap X \neq \emptyset$ or $K \subseteq C(v_i)$ (according to \autoref{obs:lambda4}), and in both cases $Z \cap K \neq \emptyset$.

  \paragraph*{Case 2: for any $J \subseteq V(\T)$ such that $|J| \le t-2$, $J \cup \{v\}$ is a clique, $\conf^t_{(X \cup N)\setminus Z'}(J(M) \cup C((J \setminus J(M)) \cup \{v\}))=0$  or $Z' \cap J(M) \neq \emptyset$.}

\medskip

  In order to construct a solution $Z$ of $(G,k)$, let us prove the following properties $p_3$ and $p_4$ that assert the existence of ``good'' local solutions.
   ~\\

   Property $p_3$: there exists $Z^{\star}_{v} \subseteq C(v)$ such that
  \begin{enumerate}
  \item $|Z^{\star}_v|=\opt(G[C(v)])$ and
  \item for any $t$-clique $K' \subseteq C(v) \cup ((X \cup N) \setminus Z')$, $K' \cap Z^{\star}_v \neq \emptyset$.
  \end{enumerate}

\textit{Proof of property~$p_3$.}
  The hypothesis of Case 2 applied with $J = \emptyset$ gives us  $\conf^t_{(X \cup N)\setminus Z'}(C(v))=0$.
  By the definition of $\conf^t$, this is equivalent to
  $\opt(G[C((v)])=\opt(G[C(v)],pr^t_{(X \cup N)\setminus Z'}(C(v))$.
  Thus, we define $Z^\star_v$ as an optimal solution of $(G[C(v)],pr^t_{(X \cup N)\setminus Z'}(C(V))$.
  We have $|Z^\star_v|=\opt(G[C(v)])$. Let $K'$ be a $t$-clique with $K' \subseteq C(v) \cup ((X \cup N) \setminus Z')$. If $K' \subseteq C(u_K)$, then as $Z^{\star}_v$ is a solution of $G[C(v)]$,
  we get $K' \cap Z^{\star}_v \neq \emptyset$. Otherwise, as $K' \setminus (X \cup N) \in pr^t_{(X \cup N)\setminus Z'}(C(v))$, we know that $Z^\star_v \cap (K' \setminus (X \cup N)) \neq \emptyset$. This concludes the proof of property~$p_3$.\\

 Property $p_4$: for any $t$-clique $K$ of $G$ such that $K \cap Z' = \emptyset$ and $v \in K$, there exists $u_K \in V(\T)$  and $Z^{\star r}_{u_K}$ such that
  \begin{enumerate}
  \item $u_K \in Z^{\star r}_{u_K} \cap K$,
  \item $|Z^{\star r}_{u_K}|=\opt(G[C(u_K)])$, and
  \item for any $t$-clique $K' \subseteq C(u_K) \cup ((X \cup N) \setminus Z')$, $K' \cap Z^{\star r}_{u_K} \neq \emptyset$.
  \end{enumerate}

\textit{Proof of Property $p_4$.}
  Let $K$ be a $t$-clique as stated in $p_4$.
  Let $J^K = K \cap (V(\T) \setminus \{v\})$.


  As $G[V(\T)]$ is $K_t$-free, $|J^K| \le t-2$. Moreover, as $J^K \cup \{v\}$ is a clique,
  and $Z' \cap J^K(M) \subseteq Z' \cap K = \emptyset$, property of Case 2 applied with $J = J^K$  gives us  $\conf^t_{(X \cup N)\setminus Z'}(\tilde{V})=0$ where $\tilde{V}=J^K(M) \cup C((J^K \setminus J^K(M)) \cup \{v\})$.
  By the definition of $\conf^t$, this is equivalent to
  $\opt(G[\tilde{V}])=\opt(G[\tilde{V}],pr^t_{(X \cup N)\setminus Z'}(\tilde{V}))$.
  Let $Z^\star $ be an optimal solution of $(G[\tilde{V}],pr^t_{(X \cup N)\setminus Z'}(\tilde{V}))$.
  Note that according to \autoref{obs:lambda4}, vertices of $J^K(M)$ are non-$K_t$-vertices of $G[\tilde{V}]$, implying that $\opt(G[\tilde{V}])=\opt(G[C((J^K \setminus J^K(M)) \cup \{v\})])$,
 and by \autoref{obs:pendingadditive} that $\opt(G[\tilde{V}])=\sum_{v' \in (J^K \setminus J^K(M)) \cup \{v\}}\opt(G[C(v')])$.
  This equality implies that for any $v' \in (J^K \setminus J^K(M)) \cup \{v\}$, $|Z^\star  \cap C(v')|=\opt(G[C(v')])$, and that $Z^\star  \cap J^K(M) = \emptyset$.
  Moreover, as $K \cap R \in pr^t_{(X \cup N)\setminus Z'}(\tilde{V})$, we have that $Z^\star  \cap (K \cap R) \neq \emptyset$.
  As $K \cap R = J^K \cup \{v\} = J^K(M) \cup (J^K \setminus J^K(M)) \cup \{v\}$ and $Z^\star  \cap J^K(M) = \emptyset$, we deduce that there exists $u_K \in (J^K \setminus J^K(M)) \cup \{v\}$
  such that $u_K \in Z^\star  \cap K$. Now, define $Z^{\star r}_{u_K} = Z^\star  \cap C(u_K)$. The first two items of property $p_4$ are directly true, and let us prove the last one.
  Let $K'$ be a $t$-clique with $K' \subseteq C(u_K) \cup ((X \cup N) \setminus Z')$. If $u_K \in K'$ then we directly have $K' \cap Z^{\star r}_{u_K} \neq \emptyset$, so let us suppose that
  $u_K \notin K'$. If $K' \subseteq C(u_K)$, then as $Z^{\star r}_{u_K}$ is a solution of $G[C(u_K)]$,
  we get $K' \cap Z^{\star r}_{u_K} \neq \emptyset$. Otherwise, as $K' \setminus (X \cup N) \in pr^t_{(X \cup N)\setminus Z'}(\tilde{V})$, we know that $Z^\star  \cap (K' \setminus (X \cup N)) \neq \emptyset$.
  As $K' \setminus (X \cup N) \subseteq C(u_K)$, we get  $Z^\star  \cap (K' \setminus (X \cup N)) = Z^{\star r}_{u_K} \cap (K' \setminus (X \cup N))$, and thus the third item follows. This concludes the proof of property~$p_4$.\\

  We are now ready to define a solution $Z$ of $(G,k)$ be restructuring $Z'$ in the following way.
  Let $I = \{u_K \mid \mbox{ for $K$ a $t$-clique of $G$ such that $K \cap Z' = \emptyset$ and $v \in K$} \}$.
  We define $Z$ such that
  \begin{itemize}
  \item $Z \cap (X \cup N) = Z' \cap (X \cup N)$, and
  \item for any $u \in V(\T)$, $Z \cap C(u)$ is defined as follows:
    \begin{itemize}
    \item If $u \in I$, then $Z^{\star r}_u$.
    \item If $u \notin I$: if $u = v$ then $Z^\star _v$ and otherwise $Z' \cap C(u)$.
    \end{itemize}
  \end{itemize}

  Observe that we may have $v \in I$, and that we have property $p_5$: for any $u \in V(\T) \setminus \{v\}$, if $u \in Z'$, then  $u \in Z$.
  This property holds as either $u \in I$ and then $Z \cap C(u) = Z^{\star r}_u$, which contains $u_K$ by property $p_4$,
  or $u \notin I$ and then $Z \cap C(u) = Z' \cap C(u)$.

  Let us first bound $|Z|$. Note that for any $u \in V(\T)$, $u \neq v$, we always have $|Z \cap C(u)| \le |Z' \cap C(u)|$.
  Thus, $|Z|=|Z \cap (X \cup N)| + |Z \cup R| \le |Z' \cap (X \cup N)| + \sum_{u \in V(\T)\setminus \{v\}}|Z' \cap C(u)| + \opt(G[C(v)])=|Z'|+\opt(G[C(v)]) = k$.

  Let us now check that $Z$ is a solution of $G$.
  Let $K'$ be a $t$-clique. If $K' \cap (X \cup N) \cap Z' \neq \emptyset$, then $K' \cap Z \neq \emptyset$, so
  let us suppose that $K' \cap (X \cup N) \subseteq (X \cup N) \setminus Z'$. As $Z'$ is a solution of $G'$, we cannot have $K' \subseteq (X \cup N)$, implying $K' \cap R \neq \emptyset$.

  Suppose first that there exists $u \in V(\T)$ such that $K' \cap R \subseteq C(u)$.
  Then, if $u \in I$ then $u \in Z^{\star r}_u$ and we are done.
  Otherwise, if $u = v$ then by the second item of property~$p_3$ we get $Z^\star_v \cap K' \neq \emptyset$.
  Finally, if $u \in V(\T) \setminus (I \cup \{v\})$, then $Z \cap C(u) = Z' \cap C(u)$, and as $K' \subseteq V(G')$, we get $Z' \cap C(u) \cap K' \neq \emptyset$.

  Assume that there is no $u \in V(\T)$ such that $K' \cap R \subseteq C(u)$.
  Let $\tilde{J^K} = K' \cap R$. By \autoref{obs:lambda4}, we know that $\tilde{J^K} \subseteq V(\T)$.
  If $K' \cap Z' \neq \emptyset$, then there exists $u \in K' \cap Z'$ with $u \in V(\T)$, and by property~$p_5$ we get $K' \cap Z \neq \emptyset$.
  Otherwise, we know that $v \in \tilde{J^K}$ (as  otherwise $K' \subseteq V(G')$ and  we could not have $K' \cap Z' = \emptyset$), and thus let us write
  $K' \cap R = J^K \cup \{v\}$. Therefore, we get that $u_{K'} \in I$, and $Z \cap C(u_{K'})=Z^{\star r}_{u_{K'}}$ where $u_{K'} \in Z^{\star r}_{u_{K'}} \cap K'$ by property $p_4$.
  \end{proof}

\SetKwInput{KwData}{Input}
\SetKwInput{KwResult}{Output}


We are now ready to define the pseudo-code of the kernel.
In what follows we denote by {\tt kernel-size}$(G,k,t)$ the polynomial kernel for $K_t$-{\sc Hitting Set} parameterized by the solution size of size $\O(tk^{t-1})$ by Abu-Khzam~\cite{Abu-Khzam10}. We need the following observation.
\begin{observation}\label{obs:lambda0}
  Given an input $(G,X,k,\lambda)$ of $\KtHMdecbis{\lambda}$ where $\lambda=0$, as $\bedt(G-X)=0$ implies that $G-X$ is $K_t$-free,
  we can answer \yes if $k \ge |X|$. Thus, when $k < |X|$  {\tt kernel-size}$(G,k,t)$ of~\cite{Abu-Khzam10} provides a kernel of size $3t|X|^{t-1}$.
\end{observation}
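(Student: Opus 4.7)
The plan is to verify the three assertions packaged into the observation: that $\bedt(G-X)=0$ forces $G-X$ to be $K_t$-free, that this makes $(G,X,k)$ trivially a \yes-instance when $k\ge|X|$, and that otherwise Abu-Khzam's kernel for $K_t$-{\sc Hitting Set} is a valid kernel for $\KtHMdecbis{0}$ of the claimed size.

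First I would argue, by induction on $|V(G-X)|$ via \autoref{def:bedHplus}, that $\bedt(G')=0$ implies $G'$ is $K_t$-free. The base case $V(G')=\emptyset$ is immediate. In the inductive step, since $\bedt(G')=0$, the fourth (root-removal) clause of \autoref{def:bedHplus} cannot be the one that defines the value (as it contributes at least $1$), so one of the first three clauses applies. If $G'$ is disconnected, each component has $\bedt=0$ and is $K_t$-free by induction, so $G'$ is $K_t$-free. If $G'$ has a non-$K_t$-vertex $v$, then $\bedt(G'-v)=0$ and $G'-v$ is $K_t$-free by induction; since $v$ lies in no $K_t$ of $G'$, neither does $G'$. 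Applying this with $G'=G-X$ gives the first claim.

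Next, if $k\ge|X|$, I would take $S=X$ as a solution: every $t$-clique of $G$ must meet $X$ (since otherwise it would be a $t$-clique of $G-X$, contradicting $K_t$-freeness), and $|S|=|X|\le k$. So the instance is trivially \yes, and the kernel can output any fixed trivial \yes-instance of constant size.

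For the remaining case $k<|X|$, I would observe that the objective of $\KtHMdecbis{0}$---find $S\subseteq V(G)$ of size $\le k$ intersecting every $t$-clique of $G$---is exactly the $K_t$-{\sc Hitting Set} problem on $G$ with solution-size parameter $k$. Abu-Khzam's kernel~\cite{Abu-Khzam10}, invoked on $(G,k)$, outputs an equivalent instance $(G',k')$ with $|V(G')|=\O(tk^{t-1})$; we then output $(G',X\cap V(G'),k',0)$. Using $k<|X|$ yields the claimed bound $3t|X|^{t-1}$ (up to the constant hidden in Abu-Khzam's bound). The only subtlety is that the output must remain an instance of $\KtHMdecbis{0}$, i.e.\ the induced subgraph $G'-(X\cap V(G'))$ must still satisfy $\bedt\le 0$; this is inherited from $G-X$ via \autoref{obslambdamonotonesubgraph} since $G'-(X\cap V(G'))$ is an induced subgraph of $G-X$. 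The statement itself is essentially a bookkeeping base case, so I do not anticipate a real obstacle beyond checking that the modulator remains valid under the Abu-Khzam reduction.
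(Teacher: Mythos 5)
Your proposal is correct, and since the paper states this as an unproved observation (the reasoning being packed into the statement itself), you are essentially unpacking the intended argument rather than diverging from it. Two remarks. First, the induction showing $\bedt(G')=0 \Rightarrow G'$ is $K_t$-free is fine, but there is an even shorter route using the structure of \autoref{def:bedHplus}: if $G'$ contained a $K_t$, then after stripping the non-$K_t$-vertices and splitting into components one reaches a nonempty connected graph whose value is computed by the root-removal clause, forcing $\bedt(G') \geq 1$. Second, your worry about the output remaining a valid instance of $\KtHMdecbis{0}$ is legitimate, but note that Abu-Khzam's $d$-Hitting Set kernel operates on the hypergraph of $t$-cliques, so what it returns is a bounded-size hypergraph hitting-set instance rather than literally an induced subgraph of $G$; the paper silently uses it as a compression. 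Your appeal to \autoref{obslambdamonotonesubgraph} applies only in the variant where the kernel is realized by vertex deletion in $G$ (so the output is indeed an induced subgraph), which is a reasonable interpretation but worth stating explicitly. This is a minor presentation point and does not constitute a gap in your argument.
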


\medskip
\begin{algorithm}[!ht]
\SetEndCharOfAlgoLine{}
 \KwData{an input $(G,X,k,\lambda)$ of $\KtHMdecbis{\lambda}$}
 \KwResult{$A(G,X,k,\lambda)$ returns an equivalent instance of size polynomial (for fixed $t$ and $\lambda$) in $|X|$ (see \autoref{thm:kernel})}

	\vspace{.2cm}

 \If{$\lambda=0$}{
return {\tt kernel-size}$(G,X,k,t)$.  \tcc*{use the kernel in the standard parameterization as in \autoref{obs:lambda0}}
}
 Compute the set $N=N^t(G-X)$ of non-$K_t$-vertices of $G-X$ and a $\bedt$-root $\T$ of $G-X-N$ using \autoref{lemma:computeRoot}.

 Let $(G',k'):=(G,k)$, $\T' = \T$.

 \While{there exists $v \in V(\T') \setminus M(\T',N,G',X)$}{\tcc*{$M(\T',N,G',X)$ as in \autoref{def:step1}}
   $G' := G'-C(v)$.

   $k' :=k'-\opt(G[C(v)])$.

   $\T' := \{T \setminus \{v\}, T \in \T'\} \setminus \emptyset$. \tcc*{remove $v$ from the unique $T \in \T'$ containing it, and if $T = \{v\}$ then remove $T$ from $\T$}

 }
 \tcc*{we have $|V(\T')|=|M(\T',N,G',X)| \le p_3^{(\lambda,t)}(|X|)$ by \autoref{lemma:step1}}

 Let $X' = X \cup V(\T')$. \label{algo:end} \tcc*{$|X'|=|X|^{\O_{\lambda,t}(1)}$ and $\bedt(G-X') \le \lambda-1$}

 Return $A(G',X',k',\lambda-1)$.

 \caption{A polynomial kernel for $\KtHMdecbis{\lambda}$}
\label{algo:kernel}
\end{algorithm}

We are now ready to prove the main theorem of this section, whose statement here is an extended version of \autoref{thm:kernellight} claimed in the introduction.
\begin{theorem}\label{thm:kernel}
  \autoref{algo:kernel} is a polynomial kernel for $\KtHMdecbis{\lambda}$ running in time $\O_{\lambda,t}(n^{\O_t(\lambda^2c(\lambda,t)^2)})$ and whose output has size $\O_{\lambda,t}(|X|^{\delta(\lambda,t)})$, where
  $\delta(\lambda,t)=(t-1)(2c(\lambda,t)+2)^{\lambda}$, and $c(\lambda,t)$ is the upper bound on the size of chunks coming from \autoref{def:chunk}.
\end{theorem}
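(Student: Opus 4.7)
The plan is to prove \autoref{thm:kernel} by induction on $\lambda$, using the previously established lemmas as building blocks. The base case $\lambda=0$ is handled by \autoref{obs:lambda0}, which reduces to the Abu-Khzam kernel of size $3t|X|^{t-1} = |X|^{\delta(0,t)}$ when $\delta(0,t) = t-1$ (consistent with the formula since $(2c(0,t)+2)^0 = 1$). For the inductive step with $\lambda \geq 1$, the safeness of the recursive call is the heart of the argument and decomposes into three independent checks, followed by a mechanical size/running-time computation.

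First, I will verify that each operation of \autoref{algo:kernel} is well-defined, safe, and polynomial-time. The computation of $N=N^t(G-X)$ is trivially polynomial, and \autoref{lemma:computeRoot} guarantees that a $\bedt$-root $\T$ of $G-X-N$ can be computed in polynomial time. The while-loop body removes a pending component $C(v)$ for some unmarked vertex $v \in V(\T')$; its safeness is exactly \autoref{lemma:step2safe}, and its running time is bounded by \autoref{lemma:runinngTime}. The loop terminates after at most $n$ iterations since each iteration strictly decreases $|V(G')|$, and the marking step itself is polynomial by \autoref{lemma:step1}. So after the loop, $V(\T') \subseteq M(\T',N,G',X)$ and hence $|V(\T')| \leq p_3^{(\lambda,t)}(|X|)$, yielding $|X'| \leq |X| + p_3^{(\lambda,t)}(|X|) = |X|^{\O_{\lambda,t}(1)}$.

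The key structural step (and the most delicate part of the proof) is to argue that $\bedt(G'-X') \leq \lambda - 1$, so that the recursive call $A(G',X',k',\lambda-1)$ receives a valid input of $\KtHMdecbis{\lambda-1}$. By definition of a $\bedt$-root (\autoref{definition:lambdaroot}), we have $\bedt((G-X-N)-V(\T)) = \bedt(G-X-N)-1 \leq \lambda-1$, where $\bedt(G-X) = \bedt(G-X-N)$ because vertices in $N$ are non-$K_t$-vertices and can be removed ``for free'' by the second case of \autoref{def:bedHplus}. Every vertex of $N$ remains a non-$K_t$-vertex in the induced subgraph $G'-X'$, so by the same free-removal case, $\bedt(G'-X') = \bedt((G'-X')-N)$. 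The vertex set of $(G'-X')-N$ equals $\bigcup_{v \in V(\T')}(C(v)\setminus\{v\})$, which is a subset of $V((G-X-N)-V(\T)) = \bigcup_{v \in V(\T)}(C(v)\setminus\{v\})$, so by \autoref{obslambdamonotonesubgraph} (monotonicity under induced subgraphs) we obtain $\bedt((G'-X')-N) \leq \bedt((G-X-N)-V(\T)) \leq \lambda-1$, as required.

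Finally, I will bound the output size by induction. By induction hypothesis, the recursive call returns an instance of size $\O_{\lambda,t}(|X'|^{\delta(\lambda-1,t)})$. Using the bound $|X'| \leq (|X|+1)\cdot p_3^{(\lambda,t)}(|X|)$ and unfolding $p_3^{(\lambda,t)}$ (which involves $|\X| \leq |X|^{c(\lambda,t)}$ and $p_2^{(\lambda,t)}(|X|) \leq |X|^{c(\lambda,t)+\O(1)}$), one checks that $|X'| \leq |X|^{(2c(\lambda,t)+2)}$ up to hidden constants depending on $\lambda,t$; combined with $\delta(\lambda-1,t) \leq (t-1)(2c(\lambda,t)+2)^{\lambda-1}$ (using that $c$ is non-decreasing in $\lambda$), this gives overall size $|X|^{(t-1)(2c(\lambda,t)+2)^\lambda} = |X|^{\delta(\lambda,t)}$. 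The running time bound $\O_{\lambda,t}(n^{\O_t(\lambda^2 c(\lambda,t)^2)})$ follows by summing over the $\O(n)$ loop iterations (each polynomial by \autoref{lemma:step1} and \autoref{lemma:runinngTime}) and then over the $\lambda$ levels of recursion, where at each level the modulator and graph still have size polynomial in the original $n$. The main obstacle, as anticipated, is the induced-subgraph argument of the third paragraph: one must resist the temptation to think of $V(\T')$ as ``the unmarked survivors of $V(\T)$ in an unchanged graph'', since pending components have been deleted, and it is only via monotonicity of $\bedt$ under induced subgraphs combined with the free-removal of $N$ that the bound $\bedt(G'-X')\leq \lambda-1$ comes through cleanly.
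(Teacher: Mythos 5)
Your proposal is correct and follows essentially the same route as the paper's own proof: induction on $\lambda$, the base case via \autoref{obs:lambda0}, the while-loop justified by \autoref{lemma:step2safe}, the crucial bound $\bedt(G'-X')\le\lambda-1$ obtained by first discarding the free vertices $N$ and then invoking monotonicity of $\bedt$ under induced subgraphs (\autoref{obslambdamonotonesubgraph}) together with the $\bedt$-root property, and the size recursion unfolding to $|X|^{(t-1)(2c(\lambda,t)+2)^\lambda}$ via the monotonicity of $c(\cdot,t)$. The only cosmetic difference is that you strip $N$ before passing to an induced subgraph while the paper passes to the subgraph first and removes $N^t(G-X)$ afterwards, and you state the while-loop preconditions-preservation somewhat more tersely than the paper's explicit five-part loop invariant — but the mathematical content and the computed exponent are the same.
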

\begin{proof}
  Let us start with the correctness.
  Let $\T$ be the $\bedt$-root of $G-X-N^t(G-X)$ computed by the algorithm.
  We claim that, during the while loop of \autoref{algo:kernel}, the following invariant remains true (where informally each loop adds one vertex $v$ to $V'$):
  for any $(G',\T',k')$ computed during the while loop, there exists $V' \subseteq V(\T)$ such that
  \begin{enumerate}
  \item\label{loop:inv1} $G'=G - C(V')$,
  \item\label{loop:inv2}  $k'=k-\sum_{v \in V'}\opt(G[C(v)])$,
   \item\label{loop:inv3} $(G,k)$ and $(G',k')$ are equivalent,
  \item\label{loop:inv4}   $V(\T') = V(\T) - V'$, and
   \item\label{loop:inv5} $N$ is a set of non-$K_t$-vertices of $G'-X$ and $\T'$ is a root of $G'-X-N$.
  \end{enumerate}

 Notice first that \autoref{loop:inv1}, \autoref{loop:inv2}, and \autoref{loop:inv4} trivially hold.
 Then, \autoref{loop:inv3} holds according to \autoref{lemma:step2safe} and the second part of \autoref{loop:inv5} holds by definition of a root.
Note that \autoref{loop:inv5} implies that $(\T',X,G',X)$ respects conditions required in \autoref{def:step1} to compute $M(\T',N,G',X)$.

Let us now prove that at the end of the while loop, $\bedt(G'-X') \le \lambda-1$.
(we use $\subseteq_i$ to denote induced subgraph in the next equation)

\begin{align*}
  G'-X' & =  G'-(V(\T') \cup X) \\
  & = G-\{C(V') \cup V(\T') \cup X\} && \text{ by~\autoref{loop:inv1}}\\
  & \subseteq_i  G-\{V' \cup V(\T') \cup X\} && \text{as $V' \subseteq C(V')$} \\
  & =  G-\{V(\T) \cup X\} &&\text{ by~\autoref{loop:inv4}.}
\end{align*}

Thus, using \autoref{obslambdamonotonesubgraph} we get $\bedt(G'-X') \le \bedt(G-\{V(\T) \cup X\})$.
Moreover,
\begin{align}
  \bedt(G-\{V(\T) \cup X\}) &\le \bedt(G-\{V(\T) \cup X \cup N^t(G-X)\}) \\
    & <   \bedt(G-\{X \cup N^t(G-X)\}) \\
  & =  \bedt(G-X) = \lambda.
\end{align}
Where inequality (2) holds as $N^t(G-X)$ is still a set of non-$K_t$-vertices in $G-\{V(\T) \cup X\}$,
and (3) as $\T$ is a $\bedt$-root of $G-X-N^t(G-X)$.

Thus, we get the claimed property that, at the end of the while loop, $\bedt(G'-X') \le \lambda-1$, and thus by using \autoref{loop:inv3} and a straightforward induction on $\lambda$ we get
that \autoref{algo:kernel} is correct.

Let us analyze the size of the kernel.
  For any $t \ge 3$, let $p_3^{(0,t)}(x)=3tx^{t-1}$ and
  $f^{(\lambda,t)}(x)=p_3^{(0,t)}(2p_3^{(1,t)}(\dots(2p_3^{(\lambda,t)}(x))\dots))$.
  Let us show by induction on $\lambda$ that for any input $(G,X,k,\lambda)$, $A(G,X,k,\lambda)$ returns a subset of vertices of $G$ of size $f^{(\lambda,t)}(|X|)$.
  The result is clear for $\lambda=0$ by \autoref{obs:lambda0}, and given $\lambda \ge 1$, by induction hypothesis $A(G,X,k,\lambda)$ returns a subset of vertices of size
  $f^{(\lambda-1,t)}(|X|+p_3^{(\lambda,t)}(|X|))\le f^{(\lambda-1,t)}(2p_3^{(\lambda,t)}(|X|)) = f^{(\lambda,t)}(|X|)$.

  Now, there exists constants $\alpha_2$ and $\alpha_3$,
  depending on $t$ and $\lambda$, such that
  \begin{itemize}
  \item $p_2^{(\lambda,t)}(x)=\alpha_2x^{\delta_2}$ where $\delta_2=c(\lambda,t)+2$, and
  \item $p_3^{(\lambda,t)}(x)=\alpha_3x^{\delta_3}$ where $\delta_3=\delta_2+c(\lambda,t)=2c(\lambda,t)+2$.
  \end{itemize}

  Thus, as $f^{(\lambda,t)}(x)\le p_3^{(0,t)}(2p_3^{(\lambda,t)}(\dots(2p_3^{(\lambda,t)}(x))\dots))$, we get that there exists a constant $\alpha_4$
  such that $f^{(\lambda,t)}(x)\le \alpha_4x^{(t-1)\delta_3^{\lambda}}$.

  Finally, let us analyze the running time.
  Let $r(n,\lambda)$ be the upper bound on complexity to any call of kernel algorithm with input $(G,X,k,\lambda)$, where $|V(G)|=n$.
  Let us first prove that there exists $f^{(\lambda,t)}(n)=\O_{\lambda,t}(n^{\O_t(\lambda^2c(\lambda,t)^2)})$ such that $r(n,\lambda)=f^{(\lambda,t)}(n)+r(n,\lambda-1)$, leading to the claimed complexity.

Before starting the analysis, notice that polynomials $p_0$, $p_1$ and $p_4$ defined respectively in \autoref{lemma:computeRoot}, \autoref{lemma:marking}, and \autoref{lemma:step1} are such that:
  \begin{itemize}
      \item $p_0^{(\lambda,t)}(n)=n^{\O_t(\lambda^2)}$,
      \item $p_1^{(\lambda,t)}(n)=\left(n^{t+c(\lambda,t)+\O(1)}c(\lambda,t)p_0^{(\lambda,t)}(n)\right)^{c(\lambda,t)+1}=\O_{\lambda,t}(n^{\O_t(\lambda^2c(\lambda,t)^2)})$, and
      \item $p_4^{(\lambda,t)}(n)=(t-1)\left((t-2)|\X|p_1^{(\lambda,t)}(n)\right)^{(t-1)^{(t-1)}}=\O_{\lambda,t}(n^{\O_t(\lambda^2c(\lambda,t)^2)})$.
  \end{itemize}

 Now, consider a call with input $(G,X,k,\lambda)$. Let $f^{(\lambda,t)}(n)$ be the complexity of all operations performed before the recursive call.
  One can observe that $f^{(\lambda,t)}(n)$ can be bounded by $\O^\star(n^t)$ (to compute $N^t(G-X)$), plus $n^{\O_t(\lambda)}$ (to compute a $\bedt$-root using \autoref{lemma:computeRoot}), plus $n(p_4^{(\lambda,t)}(n)+p_0^{(\lambda,t)}(n))$ (for the while loop).

 This leads to
 \begin{eqnarray*}
     f^{(\lambda,t)}(n) & = & \O^\star(n^t)+n^{\O_t(\lambda)}+n\left(\O_{\lambda,t}(n^{\O_t(\lambda^2c(\lambda,t)^2)})+n^{\O(\lambda^2)}\right) \\
      & = & \O_{\lambda,t}(n^{\O_t(\lambda^2c(\lambda,t)^2)}).
 \end{eqnarray*}
\end{proof}

To conclude this section, we would like to observe that, by our hardness results (\autoref{thm:hardness-lambda1} and \autoref{thm:hardness-treedepth}), the kernelization algorithm that we have presented \textit{needs} to exploit the fact that the graph $H$ that we want to hit is a clique. In particular, the following lemmas of the previous section about upper-bounding the size of minimal blocking sets do not hold anymore as soon as
$H$ is not a clique, and thus (see the discussion before \autoref{thm:hardness-lambda1}) when ``a set of two isolated vertices can be a part of blocking set". In particular,  \autoref{lemma:bszooming} fails as there is no reason than we can zoom in one $C(v)$, and \autoref{lemma:bscleanCC} fails as we no longer have $\B = \bigcup_{i \in [c]}(\B \cap C_i)$. Also, another place where we exploit that $H$ is a clique is in the proof of \autoref{lemma:mmbstd} to obtain the improved bound in graphs of bounded treedepth.

\section{Computing a $\bedt$-root and and an optimal solution in graphs with bounded $\bedt$}
\label{sec:computeHittingSet}

In this section we present some results for graphs where the parameter $\bedt$ is bounded by a constant, and which are needed in our kernelization algorithm. Namely, in \autoref{sec:computing-root} we show (\autoref{lemma:computeRoot}) how to compute a $\bedt$-root of a graph~$G$ with~$\bedt(G) \leq \lambda$ in polynomial time for fixed $t$ and $\lambda$. We need this ingredient in the kernel where we recursively call it with decreasing values of $\lambda$.  This algorithm is in fact a byproduct of an \XP-algorithm (\autoref{lem:compute:bedt}) to compute~$\bedt$ with the natural parameterization.

\smallskip

Then, in \autoref{sec:computing-optimum-bounded-lambda} we present an algorithm (\autoref{lemma:poly-opt-conf-lambda4}) that computes in polynomial time, again for fixed $t$ and $\lambda$, an optimum solution of $K_t$-{\sc Subgraph Hitting} in graphs with bounded $\bedt$, and that decides whether  an instance $(G,\F)$ of $\EKtH$ is clean. These ingredients are crucially used in the kernelization algorithm to compute optimal solutions in the pending components $C(v)$ considered by the algorithm, and for deciding whether a conflict is positive.

\medskip

For the sake of readability, in this section we use self-contained notation for the annotated problem that we consider.
Let~$\H$ be a collection of subgraphs of a graph~$G$. For an integer~$t$, define \[\sol_t(G, \H) = \{S \subseteq V(G) \mid G-S \text{ is $K_t$-free and $S \cap V(H) \neq \emptyset$ for each~$H \in \H$}\}.\]
For~$v \in V(G)$, we additionally define \[\optsolwith_t(G,\H,v) = \{S \in \optsol_t(G,\H) \mid v \in S\}.\]
Let~$\opt(G,\H) = \min _{S \in \sol_t(G,\H)} |S|$ and~$\optsol_t(G,\H) = \{S \in \sol_t(G,\H) \mid |S| = \opt(G,\H)\}$. We define~$\opt(G) = \opt(G,\emptyset)$, and similarly let~$\sol_t(G) = \sol_t(G,\emptyset)$ and~$\optsol_t(G) = \optsol_t(G,\emptyset)$.

For a collection~$\H$ of subgraphs of~$G$ and a subset of vertices~$V'$ of~$G$, we denote by~$\H \cap V'$ the collection of subgraphs~$H \in \H$ which are fully contained in~$V'$.

\subsection{Computing a $\bedt$-root}
\label{sec:computing-root}

In this section we show that for fixed~$\lambda$, we can compute a $\bedt$-root for a graph~$G$ with~$\bedt(G) \leq \lambda$ and~$N^t(G) = \emptyset$ in polynomial time (recall that $N^t(G)$ is the set of non-$K_t$-vertices of $G$). It is a byproduct of an \XP-algorithm to compute~$\bedt$ with the natural parameterization. Before presenting that algorithm, we give some structural lemmas about the behavior of roots.

\begin{lemma} \label{lemma:root:all:or:one}
Let~$G$ be a connected undirected graph and let~$T$ be a vertex set such that each connected component of~$G-T$ is adjacent to exactly one vertex of~$T$. For each biconnected component~$F$ of~$G$ on at least three vertices, the following holds: if~$|T \cap F| > 1$, then~$F \subseteq T$.
\end{lemma}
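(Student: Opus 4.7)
The plan is to argue by contradiction: assume there exists $F$, a biconnected component of $G$ with $|V(F)| \geq 3$, satisfying $|T \cap F| > 1$, yet containing some vertex $v \in F \setminus T$. Pick two distinct vertices $t_1, t_2 \in T \cap F$, and let $C$ be the connected component of $G - T$ that contains $v$. By hypothesis on $T$, the component $C$ has exactly one neighbor $t \in T$.

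The key structural observation I would establish first is the following. For any $u \in (F \cap T) \setminus \{v\}$, and any $v$--$u$ path $P$ with all vertices in $F$, if $w$ denotes the first vertex of $P$ (traversing from $v$) that lies in $T$, then $w = t$. Indeed, the subpath of $P$ from $v$ up to (but excluding) $w$ avoids $T$, contains $v \in C$, and is connected, hence lies in $C$; therefore its last vertex is adjacent to $w \in T$, so $w$ is a neighbor of $C$, forcing $w = t$ by the single-neighbor hypothesis.

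Next, I apply this observation with $u = t_1$. Since $F$ is $2$-connected (biconnected with $|V(F)| \geq 3$), Menger's theorem provides two internally vertex-disjoint $v$--$t_1$ paths $P_1, P_2$ in $F$. Let $w_1, w_2$ be the respective first $T$-vertices; the observation gives $w_1 = w_2 = t$. I then case-split:
\begin{itemize}
\item If both $w_1, w_2$ are internal vertices of their paths, internal disjointness forces $w_1 \neq w_2$, contradicting $w_1 = w_2 = t$.
\item If exactly one of them, say $w_1$, equals the endpoint $t_1$ while $w_2$ is internal, then $w_2 = t = w_1 = t_1$, contradicting that $w_2$ is an internal vertex of $P_2$ (whose endpoint is $t_1$).
\item The only remaining possibility is $w_1 = w_2 = t_1$, which yields $t = t_1$.
\end{itemize}
Repeating the identical argument with $t_2$ in place of $t_1$ forces $t = t_2$ as well, contradicting $t_1 \neq t_2$. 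This contradiction proves that no such $v$ exists, so $F \subseteq T$.

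The main obstacle, though not a difficult one, is handling the endpoint/interior case analysis for the two paths $P_1, P_2$ cleanly; in particular, one must be careful that the "first $T$-vertex on $P_i$" is well-defined (it exists because the endpoint $t_1$ lies in $T$) and that the subpath preceding it genuinely stays inside the component $C$ of $G - T$. The rest is a direct appeal to Menger's theorem in the $2$-connected graph~$F$.
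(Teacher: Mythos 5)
Your proof is correct and follows essentially the same approach as the paper: assume a vertex of $F$ lies outside $T$, consider its component $C$ of $G-T$ with its unique neighbor $t \in T$, and use Menger's theorem in the 2-connected block $F$ to produce two internally disjoint paths that would force $C$ to touch a second vertex of $T$. The paper's version is marginally shorter because it picks $r \in (F \cap T) \setminus \{t\}$ up front (so one of the two disjoint paths must avoid $t$ entirely, immediately yielding a second neighbor $t'\neq t$ of $C$), whereas you derive $t=t_1$ and $t=t_2$ separately via a small endpoint/interior case analysis; both routes are valid.
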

\begin{proof}
Suppose for a contradiction that~$T$ contains more than one, but not all, vertices of a biconnected component~$F$ on at least three vertices. Let~$u \in F \setminus T$ and consider the connected component~$C$ of~$G-T$ that contains~$u$. By the stated assumption on~$T$, the component~$C$ has exactly one neighbor~$t \in T$. Let~$r \in (F \cap T) \setminus \{t\}$ be another vertex from~$T$ inside the biconnected component~$F$. Since~$F$ is biconnected and has at least three vertices, it is 2-connected. Hence, by Menger's theorem, there are two internally vertex-disjoint paths~$P_1, P_2$ from~$u$ to~$r$. At least one of these paths, say~$P_1$, does not visit vertex~$t$. Now consider the first time that~$P_1$, which starts in~$u \notin T$ and ends in~$T$, visits a vertex~$t' \in T$. We have~$t' \neq t$ as~$t$ does not lie on~$P_1$. Since the prefix of~$P_1$ consisting of the subpath until reaching~$t'$ is disjoint from~$T$, the prefix is contained entirely inside one connected component of~$G-T$. That component also contains~$u$ and is therefore~$C$. But this shows that component~$C$ is adjacent to at least two distinct vertices of~$T$, namely~$t$ and~$t'$; a contradiction to our assumption on~$T$.
\end{proof}

\begin{lemma} \label{lemma:union:roots}
Let~$G$ be a connected undirected graph and let~$T_1, T_2$ be vertex sets such that for each~$i \in [2]$, the graph~$G[T_i]$ is connected and each connected component of~$G-T_i$ is adjacent to exactly one vertex of~$T_i$. If~$T_1 \cap T_2 \neq \emptyset$, then each connected component of~$G - (T_1 \cup T_2)$ is adjacent to exactly one vertex of~$(T_1 \cup T_2)$.
\end{lemma}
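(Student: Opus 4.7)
Fix a connected component $C$ of $G-(T_1\cup T_2)$. The first step is to bound $N(C)\cap (T_1\cup T_2)$ from above. Since removing $T_1\cup T_2$ is a refinement of removing only $T_i$, the component $C$ is contained in a unique connected component $C_i$ of $G-T_i$ for each $i\in[2]$. Any neighbor of $C$ in $T_i$ is also a neighbor of $C_i$ in $T_i$, and the hypothesis says this neighbor set is the singleton $\{t_i\}$. Hence $N(C)\cap T_i\subseteq \{t_i\}$ and $|N(C)\cap (T_1\cup T_2)|\le 2$. For the lower bound, because $G$ is connected and $T_1\cap T_2\neq\emptyset$ makes $T_1\cup T_2$ nonempty, no component of $G-(T_1\cup T_2)$ can be isolated from $T_1\cup T_2$, so $N(C)\cap(T_1\cup T_2)\neq\emptyset$.

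The remaining and delicate task is to upgrade ``at most two'' to ``exactly one''. If $t_1=t_2$ this is immediate, so assume $t_1\neq t_2$ and, for contradiction, that $|N(C)\cap(T_1\cup T_2)|=2$. Then necessarily $t_1\in N(C)\cap T_1$ and $t_2\in N(C)\cap T_2$ are distinct; moreover $t_1\notin T_2$ (otherwise $t_1$ would belong to $N(C_2)\cap T_2=\{t_2\}$, forcing $t_1=t_2$), and symmetrically $t_2\notin T_1$. Since $t_1\notin T_2$ and $t_1$ is adjacent to some vertex of $C\subseteq C_2$, the vertex $t_1$ lies in the component $C_2$ of $G-T_2$.

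I then plan to exploit the connectedness of $G[T_1]$ together with the assumption $T_1\cap T_2\neq\emptyset$. Pick any $v\in T_1\cap T_2$ and a path $P=(u_0,u_1,\dots,u_k)$ in $G[T_1]$ with $u_0=t_1$ and $u_k=v$. Since $u_0=t_1\notin T_2$ and $u_k=v\in T_2$, there is a smallest index $j\ge 1$ with $u_j\in T_2$. The prefix $u_0,\dots,u_{j-1}$ is a connected subgraph of $V(G)\setminus T_2$ containing $t_1\in C_2$, so it sits entirely inside $C_2$. The edge $u_{j-1}u_j$ then witnesses $u_j\in N(C_2)\cap T_2=\{t_2\}$, so $u_j=t_2$. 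But $u_j\in T_1$, giving $t_2\in T_1$ and contradicting $t_2\notin T_1$.

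\textbf{Main obstacle.} The only nontrivial step is the one in the last paragraph: realizing that the hypothesis $T_1\cap T_2\neq\emptyset$ is precisely what lets one ``walk'' inside $T_1$ from $t_1$ to the other side and detect the forbidden extra attachment $t_2$. Everything else is routine bookkeeping about which component of $G-T_i$ a given vertex lies in.
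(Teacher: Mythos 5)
Your proof is correct, and it takes a genuinely different route from the paper's. After the shared bookkeeping (ruling out $t_1,t_2$ in a common $T_i$, hence $t_1\notin T_2$, $t_2\notin T_1$), the paper builds a simple cycle: a path from $t_1$ to $t_2$ through $T_1\cup T_2$ (forced to pass through $T_1\cap T_2$), joined with a path through $C$; it then observes the cycle lies in one biconnected component $F$ and invokes the auxiliary Lemma~\ref{lemma:root:all:or:one} (which itself relies on Menger's theorem) to conclude $T_1$ must contain all of $F$, contradicting $c\notin T_1$. You bypass the cycle and the biconnectivity machinery entirely: you walk inside $G[T_1]$ from $t_1$ to a vertex of $T_1\cap T_2$, note that the prefix before the first entry into $T_2$ stays inside $C_2$ (the component of $G-T_2$ containing $C$), so by the ``exactly one attachment'' hypothesis for $T_2$ the first entry point must be $t_2$; but that point lies on a path in $G[T_1]$, forcing $t_2\in T_1$ and contradicting what was already derived. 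This gives a shorter, self-contained argument that applies the lemma's hypothesis directly rather than routing through a block-structure fact; the paper's detour is presumably chosen because Lemma~\ref{lemma:root:all:or:one} is also needed later in \autoref{lem:root:candidates}, so the proof gets to reuse infrastructure rather than being the most economical path to this particular statement.
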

\begin{proof}
Note that since~$G$ is connected, each component of~$G - (T_1 \cup T_2)$ is adjacent to \emph{at least} one vertex of~$T_1 \cup T_2$. We argue that it cannot be adjacent to more. Assume for a contradiction that~$C$ is a connected component~$C$ of~$G - (T_1 \cup T_2)$ that is adjacent to two distinct vertices~$t_1, t_2 \in T_1 \cup T_2$. If~$t_1, t_2$ belong to a common set~$T_i$, then in~$G - T_i$ there is a connected component containing~$C$, and it would be adjacent to both~$t_1, t_2 \in T_i$; this would contradict our assumption on~$T_i$. Hence we may assume that~$t_1 \in T_1$ and~$t_2 \in T_2$. Since~$T_1 \cap T_2 \neq \emptyset$ and the subgraph~$G[T_i]$ is connected for each~$i$, there is a simple path~$P$ from~$t_1$ to~$t_2$ in~$G$ whose edges are contained in~$E(G[T_1]) \cup E(G[T_2])$; such a path can be obtained by combining a path from~$t_1$ to~$T_1 \cap T_2$ with a path from~$t_2$ to~$T_1 \cap T_2$ and trimming repetitions. Since~$t_1 \notin T_2$ and~$t_2 \notin T_1$, while the path is confined to edges that belong to some~$E(G[T_i])$, path~$P$ uses at least one third vertex~$t_{12} \in T_1 \cap T_2$. Since component~$C$ is connected and adjacent to both~$t_1$ and~$t_2$, there is a path~$Q$ in~$C$ from a neighbor of~$t_1$ to a neighbor of~$t_2$, and~$P \cup Q$ form a simple cycle~$D$ on at least three vertices. Let~$c$ be a vertex on the cycle~$D$ which is contained in component~$C$. The cycle witnesses that~$t_1, t_2, t_{12}$, and~$c$ are part of a common biconnected component~$F$ of~$G$, with~$F \supseteq V(D)$. We use it to derive a contradiction: the set~$T_1$ contains at least two vertices from the biconnected component~$F$, but it does not contain a third vertex~$c \in F$; a contradiction to \autoref{lemma:root:all:or:one}.
\end{proof}


\begin{lemma} \label{lem:root:candidates}
Let~$t \geq 3$ be an integer and let~$H := K_t$. Let~$G$ be a connected graph with~$N^t(G) = \emptyset$. Let~$G'$ be the subgraph of~$G$ consisting of the union of the biconnected components of~$G$ which are~$K_t$-free. Then for each connected component~$T_i$ of~$G'$, the set~$T_i$ is a root of~$G$. Furthermore, one of the following holds:
\begin{enumerate}
    \item There is a vertex~$v \in V(G)$ with~$\bedt(G - v) < \bedt(G)$.
    \item There is a connected component~$T_i$ of~$G'$ for which~$\bedt(G - T_i) < \bedt(G)$, so that~$T_i$ is a $\bedt$-root of~$G$.
\end{enumerate}
\end{lemma}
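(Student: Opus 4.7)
The plan is to first verify Part~1 directly from the definition of a root, and then derive Part~2 from Part~1 combined with \autoref{lemma:root:all:or:one} and \autoref{obslambdamonotonesubgraph} by a case analysis on the size of an arbitrary $\bedt$-root $T^*$. The driving observation throughout is that, since $K_t$ is $2$-connected for $t \geq 3$, every copy of $K_t$ in $G$ lies inside a single biconnected component of $G$; in particular, an edge of $G$ belonging to a $K_t$-containing biconnected component is never an edge of $G'$, and any simple cycle of $G$ lies in a single biconnected component.

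For Part~1, fix a connected component $T_i$ of $G'$. Non-emptiness and connectedness of $G[T_i]$ are immediate since $G'[T_i]$ is a connected subgraph of $G[T_i]$. For $K_t$-freeness of $G[T_i]$, suppose towards contradiction that a $K_t$ sits inside $G[T_i]$: it must lie in a single biconnected component $F$ of $G$, which is then $K_t$-containing and hence has $E(F) \cap E(G') = \emptyset$. Picking two vertices $x_1, x_2$ of this $K_t$, the connectedness of $T_i$ in $G'$ gives a path $P$ in $G'$ between them of length at least two (length one is impossible because $\{x_1,x_2\} \notin E(G')$); appending the edge $\{x_1, x_2\}$ produces a cycle of length $\geq 3$ which must sit in a single biconnected component, yet its edges are split between $F$ and $K_t$-free biconnected components contributing to $G'$, a contradiction. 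For the one-neighbor property, assume some component $C$ of $G - T_i$ has two distinct neighbors $t_1, t_2 \in T_i$: concatenating a $t_1$-to-$t_2$ path through $C$ with a path inside $G'[T_i]$ yields a cycle in a single biconnected component $F$ containing an edge of $G'$, so $F$ is $K_t$-free and thus $V(F) \subseteq V(G')$; but $V(F)$ also meets $C \subseteq V(G) \setminus T_i$, contradicting that $T_i$ is a connected component of $G'$.

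For Part~2, fix any $\bedt$-root $T^*$ of $G$. If $|T^*| = 1$, say $T^* = \{v\}$, case~(a) holds with~$v$. Otherwise $|T^*| \geq 2$, and for each $v \in T^*$ the connectedness of $G[T^*]$ yields a neighbor $u \in T^*$; the biconnected component $F$ containing $\{u, v\}$ is either a bridge (hence $K_t$-free for $t \geq 3$) or satisfies $|F| \geq 3$ with $|T^* \cap F| \geq 2$, in which case \autoref{lemma:root:all:or:one} forces $F \subseteq T^*$, so $F$ is $K_t$-free since $G[T^*]$ is. Either way $v \in V(F) \subseteq V(G')$, and the same reasoning applied to each edge of $G[T^*]$ shows $G[T^*] = G'[T^*]$; hence $T^*$ sits inside a unique connected component $T_i$ of $G'$. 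By Part~1, $T_i$ is a root of $G$, so $\bedt(G) \leq 1 + \bedt(G - T_i)$; conversely, $T^* \subseteq T_i$ makes $G - T_i$ an induced subgraph of $G - T^*$, and \autoref{obslambdamonotonesubgraph} yields $\bedt(G - T_i) \leq \bedt(G - T^*) = \bedt(G) - 1$. Combining these two inequalities gives $\bedt(G - T_i) = \bedt(G) - 1$, so $T_i$ is a $\bedt$-root and case~(b) holds. The main obstacle is Part~1: the cycle-in-one-biconnected-component trick must be deployed carefully twice, exploiting that biconnected components partition $E(G)$ so that no simple cycle can mix edges from a $K_t$-containing and a $K_t$-free biconnected component.
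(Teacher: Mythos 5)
Your proof is correct. Part~2 follows essentially the same route as the paper: fix a $\bedt$-root $T^*$, reduce to the case $|T^*| \geq 2$, and use \autoref{lemma:root:all:or:one} edge-by-edge to show $G[T^*]$ lies entirely in $G'$, hence inside some component $T_i$, then conclude via \autoref{obslambdamonotonesubgraph}. The bridge case (which the paper handles by noting $|F_e| = 2$ trivially gives $F_e \subseteq T$) is spelled out explicitly; this is a cosmetic difference. You also derive the extra equality $\bedt(G - T_i) = \bedt(G) - 1$, which is a harmless bonus beyond what the statement requires.

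Part~1 is where you take a genuinely different route. The paper first shows each $K_t$-free biconnected component is a root, then invokes \autoref{lemma:union:roots} and a separate $K_t$-freeness argument to show that the union of two intersecting roots is again a root, and finally observes that each $T_i$ is obtained by iterating this union. You instead verify the three root conditions for $T_i$ directly: the $K_t$-freeness and one-neighbor arguments both proceed by building a simple cycle that would have to lie in a single biconnected component, yet would mix edges from a $K_t$-containing block with edges of $G'$ (impossible, since biconnected components partition $E(G)$). Your route is more self-contained — it bypasses \autoref{lemma:union:roots} entirely — whereas the paper's route factors out a reusable composition principle (unions of intersecting roots are roots), which is slightly more modular but also proves more than this lemma needs. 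Both are sound; the underlying mechanism (cycles sit inside a single block; blocks partition edges; \autoref{lemma:root:all:or:one} for $|F| \geq 3$) is the same, just packaged differently. One small point worth being explicit about for a write-up: in your one-neighbor argument, the chain of deductions from ``$V(F)$ meets $C$'' to the contradiction deserves a sentence — since $F$ is a $K_t$-free block with all edges in $G'$, its vertex set is connected in $G'$ and contains $t_1 \in T_i$, so $V(F) \subseteq T_i$, yet a vertex of $C$ is in $V(F)$ — but as written the reader can reconstruct this.
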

\begin{proof}
We first establish that for each biconnected component~$F$ of~$G$, if~$G[F]$ is~$H$-free then~$F$ is a root of~$G$. Since~$G[F]$ is trivially connected, it suffices to argue that each connected component of~$G-F$ is adjacent to exactly one vertex of~$F$. Connectivity of~$G$ ensures a component cannot be adjacent to fewer than one. If~$C$ is a connected component of~$G-F$ that is adjacent to two distinct vertices~$u,v$ of~$F$, then~$F$ is not a biconnected component (inclusion-maximal connected subgraph without a cutvertex): the subgraph obtained from~$F$ by extending it with an edge from~$u$ to a neighbor in~$C$, a path within~$C$ to a neighbor of~$v$, and an edge to~$v$, is again biconnected and a strict supergraph of~$F$; a contradiction.

The preceding paragraph shows that for each $H$-free biconnected component~$F$ of~$G$, the set~$F$ is a root of~$G$. If two roots~$T_1, T_2$ of~$G$ share a vertex, then~$T_1 \cup T_2$ is also a root of~$G$: the graph~$G[T_1 \cup T_2]$ is connected since each part is connected and they share a vertex; by \autoref{lemma:union:roots} the removal of~$T_1 \cup T_2$ leaves components that each see exactly one vertex of~$T_1 \cup T_2$. It remains to argue that~$G[T_1 \cup T_2]$ is $H$-free. This follows from the fact that~$H$ is biconnected and has at least three vertices. Hence any occurrence of~$H$ as a subgraph of~$G$ is contained within a single biconnected component of~$G$ on at least three vertices. If a copy of~$H$ is contained in a single biconnected component~$F$ of~$G[T_1 \cup T_2]$. Since~$H$ has at least three vertices, at least two vertices of~$H$ belong to a common set~$T_i$ for some~$i \in [2]$. The fact that~$F$ is a biconnected component of~$G[T_1 \cup T_2]$ implies that~$F$ is a subgraph of a biconnected component~$F'$ of~$G$. Since~$T_i$ contains at least two vertices of~$F'$, by \autoref{lemma:root:all:or:one}, the set~$T_i$ contains all vertices of~$F'$; but then~$G[T_i]$ contains a copy of~$H$, which is a contradiction.

We have now established that for each~$H$-free biconnected component~$F$ of~$G$, the set~$F$ is a root of~$G$; and that the union of two intersecting roots of~$G$ is itself a root. Note that for each connected component~$C$ of the graph~$G'$ defined in the lemma statement, the vertex set of~$C$ can obtained by repeatedly taking the union of two intersecting~$H$-free biconnected components of~$G$. If follows that for each connected component~$T_i$ of~$G'$, the set~$T_i$ is a root of~$G$.

To complete the proof of the lemma, we show that if no single vertex~$v \in V(G)$ has the property that~$\bedt(G-v) < \bedt(G)$, then there is a connected component~$T_i$ of~$G'$ for which~$\bedt(G-T_i) < \bedt(G)$. We argue as follows. Since~$G$ is connected and~$N^t(G) = \emptyset$, by \autoref{def:pending} there is a root~$T$ such that~$\bedt(G) = 1 + \bedt(G-T)$. Since we assumed no single vertex has this property, we have~$|T| \geq 2$.

\begin{claim}
There is a connected component~$T_i$ of~$G'$ such that~$T \subseteq T_i$.
\end{claim}
\begin{claimproof}
Recall that the biconnected components of a graph partition its edge set. Hence for each edge~$e$ of~$G[T]$, some biconnected component~$F_e$ of~$G$ contains~$e$. We argue that~$F_e \subseteq T$. If~$|F_e| = 2$ then this is trivial since~$e$ was an edge of~$G[T]$. If~$|F_e| \geq 3$, then the fact that~$T$ contains both endpoints of~$e$ implies by \autoref{lemma:root:all:or:one} that~$T$ contains all vertices of~$F_e$. Since~$G[T]$ is~$H$-free, the biconnected component~$F_e$ is~$H$-free. So for each~$e \in G[T]$,~$G[F_e]$ is a subgraph of~$G'$, with~$G'$ as defined in the lemma statement. Since~$G[T]$ is connected, this implies that there is a connected component~$T_i$ of~$G'$ that contains all edges~$e$ of~$G[T]$. Since~$|T| \geq 2$, this implies that~$T \subseteq T_i$.
\end{claimproof}

Using the claim, we now prove the last part of the lemma statement: there exists a component~$T_i$ such that~$\bedt(G) = 1 + \bedt(G - T) \leq 1 + \bedt(G - T_i)$, where the last inequality follows from \autoref{obslambdamonotonesubgraph}. It follows that~$\bedt(G - T_i) \leq \bedt(G) - 1$, which completes the proof.
\end{proof}

With these ingredients, we can now give an \XP-algorithm algorithm to compute~$\bedt(G)$.

\begin{lemma} \label{lem:compute:bedt}
Let~$t \geq 3$ be an integer and let~$H := K_t$. There is an algorithm that, given a graph~$G$ and integer~$\lambda \geq 0$, determines whether or not~$\bedt(G) \leq \lambda$ in time~$n^{\Oh_t(\lambda)}$.
\end{lemma}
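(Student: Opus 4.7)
The plan is to turn the recursive definition in \autoref{def:bedHplus} directly into a recursive decision procedure, and to use \autoref{lem:root:candidates} to restrict, in the crucial branching case, the search for a root to a list of only $O(n)$ explicit candidates.

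On input $(G,\lambda)$, I would first apply the obvious reduction rules matching the first three cases of \autoref{def:bedHplus}: if $V(G)=\emptyset$, accept; if there is a vertex $v$ outside every copy of $K_t$ (computable in $n^{O(t)}$ time by scanning vertices and searching for a $K_{t-1}$ in each neighborhood), recurse on $(G-v,\lambda)$; and if $G$ is disconnected, recurse on each component with parameter $\lambda$, accepting iff all recursive calls accept. Otherwise $G$ is connected with $N^t(G)=\emptyset$. If $\lambda=0$, reject, since $N^t(G)=\emptyset$ together with $V(G)\ne\emptyset$ force $G$ to contain some $K_t$, whence $\bedt(G)\ge 1$. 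For $\lambda\ge 1$, enumerate the $O(n)$ candidate roots furnished by \autoref{lem:root:candidates}, namely each singleton $\{v\}$ together with the vertex set of each connected component of the subgraph $G'$ formed by the $K_t$-free biconnected components of $G$. For each candidate $T$, check in polynomial time that $T$ really is a root of $G$ (i.e., $G[T]$ is connected and $K_t$-free and each component of $G-T$ has exactly one neighbor in $T$) and recursively test whether $\bedt(G-T)\le \lambda-1$; accept iff some candidate succeeds.

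Correctness follows by induction on $\lambda$. The first three cases mirror the first three clauses of \autoref{def:bedHplus} exactly. For the branching case, \autoref{lem:root:candidates} guarantees that if $\bedt(G)\le \lambda$ then some root in the enumerated list is a $\bedt$-root, so at least one candidate $T$ satisfies $\bedt(G-T)\le \lambda-1$ and is detected by induction; conversely, if some candidate succeeds then $\bedt(G)\le 1+(\lambda-1)=\lambda$ by the definition. For the running time, the essential observation is that $\lambda$ strictly decreases only along the branching step, whose branching factor is $O(n)$; hence any root-to-leaf path in the recursion tree contains at most $\lambda$ branching nodes and the tree has at most $n^{O(\lambda)}$ such ``branching leaves''. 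Between two consecutive branchings the algorithm performs only vertex deletions or component splits, which, taken together, can be charged to at most $O(n)$ operations (each removal reduces $|V|$ by one, and each split partitions the vertex set), with per-call overhead dominated by the $n^{O(t)}$ cost of computing $N^t$ and biconnected components. Multiplying the two contributions yields a total running time of $n^{O_t(\lambda)}$.

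The main obstacle is precisely the branching step: in the definition of $\bedt$, the root $T$ ranges over exponentially many subsets of $V(G)$, so an \XP-bound requires a structural narrowing of the search space to polynomially many candidates. \autoref{lem:root:candidates}, whose proof relies on biconnectivity of $K_t$ through \autoref{lemma:root:all:or:one} and \autoref{lemma:union:roots}, provides exactly this narrowing, and the rest is a standard recursion-tree analysis layered on top of it.
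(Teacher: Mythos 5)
Your proposal is correct and takes essentially the same route as the paper's proof: a direct recursive unrolling of \autoref{def:bedHplus}, with \autoref{lem:root:candidates} supplying the $\Oh(n)$ candidate roots (singletons and the connected components of the union of the $K_t$-free biconnected components). The one place worth tightening is the depth argument: because you peel non-$K_t$-vertices one at a time rather than removing all of $N^t(G)$ at once, root-to-leaf paths are no longer $\Oh(\lambda)$ long, and a component split is itself a branching step, so the claim that ``any root-to-leaf path contains at most $\lambda$ branching nodes'' is not literally true; the bound $n^{\Oh_t(\lambda)}$ still holds, but it requires amortizing over the facts that component splits produce vertex-disjoint subproblems and that each vertex is deleted along a branch at most once, whereas the paper's batch removal of $N^t(G)$ gives the cleaner observation that $\lambda$ strictly decreases after every three consecutive levels of recursion.
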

\begin{proof}
We first present the algorithm and then analyze it. The algorithm is recursive and proceeds as follows.

\subparagraph*{Base case: $V(G) = \emptyset$.} In this case we have~$\bedt(G) = 0 \leq \lambda$ and therefore the algorithm outputs $\mathsf{true}$.

\subparagraph*{Base case: $\lambda = 0$.} If~$\lambda = 0$, then we output $\mathsf{true}$ if and only if~$G$ is~$K_t$-free, which can be determined in time~$n^{\Oh(t)} \in n^{\Oh_t(1)}$.

\subparagraph*{Step case: $V(G) \neq \emptyset$.} The algorithm considers the following cases.
\begin{enumerate}
    \item If~$G$ is disconnected, with connected components~$C_1, \ldots, C_m$, we recursively run the algorithm on~$(C_i, \lambda)$ for each~$i \in [m]$ and return the disjunction of the resulting values.
    \item If~$G$ is connected, then we compute the set~$N^t(G)$, which can be done in time~$n^{\Oh(t)} \in n^{\Oh_t(1)}$.
    \begin{enumerate}
        \item If~$N^t(G)$ is nonempty, then we return the result of the recursive call on~$(G - N^t(G), \lambda)$.
        \item If~$N^t(G)$ is empty, then we make a sequence of recursive calls.
        \begin{enumerate}
            \item For each vertex~$v \in V(G)$ we recurse on~$(G - v, \lambda-1)$.
            \item Let~$G'$ be the subgraph of~$G$ consisting of the union of the biconnected components of~$G$ which are~$K_t$-free. For each connected component~$T_i$ of~$G'$, we recurse on~$(G - T_i, \lambda - 1)$.
        \end{enumerate}
        If at least one recursive call returns $\mathsf{true}$, then we return $\mathsf{true}$; otherwise we return $\mathsf{false}$.
    \end{enumerate}
\end{enumerate}
This concludes the description of the algorithm.

\subparagraph*{Running time.} We prove the algorithm runs in time~$n^{\Oh_t(\lambda)}$. Towards that purpose, we analyze the recursion tree generated by the algorithm.

We first bound the depth of the recursion tree. The value of~$\lambda$ never decreases. After every three successive levels of recursion, the value of~$\lambda$ has strictly decreased. To see this, note that the only recursive calls which do not decrease~$\lambda$ are those where we split on connected components, or remove the vertices of~$N^t(G)$. Recursive calls in which we split in components cannot be consecutive, since each call it generates is for a connected graph. If we recurse by removing~$N^t(G)$, then the graph on which we recurse has all vertices in a copy of~$K_t$ (but may be disconnected). After possibly recursing one more level on the connected components (which preserves the fact that all vertices occur in a copy of~$K_t$), the graph under consideration is connected and has all its vertices in a copy of~$K_t$. Hence after those two recursive calls that do not decrease~$\lambda$, the next call either terminates the algorithm or leads to a decrease of~$\lambda$ by recursing on~$G-v$ or~$G-T_i$. Since the algorithm terminates once~$\lambda = 0$, the recursion depth is indeed bounded by~$\Oh(\lambda)$.

The branching factor of the algorithm is~$\Oh(n)$: it either recurses once for each component, or it recurses once for each vertex and once for each of the at most~$n$ choices for~$T_i$.

The total number of nodes in the recursion tree is bounded by the branching factor raised to the power depth, which is therefore~$n^{\Oh(\lambda)}$. Since the running time per call is~$n^{\Oh_t(1)}$, the algorithm runs in time~$n^{\Oh_t(\lambda)}$, as required.

\subparagraph*{Correctness.} Correctness in the base cases is trivial; the same goes for the step case in which~$G$ is connected or~$N^t(G) \neq \emptyset$. For the case that~$G$ is connected and~$N^t(G) = \emptyset$, if~$\bedt(G) \leq \lambda$ then by \autoref{lem:root:candidates} the algorithm makes a recursive call on an induced subgraph~$G^*$ with~$\bedt(G^*) < \lambda$. Using induction on the recursion depth, we may assume that the recursive call is correct and reports that~$\bedt(G^*) \leq \lambda-1$, which leads the algorithm to conclude that~$\bedt(G) \leq \lambda$. Since the algorithm only reports~$\bedt(G) \leq \lambda$ when it has found a connected~$K_t$-free vertex set whose removal yields a graph~$G^*$ with~$\bedt(G^*) \leq \lambda - 1$, the algorithm is correct when it concludes that~$\bedt(G) \leq \lambda$.

This concludes the proof.
\end{proof}

Using the previous algorithm to compute~$\bedt(G)$ and \autoref{lem:root:candidates}, we can now derive an algorithm to compute a~$\bedt$-root of a graph.

\begin{lemma}\label{lemma:computeRoot}
Let~$t \geq 3$ be fixed and~$H := K_t$. There is an algorithm that, given a graph~$G$ and integer~$\lambda$ such that~$\bedt(G) \leq \lambda$ and $N^t(G)=\emptyset$, outputs a $\bedt$-root~$\T$ of~$G$ in time~$n^{\Oh_t(\lambda)}$.
\end{lemma}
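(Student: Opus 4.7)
The plan is to reduce the problem to the \XP decision procedure \autoref{lem:compute:bedt} for computing $\bedt$, combined with the structural characterization of \autoref{lem:root:candidates}, which restricts root candidates in a connected graph to a polynomial-size list (single vertices, plus connected components of the $K_t$-free biconnected subgraph). The algorithm assembles a $\bedt$-root component by component.

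First I would compute $\lambda^\star := \bedt(G)$ by invoking \autoref{lem:compute:bedt} for the values $0, 1, \ldots, \lambda$ and taking the smallest for which the answer is yes; this costs $\Oh(\lambda) \cdot n^{\Oh_t(\lambda)} = n^{\Oh_t(\lambda)}$. The output $\T$ will be $\{T_C : C \text{ component of } G\}$, built one piece at a time. For a component $C$ with $\bedt(G[C]) < \lambda^\star$ (checked with the same oracle), any single vertex $v \in V(C)$ works: $\{v\}$ is vacuously $K_t$-free and connected, every component of $G[C] - v$ has $v$ as its unique neighbor in $\{v\}$ by connectivity of $G[C]$, and by \autoref{obslambdamonotonesubgraph} we have $\bedt(G[C] - v) \leq \lambda^\star - 1$, so this piece is compatible with the global budget. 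For a component $C$ with $\bedt(G[C]) = \lambda^\star$, the hypothesis $N^t(G) = \emptyset$ carries over to $G[C]$, since every $K_t$ of $G$ is confined to a single component; thus \autoref{lem:root:candidates} applies and supplies $\Oh(|V(C)|)$ explicit candidates ($\{v\}$ for each $v \in V(C)$, and each connected component of the subgraph formed by the $K_t$-free biconnected components of $G[C]$). I enumerate these, compute $\bedt$ of $G[C]$ minus the candidate with \autoref{lem:compute:bedt}, and keep the first whose value equals $\lambda^\star - 1$; \autoref{lem:root:candidates} guarantees at least one succeeds.

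Correctness is then immediate: each $T_C$ is a root of $G[C]$ by construction, hence $\T$ is a root of $G$ per \autoref{def:pending}; and $\bedt(G - V(\T)) = \max_C \bedt(G[C] - T_C) = \lambda^\star - 1$ since the maximum-$\bedt$ components drop by exactly one while the others stay at $\leq \lambda^\star - 1$. The total running time is $\Oh(n)$ oracle calls to \autoref{lem:compute:bedt}, each costing $n^{\Oh_t(\lambda)}$, for a grand total of $n^{\Oh_t(\lambda)}$. There is no real obstacle beyond bookkeeping: all the heavy lifting has already been done in \autoref{lem:root:candidates} (which bounds the candidate set) and \autoref{lem:compute:bedt} (which evaluates $\bedt$ in \XP time); the only subtlety worth double-checking is that components not attaining the maximum $\bedt$ can be handled with a trivial single-vertex root without spoiling the global equality, which follows from monotonicity of $\bedt$ under induced subgraphs.
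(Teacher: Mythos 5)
Your proposal is correct and takes essentially the same route as the paper: compute $\bedt(G)$ via \autoref{lem:compute:bedt}, then for each connected component enumerate the polynomially many candidates that \autoref{lem:root:candidates} supplies (single vertices plus components of the $K_t$-free biconnected subgraph) and test each with the $\bedt$ oracle, finally taking the union across components as licensed by \autoref{def:pending}. The only cosmetic difference is that you short-circuit components whose $\bedt$ is already below the maximum by assigning them an arbitrary single-vertex root, whereas the paper uniformly finds a $\bedt$-root within every component; both yield $\bedt(G - V(\T)) = \lambda^\star - 1$ and the same $n^{\Oh_t(\lambda)}$ bound.
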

\begin{proof}
We first prove the statement for the case that~$G$ is connected. Using \autoref{lem:compute:bedt}, we can compute~$\bedt(G)$ in time~$n^{\Oh_t(\lambda)}$ and store it as~$\lambda^*$. For each~$v \in V(G)$, we invoke \autoref{lem:compute:bedt} again on~$G-v$. If we find a vertex for which~$\bedt(G-v) < \lambda^*$, we terminate with the $\bedt$-root~$v$ as output.

Suppose now that no choice of~$v$ causes the algorithm to terminate. We compute the graph~$G'$ consisting of the union of the $H$-free biconnected components of~$G$, in time~$n^{\Oh_t(1)}$. For each connected component~$T_i$ of~$G'$, we invoke \autoref{lem:compute:bedt} on the graph~$G - T_i$. The lemma guarantees that each set~$T_i$ is a root of~$G$. Since no single vertex formed a $\bedt$-root of~$G$, the lemma also guarantees that~$\bedt(G - T_i) < \bedt(G)$ for at least one such choice of~$T_i$; then we give~$T_i$ as the output of the algorithm. This concludes the proof for the case that~$G$ is connected. Note that the algorithm runs in time~$n^{\Oh_t(\lambda)}$ since it performs~$\Oh(n)$ calls to a subroutine with running time~$n^{\Oh_t(\lambda))}$, and additionally spends~$n^{\Oh_t(1)}$ time computing~$G'$.

Now suppose~$G$ has multiple connected components. Since the forbidden graph~$H$ is connected, a $\bedt$-root~$\T$ for a disconnected graph~$G$ can be obtained by selecting a $\bedt$-root for each connected component: since the removal of~$V(\T)$ decreases the $\bedt$ value of each component, by \autoref{def:pending} the removal of~$V(\T)$ also decreases $\bedt(G)$. If~$N^t(G) = \emptyset$, then as each copy of~$H$ is contained in a single connected component, we also have~$N^t(C_i) = \emptyset$ for each connected component~$C_i$ of~$G$. Hence it suffices to run the algorithm for connected graphs described above and take the union of the computed roots.
\end{proof}

\subsection{Computing an optimal solution}
\label{sec:computing-optimum-bounded-lambda}

Before presenting the algorithm, we give the following lemma which will be useful in its analysis. The lemma roughly says that if we have a connected graph~$G$ with a root~$T$ with respect to a forbidden subgraph~$K_t$, then in any optimal solution~$U \in \optsol_t(G,\H)$ we can replace the intersection of~$U$ with a pending component~$C(v)$ (\autoref{def:pending}) by a solution from~$\optsol_t(C(v), \H \cap C(v))$ to obtain a new optimal solution; we just need to ensure to replace it by a solution to the smaller problem that contains the attachment vertex~$v$, if there is an optimal solution doing so. We will use the lemma to later prove the correctness of an algorithmic strategy that solves the problem on~$G$ by separately solving it on pending components of the root~$T$, testing for each~$v \in T$ whether there is an optimal solution that contains the attachment point.

\begin{lemma} \label{lemma:restructure:root}
Let~$H = K_t$ for some constant~$t \geq 3$. Let~$T \subseteq V(G)$ be a root of a connected graph~$G$ and let~$\H$ be a collection of complete subgraphs of~$G$, each having less than~$t$ vertices. Let~$U \in \optsol_t(G,\H)$. For each~$v \in T$, the following holds.
\begin{enumerate}
    \item If~$\optsolwith_t(C(v), \H \cap C(v), v) \neq \emptyset$, then for each set~$U^+ \in \optsolwith_t(C(v), \H \cap C(v), v)$ we have~$(U \setminus V(C(v)) \cup U^+ \in \optsol_t(G,\H)$.
    \item If~$\optsolwith_t(C(v), \H \cap C(v), v) = \emptyset$, then for each set~$U^- \in \optsol_t(C(v), \H \cap C(v))$ (note that~$v \notin U^-$) we have~$(U \setminus (V(C(v)) \setminus \{v\})) \cup U^- \in \optsol_t(G,\H)$.
\end{enumerate}
\end{lemma}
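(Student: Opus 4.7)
The plan is to prove each case by checking (i) that the resulting set $U'$ lies in $\sol_t(G,\H)$ and (ii) that $|U'| \le |U|$; combined with the optimality of $U$, these two facts force $U' \in \optsol_t(G,\H)$.

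The feasibility check rests on a simple structural observation, immediate from \autoref{obs:lambda4}: any clique $Z$ of $G$ (in particular every element of $\H$) is either contained in $V(C(w))$ for some $w \in T$, or entirely contained in $T$. Indeed, if $Z$ meets two distinct pending components $C(w_1)$ and $C(w_2)$, then by \autoref{obs:lambda4} the only cross-component adjacencies go through the attachment vertices, so $Z \cap V(C(w_i)) \subseteq \{w_i\} \subseteq T$ for each affected $i$. Together with \autoref{obs:lambda4part} (the sets $V(C(w))$ partition $V(G)$) and \autoref{obs:cliquelambda4} (every $t$-clique lies inside some $V(C(w))$), a short case analysis on the location of a $t$-clique $K$ of $G$ or a $Z \in \H$ shows that $U'$ is feasible: outside $V(C(v))$ the set $U'$ agrees with $U$, so all constraints situated in other pending components remain hit; inside $V(C(v))$ the feasibility of $U^+$ (resp.\ $U^-$) as a solution of $(C(v), \H \cap C(v))$ handles every constraint fully contained in $V(C(v))$. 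The only subtle subcase is a clique $Z \subseteq T$ spanning several pending components with $U \cap Z = \{v\}$: in Case~1 we keep $Z$ hit because $v \in U^+ \subseteq U'$ by hypothesis, and in Case~2 because $v$ is not removed when passing from $U$ to $U'$ (only $V(C(v)) \setminus \{v\}$ is deleted).

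For the size bound, I will use that $U \cap V(C(v))$ is always feasible for the sub-instance $(C(v), \H \cap C(v))$, and therefore $|U \cap V(C(v))| \ge \opt(C(v), \H \cap C(v))$. In Case~1 this gives $|U \cap V(C(v))| \ge |U^+|$ and hence $|U'| = |U| - |U \cap V(C(v))| + |U^+| \le |U|$. In Case~2 I distinguish on whether $v \in U$: if $v \notin U$ then $U \cap (V(C(v)) \setminus \{v\}) = U \cap V(C(v))$ and the bound $|U \cap V(C(v))| \ge |U^-|$ is immediate; if $v \in U$, then the hypothesis $\optsolwith_t(C(v), \H \cap C(v), v) = \emptyset$ forces $|U \cap V(C(v))| \ge |U^-|+1$ (no optimal sub-solution contains $v$), whence $|U \cap (V(C(v)) \setminus \{v\})| \ge |U^-|$ and again $|U'| \le |U|$.

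The main obstacle I foresee is the feasibility argument for cliques $Z \in \H$ that lie in $T$ and span several pending components; the structural observation above is precisely what makes the two cases of the statement interact cleanly with the choice of which vertices of $V(C(v))$ are kept versus replaced when swapping in $U^+$ or $U^-$.
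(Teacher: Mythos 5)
Your proof is correct and follows essentially the same route as the paper: feasibility comes from the structural fact (via \autoref{obs:lambda4}) that any clique of~$G$ is confined to a single pending component or sits entirely in~$T$, and the size bound comes from $U \cap V(C(v))$ being feasible for the sub-instance, with the hypothesis $\optsolwith_t(C(v), \H\cap C(v), v) = \emptyset$ forcing the strict inequality needed to absorb the deletion of~$v$ in Case~2. The paper phrases the structural observation as ``any complete subgraph is either fully contained in~$C(v)$ or disjoint from $C(v)\setminus\{v\}$,'' which is interchangeable with your partition-based formulation for the purposes of this argument.
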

\begin{proof}
Assume first that $U^+ \in \optsolwith_t(C(v), \H \cap C(v), v)$, so that the set of such optimal solutions is not empty. Define~$U^* := (U \setminus V(C(v)) \cup U^+$; our goal is to show~$U^* \in \optsol_t(G,\H)$.

We first argue that~$|U^*| \leq \opt(G,\H)$. This follows from the fact~$U^+ \in \optsol_t(C(v), \H \cap C(v))$, so~$U^+$ is an optimal choice of vertices to hit the copies of~$K_t$ in~$C(v)$ together with those graphs in~$\H$ which are fully contained in~$C(v)$, while the set~$U \cap C(v)$ also hits all copies of~$K_t$ in~$C(v)$ and all graphs of~$\H$ fully contained in~$C(v)$. Hence replacing the latter by the former does not yield an increase in size, and~$|U| = \opt(G,\H)$ by definition.

Now we argue that~$U^* \in \sol_t(G,\H)$. Since~$U \in \sol_t(G,\H)$ and we obtain~$U^*$ by replacing~$U \cap C(v)$ with~$U^+$, it suffices to argue that any copy of~$K_t$ or~$H \in \H$ that is hit by~$U \cap C(v)$, is also hit by~$U^+$. By definition of pending component, any complete subgraph of~$G$ is either contained fully in~$C(v)$, or disjoint from~$C(v) \setminus \{v\}$. Since~$v \in U^+$ and the set~$U^+$ hits all~$K_t$-subgraphs in~$C(v)$ and all subgraphs of~$\H \cap C(v)$, we conclude~$U^* \in \sol_t(G,\H)$. Together with the previous paragraph, this yields~$U^* \in \optsol_t(G,\H)$ as desired.

The proof of the second part is similar. So assume~$\optsolwith_t(C(v), \H \cap C(v), v) = \emptyset$ and let~$U^- \in \optsol_t(C(v), \H \cap C(v))$. Define~$U^{**} := (U \setminus (V(C(v)) \setminus \{v\})) \cup U^-$, so it is effectively obtained by replacing the choices of~$U$ in the interior of the pending component by the solution~$U^-$. By this definition, if~$v \in U$ then it remains in~$U$.

We argue that~$|U^{**}| \leq \opt(G,\H)$. If~$v \notin U$, then the argument is identical to the one given for the first case. So assume~$v \in U$. Since~$\optsolwith_t(C(v), \H \cap C(v), v) = \emptyset$, while it is easy to see that~$U \cap C(v) \in \sol_t(C(v), \H \cap C(v))$, we derive that~$|U \cap C(v)| > \opt(C(v), \H \cap C(v))$, which implies that~$|U \cap (C(v) \setminus \{v\})| \geq \opt(C(v), \H \cap C(v))$. Hence the number of vertices inserted into~$U^{**}$ during the replacement is at most as large as the number we remove, which proves that~$|U^{**}| \leq \opt(G,\H)$.

To complete the proof, it suffices to establish that~$U^{**} \in \sol_t(G,\H)$. If~$v \notin U$, then this follows analogously to the first case of the lemma. If~$v \in U$, then since it is not replaced we have~$v \in U^{**}$, which therefore hits all copies of~$K_t$ or~$H \in \H$ that contain~$v$. All remaining forbidden subgraphs either live fully in~$C(v)$ and are therefore hit by~$U^-$, or are disjoint from~$C(v)$ and are therefore hit by~$U \setminus C(v)$. It follows that~$U^{**} \in \sol_t(G, \H)$.
\end{proof}

The following technical lemma will also be useful when analyzing the algorithm. It says that the number of vertices~$v$ that a solution~$U \in \optsol_t(G,\H)$ picks from a root~$T$, but for which the subinstance~$(C(v), \H \cap C(v))$ does not have an optimal solution containing~$v$, is bounded in terms of the difference~$\opt(G,\H) - \opt(G)$.

\begin{lemma} \label{lemma:overpay:in:root}
Let~$H = K_t$ for some constant~$t \geq 3$. Let~$T \subseteq V(G)$ be a root of a connected graph~$G$ and let~$\H$ be a collection of complete subgraphs of~$G$, each having less than~$t$ vertices. Define~$T' := \{v \in T \mid \optsolwith_t(C(v), \H \cap C(v), v) \neq \emptyset\}$, and let~$U \in \optsol_t(G,\H)$. Then~$|U \cap (T \setminus T')| \leq \opt(G, \H) - \opt(G)$.
\end{lemma}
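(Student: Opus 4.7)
The plan is to partition $|U|$ across the pending components $\{C(v) : v \in T\}$, which tile $V(G)$ by \autoref{obs:lambda4part}, and to extract the desired slack of $|U \cap (T \setminus T')|$ from those vertices of $U$ that lie in $T \setminus T'$. The argument is entirely by counting, and I do not anticipate a serious obstacle; the only point that needs mild care is that cliques in $\H$ may straddle several pending components, but such cliques do not appear in any $\H \cap C(v)$ and therefore do not distort the per-component bounds.

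First I would verify that for each $v \in T$, the restriction $U \cap C(v)$ is an element of $\sol_t(C(v), \H \cap C(v))$. It hits every $K_t$ contained in $C(v)$ because $U$ hits every $K_t$ of $G$, and it hits every $H \in \H$ that lies fully in $C(v)$ for the same reason. Consequently
\[
|U \cap C(v)| \;\geq\; \opt(C(v), \H \cap C(v)) \qquad \text{for every } v \in T.
\]
Next comes the key source of slack: if $v \in T \setminus T'$ and $v \in U$, then $U \cap C(v)$ is a feasible solution to $(C(v), \H \cap C(v))$ that contains $v$; but by definition of $T'$, no optimal solution of this subinstance contains $v$. Hence $U \cap C(v)$ is strictly larger than optimal, yielding $|U \cap C(v)| \geq \opt(C(v), \H \cap C(v)) + 1$ in this case.

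Summing the per-component bounds and using that the $C(v)$ form a partition of $V(G)$, I would obtain
\[
|U| \;=\; \sum_{v \in T} |U \cap C(v)| \;\geq\; \sum_{v \in T} \opt(C(v), \H \cap C(v)) \;+\; |U \cap (T \setminus T')|.
\]
To finish, I would weaken each term of the sum via $\opt(C(v), \H \cap C(v)) \geq \opt(C(v))$ (removing the extra hitting requirements can only lower the optimum), and then invoke \autoref{obs:pendingadditive} with $N = \emptyset$ and the singleton partition $\P = \{\{v\} : v \in T\}$ of $V(\T) = T$ to identify $\sum_{v \in T} \opt(C(v))$ with $\opt(G)$. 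Combining these observations with $|U| = \opt(G, \H)$ gives $\opt(G, \H) \geq \opt(G) + |U \cap (T \setminus T')|$, which is exactly the inequality claimed.
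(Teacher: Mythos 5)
Your proof is correct and takes essentially the same route as the paper's: a per-pending-component counting argument that bounds $|U\cap C(v)|$ from below, gains a strict $+1$ precisely when $v\in U\cap(T\setminus T')$, sums over the partition $\{C(v)\}_{v\in T}$, and compares against $\opt(G)=\sum_{v\in T}\opt(C(v))$. The only cosmetic difference is that you invoke \autoref{obs:pendingadditive} for the last equality while the paper argues it directly by noting that deleting the edges of $G[T]$ preserves all $K_t$-subgraphs.
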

\begin{proof}
Consider a vertex~$v \in U \cap (T \setminus T')$. By definition of~$T'$, the set~$\optsolwith_t(C(v), \H \cap C(v), v)$ is empty: no solution that hits the $K_t$-subgraphs of~$C(v)$ and the subgraphs of~$\H$ fully contained in~$C(v)$, contains~$v$. Since the set~$U \cap C(v)$ hits both types of subgraphs and contains~$v$, we have~$|U \cap C(v)| > \opt(C(v), \H \cap C(v)) \geq \opt(C(v))$.

We can now analyze the global situation as follows. It is easy to see that for each~$v \in T$ we have~$U \cap C(v) \geq \opt(C(v))$; we just argued that for~$v \in U \cap (T \setminus T')$ we have a strict inequality. Hence we find: \[\opt(G,\H) = |U| \geq \left( \sum_{v \in T} \opt(C(v)) \right) + |U \cap (T \setminus T')|.\] On the other hand, by definition of root, no vertex of~$T$ belongs to a~$K_t$-subgraph in~$G$. Hence the edges within~$G[T]$ can be removed from the graph without changing the set of~$K_t$-subgraphs and thereby without changing~$\opt$. Since the connected components of the resulting graph are exactly the subgraphs~$C(v)$ for~$v \in T$, it follows that: \[\opt(G) = \sum_{t \in T} \opt(C(v)).\]

Subtracting the second equation from the first proves the lemma.
\end{proof}

\begin{lemma}\label{lemma:OPTforboundedlambda}
For each constant~$t \in \mathbb{N}$ with~$t \geq 3$, there is an algorithm with the following specifications.
\begin{itemize}
    \item The input consists of a graph~$G$, a collection~$\H$ of complete subgraphs of~$G$ having less than~$t$-vertices each, and integers~$\lambda, \kappa \geq 0$.
    \item The output is a vertex set~$S \subseteq V(G)$ such that~$S \in \sol_t(G,\H)$.
    \item When~$\bedt(G) \leq \lambda$ and~$\opt(G,\H) \leq \opt(G) + \kappa$, the algorithm guarantees that~$|S| = \opt(G,\H)$, so that~$S \in \optsol_t(G,\H)$.
\end{itemize}
The algorithm runs in time~$n^{\Oh_t((\lambda + \kappa)^2)}$.
\end{lemma}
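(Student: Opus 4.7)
The plan is to prove the lemma by induction on $\lambda$, with the algorithm recursively following the $\bedt$-decomposition of $G$. For $V(G)=\emptyset$ we return $\emptyset$. For $\lambda=0$, the bound $\bedt(G)\le 0$ forces $G$ to be $K_t$-free, so $\opt(G)=0$ and the promise becomes $\opt(G,\H)\le\kappa$; enumerating all subsets of $V(G)$ of size at most $\kappa$ in time $n^{O(\kappa)}$ and returning the smallest one hitting $\H$ (with $V(G)$ as a safe fallback) meets the specification. If $G$ is disconnected with components $C_1,\dots,C_m$, we recurse on each $(G[C_i],\H\cap C_i,\lambda,\kappa)$: since every clique in $\H$ and every $K_t$-subgraph of $G$ is contained in a single connected component, $\opt(G,\H)=\sum_i\opt(C_i,\H\cap C_i)$ and $\opt(G)=\sum_i\opt(C_i)$, so the per-component excess never exceeds the global budget $\kappa$.

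For connected $G$ with $\lambda\ge 1$, compute $N:=N^t(G)$ and a $\bedt$-root $T$ of $G-N$ via \autoref{lemma:computeRoot} in time $n^{O_t(\lambda)}$. For each $v\in T$ recursively compute the minimum-size solution $U^{in}_v$ of $(G[C(v)],\H\cap C(v))$ containing $v$, and the minimum $U^{out}_v$ not containing $v$, by reducing to subproblems on $G[C(v)]-v$ (which satisfies $\bedt\le\lambda-1$ by \autoref{lemma:bedtcv}) with a modified collection $\H'_v$: in the ``$v$ in'' case drop the sets containing $v$; in the ``$v$ out'' case additionally include the projections of $K_t$-cliques through $v$ and the sets $Z\setminus\{v\}$ for $Z\in\H\cap C(v)$ containing $v$. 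Set $T':=\{v\in T:|U^{in}_v|\le|U^{out}_v|\}$, which coincides with $\{v\in T:\optsolwith_t(C(v),\H\cap C(v),v)\neq\emptyset\}$; by \autoref{lemma:restructure:root} we may restrict attention to optimal $U\in\optsol_t(G,\H)$ with $T'\subseteq U$.

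The key structural bound, extending \autoref{lemma:overpay:in:root} by additionally tracking $N$, is that every such optimal $U$ satisfies $|U\cap N|+|U\cap(T\setminus T')|\le \opt(G,\H)-\opt(G)\le\kappa$. This follows from the partition $|U|=|U\cap N|+\sum_{v\in T}|U\cap C(v)|$, the additivity $\opt(G)=\sum_v\opt(C(v))$ (valid since $K_t$-cliques avoid $N$ and, by \autoref{obs:cliquelambda4}, lie within pending components), and the fact that $|U\cap C(v)|>\opt(C(v))$ whenever $v\in U\cap(T\setminus T')$. We therefore enumerate all pairs $(U_N,A)$ with $U_N\subseteq N$, $A\subseteq T\setminus T'$, and $|U_N|+|A|\le\kappa$ (at most $n^{O(\kappa)}$ choices). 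For each pair, we first discard it if $U_N\cup T'\cup A$ fails to hit some $Z\in\H$ with $Z\subseteq T\cup N$; by \autoref{obs:lambda4}, each remaining unhit $Z\in\H$ lies in $C(v)\cup N$ for a unique $v\in T$, and when additionally $Z\cap U_N=\emptyset$ and $v\notin T'\cup A$ we append $Z\cap(C(v)\setminus\{v\})$ as a residual constraint to the subproblem on $G[C(v)]-v$. We then recursively solve each augmented per-component subproblem at parameter $(\lambda-1,\kappa)$, and return the minimum-cost candidate $U_N\cup T'\cup A\cup\bigcup_v U'_v$ over all enumerated pairs.

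The running time satisfies $\tau(n,\lambda,\kappa)\le n^{O_t(\lambda)}+n^{O(\kappa)+1}\cdot\tau(n,\lambda-1,\kappa)$, which unwinds to $n^{O_t((\lambda+\kappa)^2)}$ as required. The hard part will be verifying the inductive hypothesis for each augmented per-component subproblem: one must argue that inserting the residual $\H$-cliques never pushes the local excess $|U\cap C(v)|-\opt(C(v))$ beyond $\kappa$ so that the recursive call returns an optimum. Intuitively, every such residual clique corresponds to a spread $\H$-clique that a globally optimal $U$ must already pay for inside $C(v)$, so additivity via \autoref{obs:pendingadditive} combined with the bound on the total excess controls the sum of local excesses. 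A careful case analysis, separately handling the ``$v$ in'' and ``$v$ out'' branches and distinguishing between $\H$-cliques fully inside $N$, inside $T\cup N$, or meeting some $C(v)\setminus\{v\}$, is where the main technical work lies.
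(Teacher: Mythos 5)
Your base cases, the treatment of disconnected graphs, and the idea of computing $U^{in}_v$, $U^{out}_v$ per pending component to determine $T'$ match the spirit of the paper's proof (its $S^+_j$, $S^-_j$). However, the organizational departure you take — collapsing the handling of $N = N^t(G)$ and the root~$T$ of $G - N$ into a single recursive level with a combined enumeration over $(U_N, A)$ — introduces problems that your sketch does not resolve. The paper instead handles them in \emph{separate} recursive levels: it branches on $Z' \subseteq N^t(G)$ first and recurses on $G - N^t(G)$ (with $\kappa - |Z'|$), so that when the root step is reached, $N^t = \emptyset$ and all of $\H$-cliques and $K_t$-subgraphs live inside single pending components; this makes \autoref{lemma:restructure:root} applicable and keeps the per-component subproblems self-contained, with no residual constraints crossing $N$.

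Concretely, there are two genuine gaps. First, your invocation of \autoref{lemma:restructure:root} to ``restrict attention to optimal $U$ with $T' \subseteq U$'' is not justified in your setting: that lemma requires $T$ to be a root of the \emph{whole} graph so that every clique of $\H$ is either contained in $C(v)$ or disjoint from $C(v)\setminus\{v\}$, and its replacement argument relies on this. In your setting $T$ is a root of $G - N$, and an $\H$-clique can have vertices in $C(v)\setminus\{v\}$ and in $N$; replacing $U \cap C(v)$ by an element of $\optsolwith_t(C(v), \H\cap C(v), v)$ can un-hit such a clique, so the restructuring can leave the feasible region. Second, you flag but do not resolve the verification of the inductive hypothesis on the augmented per-component subproblems, and in fact the recursion parameter you use there is too small: for the ``$v$ out'' branch, the annotated collection $\H'_v$ includes the projections of $K_t$-subgraphs through $v$, and as the paper shows in its \autoref{claim:xp:alg}, this can increase the excess $\opt(\cdot,\H'_v) - \opt(\cdot)$ by one, which is why the paper recurses with $\kappa + 1$ rather than $\kappa$. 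Without this increment, the inductive guarantee that the recursive call returns an optimum may fail even before any $N$-related residuals are added. (There is also a smaller issue in the residual-constraint bookkeeping: a clique $Z \subseteq C(v) \cup N$ with $v \notin Z$ can be unhit even when $v \in T' \cup A$, so conditioning the residual only on ``$v \notin T' \cup A$'' is not sufficient.) Switching to the paper's two-level decomposition avoids all of these complications.
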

\begin{proof}
We first present the algorithm and then its analysis. The algorithm is recursive, with a setup similar (but more complicated) than the one in \autoref{lem:compute:bedt}. Before the recursion starts, we run the algorithm from \autoref{lem:compute:bedt} to test whether~$\bedt(G) \leq \lambda$. If this is not the case, we may simply output~$V(G)$. Hence in the remainder we may assume~$\bedt(G) \leq \lambda$. We now describe the recursion.

\paragraph*{Base case: $V(G) = \emptyset$.} In this case we simply return the empty set as the unique optimal solution.

\paragraph*{Base case: $\lambda = 0$.} By definition, if~$\bedt(G) = \lambda = 0$ then~$G$ is $K_t$-free.
\begin{enumerate}
    \item If~$G$ contains a~$K_t$, then we can be sure that~$\bedt(G) > \lambda$ and return the trivial solution~$V(G)$.
    \item If~$G$ is~$K_t$-free and~$\H = \emptyset$, we return the empty set which belongs to~$\optsol_t(G,\H)$.
    \item If~$G$ is~$K_t$-free,~$\H \neq \emptyset$, and~$\kappa = 0$, then we conclude that~$\opt(G,\H) > \opt(G) + \kappa = 0 + 0$ and return the trivial solution~$V(G)$.
    \item If~$G$ is~$K_t$-free,~$\H \neq \emptyset$, and~$\kappa > 0$, then we consider one arbitrary~$H_1 \in \H$. For each vertex~$v_i \in V(H_1)$ we recurse on the instance~$(G-v_i, \H \cap (V(G)\setminus v_i), \lambda, \kappa-1)$. Let~$S_1, \ldots, S_{|V(H)|}$ be the results of the recursive calls. We output a minimum-size set among~$\{\{v_i\} \cup S_i \mid i \in [|V(H_1)|]\}$. (Steps of this type mimic the behavior of a bounded-depth search tree algorithm for $\H$-hitting set parameterized by solution size~$\kappa$.)
\end{enumerate}
This concludes the description of the base case, which is easily seen to be correct.

\paragraph*{Step case: $\lambda > 0$.} The step case consists of various subcases, depending on how the recursive definition of~$\bedt(G)$ (\autoref{def:bedHplus}) attains its minimum value. The algorithm considers the vertex set that would be eliminated by the recursive definition and guesses how an optimal solution intersects that set.

\begin{enumerate}
    \item If~$G$ is disconnected with connected components~$C_1,\ldots, C_m$, then we recurse on the instance~$(C_i, \H \cap C_i, \alpha, \kappa)$ for each~$i \in [m]$ and return the union~$S$ of the obtained vertex sets. Note that~$S \in \sol_t(G, \H)$ since each graph~$H_j \in \H$ is complete and is therefore contained fully within some component~$C_i$.
    \item If~$G$ is connected, then let~$Z \subseteq V(G)$ be the subset of vertices which are not contained in any~$K_t$-subgraph of~$G$. We distinguish two cases.
    \begin{enumerate}
        \item If~$Z \neq \emptyset$, then proceed as follows. For each subset~$Z'\subseteq Z$ of size at most~$\kappa$, which will be our guess for how the solution intersects~$Z$, we define:
        \[\H_{Z'} = \{ H \cap (G - Z) \mid H \in \H \wedge V(H) \cap Z' = \emptyset.\}\]
        In other words, we take those subgraphs of~$\H$ which are disjoint from~$Z'$ and project them onto the remaining graph~$G-Z$. If~$\H_{Z'}$ contains the empty subgraph on vertex set~$\emptyset$, then our guess of~$Z'$ was incorrect. Otherwise, we recurse on~$(G - Z, \H_{Z'}, \kappa - |Z'|, \lambda)$, resulting in a subset~$S_{Z'} \subseteq V(G - Z)$. If at least one recursive call was made, we output a minimum-size subset among~$\{S_{Z'} \cup Z' \mid Z' \subseteq Z \wedge |Z'| \leq \kappa \wedge \emptyset \notin \H_{Z'}\}$. If all choices of~$Z'$ lead to~$\emptyset \in \H_{Z'}$, we output~$V(G)$ as a trivial solution. (Below we will argue that~$\opt(G,\H) > \opt(G) + \kappa$ when this happens.) \label{xp:alg:branchz}
        \item Now suppose~$Z = \emptyset$. If the algorithm reached this point, then~$V(G) \neq \emptyset$, the graph~$G$ is connected, and all its vertices belong to some~$K_t$. Hence we have~$\bedt(G) = 1 + \min _{T \subset V(G)} \bedt(G-T)$, where~$T$ ranges over all roots of~$G$. Since have~$\bedt(G) \leq \lambda$ based on the opening step of the algorithm, we can invoke \autoref{lemma:computeRoot} to compute a $\bedt$-root~$T$ of~$G$ in time~$n^{\Oh_t(\lambda)}$.
        \label{xp:alg:emptyz}

        Let~$T = \{v_1, \ldots, v_m\}$. For each~$j \in [m]$, recall (\autoref{def:pending}) that~$C^T(v_j)$ coincides with the subgraph of~$G$ induced by the union of the connected components of~$G-T$ for which~$v_j$ is the unique vertex of~$T$ to which they are adjacent, together with~$v_j$ itself. Let~$\H_j := \H \cap C(v_j)$. We want to compute a good solution~$S^+_j \in \sol_t(C^T(v_j), \H_j)$ that contains~$v_j$. To do this, we proceed as follows. We recursively call the algorithm on~$(C^T(v_j) - v_j, \H \cap C(v_j)-v_j, \lambda - 1, \kappa)$ and define~$S^+_j$ as the union of~$\{v_j\}$ with the obtained result.

        Additionally, we want to compute a good solution~$S^-_j \in \sol_t(C^T(v_j), \H_j)$ that avoids~$v_j$. To that end, we let~$\H'_j$ contain the graph~$H \cap (C^T(v_j) - v_j)$ for each~$H \in \H$ with~$H \subseteq C^T(v_j)$, together with the graph~$F - v_j$ for each copy~$F$ of~$K_t$ in~$C^T(v_j)$ that contains~$v_j$; hence we project the annotated subgraphs contained in the extended component~$C^T(v_j)$ onto the component~$C^T(v_j) - v_j$, thereby omitting the vertex~$v_j$ from~$H$ if it was contained in~$H$, and in addition we ensure that the~$K_t$ subgraphs containing~$v_j$ are hit by inserting new annotations that enforce they are hit at a vertex other than~$v_j$. If no graph in~$\H'_j$ has an empty vertex set, we recursively call the algorithm on~$(C^T(v_j) - v_j, \H'_j, \lambda - 1, \kappa + 1)$, resulting in a set~$S^-_j \in \sol_t(C^T(v_j) - v_j, \H'_j)$; note that~$\kappa$ increases in the recursive call. If~$\H'_j$ contains a graph on the empty vertex set, then set~$S^-_j := V(C^T(v_j) - v_j)$.

        Note that~$S^+_j$ and~$S_j^-$ both belong to~$\sol_t(C^T(v_j), \H_j)$ and that~$S^+_j$ contains~$v_j$ but~$S^-_j$ does not (assuming there is any solution that does not contain~$v_j$). When~$|S^+_j| \leq |S_j^-|$, it effectively tells us we can find a solution to~$(C^T(v_j), \H_j)$ that contains~$v_j$ (thereby potentially hitting subgraphs of~$\H$ that intersect~$T_j$ in multiple vertices) without additional cost. We will therefore treat the vertices~$v_j$ for which~$|S^+_j| \leq |S_j^-|$ differently from the vertices where we get~$|S^+_j| > |S_j^-|$.

        We now proceed as follows. Intuitively, whenever it is possible for a solution in an extended component~$C^T(v_j)$ to contain the attachment point~$v_j$ without using more vertices than needed without being forced to contain the attachment point, we do so. Let~$T'$ be~$T$ minus all vertices~$v_j$ for which~$|S^+_j| \leq |S^-_j|$. Using a bounded-depth branching algorithm, find a minimum vertex set~$S_{T}$ of~$G[T']$ that hits all subgraphs~$\H \cap T'$ which were not yet hit by~$T \setminus T'$, if there is such a vertex set if size at most~$\kappa$. (To find such a set, it suffices to run a bounded-depth search tree algorithm with a recursion depth of~$\kappa$ and a branching factor of~$\max_{H \in \H} |V(H)| < t \in \Oh_t(1)$.)

        We now define the output of the algorithm as follows. If the bounded-depth search tree does not find any solution of size at most~$\kappa$, we output the trivial solution~$V(G)$. Otherwise, we output the union of~$S_{T}$, the sets~$S^+_j$ for the components~$C^T(v_j) - v_j$ where~$|S^+_j| \leq |S^-_j|$, and the sets~$S^-_j$ for the remaining components. Intuitively, the sets~$S^+_j$ and~$S^-_j$ together hit all the copies of~$K_t$ in~$G$ (since~$G[T]$ is~$K_t$-free) and also hit those subgraphs in~$\H$ which are fully contained in some~$C^T(v_j)$, while the set~$S_{T}$ hits the remaining subgraphs of~$\H$.
    \end{enumerate}
\end{enumerate}

This concludes the description of the algorithm.

\subparagraph*{Running time.} We argue that the algorithm runs in time~$n^{\Oh_t(\lambda + \kappa)^2}$ by analyzing the recursion tree generated by the algorithm.

To bound the depth of the recursion tree, first note that the value of~$\lambda + \kappa$ never increases in the algorithm: we decrease~$\lambda$ whenever we increase~$\kappa$. The value of of~$\lambda + \kappa$ strictly decreases after every three successive levels of recursion: if we have an iteration in which~$\lambda + \kappa$ does not decrease, it is either because we split on connected components or because we remove the vertices~$Z$ which do not belong to any~$K_t$-subgraph and make a guess of~$Z' = \emptyset$. There cannot be three such recursive calls in a row: after splitting on connected components, the graph is connected; when removing vertices~$Z$ which do not belong to any~$K_t$-subgraph and splitting on connected components, in the next call the graph will be connected with every vertex in a~$K_t$-subgraph so that we will trigger Case~\ref{xp:alg:emptyz} which decreases~$\lambda$ as it recurses. Since the value of~$\lambda + \kappa$ decreases every three successive levels and neither parameter can become negative, the depth of the recursion tree is bounded by~$\Oh(\lambda + \kappa)$.

The branching factor of the algorithm is bounded by~$n^{\Oh(\kappa)}$, since the sets~$Z' \subseteq Z$ we branch on in Case~\ref{xp:alg:branchz} have size at most~$\kappa$; trivially, the number of connected components for which we recurse cannot exceed~$n$. The number of nodes in the recursion tree is bounded by the branching factor~$n^{\Oh(\kappa)}$ raised to the power depth~$\Oh(\lambda + \kappa)$, and is therefore~$n^{\Oh((\lambda + \kappa)^2)}$.

The time per iteration of the described algorithm is bounded by~$t^{\lambda + \kappa} \cdot n^{\Oh(1)} + n^{\Oh_1(\lambda)} \subseteq n^{\Oh_t(\lambda + \kappa)}$: the expensive steps are running a bounded-depth search tree algorithm with a recursion depth of~$\kappa$ and branching factor less than~$t$, and invoking \autoref{lemma:computeRoot}. Multiplying the time spent per node with the total number of nodes, we get a total running time of~$n^{\Oh((\lambda + \kappa)^2)}$.

\subparagraph*{Correctness.} Correctness in the base cases is trivial. Below, we treat the various step cases and argue for their correctness. For this argument we use induction on the depth of the recursion tree generated by the recursive call; this is valid since the running time analysis establishes that this depth is finite. Hence we may assume inductively that all recursive calls performed by the algorithm result in an output consistent with the specification of the algorithm.

\begin{enumerate}
    \item When~$G$ is disconnected, the algorithm returns the union of recursive calls for the connected components~$C_1, \ldots, C_m$. Note that~$\bedt(C_i) \leq \bedt(G)$ by \autoref{def:bedHplus}. Since each forbidden subgraph is contained entirely within one connected component, we have~$\opt(G) = \sum_{i \in [m]} \opt(C_i)$ and~$\opt(G, \H) = \sum_{i \in [m]} \opt(G, \H_{C_i})$. From this, it follows that~$\opt(C_i, \H_{C_i}) - \opt(C_i) \leq \opt(G, \H) - \opt(G)$. What we have derived so far shows that whenever~$\bedt(G) \leq \lambda$ and~$\opt(G,\H) \leq \opt(G) + \kappa$ (which requires the algorithm to output an optimal solution), we will have~$\bedt(C_i) \leq \lambda$ and~$\opt(C_i, \H_{C_i}) \leq \opt(C_i) + \kappa$ (which ensures that the recursive call outputs an optimal solution). It follows that the union of the recursively obtained sets is indeed an optimal solution whenever the conditions require the algorithm to output an optimal solution, which proves correctness for this case.
    \item We consider the two subcases separately.
    \begin{enumerate}
        \item Suppose first that~$Z \neq \emptyset$. We claim that the vertex set~$S^*$ given as output satisfies~$S^* \in \sol_t(G,\H)$. This is trivial if~$S^* = V(G)$. If not, then it is the union of a solution~$S_{Z'} \in \sol_t(G - Z, \H_{Z'})$ with the set~$Z'$. Since no vertex of~$Z$ occurs in any~$K_t$-subgraph by definition, this union trivially hits all copies of~$K_t$. For each forbidden subgraph~$H \in \H$ that is not hit by~$Z'$, we add the subgraph~$H \cap (G-Z)$ to the collection~$\H_{Z'}$. Since~$S_{Z'}$ hits all subgraphs of~$\H_{Z'}$, it follows that~$S_{Z'} \cup Z' \in \sol_t(G,\H)$.

        We now argue for optimality in case~$\bedt(G) \leq \lambda$ and~$\opt(G,\H) \leq \opt(G) + \kappa$. The algorithm considers all subsets~$Z' \subseteq Z$ of size at most~$\kappa$. Under the stated condition, for any~$S \in \optsol_t(G, \H)$ we will have~$|S \cap Z| \leq \kappa$: we have~$S \setminus Z \in \sol_t(G)$ since no vertex of~$Z$ is contained in a~$K_t$-subgraph, so if~$|S \cap Z| > \kappa$ then~$\opt(G) < \opt(G, \H) - \kappa$, a contradiction. The algorithm therefore considers a choice of~$Z'$ that matches with the intersection of an optimal solution with~$Z$, which is easily seen to lead to a solution of size~$\optsol_t(G,\H)$.

        \item We now deal with the last case, that~$Z = \emptyset$. To prove correctness, we first establish that the output~$S^*$ satisfies~$S^* \in \sol_t(G,\H)$. This is trivial when~$S^* = V(G)$, so consider the case that it is not. Then~$S^*$ was defined as the union of some sets~$S^+_j \in \sol_t(C'_j, \H_j)$, some sets~$S^-_j \in \sol_t(C'_j, \H_j)$, and a set~$S_{T}$ that hits all subgraphs in~$\H$ which are disjoint from the chosen~$S^+_j, S^-_j$. It is easy to see that~$S^* \in \sol_t(G,\H)$ for each such union: each~$K_t$-subgraph is fully contained in some component~$C'_j$ since~$G[T]$ is~$K_t$-free, and hence each~$K_t$-subgraph is hit by the solution~$S^+_j$ or~$S^-_j$ to the corresponding extended component~$C'_j$. Each forbidden graph~$H \in \H$ is either fully contained in some~$C'_j$, or is fully contained in~$G[T]$: since each such~$H$ is a clique, it cannot intersect multiple connected components of~$G - T$. When~$H \in \H$ is fully contained in~$C'_j$ it is hit by either~$S^+_j$ or~$S^-_j$; when~$H$ is contained in~$G[T]$ and not hit by the chosen vertices, then it is hit by the solution~$S_{T}$ for~$\H_{G[T'_i]}$ that is included in the solution. Hence each candidate solution belongs to~$\sol_t(G,\H)$.

        As the last part of the argument, we prove that~$S^* \in \optsol_t(G,\H)$ whenever~$\bedt(G) \leq \lambda$ and~$\opt(G,\H) \leq \opt(G) + \kappa$. To do so, we first prove the following claim which says that the computation of~$S^+_j$ and~$S^-_j$ succeeds in a certain technical sense.

        \begin{claim} \label{claim:xp:alg}
        If~$\opt(G,\H) \leq \opt(G) + \kappa$, then the sets~$S^+_j$ and~$S^-_j$ computed for each vertex~$v_1, \ldots, v_m$ of~$T$ satisfy the following:
        \begin{enumerate}
            \item If~$\optsolwith(C(v_j), \H \cap C(v_j), v_j) \neq \emptyset$, then~$S^+_j \in \optsol_t(C(v_j), \H \cap C(v_j))$.
            \item If~$\optsolwith(C(v_j), \H \cap C(v_j)) = \emptyset$, then~$S^-_j \in \optsol(C(v_j), \H \cap C(v_j))$.
        \end{enumerate}
        \end{claim}
        \begin{claimproof}
        Suppose first that~$\optsolwith(C(v_j), \H \cap C(v_j), v_j) \neq \emptyset$, so that there is an optimal solution to the subproblem that contains the attachment vertex~$v_j$ in the root. This means that removing vertex~$v_j$ (along with all the subgraphs from~$\H$ that it hits) from the subinstance, decreases the value of the optimum by one. Hence~$\opt(C(v_j) - v_j, \H \cap (C(v_j)-v_j)) = \opt(C(v_j), \H \cap (C(v_j)-v_j)) - 1$. Since~$C(v_j) - v_j$ is a subgraph of~$G-T$ and we computed~$T$ as a $\bedt$-root of~$G$, we have~$\bedt(G[C(v_j)] - v_j) \leq \bedt(G-T) \leq \bedt(G) - 1 \leq \lambda - 1$. Hence the recursive call with parameters~$(C(v_j) - v_j, \H \cap (C(v_j)-v_j), \lambda-1, \kappa)$ and whose result is combined with~$\{v_j\}$ to obtain~$S^+_j$, has the property that the third parameter~$\lambda-1$ indeed bounds the~$\bedt$-value of the graph in the first parameter. Using the fact that~$\opt(G) = \sum_{v \in T} C(v)$, it is not difficult to verify that~$\opt(C(v), \H \cap C(v)) \leq \opt(C(v)) + \kappa$ using the assumption~$\opt(G,\H) \leq \opt(G) + \kappa$. Together, this means that the preconditions are satisfied for the invoked recursive call to output an optimal solution. By the induction hypothesis, this means that the result of the recursive call a set belonging to~$\opt(C(v_j) - v_j, \H \cap (C(v_j)-v_j))$; since~$S^+_j$ was obtained by taking the union with~$v_j$ we therefore have~$S^+_j \in \optsolwith_t(C(v_j), \H \cap C(v_j))$; the optimality of the computed solution follows from the fact that its size is one larger than an optimal solution for~$(C(v_j) - v_j, \H \cap (C(v_j)-v_j))$, whose optimal solutions are in turn one smaller than those in for~$(C(v), \H \cap C(v_j))$. This proves the first item in the statement of the claim.

        The argument for the second item uses similar ideas, but is somewhat more subtle: the reason for this is that the recursive call is for the graph~$C(v_j) - v_j$ from which vertex~$v_j$ has been removed, while to capture the fact that we look for a solution without~$v_j$, we have added the projections of all~$K_t$-subgraphs of~$C(v_j)$ intersecting~$v_j$ to the annotated set~$\H'_j$. The main observation to handle this complication is the following: in the second case of the claim, we assume that~$\optsolwith_t(C(v_j), \H \cap C(v_j))$ is empty. Under this assumption, we have~$\opt(C(v_j), \H \cap C(v_j)) = \opt(C(v_j) - v_j, \H'_j)$, with~$\H'_j$ as defined during the algorithm. Note that adding the mentioned projections to the annotated set may increase the gap between the costs of optimal solutions for hitting~$K_t$-subgraphs versus solutions that also hit all annotated graphs. However, this gap will increase by at most one, since removal of~$v_j$ decreases~$\opt(C(v_j))$ by at most one. Since we increase the fourth parameter to~$\kappa + 1$ in the recursive call, this ensures that the parameters of the recursive call are such that the specification of the algorithm guarantees it outputs an optimal solution to the instance~$(C(v_j) - v_j, \H'_j)$. Due to the projection, this solution also belongs to~$\sol_t(C(v_j), \H \cap C(v_j))$, and since there is no optimal solution to that problem containing~$v_j$, its size is~$\opt((C(v_j), \H \cap C(v_j))$. This completes the proof of the claim.
        \end{claimproof}

        Using \autoref{claim:xp:alg} and Lemmata~\ref{lemma:restructure:root} and~\ref{lemma:overpay:in:root} we now complete the correctness proof. Under the active assumption that~$\bedt(G) \leq \lambda$ and~$\opt(G,\H) \leq \opt(G) + \kappa$, we prove that the computed solution~$S^*$ belongs to~$\optsol_t(G,\H)$. \autoref{claim:xp:alg} shows that the computed set~$S^+_j$ belongs to~$\optsol_t(C^T(v_j), \H \cap C(v_j))$ when there is an optimal solution containing~$v_j$; otherwise,~$S^-_j$ belongs to~$\optsol_t(C^T(v_j), \H \cap C(v_j))$. From this, together with the fact that the recursively computed sets are valid solutions to the respective subproblems, it follows that if~$\optsolwith_t(C^T(v_j), \H \cap C(v_j), v_j) \neq \emptyset$, then~
        $S^+_j$ is such a solution and it has size at most~$|S^-_j|$; if no optimal solution contains~$v_j$, then~$S^-_j \in \optsol_t(C^T(v_j), \H \cap C(v_j))$.
        By applying \autoref{lemma:restructure:root} separately for each pending component of~$T$, we obtain that there is an optimal solution in~$\optsol_t(G,\H)$ that contains the union of all the sets~$S^+_j$ for~$j \in T'$, and the union of all sets~$S^-_j$ for~$j \in T \setminus T'$. By \autoref{lemma:overpay:in:root}, the remaining part of the solution consists of at most~$\kappa$ vertices. Since the only subgraphs which are not hit yet by the chosen~$S^+_j$ and~$S^-_j$ are those fully contained in~$G[T]$ and which are not hit by~$T'$, the remainder of an optimal  solution is formed by a hitting set of size at most~$\kappa$ for these unhit subgraphs. Since the algorithm computes an optimal solution of size at most~$\kappa$ for this hitting set instance, the union of the obtained sets is indeed an optimal solution.
    \end{enumerate}
\end{enumerate}
This concludes the proof.
\end{proof}

\begin{lemma}\label{lemma:poly-opt-conf-lambda4}
  Let $t \ge 3$. There is an algorithm that, given an input $(G,\F)$ of $\EKtH$ and integer~$\lambda$ such that~$\bedt(G) \leq \lambda$, runs in time~$p_0^{(\lambda,t)}(n)=n^{\Oh_t(\lambda^2)}$ and outputs the following:
  \begin{itemize}
  \item the value $\opt(G)$, and
  \item whether the instance $(G,\F)$ is clean.
  \end{itemize}
\end{lemma}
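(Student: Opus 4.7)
The plan is to derive \autoref{lemma:poly-opt-conf-lambda4} as a direct application of the more general \autoref{lemma:OPTforboundedlambda}, which produces an algorithm taking a pair $(G, \H)$ together with parameters $\lambda, \kappa$ and outputting a set $S \in \sol_t(G, \H)$ that is optimal whenever $\bedt(G) \le \lambda$ and $\opt(G, \H) \le \opt(G) + \kappa$, in time $n^{\Oh_t((\lambda + \kappa)^2)}$. By choosing $\kappa = 0$ we obtain the desired running time $n^{\Oh_t(\lambda^2)}$, and two well-chosen invocations yield both the value $\opt(G)$ and the decision of cleanness.

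First, to compute $\opt(G)$, I would call the algorithm of \autoref{lemma:OPTforboundedlambda} on the input $(G, \emptyset, \lambda, 0)$. The precondition $\opt(G, \emptyset) \le \opt(G) + 0$ holds trivially, so the returned set $S_1$ satisfies $S_1 \in \optsol_t(G)$, and hence $\opt(G) = |S_1|$.

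Second, to decide whether $(G, \F)$ is clean, I would call the same algorithm on the input $(G, \F, \lambda, 0)$, obtaining a set $S_2 \in \sol_t(G, \F)$. The key observation is the following dichotomy. If $(G, \F)$ is clean then $\opt(G, \F) = \opt(G)$, so the precondition $\opt(G, \F) \le \opt(G) + 0$ holds and the specification guarantees $|S_2| = \opt(G, \F) = \opt(G) = |S_1|$. Conversely, if $(G, \F)$ is not clean then $\opt(G, \F) > \opt(G)$, and since we always have $|S_2| \ge \opt(G, \F)$ (as $S_2$ is a valid solution of $(G, \F)$), it follows that $|S_2| > \opt(G) = |S_1|$. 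Therefore, the algorithm outputs ``clean'' iff $|S_1| = |S_2|$, and this test is correct.

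The running time is dominated by the two calls to \autoref{lemma:OPTforboundedlambda} with $\kappa = 0$, each taking time $n^{\Oh_t(\lambda^2)}$, giving the claimed bound $p_0^{(\lambda,t)}(n) = n^{\Oh_t(\lambda^2)}$. There is no significant obstacle here once \autoref{lemma:OPTforboundedlambda} is available: the proof essentially reduces to noticing that the ``always-valid solution, optimal under the parameter bound'' guarantee of that lemma, combined with the trivial inequality $\opt(G, \F) \ge \opt(G)$, allows us to detect cleanness via a simple size comparison rather than having to compute $\opt(G, \F)$ directly (which might not fit the $\kappa = 0$ regime when the instance is not clean).
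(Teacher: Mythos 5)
Your proposal is correct and matches the paper's own proof exactly: two invocations of \autoref{lemma:OPTforboundedlambda} with $\kappa = 0$, once on $(G,\emptyset)$ and once on $(G,\F)$, followed by a size comparison. Your added observation — that when the instance is not clean the second call still returns some set in $\sol_t(G,\F)$, hence of size at least $\opt(G,\F) > \opt(G)$, making the test sound even outside the regime where optimality is guaranteed — is precisely the justification the paper leaves implicit, so your write-up is if anything slightly more careful.
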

\begin{proof}
To compute~$\opt(G)$ it suffices to apply \autoref{lemma:OPTforboundedlambda} to the tuple~$(G, \H := \emptyset, \lambda, \kappa := 0)$ and take the size of the obtained solution. To determine whether the instance~$(G,\F)$ is clean, we run the algorithm once more on~$(G, \H := \F, \lambda, \kappa := 0)$. If the output given in the second run is equally large as that in the first run, the instance is clean; otherwise it is not.
\end{proof}

\section{Hardness results}
\label{sec:hardness}

 In this section we present two  reductions from \textsc{CNF-SAT}, and to transfer the the non-existence of polynomial kernels (under reasonable complexity assumptions), we use the notion of \emph{polynomial parameter transformation}, introduced by Bodlaender, Thomass{\'{e}}, and Yeo~\cite{BodlaenderTY11}.  A polynomial parameter transformation from a parameterized problem $P$ to a parameterized problem $Q$ is an algorithm that, given an instance $(x,k)$ of $P$, computes in polynomial time an equivalent instance $(x',k')$ of $Q$ such that $k'$ is bounded by a polynomial depending only on $k$. It follows easily from the definition that if $P$ does not admit a polynomial kernel under some complexity assumption, then the same holds for $Q$.
The complexity hypothesis in the following proposition builds on the results by  Fortnow and Santhanam~\cite{FortnowS11}.

\begin{proposition}[Dell and van Melkebeek~\cite{DellM14}]
\label{prop:hardness-SAT}
{\sc CNF-SAT}  does not admit a polynomial kernel parameterized by the number of variables of the input formula, unless ${\sf NP} \subseteq {\sf coNP}/{\sf poly}$.
\end{proposition}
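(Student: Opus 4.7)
The plan is to derive the statement from the celebrated lower bound of Fortnow and Santhanam~\cite{FortnowS11}, which says that if any \textsf{NP}-hard language admits an OR-distillation algorithm into an output of polynomial size in the bit-length of a single input, then ${\sf NP} \subseteq {\sf coNP}/{\sf poly}$. So I would aim to show that a polynomial kernel for \textsc{CNF-SAT} parameterized by the number of variables $n$ would enable exactly such a distillation.

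First, I would establish an \emph{OR-composition} for \textsc{CNF-SAT} with respect to~$n$: given $t$ instances $\phi_1,\ldots,\phi_t$, each a CNF on the same variable set $\{x_1,\dots,x_n\}$ (we may assume this WLOG by renaming), build a single CNF $\Psi$ on $n + \lceil \log_2 t\rceil$ variables that is satisfiable iff some~$\phi_i$ is. Concretely, introduce selector variables $y_1,\dots,y_{\lceil \log t\rceil}$; for every $i \in [t]$ and every clause $C$ of $\phi_i$, let $\mathrm{sel}_i$ be the conjunction of literals over $y_1,\dots,y_{\lceil\log t\rceil}$ forcing the selector to encode~$i$, and add the clause $C \vee \lnot \mathrm{sel}_i$ (which is equivalent to a CNF clause, after pushing the negation). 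Then any satisfying assignment picks some $i$ via the $y$'s and satisfies $\phi_i$, and conversely. Crucially, $\Psi$ has $n + \lceil \log_2 t\rceil$ variables, which is the parameter bound required by OR-composition.

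Next, suppose for contradiction that \textsc{CNF-SAT} had a polynomial kernel of size $n^{c}$ in its number of variables. Applying this kernel to~$\Psi$ would yield an equivalent \textsc{CNF-SAT} instance of total bit-length $(n+\log t)^{c} = \mathrm{poly}(n+\log t)$, which decides the OR of the original $t$ instances. Choosing $t$ to be a sufficiently large polynomial in the maximum bit-length~$s$ of a single $\phi_i$ (so that $\log t \le s$ but~$t$ grows super-polynomially in~$s$), this becomes a polynomial-size OR-distillation of \textsc{CNF-SAT} into itself, as formalized by Fortnow and Santhanam. Their complementary witness lemma then yields ${\sf NP} \subseteq {\sf coNP}/{\sf poly}$, contradicting the hypothesis; thus no such kernel exists. (Alternatively, one can invoke the cross-composition framework of Bodlaender, Jansen and Kratsch to phrase the same argument.)

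The main obstacle is the bookkeeping in the composition step, specifically ensuring that the selector encoding only costs $O(\log t)$ additional variables rather than $O(t)$; a naive ``$y_i$ selects $\phi_i$'' encoding would blow up the parameter and destroy the reduction. The logarithmic encoding via binary selector bits, together with the clause-wise disjunction $C \vee \lnot \mathrm{sel}_i$, is what keeps the parameter growth within the regime covered by the Fortnow--Santhanam lower bound.
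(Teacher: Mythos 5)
The paper does not prove \autoref{prop:hardness-SAT}; it is cited from Dell and van Melkebeek, whose techniques in turn build on the Fortnow--Santhanam OR-distillation theorem. Your reconstruction via an OR-composition with binary selector variables is exactly the standard route, and it does identify the crucial point: that the selector encoding must cost only $O(\log t)$ extra variables so that the resulting parameter stays within the regime where the Fortnow--Santhanam lower bound applies.

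There is, however, a concrete bug in your composition as stated. The claimed equivalence ``any satisfying assignment picks some $i$ via the $y$'s and satisfies $\phi_i$'' fails whenever $t$ is not a power of two: a $y$-assignment can encode an out-of-range index $i^* > t$, in which case $\mathrm{sel}_i$ is false for every $i \in [t]$, so every added clause $C \vee \lnot\mathrm{sel}_i$ is satisfied vacuously and $\Psi$ becomes trivially satisfiable regardless of the $\phi_i$. You need to pad the instance list to a power of two (e.g.\ by duplicating $\phi_1$), or add clauses forbidding out-of-range selector values. Separately, the parenthetical ``so that $\log t \le s$ but $t$ grows super-polynomially in $s$'' is internally inconsistent with ``a sufficiently large polynomial in $s$'': a fixed polynomial $t = s^{O(1)}$ is precisely what Fortnow--Santhanam use, yielding a compressed output of size $(s + \log t)^c = \text{poly}(s)$ as required; there is no need (nor, with that choice, room) for $t$ to be super-polynomial.
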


We are ready to present our main hardness result, which is inspired by other reductions for related problems~\cite{JansenK11,FominS16,BougeretJS22,DekkerJ22,CyganLPPS14}. The crucial issue of this reduction, and its main conceptual novelty, is the following fact: when $H$ is not a clique, the intersection of an occurrence of $H$ with (a subgraph of) $G-X$  may be \textit{disconnected}. We exploit this fact by creating clause gadgets with large minimal blocking sets whose elements are disconnected (in \autoref{fig-reduction-lambda1}, each pair of consecutive non-adjacent vertices $u,v$ is an element of a blocking set), and this results in clause gadgets behaving as ``chains'' where the propagation of information (that is, the vertices picked by the solution) is done without needing edges connecting the elements of the chain (thus, in a ``wireless'' fashion), easily implying that the corresponding parameter in $G-X$ is bounded by a constant.

\begin{theorem}\label{thm:hardness-lambda1}
Let $H$ be a biconnected graph that is not a clique. The $H$-{\sc Subgraph Hitting} (resp. $H$-{\sc Induced Subgraph Hitting}) problem does not admit a polynomial kernel parameterized by the size of a given vertex set $X$ of the input graph $G$ such that $\veds(G-X) \leq 1$ (resp. $\vedis(G-X) \leq 1$), unless ${\sf NP} \subseteq {\sf coNP}/{\sf poly}$.
\end{theorem}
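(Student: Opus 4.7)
The plan is to prove the theorem by giving a polynomial parameter transformation from \textsc{CNF-SAT} parameterized by the number $n$ of variables to $H$-\textsc{Subgraph Hitting} (resp.\ $H$-\textsc{Induced Subgraph Hitting}) parameterized by $|X|$, and then invoking \autoref{prop:hardness-SAT}. Given a formula $\phi$ with $n$ variables and $m$ clauses, the reduction constructs in polynomial time an instance $(G,X,k)$ of the target problem with $|X|$ bounded by a polynomial in $n$ and with $\veds(G-X)\leq 1$ (resp.\ $\vedis(G-X)\leq 1$), such that $\phi$ is satisfiable if and only if $(G,X,k)$ is a \yes-instance.

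The structural input from the assumptions on $H$ is twofold: $H$ is $2$-connected with $|V(H)|\geq 4$, and since $H$ is not a clique, there is a pair of non-adjacent vertices $u^\star, v^\star \in V(H)$. This non-adjacent interface is exactly what powers the construction: copies of $H$ can be glued together at $\{u^\star, v^\star\}$, producing a chain whose minimal blocking sets may contain non-adjacent pairs of vertices, so the chain propagates ``wirelessly''---no edge is needed in $G-X$ between consecutive links. The variable gadget associates to each $x_i$ two vertices $t_i, f_i \in X$ together with an $H$-configuration using $\{t_i,f_i\}$ as the interface pair, calibrated (by multiplicities and budget) so that any solution contains exactly one of $\{t_i, f_i\}$, encoding a Boolean assignment. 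The clause gadget for $C_j = \ell_{j,1}\lor\cdots\lor\ell_{j,q_j}$ is a chain of $H$-copies glued along $\{u^\star, v^\star\}$-interfaces; each link is attached to the $X$-vertex of the corresponding literal and contains internal vertices private to $G-X$ plus a shared hub vertex, so that each link can be hit either by selecting the corresponding literal-$X$-vertex or by paying one internal vertex, while the union of all links inside $G-X$ still has every $H$-copy going through the hub.

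To verify the three required properties: (i) the transformation is polynomial and $|X| = \Oh(n)$ (two vertices per variable plus a constant number of auxiliary vertices); (ii) $\veds(G-X)\leq 1$ follows because each connected component of $G-X$ coming from a clause gadget can be made $H$-free by removing the single hub vertex of the chain, after which the remaining vertices no longer belong to any copy of $H$ and can be peeled off ``for free'' by the second case of \autoref{def:vedH}; the same argument transfers to $\vedis$ after checking that no induced copy of $H$ is produced by the gluing, which relies on $u^\star$ and $v^\star$ being non-adjacent so that no unintended edge is forced between the private parts of different links; (iii) the equivalence with satisfiability is a textbook forward/backward translation: a satisfying assignment yields a hitting set of the prescribed size by selecting one of $\{t_i,f_i\}$ per variable plus the canonical internal choice per clause, while any hitting set of that size can be rearranged by local exchanges into this canonical shape and read off as an assignment in which every clause has at least one true literal.

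The main obstacle I anticipate is the precise design and budget analysis of the clause chain: one must ensure that the chain's minimal blocking set has size exactly the number of literals (so that satisfied clauses require no extra internal deletion while unsatisfied ones cost one extra), while \emph{simultaneously} keeping the entire clause gadget in $G-X$ destroyable by the removal of a single hub vertex, and while preventing the emergence of any unintended copy of $H$ (resp.\ induced copy of $H$) either inside $G-X$ or spanning $X$ and $G-X$. Balancing these three constraints for an arbitrary biconnected non-clique $H$ is the technical heart of the reduction, and it is precisely what breaks when $H$ is a clique: a clique has no non-adjacent pair, hence no ``wireless'' interface through which the chain can propagate, which is exactly why the dichotomy in \autoref{thm:dichotomomy} pivots at the clique/non-clique boundary.
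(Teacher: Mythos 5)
Your high-level plan (a polynomial parameter transformation from \textsc{CNF-SAT} via \autoref{prop:hardness-SAT}, exploiting the non-adjacent pair $u^\star,v^\star$ in $H$ to build clause chains that propagate ``wirelessly'') matches the paper, but your concrete gadget design diverges in a way that leaves the hard steps open, and you acknowledge this yourself.

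The central divergence is how $\veds(G-X)\leq 1$ is obtained. You propose a clause chain in which every occurrence of $H$ inside $G-X$ passes through a single shared \emph{hub} vertex, so that removing the hub renders the component $H$-free and the remaining vertices peel off for free. The paper does not need (and does not attempt) anything like this. Instead, for clause $C_j$ it places $c_j-1$ pairwise \emph{disjoint} copies of $H$ in $G-X$ together with two extra vertices $u_j^0,v_j^{c_j}$, and for each literal it inserts a \emph{fresh private} copy $F_j^i$ of $H'=H-\{u,v,w\}$, attached to $G-X$ only through the two designated vertices $u_j^{i-1},v_j^i$. Because $H$ is biconnected, any copy of $H$ inside $G-X$ that touched a vertex of some $F_j^i$ would have to leave $F_j^i\cup\{u_j^{i-1},v_j^i\}$ (a set of size $|V(H)|-1$) and would then have $u_j^{i-1}$ or $v_j^i$ as a cut vertex, contradicting biconnectivity. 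Hence no vertex of any $F_j^i$ lies in a copy of $H$ inside $G-X$; those vertices are removed for free by the second case of \autoref{def:vedH}, and what remains is a disjoint union of copies of $H$ and isolated vertices, each connected component of which trivially has $\veds=1$. No hub vertex is required, and the ``all $H$-copies through one vertex'' design constraint — which you flag as the main obstacle and which it is unclear can be met for an arbitrary biconnected non-clique $H$ while simultaneously keeping the blocking-set semantics — simply disappears.

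Two further gaps: (a) your budget analysis is deferred, but in the paper it is immediate by design: the number of pairwise disjoint $H$-copies placed ($n$ variable-copies plus $\sum_j(c_j-1)$ clause-copies) equals the budget $k=n-m+\sum_j c_j$, forcing every solution of size $k$ to take exactly one vertex per copy — this replaces the ``local exchange rearrangement'' you appeal to without proof. The paper's variable gadget is also simpler than the one you sketch: it places a whole copy of $H$ per variable directly into $X$, so no ``calibration by multiplicities'' is needed. (b) You do not address copies of $H$ in $G-S$ that span $X$ and several transversal copies; the paper rules these out by arguing that such a copy $F$ would intersect $X$ in a single vertex, which would be a separator of $F$, contradicting $2$-connectivity. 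This step is genuinely needed in the backward direction of the equivalence and is missing from your proposal.
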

\begin{proof}
    We present a polynomial parameter transformation from the \textsc{CNF-SAT} problem parameterized by the number of variables, which does not admit a polynomial kernel by \autoref{prop:hardness-SAT}, unless ${\sf NP} \subseteq {\sf coNP}/{\sf poly}$. We present our reduction for the $H$-\textsc{Subgraph Hitting} problem, and at the end of the proof we observe that the same reduction applies to $H$-\textsc{Induced Subgraph Hitting} as well. 

    Given a \textsc{CNF-SAT} formula $\phi$ with $n$ variables $x_1,\ldots,x_n$ and $m$ clauses $C_1,\ldots,C_m$, we proceed to construct in polynomial time an instance $G$ of $H$-\textsc{Subgraph Hitting}, together with a set $X \subseteq V(G)$ with $\vedH(G-X) \leq 1$ and $|X| = |V(H)|\cdot n$, such that $\phi$ is satisfiable if and only if $G$ contains a solution of $H$-\textsc{Subgraph Hitting} of size at most $n - m + \sum_{j=1}^m c_j$, where $c_j$ denotes the number of literals in clause $C_j$. Since $|V(H)|$ is a constant, this would indeed define a polynomial parameter transformation from  \textsc{CNF-SAT} parameterized by the number of variables to  $H$-\textsc{Subgraph Hitting}  parameterized by the size of a given vertex set $X$ of the input graph $G$ such that $\vedH(G-X) \leq 1$.

    For each variable $x_i$, we add a disjoint copy of $H$ to $G$. We call such a copy of $H$ the \emph{$i$-variable-copy} of $H$. For each clause $C_j$, we add $c_j - 1$ disjoint copies of $H$ to $G$, and we order them arbitrarily from $1$ to $c_j - 1$. Moreover, we add two new vertices $u_j^0$ and $v_j^{c_j}$ to $G$.  We call each of these $c_j -1$ copies of $H$ a \emph{$j$-clause-copy} of $H$. Note that, so far, we have introduced $n - m + \sum_{j=1}^m c_j$ disjoint copies of $H$ in $G$.

    We now proceed to interconnect these copies of $H$ according to $\phi$. Since $H$ is a biconnected graph that is not a clique (hence, it is 2-connected), it follows that $|V(H)| \geq 4$. Thus, in particular there exist two {\sl non-adjacent} vertices $u,v \in V(H)$ and another vertex $w \in V(H)$ distinct from $u$ and $v$. Let $H' = H - \{u,v,w\}$. (Even if it is not critical for the proof, note that $|V(H')| \geq 1$.) Let also $z^+$ and $z^-$ be two distinct vertices of $H$ (not necessarily different from $u,v,w$). We will use the copies of these vertices in the variable-copies and clause-copies of $H$ to interconnect them in $G$. To this end, for three distinct vertices $a,b,c \in V(G)$ and a subgraph $F$ of $G$ isomorphic to $H'$ not containing any of $a,b,c$, by \emph{adding an $(a,b,c,V(F))$-copy of $H$ to $G$} we mean the operation of, starting from $G[\{a,b,c\} \cup V(F)]$, adding the missing edges to complete a copy of $H$, where vertex $a$ (resp. $b$, $c$) of $G$ plays the role of vertex $w$ (resp. $u$, $v$) of $H$, and $F$ plays the role of $H'$, with a fixed isomorphism that we suppose to have at hand.

    For each clause $C_j$ of $\phi$, consider an arbitrary ordering of its literals as $\ell_1, \ldots, \ell_{c_j}$, and recall that $G$ contains $c_j - 1$ ordered disjoint $j$-clause-copies of $H$ together with two extra vertices $u_j^0$ and $v_j^{c_j}$. For $i \in [c_j]$, we add a new copy of $H'$ to $G$, which we denote by $F_{j}^i$. For $i \in [c_j -1]$, let $u_j^i$ and $v_j^i$ be the copies of vertices $u$ and $v$ of $H$, respectively,  in the $i$-th $j$-clause-copy of $H$. For $i \in [c_j]$, if literal $\ell_i$ of clause $C_j$ corresponds to a positive (resp. negative) occurrence of a variable $x_p$, let $z$ be the copy of vertex $z^+ \in V(H)$ (resp. $z^- \in V(H)$) in the $p$-variable-copy of $H$. Then we add a $(z,u_j^{i-1},v_j^{i},V(F_{j}^i))$-copy of $H$ to $G$, and we call such a copy of $H$ a \emph{transversal-copy} of $H$, denoted by $H_j^i$. We define $X  \subseteq V(G)$ to be the union of the vertex sets of all the variable-copies of $H$, and note that $|X| = |V(H)| \cdot n$. This completes the construction of $G$ and $X$, which is illustrated in \autoref{fig-reduction-lambda1}.

\begin{figure}[tb]
 \begin{center}
  \includegraphics[scale=.9]{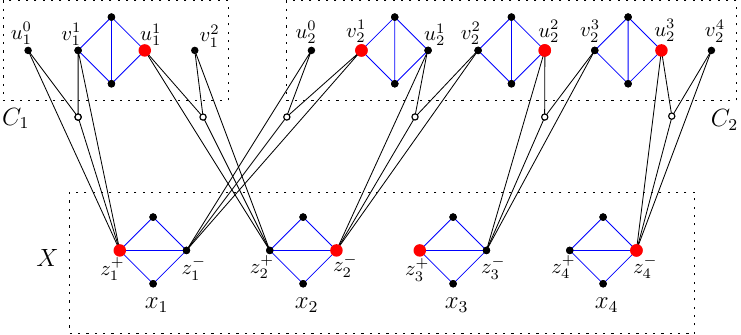}
\end{center}
\caption{Example of the construction of graph $G$ in the proof of \autoref{thm:hardness-lambda1} for $H$-\textsc{Subgraph Hitting}. In this example, $H$ is the diamond (that is, $K_4$ minus one edge), $u$ and $v$ are the only pair of non-adjacent vertices in $H$, and $w$ is any other vertex. The construction corresponds to a CNF-SAT formula $\phi$ consisting of two clauses $C_1=(x_1 \vee x_2)$ and $C_2 = (\bar{x}_1 \vee \bar{x}_2 \vee \bar{x}_3 \vee \bar{x}_4)$, and the satisfying assignment $\alpha(x_1)=1$, $\alpha(x_2)=0$, $\alpha(x_3)=1$, and $\alpha(x_4)=0$. The variable-copies and clause-copies of $H$ are depicted in blue,  the vertices in the copies of $H' = H - \{u,v,w\}$ (which is a single vertex) are the white ones, and the vertices in the solution $S$ are the large red ones. Note that clause $C_2$ is satisfied by both $\bar{x}_2$ and $\bar{x}_4$; in the example we have taken $\bar{x}_2$ as the satisfying literal.}
 \label{fig-reduction-lambda1}
 \end{figure}

    
    In the next claim we prove one of the properties claimed in the statement of the theorem.  

    \begin{claim}\label{claim:lambda1atmost1}
     $\veds(G-X) = 1$ and $\vedis(G-X) = 1$.
    \end{claim}
    \begin{proof}
    Note that each connected component of $G-X$ corresponds to the clause-copies of $H$ associated with a clause $C_j$ and isolated vertices, together with the copies of $H'$ between those $j$-clause-copies of $H$ and isolated vertices. By construction of $G$, each such a copy of $H'$, say $F_j^i$, has at most two neighbors in $G-X$, namely vertices $u_j^{i-1}$ and $v_j^{i}$. If an occurrence of $H$ as a subgraph in $G-X$, say $F$, contained a vertex of $F_j^i$, since $|V(H')| = |V(H)| -3$, necessarily $F$ contains at least one of $u_j^{i-1}$ and $v_j^{i}$, and at least one more vertex in the $(i-1)$-th or $i$-th $j$-clause-copies of $H$. Thus, $u_j^{i-1}$ or $v_j^{i}$ is a separator of size one of $F$, contradicting the hypothesis that $H$ is biconnected.

     That is, we have proved that no vertex in a copy $F_j^i$ of $H'$ in $G-X$ is contained in an occurrence of $H$ as a subgraph, hence neither as an induced subgraphs. Therefore, those vertices can be removed while preserving the value of $\veds(G-X)$. Formally,

     $$\veds(G-X) = \veds(G - X - \bigcup_{j=1}^{m}\bigcup_{i=1}^{c_j} V(F_j^i)),
     $$
     and the same holds for $\vedis(G-X)$. To conclude the proof of the claim, it suffices to note that $G - X - \bigcup_{j=1}^{m}\bigcup_{i=1}^{c_j} V(F_j^i)$ consists of a disjoint union of clause-copies of $H$ and isolated vertices, and using the fact that $\veds$ (resp. $\vedis$) of a disconnected graph is the maximum of $\veds$ (resp. $\vedis$) over its connected components, by removing one arbitrary vertex from each such a copy of $H$ we get that $\veds(G-X)=1$ and $\vedis(G-X)=1$.
    \end{proof}


We now claim that $\phi$ is satisfiable if and only if $G$ contains a solution $S \subseteq V(G)$ of $H$-\textsc{Subgraph Hitting} of size at most $n - m + \sum_{j=1}^m c_j$.

Suppose first that $\phi$ is satisfiable and let $\alpha: \{x_1,\ldots,x_n\} \to \{0,1\}$ be a satisfying assignment of the variables. We define a set $S \subseteq V(G)$ of size $n - m + \sum_{j=1}^m c_j$ as follows (cf. the red vertices in \autoref{fig-reduction-lambda1}). For each variable $x_i$, if $\alpha(x_i)=1$  (resp. $\alpha(x_i)=0$), we add to $S$ the copy of $z^+$ (resp. $z^-$) in the $i$-variable-copy of $H$, which we denote by $z_i^+$ (resp. $z_i^-$). For each clause $C_j$, let $\ell_{s_j}$ be a literal in $C_j$ that is satisfied by the assignment $\alpha$. Recall that the $c_j -1$ $j$-clause-copies of $H$ are ordered (arbitrarily) from $1$ to $c_j - 1$. We add to $S$ the vertex set
    $$\{ v_j^i \mid 1 \leq i \leq s_j -1  \} \cup \{u_j^i \mid s_j \leq i \leq c_j -1 \}.$$
    In words, we add to $S$ the  copy of vertex $v$ in all the $j$-clause-copies of $H$ from $1$ to $s_j -1 $, and the copy of vertex $u$ in all the $j$-clause-copies of $H$ from $s_j$ to $c_j -1$. Note that $|S| = n - m + \sum_{j=1}^m c_j$, and it remains to prove that $G - S$ does not contain $H$ as a subgraph.  Note that each variable-copy and clause-copy of $H$ contains exactly two vertices that have neighbors outside of that copy --let us call these vertices \emph{boundary vertices} of that copy--, and that $S$ contains exactly one of these two boundary vertices for each of these copies of $H$. Hence, since $H$ is biconnected, no occurrence of $H$ in $G-S$ can contain a non-boundary vertex in a variable-copy or clause-copy of $H$.

    Moreover, there do not exist two pairs of integers $(i_1,j_1)$ and $(i_2,j_2)$, with $i_1 \neq i_2$ or $j_1 \neq j_2$, such that there exists an occurrence $F$ of $H$ in $G-S$ with $F \cap (V(H^{i_1}_{j_1}) \setminus X) \neq \emptyset$ and $F \cap (V(H^{i_2}_{j_2} \setminus X) \neq \emptyset$.
    Indeed, if such $(i_1,j_1)$ and $(i_2,j_2)$ existed, then,
    as $|N(V(H^i_j)\setminus X) \cap X| = 1$ for any two indices $i,j$, and as $F$ cannot contain a non-boundary vertex of a variable-copy of $H$, there would exist $z \in X$ such that $F \cap X = \{z\}$, implying that $z$ is a separator of $F$, contradicting the $2$-connectivity of $H$.

 Thus, if an occurrence of $H$ in $G-S$ existed, say $F$, then the above discussion and the construction of $G$ imply that $F$ should be one of the transversal-copies of $H$. But such an $F$ cannot exist in $G-X$ by the choice of~$S$: either $S$ contains one of the boundary vertices in the two $j$-clause-copies intersected by $F$ for some $j \in [m]$ or, if it is not the case, then $S$ contains the vertex in a variable-copy of $H$ corresponding to the literal that satisfies clause~$C_j$.

\medskip

Conversely, let $S \subseteq V(G)$ be a solution of $H$-\textsc{Subgraph Hitting} of size at most $n - m + \sum_{j=1}^m c_j$. Since $G$ contains $|S|$ disjoint variable-copies and clause-copies of $H$, necessarily $S$ consists of exactly one vertex in each of these copies. Since the boundary vertices in each of the variable-copies and clause-copies of $H$ are the only vertices with neighbors outside of the corresponding copy, we may assume that all the vertices in $S$ are boundary vertices. We define from $S$ a satisfying assignment $\alpha$ of $\phi$ as follows. If $S$ contains $z_i^+$ (resp. $z_i^-$) we set $\alpha(x_i)=1$  (resp. $\alpha(x_i)=0$). Let us verify that $\alpha$ indeed satisfies all the clauses of $\phi$. Consider an arbitrary clause $C_j$ with $c_j$ literals, and note that $S$ contains $c_j - 1$ vertices in the $j$-clause-copies of $H$. Therefore, there exists $s_j \in [c_j]$ such that the $s_j$-th transversal-copy of $H$ associated with $C_j$, say $F$, is {\sl not} hit by a vertex in a clause-copy of $H$. Thus, since $S \cap V(F) \neq \emptyset$, necessarily there exists an index $i \in [n]$ such that $S \cap V(F)$ is equal to either $z_i^+$ or $z_i^-$, and thus the defined assignment of variable $x_i$ satisfies clause $C_j$.One direction of the equivalence is proved in the next claim (such a solution $S$ is illustrated by the red vertices in  \autoref{fig-reduction-lambda1}).

%

\medskip

    To conclude the proof, we claim that the same reduction presented above proves the hardness result for the $H$-\textsc{Induced Subgraph Hitting} problem. Indeed, in the proof of the equivalence between the satisfiability of $\phi$ and the existence of a solution $S$ of $H$-\textsc{Subgraph Hitting} with the appropriate size, all that is relevant to the proof are the variable-copies, clause-copies, and transversal-copies of $H$. As all these occurrences of $H$ in $G$ occur as induced subgraphs,  the same reduction implies the non-existence of polynomial kernels for  $H$-\textsc{Induced Subgraph Hitting}.
\end{proof}

In \autoref{thm:hardness-treedepth} we replace the condition ``$\vedH(G-X) \leq 1$'' of \autoref{thm:hardness-lambda1} with the condition that  $\td(G-X)$ is bounded by a constant.
However, in the proof of \autoref{thm:hardness-treedepth} we need an extra condition on $H$ stronger than biconnectivity, namely the non-existence of a stable cutset.  The reduction in the proof of \autoref{thm:hardness-treedepth}  follows essentially the same lines as the one described in \autoref{thm:hardness-lambda1}, but in order to guarantee that $\td(G-X)$ is bounded by a constant (depending on $H$), we need to be more careful. Namely, in the interconnection among the variable and clause gadgets, now we cannot afford to add a distinct gadget for each literal in a clause, as it was the case for the copies of $H'$ in the proof of \autoref{thm:hardness-lambda1} (cf. the white vertices in \autoref{fig-reduction-lambda1}). Indeed, these copies of $H'$ can be removed ``for free'' when dealing with $\vedH$, but it is not the case anymore when dealing with treedepth, as they may blow up the value of $\td(G-X)$. In a nutshell, we overcome this issue by ``reusing'' these copies of a (now, carefully chosen) subgraph $H' \subseteq H$ for all the literals of the same clause. However, having a single common $H'$ for each clause may create undesired occurrences of $H$ (other than the variable-copies, clause-copies, and transversal-copies, as we wish), and preventing the existence of these undesired copies is the reason why we need an assumption on $H$ stronger than  biconnectivity. We now restate \autoref{thm:hardness-treedepth} and provide its full proof. 

\begin{theorem}[Restatement of \autoref{thm:hardness-treedepth}]\label{thm:hardness-treedepth-restated}
Let $H$ be a graph on $h$ vertices that is not a clique and that has no stable cutset. None of the $H$-{\sc Subgraph Hitting} and $H$-{\sc Induced Subgraph Hitting} problems admits a polynomial kernel parameterized by the size of a given vertex set $X$ of the input graph $G$ such that $\td(G-X) = \O(h)$, unless ${\sf NP} \subseteq {\sf coNP}/{\sf poly}$.
\end{theorem}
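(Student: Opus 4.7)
The plan is to adapt the polynomial parameter transformation from \textsc{CNF-SAT} (parameterized by the number of variables) used in the proof of \autoref{thm:hardness-lambda1}, with one crucial modification that brings the parameter from $\veds(G-X) \le 1$ down to $\td(G-X) = \O(h)$: instead of introducing a distinct copy of $H'$ for every literal of every clause, I would introduce a \emph{single shared} copy of $H'$ per clause to which every transversal-copy of $H$ associated with that clause attaches. The overall blueprint (variable-copies of $H$, clause-copies of $H$ arranged in a chain, transversal-copies linking them), the modulator $X$ (the union of all variable-copies, with $|X| = h \cdot n$), and the target solution size ($n - m + \sum_j c_j$) remain unchanged.

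The choice of the three vertices $u, v, w \in V(H)$ used to form $H' := H - \{u,v,w\}$ now has to be made carefully, using the hypothesis that $H$ has no stable cutset. Since $H$ is not a clique, I can pick two non-adjacent vertices $u, v$; the set $\{u,v\}$ is independent, so by hypothesis it is not a cutset and $H - \{u,v\}$ is connected. Likewise, no singleton $\{u\}$ or $\{v\}$ can be a stable cutset, so neither $u$ nor $v$ is isolated in $H - \{v\}$ or $H - \{u\}$ respectively; I would then choose $w \in V(H) \setminus \{u,v\}$ so that $\{u,v,w\}$ is \emph{not} an independent set of $H$, and (using the no-stable-cutset hypothesis once more) with enough asymmetry between the $H$-neighborhoods of $u, v, w$ in $V(H')$ to forbid unintended role-permutations among the three boundary vertices.

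For the treedepth bound, each connected component of $G-X$ is a ``clause gadget'' consisting of the $c_j-1$ vertex-disjoint clause-copies of $H$ for some clause $C_j$, the two boundary vertices $u_j^0, v_j^{c_j}$, and the single shared copy $F_j$ of $H'$, and these pieces are pairwise non-adjacent in $G-X$ outside of $V(F_j)$. Thus deleting $V(F_j)$ (of size $h-3$) from such a component leaves a disjoint union of $c_j-1$ copies of $H$ and two isolated vertices, yielding treedepth at most $(h-3) + h = 2h-3 = \O(h)$. The equivalence between satisfiability of $\phi$ and the existence of a solution of size $n - m + \sum_j c_j$ in $G$ then proceeds along the lines of \autoref{thm:hardness-lambda1}: the forward direction picks one suitable boundary vertex in each variable- and clause-copy; the converse direction uses a counting argument against the $n + \sum_j(c_j - 1)$ vertex-disjoint variable- and clause-copies of $H$ to force any solution of the required size to use exactly one boundary vertex in each of them and to be disjoint from every $V(F_j)$, after which, for each clause $C_j$, the same combinatorial ``chain'' argument as before identifies a transversal-copy of $H$ that must be hit through a $z$-vertex of a variable-copy, exhibiting a satisfied literal. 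As in the previous proof, all occurrences of $H$ under consideration are induced, so the reduction works verbatim for $H$-\textsc{Induced Subgraph Hitting} as well.

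The main obstacle is to verify that the only copies of $H$ in the modified $G$ are the intended ones (variable-copies, clause-copies, and transversal-copies), since a shared $F_j$ is now simultaneously adjacent to boundary vertices of many transversal-copies. A spurious copy of $H$ would consist of $V(F_j)$ together with three vertices $a, b, c$ drawn from pairwise distinct transversal-copies of $C_j$, with $G[V(F_j) \cup \{a,b,c\}]$ isomorphic to $H$ under some bijection sending $\{a,b,c\}$ to $\{u,v,w\}$. By construction, the only edges between such boundary vertices are those inserted while completing a single transversal-copy containing both endpoints, so $G[\{a,b,c\}]$ is an \emph{independent} set. Having chosen $w$ so that $\{u,v,w\}$ is not independent in $H$, and having controlled the neighborhoods of $u, v, w$ in $V(H')$ using the no-stable-cutset assumption to rule out alternative isomorphisms that would permute the three roles, this forces $a, b, c$ to come from a single transversal-copy; hence no spurious copies appear and the equivalence, together with the treedepth bound, completes the polynomial parameter transformation.
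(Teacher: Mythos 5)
Your high-level idea is the same one the paper uses: share a single copy $F_j$ of $H'$ per clause, keep $X$ as the union of the variable copies, and recurse through the same two-case analysis of a surviving copy of $H$ (contained in one clause gadget, or spanning several). Your treedepth accounting is also correct. But your construction of the clause gadget and, crucially, your argument that no spurious copy of $H$ survives, diverge from the paper's and contain a genuine gap.

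The paper does \emph{not} work with a non-adjacent pair $\{u,v\}$ of $H$. It picks two disjoint \emph{anticomplete sets} $A,B\subseteq V(H)$ with $|A|+|B|$ \emph{maximum}, replaces each clause-copy of $H$ by a circular $(A,B)$-gadget built from $2\max\{|A|,|B|\}$ copies of $H$ whose attachment points are forced (via extra $H[A]$- and $H[B]$-edges) to be picked wholesale as either all $A$-vertices or all $B$-vertices, and defines $H' := H-(A\cup B\cup\{w\})$. The point of this machinery is precisely the spurious-copy argument you gloss over. In the one-clause-gadget case, the vertices of the spurious copy $F$ outside $V(F_j)$ number at least $h-|V(H')| = |A|+|B|+1$; the structure of $S$ guarantees they split into two anticomplete classes $A',B'$ (each $z$-vertex being anticomplete to one side because the opposite side of its own transversal copy was deleted), and $|A'|+|B'| > |A|+|B|$ contradicts the maximality of $(A,B)$. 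No assumption on which $H$-roles the boundary vertices play is needed; the contradiction is purely to the extremal choice of $(A,B)$. The no-stable-cutset hypothesis is used only for the multi-clause-gadget case, exactly as in \autoref{thm:hardness-lambda1}, and not at all to steer the choice of $A$, $B$, or $w$.

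Your proposal, by contrast, keeps simple clause-copies of $H$ and only requires $\{u,v\}$ non-adjacent and $\{u,v,w\}$ non-independent, hoping that a further \emph{ad hoc} asymmetry between the $H'$-neighborhoods of $u$, $v$, $w$ (derived somehow from the no-stable-cutset hypothesis) forbids role permutations. This is where the argument breaks down. A spurious copy $F$ need not use all of $V(F_j)$, and its isomorphism $\phi$ to $H$ need not send $V(F)\cap V(F_j)$ onto $V(H')$ nor the boundary vertices onto $\{u,v,w\}$: all that is required is that $\phi^{-1}(V(F)\setminus V(F_j))$ be some set $Y\subseteq V(H)$ with $H[Y]$ matching the (very sparse, but not necessarily empty) graph induced on the boundary vertices, $H-Y$ a subgraph of $H'$, and the three attachment-neighborhood types realizable. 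When $H$ admits disjoint anticomplete sets with $|A|+|B|\geq 3$ (which is perfectly compatible with having no stable cutset — e.g.\ $C_5$ plus a universal vertex plus one further vertex adjacent to two consecutive cycle vertices), the set $V(F)\setminus V(F_j)$ may be a near-independent triple not mapping to $\{u,v,w\}$, and then the non-independence of $\{u,v,w\}$ gives you nothing. You would need to show that a suitable $(u,v,w)$ always exists for every admissible $H$, which you do not, and I do not believe it is true in general; the maximality of $|A|+|B|$, together with the $(A,B)$-gadget that forces the solution to pick a side, is what makes the paper's contradiction go through. You also tacitly restrict to boundary triples coming from pairwise distinct transversal copies (two of the three could lie in the same transversal copy, in which case they may be adjacent), so even the independence of $\{a,b,c\}$ is not established. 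The fix is essentially what the paper does: replace the non-adjacent pair by extremal anticomplete sets and redesign the clause gadgets so that the overpaying argument forces an all-$A$ or all-$B$ choice, after which the extremality yields the contradiction directly.
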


\begin{proof}
We again present a polynomial parameter transformation from the \textsc{CNF-SAT} problem parameterized by the number of variables, which does not admit a polynomial kernel by \autoref{prop:hardness-SAT}, unless ${\sf NP} \subseteq {\sf coNP}/{\sf poly}$. We also present the reduction for the $H$-\textsc{Subgraph Hitting} problem, and at the end of the proof we observe that the same reduction applies to $H$-\textsc{Induced Subgraph Hitting} as well. Let $h = |V(H)|$.

    Given a \textsc{CNF-SAT} formula $\phi$ with $n$ variables $x_1,\ldots,x_n$ and $m$ clauses $C_1,\ldots,C_m$, we  construct in polynomial time an instance $G$ of the $H$-\textsc{Subgraph Hitting} problem, together with a set $X \subseteq V(G)$ with $\td(G-X) = \O(h)$ and $|X| = h \cdot n$, such that $\phi$ is satisfiable if and only if $G$ contains a solution of $H$-\textsc{Subgraph Hitting} of size at most~$t$, where $t$ will be defined later (cf. \autoref{eq:budget-t}).

    The variable gadgets remain unchanged. Namely, for each variable $x_i$, we add a disjoint copy of $H$ to $G$. We call such a copy of $H$ the \emph{$i$-variable-copy} of $H$. Let $z^+,z^-$ be two distinct vertices of $H$. For $i \in [n]$. we denote the copy of $z^+$ (resp. $z^-$) in the $i$-variable-copy of $H$ by $z_i^+$ (resp. $z_i^-$).

    Before defining the updated clause gadgets, we need an gadgets. Let $A,B \subseteq V(H)$ be two disjoint non-empty vertex subsets such that there is no edge between $A$ and $B$ and such that $|A| + |B|$ is maximized among all possible candidate pairs. Note that these sets $A,B$ exists because $H$ is not a clique, and let $a = |A|$ and $b = |B|$. Let $w \in V(G) \setminus (A \cup B)$, and note that such a vertex $w$ exists because $H$ is, in particular, connected. Let $H' = H - (A \cup B \cup \{v\})$. (The fact that $H$ has no stable cutset implies that $|V(H')| \geq 1$, but this is not relevant to the proof.)

    We now introduce a gadget whose objective is to mimic the behavior of the clause-copies of $H$ in the proof of \autoref{thm:hardness-lambda1}, but with the anticomplete sets $A,B \subseteq V(H)$ playing the role of the non-adjacent vertices $u,v \in V(H)$ (cf. \autoref{fig-reduction-lambda1}). Let $s,t$ be two arbitrary distinct vertices of $H$, and suppose  without loss of generality that $\max\{a,b\} = a$. We define the \emph{$(A,B)$-gadget} as the graph obtained from $2a$ disjoint copies of $H$ by identifying, in a circular way, vertex $s$ of a copy with vertex $t$ of the next one. We call these $2a$ vertices $s,t$ the \emph{attachment vertices} of the $(A,B)$-gadget, consider a circular ordering of them from $1$ to $2a$, following the order of the copies of $H$, and denote them $y_1, y_2, \ldots, y_{2a}$. We define the \emph{$A$-vertices} of the $(A,B)$-gadget as the set $\{y_i \mid i \in[2a], \text{ $i$ odd}\}$, and the \emph{$B$-vertices} of the $(A,B)$-gadget as  the set $\{y_i \mid i \in[2a], \text{ $i$ even}\}$. That is, the $A$-vertices consist of the $a$ ``even'' attachment vertices of the $(A,B)$-gadget, and the $B$-vertices consist  the $a$ ``odd'' attachment vertices. To conclude the construction of the $(A,B)$-gadget, we add the appropriate edges among the $A$-vertices so that they induce a graph isomorphic to $H[A]$, and the appropriate edges among an arbitrary subset of size $b$ of the $B$-vertices so that they induce a graph isomorphic to $H[B]$ (note that it may hold that $a >b$, and in that case some of the $B$-vertices are not used to create a copy of $H[B]$). See \autoref{fig-ABgadget} for an illustration.

 \begin{figure}[h!tb]
 \begin{center}
  \includegraphics[scale=.8]{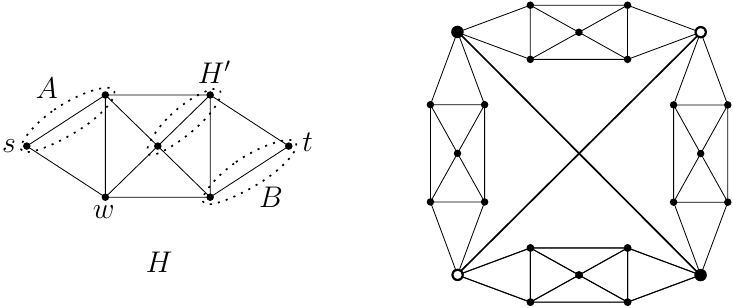}
\end{center}
\caption{On the left, a graph $H$ together with two anticomplete disjoint sets $A,B \subseteq V(H)$ with $|A|=|B|=2$, vertices $w,s,t$, and the subgraph $H'$ of $H$. On the right, an $(A,B)$-gadget, where the two large black (resp. white) vertices correspond to the $A$-vertices (resp. $B$-vertices) of the gadget.}
 \label{fig-ABgadget}
 \end{figure}

    The following claim formalizes the fact that any optimal solution of $H$-\textsc{Subgraph Hitting} in an $(A,B)$-gadget consists either all the $A$-vertices, or all the $B$-vertices.

    \begin{claim}\label{claim:ABgadget}
     Let $S$ be an optimal solution of $H$-\textsc{Subgraph Hitting} in an $(A,B)$-gadget. Then either $S= \{y_i \mid i \in[2a], \text{ $i$ odd}\}$ or $S= \{y_i \mid i \in[2a], \text{ $i$ even}\}$.
    \end{claim}
    \begin{proof}
    Since an $(A,B)$-gadget contains $a$ disjoint copies of $H$ (regardless of the subgraphs isomorphic to $H[A]$ and $H[B]$ added at end of the construction), necessarily $|S| \geq a$. Since $H$ is biconnected, it is easy to verify that either of the sets $\{y_i \mid i \in[2a], \text{ $i$ odd}\}$ or $\{y_i \mid i \in[2a], \text{ $i$ even}\}$ is an optimal solution. Finally, note that any solution $S$ needs to contain at least one vertex from each of the $a$ ``even'' copies of $H$ in the $(A,B)$-gadget, and at least one vertex from each of the $a$ ``odd'' copies of $H$ in the $(A,B)$-gadget. The two sets above are the only ones of size $a$ that hit all these copies of $H$. Indeed, the latter statement follows, for instance, from the fact that an even cycle has exactly two optimal vertex covers, each of them corresponding to a color class of a proper 2-coloring of the cycle.
    \end{proof}

    Equipped with the $(A,B)$-gadget, we can now define the clause gadgets. One should think of these gadgets as a generalized version of the clause gadgets in \autoref{fig-reduction-lambda1} with the role of the vertices $u,v \in V(H)$ replaced by the sets $A,B \subseteq V(H)$. Namely,
   for each clause $C_j$ with $c_j$ literals, we add $c_j - 1$ disjoint $(A,B)$-gadgets. Let the $A$-vertices and $B$-vertices of these gadgets be denoted by $\{A_j^i \mid i \in [c_j -1]\}$ and $\{B_j^i \mid i \in [c_j -1]\}$, respectively. Moreover, we add two graphs with vertex sets $A_j^0$ and $B_j^{c_j}$, both of size $a$, so that $G[A_j^0]$ is isomorphic to $H[A]$,  and $G[B_j^{c_j}]$ consists of a subgraph isomorphic to $H[B]$ plus isolated vertices. For $i \in [c_j]$, we let $\bar{B}_j^{i} \subseteq B_j^{i}$ be the subset of $B_j^{i}$ that induces a graph isomorphic to $G[B]$.

   So far, we have introduced $n$ disjoint copies of $H$ in the variable gadgets, and $a \cdot (c_j -1)$ disjoint copies of $H$ for each clause $C_j$, which amounts to a total budget of
   \begin{equation}\label{eq:budget-t}
       t := n + \sum_{j=1}^m a \cdot (c_j -1).
   \end{equation}

   To interconnect the variable and clause gadgets according to the formula $\phi$, we need a generalization of the operation of ``adding an $(a,b,c,F)$-copy of $H$'' defined in the proof of \autoref{thm:hardness-lambda1}. For a vertex $t \in V(G)$ and three disjoint subgraphs $F_1,F_2,F_3$ of $G$ not containing $t$, respectively isomorphic to $H[A],H[B], H'$,
   by \emph{adding a $(t,V(F_1),V(F_2),V(F_3))$-copy of $H$ to $G$} we mean the operation of, starting from $G[\{t\} \cup V(F_1) \cup V(F_2) \cup V(F_3)]$, adding the missing edges to complete a copy of $H$, where vertex $t \in V(G)$ plays the role of vertex $w \in V(H)$, and $F_1$ (resp. $F_2,F_3$) plays the role of $H[A]$ (resp. $H[B], H'$), with fixed isomorphisms that we suppose to have at hand.

    For each clause $C_j$ of $\phi$, consider an arbitrary ordering of its literals as $\ell_1, \ldots, \ell_{c_j}$, and recall that $G$ contains $c_j - 1$ ordered disjoint $(A,B)$-gadgets along with two extra subgraphs isomorphic to $H[A]$ and $H[B]$. We add a single copy of $H'$ to $G$, which we denote by $F_{j}$ (recall that in the proof of \autoref{thm:hardness-lambda1} we added a distinct copy $F_{j}^i$ for each literal in $C_j$). For $i \in [c_j]$, suppose that literal $\ell_i$ of clause $C_j$ corresponds to an  occurrence of a variable $x_p$.
    If $x_p$ occurs positively (resp. negatively) in $C_j$, then
     we add a $(z_p^+, A_j^{i-1},\bar{B}_j^{i}, V(F_{j}))$-copy (resp. $(z_p^-,A_j^{i-1},\bar{B}_j^{i}, V(F_{j}))$-copy) of $H$ to $G$, and we call such a copy of $H$ a \emph{transversal-copy} of $H$.

    We define $X  \subseteq V(G)$ to be the union of the vertex sets of all the variable-copies of $H$, and note that $|X| = h \cdot n$. This completes the construction of $G$ and $X$. 

We now prove one of the properties claimed in the statement of the theorem.

    \begin{claim}\label{claim:td-atmost-h2}
     $\td(G-X) = \O(h)$.
    \end{claim}
    \begin{proof}
     We describe an elimination forest of $G-X$ with the claimed depth. Note that each connected component of $G-X$ corresponds to a clause gadget associated with a clause $C_j$, consisting of $c_j-1$ disjoint $(A,B)$-gadgets, isolated vertices, and two subgraphs isomorphic to $H[A]$ and $H[B]$, all these appropriately attached to the copy $F_j$ of $H'$. In $|V(H')| < h$ rounds, we can remove the vertices of $F_j$ one by one, and then we are left with isolated vertices, disjoint $(A,B)$-gadgets and two subgraphs isomorphic to $H[A]$ and $H[B]$. For each of the latter two graphs, we can terminate in at most $\td(H)$ rounds. For each of the remaining $(A,B)$-gadgets, we can first remove its $2a < h$ attachment vertices, and then we are left with disjoint subgraphs of $H$, whose treedepth is at most $\td(H)$. So overall, we have required $2\td(H) + h = \O(h)$ rounds.
    \end{proof}

    We now claim that $\phi$ is satisfiable if and only if $G$ contains a solution $S \subseteq V(G)$ of $H$-\textsc{Subgraph Hitting} of size at most $t$.

    Suppose first that $\phi$ is satisfiable and let $\alpha: \{x_1,\ldots,x_n\} \to \{0,1\}$ be a satisfying assignment of the variables. We define a set $S \subseteq V(G)$ of size $t$ as follows. 
    For each variable $x_i$, if $\alpha(x_i)=1$  (resp. $\alpha(x_i)=0$), we add to $S$ the vertex $z_i^+$ (resp. $z_i^-$). For each clause $C_j$, let $\ell_{s_j}$ be a literal in $C_j$ that is satisfied by the assignment $\alpha$. Recall that the $c_j -1$ $(A,B)$-gadgets associated with $C_j$ are ordered from $1$ to $c_j - 1$. We add to $S$ the vertex set
    $$\{ B_j^i \mid 1 \leq i \leq s_j -1  \} \cup \{A_j^i \mid s_j \leq i \leq c_j -1 \}.$$
    In words, we add to $S$ the $B$-vertices in all the $(A,B)$-gadgets from $1$ to $s_j -1 $, and the $A$-vertices in all the $(A,B)$-gadgets from $s_j$ to $c_j -1$. Note that $|S| = t$, and it remains to prove that $G - S$ does not contain $H$ as a subgraph. Suppose for contradiction that $G - S$ contains a subgraph $F$ isomorphic to $H$. The choice of $S$ and  fact that $H$ is biconnected implies that $F$ cannot contain a vertex in a clause gadget that is neither an $A$-vertex nor a $B$-vertex, nor a vertex in an $i$-variable copy of $H$ distinct from $z_i^+$ and $z_i^-$. Thus, necessarily $F$ is a subgraph of the union of some $(A,B)$-gadgets, some copies $F_j$ of $H'$, and some copies of the vertices $z^+$ and $z^-$ in the variable gadgets. We now distinguish two cases.

    Suppose first that $F$ intersects only one clause gadget associated with a clause $C_j$. The definition of $S$ implies that $F$ cannot be a transversal-copy of $H$, as all these copies are intersected by $S$. Thus, $F$ intersects more than one $(A,B)$-gadget associated with $C_j$ or at least two variable gadgets (or maybe both). We proceed to define two disjoint vertex sets $A',B' \subseteq V(F)$ with no edges between them and such that $|A'| + |B'| > |A| + |B|$, contradicting the choice of $A,B \subseteq V(H)$. We start with $A'$ (resp. $B'$) being equal to the intersection of $F$ with the $A$-vertices (resp. $B$-vertices). For every vertex $z$ of $F$ belonging to an $i$-variable-copy of $H$, if any, we proceed as follows. Since $F$ is biconnected, necessarily $z$ is either $z_i^+$ or $z_i^-$. We may clearly assume that variable $x_i$ occurs in clause $C_j$, as otherwise $z$ would not be adjacent to the clause gadget associated with $C_j$. By construction of $S$ and because $z \notin S$, necessarily either all the $A$-vertices or all the $B$-vertices associated with the transversal-copy of $H$ containing $z$ belong to $S$. Note that $z$ is not adjacent to any of the $A$-vertices or $B$-vertices in other transversal-copies of $H$. Hence, $z$ is anticomplete to $A'$ or anticomplete to $B'$. In the former case, we add $t$ to $B'$, and in the latter we add it to $A'$. By construction, $A'$ and $B'$ and two disjoint anticomplete subsets of $V(F)$, and they include all the vertices of $F$ except those contained in $F_j$, the copy of $H'$ associated the clause $C_j$. Thus,
    $$|A'| + |B'| \geq |V(H)| - |V(H')| > |V(H)| - |V(H')| - 1 = |A| + |B|,$$
    a contradiction to the choice of $A,B \subseteq V(H)$.

    It remains two deal with the case where $F$ intersects more than one clause gadget. Since $H$ is biconnected and distinct clause gadgets can be connected only through copies of $z^+$ and $z^-$ in the variable gadgets, necessarily $F$ contains some of the latter vertices. Let $Z=\{z_1, \ldots, z_q\}$ be the set of copies of $z^+$ and $z^-$ contained in $F$. Note that the definition of $S$ implies that $Z$ contains at most one vertex in each $i$-variable-copy of $H$, which implies that $Z$ induces an independent set in $G$. Again by the biconnectivity of $H$, necessarily $|Z| \geq 2$. Since $F$ intersects more than one clause gadget, it follows that $Z$ is a stable cutset of $F$, a contradiction to the hypothesis that $H$ does not admit such a separator.

   Conversely, let $S \subseteq V(G)$ be a solution of $H$-\textsc{Subgraph Hitting} of size at most $t$. Since $G$ contains $t$ disjoint copies of $H$ in the variables and clause gadgets, necessarily $S$ contains exactly one vertex in each of the variable-copies of $H$ (which can be assumed to be a copy of $z^+$ or $z^-$) and, by \autoref{claim:ABgadget}, either all the $A$-vertices or all the $B$-vertices in each of the $(A,B)$-gadgets in the clause gadgets.  We define from $S$ a satisfying assignment $\alpha$ of $\phi$ as follows. If $S$ contains $z_i^+$ (resp. $z_i^-$) we set $\alpha(x_i)=1$  (resp. $\alpha(x_i)=0$). Let us verify that $\alpha$ indeed satisfies all the clauses of $\phi$. Consider an arbitrary clause $C_j$ with $c_j$ literals, and note that $S$ contains $a \cdot (c_j - 1)$ vertices in the clause gadget associated with $C_j$. Therefore, there exists $s_j \in [c_j]$ such that the $s_j$-th transversal-copy of $H$ associated with $C_j$, say $F$, is {\sl not} hit by a vertex in a clause gadget. Thus, since $S \cap V(F) \neq \emptyset$, necessarily there exists an index $i \in [n]$ such that $S \cap V(F)$ is equal to either $z_i^+$ or $z_i^-$, and therefore the defined assignment of variable $x_i$ satisfies clause $C_j$.
   
   \medskip

To conclude the proof, we claim that the same reduction presented above proves the hardness result for the $H$-\textsc{Induced Subgraph Hitting} problem. Indeed, in the proof of the equivalence between the satisfiability of $\phi$ and the existence of a solution $S$ of $H$-\textsc{Subgraph Hitting} with the appropriate size, all that is relevant to the proof are the variable-copies of $H$, copies of $H$ in the $(A,B)$-gadgets belonging to the clause gadgets, and transversal-copies of $H$ between the variable and clause gadgets. As all these occurrences of $H$ in $G$ occur as induced subgraphs,  the same reduction implies the non-existence of polynomial kernels for  $H$-\textsc{Induced Subgraph Hitting}.\end{proof}

Finally, let us mention, even if it is not relevant to our results,  that deciding whether a graph $G$ admits a stable cutset is \NP-hard, even if $G$ is assumed to be 2-connected~\cite{BrandstadtDLS00}.

\section{Further research}
\label{sec:conclusions}

In this paper we studied the existence of polynomial kernels for the $H$-\textsc{Subgraph Hitting} and $H$-\textsc{Induced Subgraph Hitting} problems under structural parameterizations, namely parameterized by the size of a modulator to a graph class $\C$ that has a ``simple structure''. Our  main achievement is the identification of two arguably natural graph parameters $\veds$ and $\beds$ (or $\vedis$ and $\bedis$ for the induced version) that allowed us to prove complexity dichotomies in terms of the forbidden graph $H$. Our results pave the way to a systematic investigation of this topic, where we identify the following avenues for further research.

\subparagraph*{Getting rid of the hypothesis on $H$.}
In our hardness results we need additional assumptions on $H$, mainly that $H$ is biconnected in \autoref{thm:hardness-lambda1}.
Observe that the requirement that $H$ is connected is unavoidable. Indeed, when $H$ is the union of a $K_5$ and a $K_{1,3}$, it is known \cite{JansenK021verte} that $H$-\textsc{Subgraph Hitting} is {\sf para-}\NP-hard, even for $\edcalH=0$. Moreover, when $H$ is a non-edge, $H$-{\sc Induced Subgraph Hitting} parameterized by vertex cover (which is a larger parameter than $\edcalH$) is equivalent to maximum clique parameterized by vertex cover, which does not admit a polynomial kernel under standard complexity assumptions~\cite{BodlaenderJK11}.
Thus, it is natural to wonder whether the biconnectivity hypothesis could be replaced by just connectivity.


\subparagraph*{Improving the degree of the kernel.} The degree of our polynomial kernel depends on the size $t$ of the excluded clique and on the value $\lambda$ of the promised upper bound $\bedt(G-X) \leq \lambda$. Namely, as stated in \autoref{thm:kernellight}, the kernel has size $\O_{\lambda,t}(|X|^{\delta(\lambda,t)})$, where function $\delta$ mainly depends on the upper bound on $\mmbs_t$ given in \autoref{thm:mmbslambda4}. This function behaves as a tower of exponents in $t$ of height $\lambda$. Hence, improving the bound on $\mmbs_t$ directly translates to an improvement of the kernel size. We did obtain such an improvement if instead of assuming that $\bedt(G-X) \leq \lambda$, one assumes that $\td(G-X) \leq \lambda$, namely with a function~$\lambda^\lambda \cdot 2^{\lambda^2}$; see \autoref{lemma:mmbstd}. We leave as an open problem to obtain improved upper bounds for  $\mmbs_t$ in terms of $\vedt$ and $\bedt$. 

\subparagraph*{Computing the modulator.} In our kernelization algorithm we assume that we are {\sl given} a modulator, namely a set $X \subseteq V(G)$ such that $\bedt(G-X) \leq \lambda$. Note that this hypothesis appears also in the related work dealing with \textsc{Feedback Vertex Set}~\cite{DekkerJ22}.
Obtaining a constant factor or even ${\sf poly(opt)}$-approximation of the modulator in polynomial time for fixed $t$ and $\lambda$, which will be enough for our kernelization algorithm (note that minimizing its size is \NP-hard~\cite{LewisY80}), remains an interesting direction. One may start with the probably simpler cases of a modulator to bounded
$\F_{\bar H}$-elimination distance or bounded $\vedt$.

\subparagraph*{Finding the right measure.} The focus of this article is on obtaining kernelization dichotomies as a function of the forbidden (induced) subgraph $H$. Of course, it is also relevant to characterize, for a fixed graph $H$, which is the most general (monotone or hereditary) target family $\C_H$ such that $H$-\textsc{(Induced) Subgraph Hitting} admits a polynomial kernel parameterized by the size of a modulator to a graph in $\C_H$. Needless to say, solving this general problem seems quite challenging. Indeed, even the case of \textsc{Vertex Cover}, that is, $H=K_2$, is far from being well understood for monotone or hereditary target graph classes, as
for instance the only known polynomial kernel for \textsc{Vertex Cover} parameterized by a modulator to a bipartite graph (i.e., an odd cycle transversal) is randomized and relies on quite powerful tools~\cite{Kratsch18,KratschW20}. One may hope that larger cliques allow for simpler characterizations, the natural first candidate being the case where $H$ is a triangle. Let $\C_{\Delta}$ be the, say, hereditary target graph class that we want to characterize. Following the approach of~\cite{BougeretJS22} that characterized the target {\sl minor-closed} graph classes for \textsc{Vertex Cover}, one may hope that $K_3$-\textsc{Subgraph Hitting} admits a polynomial kernel parameterized by a modulator to $\C_{\Delta}$ if and only if the graphs in $\C_{\Delta}$ have bounded $\mmbs_3$. With no extra assumption on $\C_{\Delta}$, this property is probably false due to the results of Hols, Kratsch, and Pieterse~\cite{HolsKP22}, but we conjecture that it is true if we ask $\C_{\Delta}$ to be hereditary and closed under disjoint union, even for hitting $K_t$ for every $t \geq 3$, replacing $\mmbs_3$ by $\mmbs_t$.
Toward an eventual proof of this conjecture, having unbounded minimal blocking sets seems to permit a generic reduction to obtain the lower bound, in the spirit of the one of \autoref{thm:hardness-lambda1} or any similar one in previous work~\cite{JansenK11,FominS16,BougeretJS22,DekkerJ22,CyganLPPS14}. Indeed, Hols, Kratsch, and Pieterse~\cite[Thm 1.1]{HolsKP22} show that for \textsc{Vertex Cover}, lower bounds on kernel sizes directly follow from lower bounds on~$\mmbs_2$.
However, the opposite direction seems way more challenging. In~\cite{BougeretJS22}, this fact was established for \textsc{Vertex Cover} and minor-closed target classes via the notion of bridge-depth by proving, in particular, that there is a {\sl single} minor-obstruction for having large maximum minimal blocking sets, namely the chains of triangles (cf. left part of \autoref{fig-chains}). Unfortunately, we cannot hope the same nice behavior for $K_3$-\textsc{Subgraph Hitting} and monotone or hereditary graph classes, as chains of triangles are still an obstruction in this setting, but there exist other incomparable ones, as depicted in \autoref{fig-chains}.

\begin{figure}[h!tb]
 \begin{center}
  \includegraphics[scale=.87]{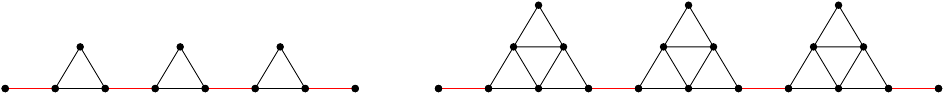}
\end{center}
\caption{Two chains of length three incomparable with respect to the (induced) subgraph relation. In both graphs, it can be verified that the set of four red thicker edges is a minimal blocking set.}
 \label{fig-chains}
 \end{figure}

Finally, in the ambitious quest for finding the appropriate measures that characterize the hereditary or monotone classes $\C_t$ for which $K_t$-\textsc{Subgraph Hitting} admits a polynomial kernel parameterized by the size of a modulator $X$ to $\C_t$, we hope that the techniques we developed to provide  a polynomial kernel for the case
$\bedt(G-X) \leq \lambda$ will play an important role. A natural attempt to generalize $\bedt$ to a more powerful measure is to relax, or even drop, the ``weak attachment'' condition on the sets to be removed in every round of the elimination process. This raises new challenges for obtaining a polynomial kernel that do not seem easy to overcome, at least with the existing tools in this area.





\bibliography{references}


%
%
%
%
%


\end{document}